\documentclass{acmsmall}

\newcommand{\figurename}{Figure}
\usepackage[T1]{fontenc}
\usepackage[utf8]{inputenc}
\usepackage{amsmath,amssymb,stmaryrd,xspace,pdfsync,mathrsfs,graphicx,refcount,verbatim,bbm}
\usepackage{microtype}
\usepackage{tikz}
\usepackage{enumitem}
\usepackage{relsize}
\usepackage{caption}
\usepackage{tabularx}

\usepackage{ifxetex}

\ifxetex
\usepackage{mathspec}
\usepackage{fontspec}
\usepackage[no-sscript]{xltxtra}
\usepackage[bookmarks,bookmarksopen,bookmarksdepth=2,xetex]{hyperref}
\else
  \usepackage[T1]{fontenc}
  \usepackage[utf8]{inputenc}
  \usepackage{lmodern}
  \usepackage[unicode,bookmarks,bookmarksopen,bookmarksdepth=2]{hyperref}
  \usepackage{bookmark}
\fi

\usepackage[mathcal]{euscript}

\usetikzlibrary{decorations.text,arrows,snakes,shapes,fit,shadows,calc,patterns,intersections}

\setlist[1]{itemsep=.5ex,leftmargin=*}
\setlist[1,itemize]{label=--}
\setlist[1,enumerate]{label=(\arabic*),ref=(\arabic*)}

\usepackage[textwidth=4cm]{todonotes}
\setlength{\marginparwidth}{4cm}

\newcounter{sauvegarde}
\setcounter{sauvegarde}{42}

\newcommand{\mynote}[3][]{\todo[caption={\sf #3}, color={
    \ifnum#2=0 green!20
    \else\ifnum#2=1 orange!30
    \else\ifnum#2=2 yellow!20
    \else\ifnum#2=3 cyan!20
    \else magenta!20\fi\fi\fi\fi}, size=\tiny, #1]{\renewcommand{\baselinestretch}{1}\selectfont\sf#3}\xspace}

\newcommand\mypar[1]{\medskip\noindent{\subsectionfont #1.}}
\let\subsubsection\mypar

\definecolor{my1}{cmyk}{0,.6,0,0}
\definecolor{my2}{cmyk}{.3,.0,.0,.0}

\newcommand{\efgame}{Ehrenfeucht-Fraïssé\xspace}
\newcommand\nat{\ensuremath{\mathbb{N}}\xspace}

\newcommand\Xs{\ensuremath{\mathcal{X}}\xspace}
\newcommand\Cs{\ensuremath{\mathcal{C}}\xspace}
\newcommand\Csgen[3]{\ensuremath{\Cs_{#1,#3}^{#2}}\xspace}
\newcommand\Cslev[1]{\ensuremath{\Cs_{#1}}\xspace}

\newcommand\Csik{\ensuremath{\Cs_i^k}\xspace}
\newcommand\Cstwo{\ensuremath{\Cs_2}\xspace}

\newcommand\Csi{\ensuremath{\Cs_i}\xspace}
\newcommand\Csikn{\ensuremath{\Cs_{i,n}^k}\xspace}
\newcommand\Csitwo{\ensuremath{\Cs_{i,2}}\xspace}
\newcommand\Cstwolen[1]{\ensuremath{\Cs_{2,#1}}\xspace}
\newcommand\Cstwon{\ensuremath{\Cstwolen{n}}\xspace}
\newcommand\Csin{\ensuremath{\Cs_{i,n}}\xspace}
\newcommand\fCgen[3]{\ensuremath{\fC_{#1,#3}^{#2}}\xspace}
\newcommand\fCik{\ensuremath{\fC_i^k}\xspace}
\newcommand\fCi{\ensuremath{\fC_i}\xspace}
\newcommand\fCtwo{\ensuremath{\fC_2}\xspace}

\newcommand\fCikn{\ensuremath{\fC_{i,n}^k}\xspace}
\newcommand\fCin{\ensuremath{\fC_{i,n}}\xspace}
\newcommand\fCtwolen[1]{\ensuremath{\fC_{2,#1}}\xspace}
\newcommand\fCtwon{\ensuremath{\fC_{2,n}}\xspace}

\newcommand\Gs{\ensuremath{\mathcal{G}}\xspace}

\newcommand\Ps{\ensuremath{\mathcal{P}}\xspace}
\newcommand\Es{\ensuremath{\mathcal{E}}\xspace}
\newcommand\Ss{\ensuremath{\mathcal{S}}\xspace}
\newcommand\Fs{\ensuremath{\mathcal{F}}\xspace}
\newcommand\Ts{\ensuremath{\mathcal{T}}\xspace}
\newcommand\Rs{\ensuremath{\mathcal{R}}\xspace}

\newcommand\Sat{\ensuremath{\mathord{\mathrm{Sat}}}}
\newcommand\ct{\ensuremath{\mathbb{T}}\xspace}
\newcommand\ctc{\ensuremath{\mathbb{T}_{\Cs}}\xspace}
\newcommand\cts{\ensuremath{\mathbb{T}_{\fC}}\xspace}
\newcommand\cs{\ensuremath{\mathbb{S}}\xspace}
\newcommand\crr{\ensuremath{\mathbb{U}}\xspace}

\newcommand\mat{\ensuremath{\mathscr{M}}\xspace}
\newcommand\mnat{\ensuremath{\mathscr{N}}\xspace}

\newcommand\pat{\ensuremath{\mathscr{P}}\xspace}

\newcommand{\dew}[1]{\ensuremath{\Delta_{#1}(<)}\xspace}
\newcommand{\dews}[1]{\ensuremath{\Delta_{#1}(<,+1,min,max)}\xspace}
\newcommand{\sic}[1]{\ensuremath{\Sigma_{#1}}\xspace}
\newcommand{\siw}[1]{\ensuremath{\Sigma_{#1}(<)}\xspace}
\newcommand{\siws}[1]{\ensuremath{\Sigma_{#1}(<,+1,min,max)}\xspace}
\newcommand{\pic}[1]{\ensuremath{\Pi_{#1}}\xspace}
\newcommand{\piw}[1]{\ensuremath{\Pi_{#1}(<)}\xspace}

\newcommand{\bsw}[1]{\ensuremath{\mathcal{B}\Sigma_{#1}(<)}\xspace}
\newcommand{\bsws}[1]{\ensuremath{\mathcal{B}\Sigma_{#1}(<,+1,min,max)}\xspace}

\newcommand{\dewu}{\ensuremath{\Delta_{1}(<)}\xspace}
\newcommand{\dewsu}{\ensuremath{\Delta_{1}(<,+1,min,max)}\xspace}

\newcommand{\siwu}{\ensuremath{\Sigma_{1}(<)}\xspace}
\newcommand{\siwsu}{\ensuremath{\Sigma_{1}(<,+1,min,max)}\xspace}

\newcommand{\piwu}{\ensuremath{\Pi_{1}(<)}\xspace}
\newcommand{\piwsu}{\ensuremath{\Pi_{1}(<,+1,min,max)}\xspace}

\newcommand{\bswu}{\ensuremath{\mathcal{B}\Sigma_{1}(<)}\xspace}
\newcommand{\bswsu}{\ensuremath{\mathcal{B}\Sigma_{1}(<,+1,min,max)}\xspace}

\newcommand{\decd}{\ensuremath{\Delta_{2}}\xspace}
\newcommand{\dewd}{\ensuremath{\Delta_{2}(<)}\xspace}
\newcommand{\dewsd}{\ensuremath{\Delta_{2}(<,+1,min,max)}\xspace}

\newcommand{\sicd}{\ensuremath{\Sigma_{2}}\xspace}
\newcommand{\siwd}{\ensuremath{\Sigma_{2}(<)}\xspace}
\newcommand{\siwsd}{\ensuremath{\Sigma_{2}(<,+1,min,max)}\xspace}

\newcommand{\picd}{\ensuremath{\Pi_{2}}\xspace}
\newcommand{\piwd}{\ensuremath{\Pi_{2}(<)}\xspace}
\newcommand{\piwsd}{\ensuremath{\Pi_{2}(<,+1,min,max)}\xspace}

\newcommand{\bscd}{\ensuremath{\mathcal{B}\Sigma_{2}}\xspace}
\newcommand{\bswd}{\ensuremath{\mathcal{B}\Sigma_{2}(<)}\xspace}
\newcommand{\bswsd}{\ensuremath{\mathcal{B}\Sigma_{2}(<,+1,min,max)}\xspace}

\newcommand{\dect}{\ensuremath{\Delta_{3}}\xspace}
\newcommand{\dewt}{\ensuremath{\Delta_{3}(<)}\xspace}
\newcommand{\dewst}{\ensuremath{\Delta_{3}(<,+1,min,max)}\xspace}

\newcommand{\sict}{\ensuremath{\Sigma_{3}}\xspace}
\newcommand{\siwt}{\ensuremath{\Sigma_{3}(<)}\xspace}
\newcommand{\siwst}{\ensuremath{\Sigma_{3}(<,+1,min,max)}\xspace}

\newcommand{\pict}{\ensuremath{\Pi_{3}}\xspace}
\newcommand{\piwt}{\ensuremath{\Pi_{3}(<)}\xspace}
\newcommand{\piwst}{\ensuremath{\Pi_{3}(<,+1,min,max)}\xspace}

\newcommand{\bswt}{\ensuremath{\mathcal{B}\Sigma_{3}(<)}\xspace}
\newcommand{\bswst}{\ensuremath{\mathcal{B}\Sigma_{3}(<,+1,min,max)}\xspace}

\newcommand{\dewi}{\ensuremath{\Delta_{\lowercase{i}}(<)}\xspace}

\newcommand{\sici}{\ensuremath{\Sigma_{\lowercase{i}}}\xspace}
\newcommand{\siwi}{\ensuremath{\Sigma_{\lowercase{i}}(<)}\xspace}
\newcommand{\siwsi}{\ensuremath{\Sigma_{\lowercase{i}}(<,+1,min,max)}\xspace}

\newcommand{\piwi}{\ensuremath{\Pi_{\lowercase{i}}(<)}\xspace}

\newcommand{\bsci}{\ensuremath{\mathcal{B}\Sigma_{\lowercase{i}}}\xspace}
\newcommand{\bswi}{\ensuremath{\mathcal{B}\Sigma_{\lowercase{i}}(<)}\xspace}
\newcommand{\bswsi}{\ensuremath{\mathcal{B}\Sigma_{\lowercase{i}}(<,+1,min,max)}\xspace}

\newcommand{\mso}{\ensuremath{\textup{MSO}}\xspace}
\newcommand{\msow}{\ensuremath{\textup{MSO}(<)}\xspace}
\newcommand{\fo}{\ensuremath{\textup{FO}}\xspace}
\newcommand{\fow}{\ensuremath{\textup{FO}(<)}\xspace}
\newcommand{\fows}{\ensuremath{\textup{FO}(<,+1,min,max)}\xspace}

\newcommand\sieq[2]{\ensuremath{\lesssim^{#1}_{#2}}\xspace}
\newcommand\ksieq[1]{\sieq{k}{#1}}

\newcommand\bceq[2]{\ensuremath{\cong^{#1}_{#2}}\xspace}
\newcommand\kbceq[1]{\bceq{k}{#1}}

\newcommand\gmo{\ensuremath{\geqslant}\xspace}
\newcommand\lmo{\ensuremath{\leqslant}\xspace}

\let\geq\geqslant

\newcommand\Sep{\ensuremath{\mathcal{C}}\xspace}

\newcommand\content[1]{\ensuremath{\contentmorphism(#1)}}

\newcommand\contentmorphism{\ensuremath{\textsf{alph}}}

\newcommand\valc[1]{\ensuremath{\textsf{val}_{\Cs}(#1)\xspace}}
\newcommand\vals[1]{\ensuremath{\textsf{val}_{\fC}(#1)\xspace}}

\newcommand\inst[1]{\ensuremath{\llfloor{#1}\rrfloor\xspace}}

\newcommand\chain{chain\xspace}
\newcommand\qchain[1]{\ensuremath{\sic{#1}}-chain\xspace}
\newcommand\chains{chains\xspace}
\newcommand\qchains[1]{\ensuremath{\sic{#1}}-chains\xspace}
\newcommand\Chain{Chain\xspace}

\newcommand\Chains{Chains\xspace}
\newcommand\qChains[1]{\ensuremath{\sic{#1}}-Chains\xspace}

\newcommand\qpchains[2]{\ensuremath{\sic{#1}[#2]}-chains\xspace}

\newcommand\ichain{\qchain{\lowercase{i}}}

\newcommand\dchain{\qchain{2}}
\newcommand\ichains{\qchains{\lowercase{i}}}

\newcommand\dchains{\qchains{2}}

\newcommand\iChains{\qChains{\lowercase{i}}}

\newcommand\dChains{\qChains{2}}

\newcommand\ikchains{\qpchains{i}{k}}

\newcommand\jun{juncture\xspace}
\newcommand\juns{junctures\xspace}
\newcommand\Jun{Juncture\xspace}
\newcommand\Juns{Junctures\xspace}
\newcommand\qjun[1]{\ensuremath{\sic{#1}}-juncture\xspace}
\newcommand\qjuns[1]{\ensuremath{\sic{#1}}-junctures\xspace}

\newcommand\qJuns[1]{\ensuremath{\sic{#1}}-Junctures\xspace}
\newcommand\qpjun[2]{\ensuremath{\sic{#1}[#2]}-juncture\xspace}
\newcommand\qpjuns[2]{\ensuremath{\sic{#1}[#2]}-junctures\xspace}

\newcommand\ijun{\qjun{\lowercase{i}}}
\newcommand\ijuns{\qjuns{\lowercase{i}}}

\newcommand\iJuns{\qJuns{\lowercase{i}}}
\newcommand\djun{\qjun{2}}
\newcommand\djuns{\qjuns{2}}

\newcommand\ikjun{\qpjun{\lowercase{i}}{k}}
\newcommand\ikjuns{\qpjuns{\lowercase{i}}{k}}

\newcommand\dkjun{\qpjun{2}{k}}

\newcommand\fI{\ensuremath{\mathcal D}\xspace}

\newcommand\fC{\ensuremath{\mathcal J}\xspace}
\newcommand\fS{\ensuremath{\mathcal S}\xspace}
\newcommand\fR{\ensuremath{\mathcal R}\xspace}

\newcommand\fT{\ensuremath{\mathcal T}\xspace}

\newcommand\fX{\ensuremath{\mathcal X}\xspace}

\newcommand\fg{\ensuremath{g}\xspace}

\newcommand\wfA{\ensuremath{\Abb_\alpha}\xspace}

\newcommand\Abb{\ensuremath{\mathbb{A}}\xspace}

\newcommand\Lbb{\ensuremath{\mathbb{L}}\xspace}

\newcommand\wbb{\ensuremath{\mathbbm{w}}\xspace}

\DeclareMathOperator{\downclos}{\downarrow}

\tikzstyle{nor}=[minimum size=0.35cm,draw,rounded rectangle,inner sep=2pt]
\tikzstyle{nod}=[minimum size=0.35cm,draw,circle,inner sep=2pt]
\tikzstyle{nok}=[minimum size=0.45cm,draw,circle,inner sep=0pt]
\tikzstyle{nof}=[minimum size=0.35cm,draw,circle,double,double
distance=1pt]
\tikzstyle{port}=[minimum size=0.35cm,draw,thick,rectangle,inner sep=2pt]
\tikzstyle{nop}=[minimum size=0.35cm,draw,thick,rectangle,inner sep=1pt,rotate=90]

\tikzstyle{nol}=[minimum size=0.35cm,draw,rounded rectangle,inner sep=1pt,rotate=90]

\tikzstyle{ar}=[line width=0.5pt,->,double]
\tikzstyle{siar}=[line width=1.5pt,->]
\tikzstyle{ars}=[line width=1pt,->,double]

\newtheorem{fact}[theorem]{Fact}

\providecommand*{\donothing}[1]{}

\makeatletter
\def\@evenfoot{}
\def\@oddfoot{}
\makeatother
\begin{document}

\markboth{T. Place and M. Zeitoun}{Going Higher in First-Order Quantifier Alternation Hierarchies on Words}

\title{Going Higher in First-Order Quantifier Alternation Hierarchies on Words}
\author{Thomas Place and Marc Zeitoun \affil{LaBRI, University of Bordeaux, France}\texttt{firstname.lastname@labri.fr}}

\keywords{First-order logic, Regular languages, Decidable
  characterization, Membership Problem, Separation problem, Quantifier alternation, Logical hierarchies, Dot-depth
  hierarchy, Straubing-Thérien hierarchy.}

\begin{abstract}
  We investigate quantifier alternation hierarchies in first-order
  logic on finite words. Levels in these hierarchies are defined by counting the
  number of quantifier alternations in formulas. We prove that one can
  decide membership of a regular language in the levels \bscd (finite boolean
  combinations of formulas having only one alternation) and \sict
  (formulas having only two alternations and beginning with an existential
  block). Our proofs work by considering a deeper problem, called
  separation, which, once solved for lower levels, allows us to solve
  membership for higher~levels.
\end{abstract}

\maketitle

\section{Introduction}
\label{sec:intro}
The connection between logic and automata theory is well known and has a
fruitful history in theoretical computer science. It was first observed when
\citeN{BuchiMSO}, \citeN{ElgotMSO} and \citeN{TrakhMSO} independently proved
that regular languages of finite words are exactly languages that can be
defined by a monadic second-order logic (\mso) sentence. Since then, many
efforts have been devoted to the investigation and understanding of the
expressive power of relevant fragments of \mso. In this field, the yardstick
result is often to prove a \emph{decidable characterization}, \emph{i.e.}, to
design an algorithm which, given as input a regular language, decides whether
it can be defined within the fragment under investigation. This decision
problem is called the \emph{membership problem}. More than the algorithm
itself, the main motivation for solving it is the insight given by its proof.
Indeed, in order to prove a decidable characterization, one has to consider
and understand \emph{all} properties that can be expressed in the~fragment.

\smallskip The most prominent fragment of \mso is first-order logic (\fow, or
\fo for short) equipped with a predicate ``$<$'' for the linear order. This
logic has first been investigated on finite words by \citeN{mnpfo}, who showed
that a language is \fo definable iff it is \emph{star-free}, that is, iff it
can be defined from singleton languages using boolean operations and
concatenation (but \emph{not} the Kleene star, hence the name). As such, this result
just amounts to a simple syntactic translation, which does not provide any
insight on the expressive power of first-order logic. However, together with
an earlier result from \citeN{sfo}, it yields a decidable characterization.
Indeed, Schützenberger's Theorem states that a regular language is star-free
if and only if its syntactic monoid is \emph{aperiodic}. The syntactic monoid
is a finite algebraic structure that can be effectively computed from any
representation of the language. Moreover, aperiodicity can be rephrased as an
equation that needs to be satisfied by all elements of the monoid. Therefore,
Schützenberger's Theorem together with McNaughton-Papert's result indeed
entails decidability of first-order definability.

\mypar{Quantifier Alternation.} Schützenberger's proof additionally provides
an algorithm which, given a regular language, outputs a first-order sentence
(of course when the input language is first-order definable). However, this
sentence may be unnecessarily complicated. The next natural step consists in
requiring the output sentence to be ``as simple as possible''. To make this
question precise, one needs a meaningful notion of complexity.

\smallskip The most appropriate parameter for classifying first-order definable
languages according to the difficulty of defining them is their
\emph{quantifier alternation}. The quantifier alternation \emph{of a formula}
is simply the maximal number of switches between blocks of $\exists$
quantifiers and blocks of $\forall$ quantifiers in its prenex normal form. The
quantifier alternation \emph{of a language} definable in \fo is the smallest quantifier
alternation of a first-order sentence that defines~it. Observe that the quantifier
alternation \emph{of a language} is, like first-order definability, a semantic
notion (in contrast with the quantifier alternation \emph{of a formula}, which
is a syntactic notion). This explains why it is not straightforward to compute it
from a representation of the language.

\smallskip It is intuitive that formulas involving several alternations are
difficult to grasp---one usually uses only few of them to state mathematical
properties. This intuition is supported by results showing that, indeed, this
parameter is meaningful, \emph{i.e.}, that languages of high quantifier
alternation are ``hard'' to deal with. The algorithmic treatment of
first-order formulas involves an unavoidable non-elementary lower
bound~\cite{Stockmeyer:1973:WPR:800125.804029,Stockmeyer-phd,DBLP:conf/dagstuhl/Reinhardt01}.
This is the case for instance for the satisfiability problem. Likewise, the
number of states of the minimal automaton equivalent to an \fo formula may be
non-elementarily large in the size of the formula. This blowup is due to quantifier
alternation, since restricting these problems to formulas of bounded
quantifier alternation yields elementary decision procedures.

\smallskip This motivates the investigation of what can be expressed with a \emph{fixed} number of quantifier alternations, and, already importantly, with few of them. This is
what we do in this paper: we investigate the hierarchy inside \fo obtained by classifying languages according to their quantifier alternation. More precisely, the hierarchy
involves the classes \siwi, \bswi and \dewi defined as follows:

\begin{itemize}
\item an \fow formula is \siwi if its prenex normal form has $(i-1)$
  quantifier alternations and starts with a block of existential quantifiers,
  or if it has strictly less than $(i-1)$
  quantifier alternations.
  A language is \siwi if it can be defined by a \siwi sentence.
\item a formula is \bswi if it is a finite boolean combination of \siwi
  formulas. A language is \bswi if it can be defined by a \bswi sentence.
\item Finally, a language is \dewi if it can be defined by both a \siwi
sentence and the negation of a \siwi sentence. Note that there is no notion of a ``\dewi formula''.
\end{itemize}
The quantifier alternation hierarchy is known to be strict:
\[
  \dewi \subsetneq \siwi \subsetneq {\bswi} \subsetneq \dew{i+1}.
\]
This well-known hierarchy thus defines a complexity measure of first-order
definable languages: complex ones are those requiring several quantifier
alternations.

\medskip Another motivation for investigating this hierarchy is its ties with
two other famous hierarchies in formal language theory, defined in terms of
regular expressions. Roughly speaking, levels in both of these hierarchies
count the number of alternations between boolean operations and concatenation
product that are necessary to express a language (recall that, by
McNaughton-Papert's Theorem, every first-order definable language can be built
from singleton languages using union, concatenation and boolean operations).
In the first of these hierarchies, the \emph{Straubing-Thérien
  hierarchy}~\cite{StrauConcat,TheConcat}, level~$i$ exactly corresponds to
the class \bswi, as shown by \citeN{PPOrder}. In the second one, the
\emph{dot-depth hierarchy}, which was actually defined
earlier by~\citeN{BrzoDot}, level~$i$ corresponds to augmenting the logic
\bswi with a predicate for the successor relation, as shown by
\citeN{ThomEqu}. These correspondences show that proving
decidability of the membership problem for \bswd immediately entails its
decidability for level 2 in the Straubing-Thérien hierarchy, but also in the
dot-depth hierarchy, thanks to a reduction due to \citeN{StrauVD}. We refer
the reader to Section~\ref{sec:history} for details.

\smallskip Many efforts have been devoted to finding decidable
characterizations for levels in the quantifier alternation hierarchy. Despite these
efforts however, only the lower ones are known to be decidable. The~class
\bswu consists exactly of all piecewise testable languages, \emph{i.e.}, such
that membership of a word only depends on its scattered subwords up to a
fixed~size. These languages were characterized by~\citeN{simon75} as those
whose syntactic monoid is $\mathcal{J}$-trivial. A decidable characterization
of \siwd{}---hence of \dewd as well---was obtained
by~Arfi~\citeyear{arfi87,Arfi_1991}, a problem revisited and clarified by Pin
and Weil~\citeyear{pwdelta_conf,pwdelta}, who also set up a generic algebraic
framework to work with. For~\dewd, the literature is very rich, see the survey
by \citeN{Tesson02diamondsare}. For example, the \dewd definable languages are
exactly the ones definable in the two-variable restriction of
\fow~\cite{twfodeux}. These are also the languages whose syntactic monoid
belongs to the class~$\textup{\sf DA}$, as shown again by Pin and
Weil~\citeyear{pwdelta_conf,pwdelta} (see also~\cite{schul}). For higher
levels in the hierarchy, getting decidable characterizations remained a major
open problem. In particular, the case of \bswd has a very abundant history and
a series of combinatorial, logical, and algebraic conjectures have been
proposed over the years. We refer to Section~\ref{sec:history} and to several
surveys cited in this section for a bibliography. So far, the only known
effective result was partial, working only when the alphabet is of
size~2~\cite{StrauDD2}.

\mypar{Contributions.} In this paper, we establish decidable characterizations
for the fragments \bswd, \dewt and \siwt of first-order logic. These new
results are based on a deeper decision problem than membership: the \emph{separation
problem}. Fix a class \Sep of languages. The \Sep-separation problem amounts to
deciding whether, given two input regular languages, there exists a third
language in \Sep containing the first language while being disjoint from the
second one. Solving the \Sep-separation problem is more general than obtaining
a decidable characterization for the class \Sep. Indeed, since regular
languages are effectively closed under complement, testing membership in \Sep
can be achieved by testing whether the input is \Sep-separable from its
complement. While this reduction immediately transfers decision procedures for
one problem to the other, this is not our primary motivation for looking at
separation. Although intrinsically more challenging, a solution to the
separation problem requires more understanding than just getting a decidable
characterization. This understanding for a given fragment can then be
exploited in order to obtain decidable characterizations for extensions built
on top of this fragment.

\smallskip Historically, the separation problem for regular languages was
first investigated as a special case of a deep problem in semigroup theory,
the \emph{pointlike problem}, solved for several 
cases by relying on purely algebraic and topological
arguments~\cite{henckell:1988,DBLP:journals/ijac/HenckellRS10a,AZ97-J}. It was only
identified as a variant of the separation problem by~\citeN{MR1709911}.
Recently, a research effort has been made to investigate this problem from a
radically different perspective, with the aim of finding new and self-contained proofs
relying on elementary ideas and notions from language theory only. Such proofs
were obtained for several results already known in the algebraic
framework~\cite{martens,pvzmfcs13,pzfo,pvzltt,PZ:FO-Sep16}. This paper is a
continuation of this effort for classes that were not solved even in the
algebraic setting: we solve the separation problem for \siwd, and we use our
solution as a basis to obtain decidable characterizations for the classes
\bswd, \dewt and~\siwt.

\smallskip Our proof works as follows: given two regular languages, one can
easily construct a morphism $\alpha$ from $A^*$ into a finite monoid $M$ that
recognizes both languages. We then design an algorithm that computes, inside
the finite monoid $M$, enough \sicd-related information to answer the
\siwd-separation question for \emph{any} pair of languages that are recognized
by~$\alpha$. It turns out that it is also possible to use this information to
obtain decidability of \dewt, \siwt and \bswd (though this last
characterization is much more difficult). This information amounts to the
notion of \emph{\dchain}, our main tool in the paper. A \dchain is an
\emph{ordered sequence} $s_1,\dots,s_n \in M$ that witnesses a property of
$\alpha$ with respect to\ \siwd.
Let us give some intuition in the case $n = 2$---which is enough to make the
link with \sicd-separation. A sequence $s_1,s_2$ of elements of~$M$ is a
\dchain if any \siwd language containing all words in $\alpha^{-1}(s_1)$
intersects $\alpha^{-1}(s_2)$. In terms of separation, this means that
$\alpha^{-1}(s_1)$ is \emph{not} separable from $\alpha^{-1}(s_2)$ by a
$\siwd$ definable language. 
This notion can actually be extended to any level of the hierarchy.

\medskip
This paper contains three main separate, new and nontrivial results:
\begin{enumerate}[label=(\arabic*),ref=(\arabic*)]
\item\label{item:1} An algorithm to compute \dchains---hence \siwd-separability is decidable.
\item\label{item:2} A transfer result showing that an algorithm to compute
  \ichains of length 2 entails a decidable characterization of $\siw{i+1}$. In
  particular, by~\ref{item:1}, membership in \siwt is decidable. Decidability
  of \piwt, the dual of \siwt, and of \dewt are then~immediate.
\item\label{item:3} A decidable characterization of~\bswd.
\end{enumerate}
For \ref{item:1}, computing \dchains is achieved using a fixpoint algorithm
that starts with trivial \dchains such as $s,s,\dots,s$, and iteratively
computes more \dchains until a fixpoint is reached. For our technique to work,
we actually have to consider a notion slightly more general than \dchains. The
completeness proof of this algorithm relies on the Factorization Forest
Theorem of Simon~\citeyear{simonfacto}. This is not surprising (even though
one can actually bypass its use), as the link between this
theorem and the quantifier alternation hierarchy was already observed
by~\citeN{pwdelta} and \citeN{bfacto}.

\smallskip
For~\ref{item:2}, we establish a characterization of \siwt in terms of an
equation on the syntactic monoid of the language. This equation is
parametrized by the set of \dchains of length~$2$. In other words, we use
\dchains to abstract an infinite set of equations into a single one. The proof
relies again on the Factorization Forest Theorem of Simon~\cite{simonfacto}
and is actually generic to all levels in the hierarchy. This means that for
any level $i$, we define a notion of \ichain and we characterize \siw{i+1}
using an equation parametrized by \ichains of length~$2$. However,
decidability of \siw{i+1} depends on our ability to compute all \ichains of
length~$2$, which we can only do for $i =2$.

\smallskip
Finally for~\ref{item:3}, the decidable characterization of \bswd is the most
difficult result of the paper. As for \siwt, it is presented by two equations
parametrized by \dchains (of length $2$ and $3$). However, the
characterization is this time specific to the case $i = 2$. This is because
most of our proof relies on a careful analysis of our algorithm that computes
\dchains, which only works for $i =2$. The equations share surprising
similarities with the ones used by \citeN{bpopen} to characterize a totally
different formalism: boolean combinations of open sets of infinite trees.
In~\cite{bpopen} also, the authors present their characterization as a set of
equations parametrized by a notion of ``\chain'' for open sets of infinite
trees (although their ``\chains'' are not explicitly identified as a
separation relation). Since the formalisms are of different nature, the way
these \chains and our \dchains are constructed are completely independent,
which means that the proofs are also mostly independent. However, once the
construction analysis of \chains has been done, several combinatorial
arguments used to make the link with equations are analogous. In particular,
we reuse and adapt definitions from~\cite{bpopen} to present these
combinatorial arguments in our proof. One could say that the proofs are both
(very different) setups to apply similar combinatorial arguments in the end.

\smallskip Our results are shown using the ordering relation `$<$' on
positions as the only numerical predicate of the signature in the logic. In
full first-order logic, one can define other natural numerical
predicates, such as the first and last positions, as well as the successor
relation. However, defining these predicates requires an additional
quantification. It is known that enriching the signature with these predicates
indeed increases the expressiveness of each fragment in the quantifier
alternation hierarchy. This yields another hierarchy inside first-order logic,
that has also been investigated in the literature. In particular, it has been
shown by~\citeN{ThomEqu} to correspond to the so-called \emph{dot-depth}
hierarchy defined by~\citeN{BrzoDot} in terms of regular constructs
needed to build a star-free language. In Section~\ref{sec:transfer}, we present
already known results to show that all decidability statements obtained for
the original hierarchy can be lifted to the hierarchy where the additional
predicates are allowed. This works both for decidable
characterizations~\cite{StrauVD,pzsucc} as well as for
separation~\cite{Steinberg:delay-pointlikes:2001,pzsucc}.

\mypar{Organization.} 
Sections~\ref{sec:words} and~\ref{sec:history} are devoted to the presentation
of the problem we investigate. In Section~\ref{sec:words}, we define the
quantifier alternation hierarchies and precisely state this problem.
Section~\ref{sec:history} presents an outline of the rich history about these
problems, viewed from different perspectives.

Then, in Sections~\ref{sec:tools} and~\ref{sec:chains}, we develop the
machinery necessary to the statements and to the proofs of our results.
Section~\ref{sec:tools} is devoted to the presentation of well-known,
classical tools such as \efgame games, monoids and Simon's Factorization
Forest Theorem, while Section~\ref{sec:chains} introduces a new tool, specific
to this paper: \ichains.

The remaining sections present and prove our results. In
Section~\ref{sec:generic}, we reduce the membership and separation problems
for all levels in the hierarchy to the problem of computing \ichains. In the
following sections, we then prove that these problems can be solved for
specific levels. In Section~\ref{sec:comput} we obtain a solution to
separation for \siwd and to membership for \siwt, \piwt and \dewt. Then, in
Section~\ref{sec:caracbc}, we obtain a solution to membership for \bswd.
Sections~\ref{app:ctrees} to~\ref{app:width} are then devoted to the difficult
proof of the decidable characterization of \bswd. Finally, in the last
section, Section~\ref{sec:transfer}, we lift up our results to the hierarchy
with successor, using previously known transfer results.

\smallskip
This paper is the full version of~\cite{pzqalt}.

\section{Quantifier Alternation Hierarchies}
\label{sec:words}
As explained in the introduction, we study two decisions problems, called
membership and separation, to investigate two famous
hierarchies of classes of languages. In this section, we precisely
define these hierarchies and decision problems. Note that the section is
devoted to definitions only. We shall also present
in~Section~\ref{sec:history} the history about these hierarchies.

\smallskip The section is organized in two parts. We begin by giving a logical
definition of our two hierarchies: they classify first-order definable
languages by counting the number of quantifier alternations that are needed
for defining these languages. Equivalent combinatorial definitions
in terms of star-free languages will be presented in
Section~\ref{sec:history}. In the second part, we define the membership
problem and the separation problem.

\subsection{Quantifier Alternation Hierarchies of First-Order Logic}
\label{sec:quant-altern-hier}

Throughout the paper, we assume fixed a finite alphabet $A$. We denote by
$A^{*}$ the set of all words over $A$ (including the empty word 
$\varepsilon$) and by $A^+$ the set of all nonempty words over $A$. If
$u,v \in A^*$ are words over $A$, we denote by $u \cdot v$ or $uv$ the word
obtained by concatenation of $u$ and $v$ and by \content{u} the alphabet of
$u$, \emph{i.e.}, the smallest subset $B$ of $A$ such that $u \in B^*$. A
\emph{language over $A$} is a subset of $A^*$. In this paper we work with
regular languages. These languages have several equivalent characterizations,
as they can be defined by either:
\begin{itemize}
\item \emph{monadic second-order logic},
\item \emph{finite automata},
\item \emph{regular expressions},
\item \emph{finite monoids}.
\end{itemize}
The two hierarchies we investigate in the paper are contained within a strict subclass of regular languages that we define now: the class of first-order definable languages.

\mypar{First-Order Logic.} We view words as logical structures made of
a sequence of positions. Each position has a label in the alphabet~$A$ and can
be quantified. We denote by `$<$' the linear order over the positions. We work
with first-order logic, \fow, using the following predicates:
\begin{itemize}
\item for each $a \in A$, a unary predicate $P_a$ that selects positions labeled with an $a$.
\item a binary predicate `$<$' for the linear order.
\end{itemize}
To every first-order sentence $\varphi$, one can associate the language
$\{w \in A^* \mid w \models \varphi\}$ of words that satisfy $\varphi$. For
instance, the sentence $\exists x P_a(x)$ defines the language of all words
having at least one `$a$'. Hence, \fow defines a class of languages: the class
of all languages that can be defined by an \fow sentence. For the sake of
simplifying the presentation, we will abuse notation and use \fow to denote
both the logic and the associated class of languages.

\mypar{Order Hierarchy.} One classifies first-order formulas by
counting the number of alternations between $\exists$ and $\forall$
quantifiers in the prenex normal form of the formula. For $i \in \nat$, a
formula is said to be \siw{i} (resp. \piw{i}) if its prenex normal form has
either
\begin{itemize}
\item \emph{exactly} $(i -1)$ quantifier alternations (\emph{i.e.}, exactly
  $i$ quantifier blocks) and starts with an $\exists$~quantifier (resp.\ with
  a $\forall$ quantifier), or
\item \emph{strictly less} than $(i -1)$ quantifier
  alternations (\emph{i.e.}, strictly less than $i$ quantifier
  blocks).
\end{itemize}
For example, a formula whose prenex normal form is
\[
  \forall x_1 \forall x_2 \exists x_3 \forall x_4
  \ \varphi(x_1,x_2,x_3,x_4) \quad \text{(with $\varphi$ quantifier-free)}
\]
\noindent
is \piwt. Observe that a \piw{i} formula is by definition the negation of a
\siw{i} formula. Finally, a \bsw{i} formula is a finite boolean combination of
\siw{i} formulas. As for full first-order logic, we will abuse notations and
use \siw{i}, \piw{i} and \bsw{i} to denote both the logics and the associated
classes of languages. Finally, we denote by \dew{i} the class of languages
that can be defined by \emph{both} a \siw{i} and a \piw{i}
formula\footnote{Note that, strictly speaking, \dew{i} is not a logic: there
  is no notion of a ``\dew{i} formula.''}. It is known~\cite{PPOrder} that
this gives a strict infinite hierarchy of classes of languages as represented
in Figure~\ref{fig:hiera3}. In the paper, we call this hierarchy the
\emph{order hierarchy}. It turns out that quantifier alternation can be used
to define another natural hierarchy within first-order logic, which we now
describe.

\mypar{Enriched Hierarchy.} Observe that in full first-order logic,
several natural relations can be defined using the linear order:
\begin{itemize}
\item Position $x$ is the first one: ${min}(x)\stackrel{\text{def}}{=} \forall y\ \neg (y < x)$.
\item Position $x$ is the last one: ${max}(x)\stackrel{\text{def}}{=} \forall y\ \neg (x < y)$.
\item Position $y$ is the successor of position $x$: $(y = x+1)\stackrel{\text{def}}{=} x < y \wedge \neg (\exists z\ x< z \wedge z <y)$.
\end{itemize}
Therefore, adding these relations as predicates in the signature of first-order
logic does not increase its expressive power: \fow and the enriched logic
\fows define the same class of languages. However, observe that replacing the
predicates $min,max$ or $+1$ with their definitions may increase the
quantifier alternation of the formula. For example,
\[
  \begin{array}{ll}
    \exists x \exists y\quad y = x + 1 \wedge P_a(x) \wedge P_b(y) &
                                                                   \text{has no alternation, while}\\[1ex]
    \exists x \exists y\quad (x < y \wedge \neg (\exists z\ x< z \wedge z
    <y)) \wedge P_a(x) \wedge P_b(y) & \text{has one alternation.}

  \end{array}
\]
Hence, it is not immediate whether fragments of the order hierarchy have the
same expressive power as their enriched counterpart. In fact, it is known that
the predicate `$+1$' cannot be freely defined in any logic of the order
hierarchy. Hence, we get a second hierarchy, also depicted in
Figure~\ref{fig:hiera3}. That this hierarchy is also strict follows from the
work of \citeN{BroKnaStrict} and Thomas~\citeNN{ThomEqu,ThomStrict}. In the
paper, we call it the \emph{enriched hierarchy}.

\tikzstyle{non}=[inner sep=1pt]
\tikzstyle{tag}=[draw,fill=white,sloped,circle,inner sep=1pt]
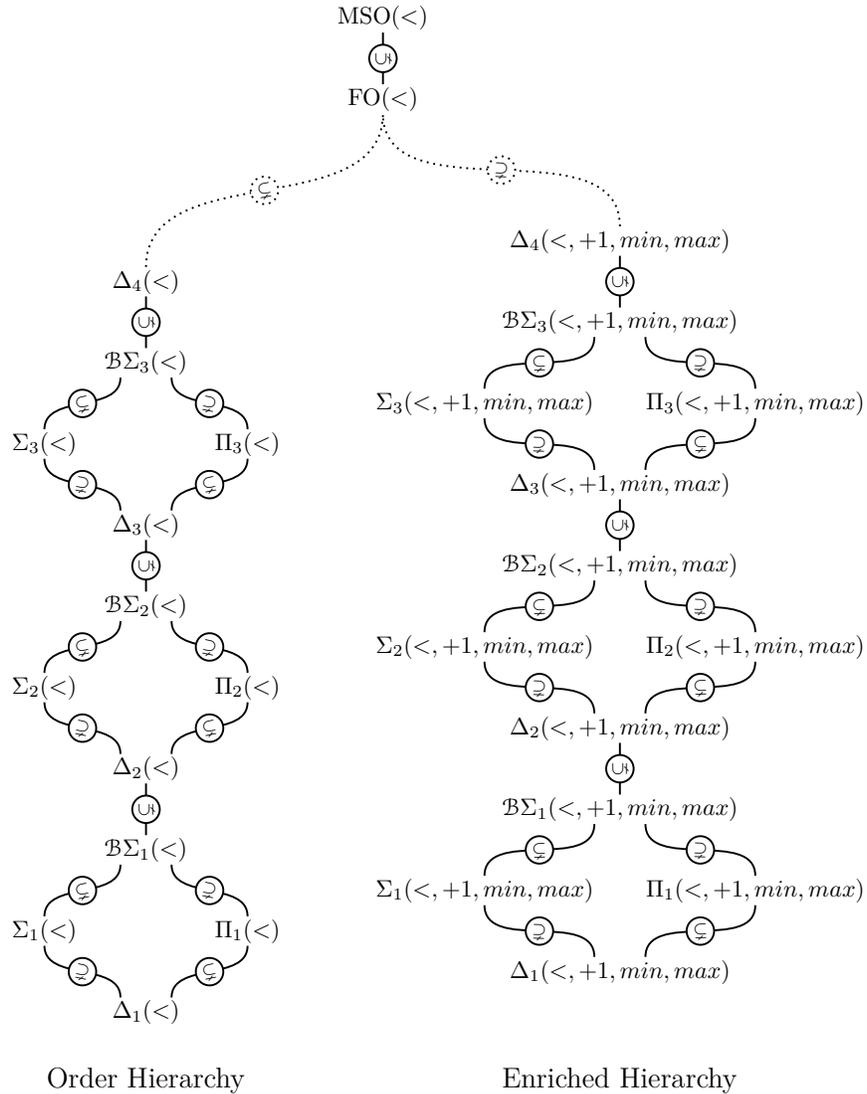
\begin{figure}[ht]
  \begin{center}
    \scalebox{.9}{
      \begin{tikzpicture}
        \node[non] at (0.0,-1.0) {{\large Order Hierarchy}};
        \node[non] at (7.0,-1.0) {{\large Enriched Hierarchy}};

        \node[non] (d1) at (0.0,0.0) {\dewu};
        \node[non] (s1) at ($(d1)+(-1.5,1.2)$) {\siwu};
        \node[non] (p1) at ($(d1)+(1.5,1.2)$) {\piwu};
        \node[non] (b1) at ($(d1)+(0.0,2.4)$) {\bswu};

        \node[non] (d2) at (0.0,3.6) {\dewd};
        \node[non] (s2) at ($(d2)+(-1.5,1.2)$) {\siwd};
        \node[non] (p2) at ($(d2)+(1.5,1.2)$) {\piwd};
        \node[non] (b2) at ($(d2)+(0.0,2.4)$) {\bswd};

        \node[non] (d3) at (0.0,7.2) {\dewt};
        \node[non] (s3) at ($(d3)+(-1.5,1.2)$) {\siwt};
        \node[non] (p3) at ($(d3)+(1.5,1.2)$) {\piwt};
        \node[non] (b3) at ($(d3)+(0.0,2.4)$) {\bswt};

        \node[non] (d4) at (0.0,10.8) {\dew{4}};

        \begin{scope}[xshift=7cm,yshift=0.6cm]

          \node[non] (d1b) at (0.0,0.0) {\dewsu};
          \node[non] (s1b) at ($(d1b)+(-2.0,1.2)$) {\siwsu};
          \node[non] (p1b) at ($(d1b)+(2.0,1.2)$) {\piwsu};
          \node[non] (b1b) at ($(d1b)+(0.0,2.4)$) {\bswsu};

          \node[non] (d2b) at (0.0,3.6) {\dewsd};
          \node[non] (s2b) at ($(d2b)+(-2.0,1.2)$) {\siwsd};
          \node[non] (p2b) at ($(d2b)+(2.0,1.2)$) {\piwsd};
          \node[non] (b2b) at ($(d2b)+(0.0,2.4)$) {\bswsd};

          \node[non] (d3b) at (0.0,7.2) {\dewst};
          \node[non] (s3b) at ($(d3b)+(-2.0,1.2)$) {\siwst};
          \node[non] (p3b) at ($(d3b)+(2.0,1.2)$) {\piwst};
          \node[non] (b3b) at ($(d3b)+(0.0,2.4)$) {\bswst};

          \node[non] (d4b) at (0.0,10.8) {\dews{4}};

        \end{scope}

        \draw[thick] (d1.150) to [out=90,in=-90] node[tag] {\scriptsize
          $\supsetneq$} (s1.south);
        \draw[thick] (d1.30) to [out=90,in=-90] node[tag] {\scriptsize
          $\subsetneq$} (p1.south);
        \draw[thick] (s1.north) to [out=90,in=-90] node[tag] {\scriptsize
          $\subsetneq$} (b1.-150);
        \draw[thick] (p1.north) to [out=90,in=-90] node[tag] {\scriptsize
          $\supsetneq$} (b1.-30);
        \draw[thick] (b1.north) to [out=90,in=-90] node[tag] {\scriptsize
          $\subsetneq$} (d2.south);

        \draw[thick] (d2.150) to [out=90,in=-90] node[tag] {\scriptsize
          $\supsetneq$} (s2.south);
        \draw[thick] (d2.30) to [out=90,in=-90] node[tag] {\scriptsize
          $\subsetneq$} (p2.south);
        \draw[thick] (s2.north) to [out=90,in=-90] node[tag] {\scriptsize
          $\subsetneq$} (b2.-150);
        \draw[thick] (p2.north) to [out=90,in=-90] node[tag] {\scriptsize
          $\supsetneq$} (b2.-30);
        \draw[thick] (b2.north) to [out=90,in=-90] node[tag] {\scriptsize
          $\subsetneq$} (d3.south);

        \draw[thick] (d3.150) to [out=90,in=-90] node[tag] {\scriptsize
          $\supsetneq$} (s3.south);
        \draw[thick] (d3.30) to [out=90,in=-90] node[tag] {\scriptsize
          $\subsetneq$} (p3.south);
        \draw[thick] (s3.north) to [out=90,in=-90] node[tag] {\scriptsize
          $\subsetneq$} (b3.-150);
        \draw[thick] (p3.north) to [out=90,in=-90] node[tag] {\scriptsize
          $\supsetneq$} (b3.-30);
        \draw[thick] (b3.north) to [out=90,in=-90] node[tag] {\scriptsize
          $\subsetneq$} (d4.south);

        \draw[thick] (d1b.150) to [out=90,in=-90] node[tag] {\scriptsize
          $\supsetneq$} (s1b.south);
        \draw[thick] (d1b.30) to [out=90,in=-90] node[tag] {\scriptsize
          $\subsetneq$} (p1b.south);
        \draw[thick] (s1b.north) to [out=90,in=-90] node[tag] {\scriptsize
          $\subsetneq$} (b1b.-150);
        \draw[thick] (p1b.north) to [out=90,in=-90] node[tag] {\scriptsize
          $\supsetneq$} (b1b.-30);
        \draw[thick] (b1b.north) to [out=90,in=-90] node[tag] {\scriptsize
          $\subsetneq$} (d2b.south);

        \draw[thick] (d2b.150) to [out=90,in=-90] node[tag] {\scriptsize
          $\supsetneq$} (s2b.south);
        \draw[thick] (d2b.30) to [out=90,in=-90] node[tag] {\scriptsize
          $\subsetneq$} (p2b.south);
        \draw[thick] (s2b.north) to [out=90,in=-90] node[tag] {\scriptsize
          $\subsetneq$} (b2b.-150);
        \draw[thick] (p2b.north) to [out=90,in=-90] node[tag] {\scriptsize
          $\supsetneq$} (b2b.-30);
        \draw[thick] (b2b.north) to [out=90,in=-90] node[tag] {\scriptsize
          $\subsetneq$} (d3b.south);

        \draw[thick] (d3b.150) to [out=90,in=-90] node[tag] {\scriptsize
          $\supsetneq$} (s3b.south);
        \draw[thick] (d3b.30) to [out=90,in=-90] node[tag] {\scriptsize
          $\subsetneq$} (p3b.south);
        \draw[thick] (s3b.north) to [out=90,in=-90] node[tag] {\scriptsize
          $\subsetneq$} (b3b.-150);
        \draw[thick] (p3b.north) to [out=90,in=-90] node[tag] {\scriptsize
          $\supsetneq$} (b3b.-30);
        \draw[thick] (b3b.north) to [out=90,in=-90] node[tag] {\scriptsize
          $\subsetneq$} (d4b.south);

        \node[non] (fo) at (3.5,13.5) {\fow};
        \node[non] (mso) at (3.5,14.7) {\msow};

        \draw[thick] (fo.north) to [out=90,in=-90] node[tag] {\scriptsize
          $\subsetneq$} (mso.south);

        \draw[thick,dotted] (d4.north) to [out=90,in=-90] node[tag] {\scriptsize
          $\subsetneq$} (fo.south);
        \draw[thick,dotted] (d4b.north) to [out=90,in=-90] node[tag] {\scriptsize
          $\supsetneq$} (fo.south);
      \end{tikzpicture}}
  \end{center}
  \caption{Quantifier Alternation Hierarchies}
  \label{fig:hiera3}
\end{figure}

\subsection{The Membership and Separation Problems}

We now present the two decisions problems investigated in the paper, called membership and separation. Both
problems can be defined for any class of languages, and therefore in
particular for any
class corresponding to a level in either the order or the enriched hierarchy.

\mypar{The membership problem.} Fix a class of languages~$\Cs$. The \emph{membership problem} for $\Cs$ is as follows:

\smallskip
\begin{tabular}{rl}
  {\bf INPUT:}  &  A regular language $L$. \\[.5ex]
  {\bf OUTPUT:} &  Does $L$ belong to $\Cs$?
\end{tabular}

\smallskip\noindent Usually, an algorithm solving the membership problem for
$\Cs$ is called a \emph{decidable characterization} of $\Cs$. Note that, in
general, there is no guarantee that there exists such an algorithm. In fact,
one can actually build classes of regular languages having an undecidable membership
problem from decidable ones using standard
operators~\cite{ABR:Undec-Identity:92,Rhodes99-undec,DBLP:journals/ijac/Auinger10}. However, such classes
are usually \emph{ad hoc} and we have yet to find a natural class of
regular languages having an undecidable membership problem.

Decidable characterizations are known for \fow~\cite{sfo,mnpfo} and up to the
\sicd level in both
hierarchies~\cite{simon75,knast83,arfi87,pwdelta_conf,pwdelta,gssig2} (see
Section~\ref{sec:history} for more details). In this paper, we expand this
knowledge and prove decidable characterizations for the levels \bscd, \dect,
\sict and \pict in both hierarchies. These new results rely on the
investigation of a deeper problem that we now define: the separation problem.

\mypar{The separation problem.} Let
$L,L_0,L_1$ be languages. We say that $L$ \emph{separates} $L_0$ from $L_1$ if
$$L_0 \subseteq L \text{ and } L_1 \cap L = \emptyset.$$
For a class $\Cs$ of languages, we say that $L_0$ is \emph{$\Cs$-separable}
from $L_1$ if some language in $\Cs$ separates $L_0$ from $L_1$. Note that when
$\Cs$ is closed under complement, then $L\in\Cs$ separates $L_0$ from $L_1$
iff $A^*\setminus L$ (which also belongs to \Cs) separates $L_1$ from $L_0$. Observe however that
when $\Cs$ is not closed under complement (for instance when $\Cs = \sic{i}$ or
$\Cs = \pic{i}$), the definition is not symmetrical: it may be the case that
$L_0$ is $\Cs$-separable from $L_1$, while $L_1$ is not $\Cs$-separable from
$L_0$. The separation problem for $\Cs$ is as follows: \smallskip

\begin{tabular}{rl}
  {\bf INPUT:}  &  Two regular languages $L_0$ and $L_1$. \\[.5ex]
  {\bf OUTPUT:} &  Is $L_0$ $\Cs$-separable from $L_1$?
\end{tabular}

\smallskip\noindent The separation problem is a refinement of the membership
problem. Indeed, observe that asking whether a language $L$ is $\Cs$-separable
from its complement is equivalent to asking whether $L \in \Cs$, since the
only potential separator is $L$ itself. Hence, since regular languages are
effectively closed under complement, membership immediately reduces to separation.

The separation problem is known to be decidable for
full \fow~\cite{henckell:1988,DBLP:journals/ijac/HenckellRS10a} thanks to a result of
\citeN{MR1709911}, who proved that the problems solved in these papers are
equivalent to separation. A direct proof for \fow has been obtained recently
by the authors~\citeyear{pzfo,PZ:FO-Sep16}. Separation is also known decidable up to \decd in both
hierarchies~\cite{martens,pvzmfcs13,pzsucc}. In this paper, we present a
solution for \sicd and \picd in both hierarchies.

\smallskip

Note that while we obtain results for both the order and the enriched
hierarchies, we mostly work with the order hierarchy. For the enriched
hierarchy, it is known that for each level, both the
membership~\cite{StrauVD,pinweilVD} and the separation
problem~\cite{Steinberg:delay-pointlikes:2001,pzsucc} can be reduced to the
same problem for the level's counterpart in the order hierarchy. We present
these reductions in Section~\ref{sec:transfer}. In other sections, we work
with the order hierarchy only.

\section{History}
\label{sec:history}
We presented in Section~\ref{sec:quant-altern-hier} two hierarchies within
first-order logic, defined in purely logical terms. Historically, the very
same hierarchies were first considered in a language theoretic framework and
were given combinatorial definitions. In this section, we review the history
related to these hierarchies, starting with this language
theoretic point of view.

We face a compromise between two natural approaches: the first
one would be to present results in a purely chronological order---at the risk
of getting bogged down in details and thereby missing central threads---while
the second one would be to simply highlight major ideas that have emerged all
along the years---at the cost of possibly loosing the time~line. For the sake
of (hopeful) readability, we choose a hybrid approach: we shall review the
main trends and milestones, but we adopt a chronological view for each of them.

We organize the section as follows. In Section~\ref{sec:motivation}, we
motivate why such hierarchies have been considered, and we present their
combinatorial definitions. 
Next in Section~\ref{sec:hierarchies}, we connect the combinatorial and
logical definitions. Finally, in Section~\ref{sec:connections-algebra}, we
will focus on developments that lead to solutions of membership problems for
fragments of these hierarchies. We shall explain along the way how research about
these hierarchies actually influenced a wide scientific domain.

The literature about these hierarchies is abundant. In this paper, we only
focus on some specific aspects. For more details and a complete bibliography,
we invite the reader to refer to the papers surveying the subject,
\emph{e.g.}, by \citeN{DBLP:journals/ita/Brzozowski76}, \citeN{EilenbergB},
\citeN{Weil-ConcatSurvey1989}, \citeN{Thomas:Languages-automata-logic:1997:a}
and
Pin~\citeNN{Pin_1995,Pin_1997,pinbridges,Pin-ThemeVar2011,PZ:Siglog15,Pin:WSPC16}
and to the literature cited in these papers.

\subsection{From Schützenberger's Theorem to Concatenation Hierarchies}
\label{sec:motivation}

We first introduce two concatenation hierarchies defined in combinatorial
terms with the motivation of classifying regular languages. Note that we only
recall their definition in this subsection. In the rest of the section, we
shall present connections between these hierarchies and logical ones, and focus
on tools that were developed to investigate them.

\smallskip The definitions of these hierarchies have their source in
Schützenberger's Theorem~\citeyear{sfo}, which provides an algorithm to decide
whether a regular language is star-free. Recall that star-free languages are built
from singleton languages using a finite number of times
\begin{itemize}
\item \emph{Concatenation products}: if $K$ and $L$ are star-free, then
  so is $KL=\{xy\mid x\in K,\ y\in L\}$,
\item \emph{Boolean combinations}: any finite boolean combination of
  star-free languages is star-free.
\end{itemize}

Schützenberger's Theorem~\citeyear{sfo} states that a language is star-free
if and only if its syntactic monoid is \emph{aperiodic}. The key point is that
aperiodicity of a finite monoid is a decidable property. All proofs of this
result, either close to the original one
\cite{LuchesiSimonKowaltowskiBook,perrinfo,pippengerbook,Higgins:sf:2000,conf/lata/Colcombet11,Pin:Mathematical-Foundations-Automata-Theory:2015:a}
or using alternate ideas, like \cite{MeyerFO,EilenbergB} or \cite{wfo,DGfo},
build a star-free expression from an aperiodic language. However, as explained
in the introduction, this expression may be unnecessarily complicated, in
particular it may involve avoidable interleavings between the complement and
concatenation~operations.

\subsubsection{The Dot-Depth Hierarchy} The question addressed by
\citeN{BrzoDot} when they defined the dot-depth hierarchy was to
classify star-free languages according to this complexity: the level assigned
to a language is the minimal nesting between complement and concatenation that
is necessary to express it with a star-free expression (hence the name:
`dot' means `concatenation'). Its definition is motivated by
understanding the interplay between boolean operations and one of the
fundamental operations involved in the definition of rationality, namely the
concatenation product of languages, as defined~above.

\smallskip

\smallskip
Defining the hierarchy amounts to (1) defining a base level, numbered 0,
consisting of ``simple'' languages, and (2) defining how to build level $i+1$
from level~$i$, for each natural integer $i$. This step can be decomposed in
two sub-steps:
\begin{itemize}
\item Level $i+\frac12$ is the closure of level $i$ under finite unions and (possibly marked)
  products.
\item Level $i$ is the closure of level $i+\frac12$ under finite boolean combinations.
\end{itemize}
Levels of the form $i+\frac12$ for an integer $i$ are called \emph{half
  levels}. They were missing in the original definition, but introduced later
by \citeN{PPOrder}. There are actually several variations of the dot-depth
hierarchy in the literature, see Table~\ref{tab:2}. These variants
consist in choosing the interpretation domain ($A^+$ or $A^*$), or the
base level, or the precise way to go from an integer level to the next
half level. To define half levels, several closure operators have been
considered in addition to finite unions, such as closure under usual
product of languages, or marked product instead of product, defined as 
follows for $a\in A$,
\[
  KaL=\{xay\mid x\in K,\ y\in L\}.
\]
These minor adjustments were motivated by the needs of each paper. For
instance, the definition of \citeN{ThomEqu} is convenient to establish
a correspondence between this hierarchy and the enriched hierarchy at
all levels, including level 0 (whose logical definition differs also
slightly from ours). Likewise, \citeN{pwdelta} only consider languages
of nonempty words to elegantly formulate a correspondence with
algebraic classes. It is easy to get lost in all these variations, but
what the reader should remember is that these changes are harmless:
the definitions coincide on all levels, except~possibly on level 0
(with the restriction that levels of hierarchies over $A^+$ consist in
traces over $A^+$ of languages belonging to hierarchies over~$A^*$).
In particular, for each level, all variants have the same decidability
status with respect to the problems we consider.
\begin{table}[h]
  \centering
    \def\enl{\large\strut}
    \begin{tabular}{|@{\ \enl}c@{ }|c@{ }|c|@{}c@{}|}
      \cline{2-4}
      \multicolumn{1}{c}{}& \multicolumn{1}{|c|}{Domain} & Level 0
      &\begin{tabular}{c}Closure from level\large\strut\\ $i\in\nat$ to level
         $i+\frac12$\\ Union and:\\[.5ex]\end{tabular}\\
      \hline
      \cite{BrzoDot}& $A^*$&Finite or co-finite&$K,L\mapsto KL$\\
      \hline
      \cite{ThomEqu}& $A^+$ &Bool\{$uA^*v\mid u,v\in A^*$\}&$K,L\mapsto KL$\\
      \hline
      \cite{pwdelta}& $A^+$&$\{\emptyset,A^+\}$&\large{\strut}$\begin{array}{c}K,L\mapsto uKvLw,\\ u,v,w\in A^*\end{array}$\\
      \hline
      \cite{Pin-ThemeVar2011}& $A^*$ &Bool\{$uA^*v\mid u,v\in A^*$\}&$K,L\mapsto KaL,\ a\in A$\\
      \hline
    \end{tabular}
    \caption{Some variations in the definition of the dot-depth hierarchy}
  \label{tab:2}
\end{table}

\subsubsection{The Polynomial Closure} Historically, the 
most investigated concatenation operator is the marked product used,
\emph{e.g.}, in the definition of~\citeN{Pin-ThemeVar2011} of the
dot-depth hierarchy (last line of Table~\ref{tab:2}). The operation
that associates to a class of languages its closure under finite
unions and marked products is called  \emph{polynomial
  closure}~\cite{sfo}. It is the common operation employed for going
from level~$i$ to level $i+\frac12$ in the dot-depth and in another hierarchy that
we now present. In other words, this new hierarchy differs from the
dot-depth only by the choice of the base level.

\subsubsection{The Straubing-Thérien Hierarchy} Ten years after the dot-depth
was defined, Straubing \citeNN{StrauConcat,StrauVD} and \citeN{TheConcat}
independently considered a similar and also natural hierarchy. As before, its
definition is by induction.
\begin{itemize}
\item The class of languages of level $0$ is $\{\emptyset,A^*\}$.
\item For any integer $i \geqslant 0$, level
  $i + \frac{1}{2}$ is the polynomial closure of level~$i$.
\item Languages of level $i + 1$ are the finite boolean
  combinations of languages of level~$i+\frac{1}{2}$.
\end{itemize}

Comparing the definition of the hierarchies (last line of Table~\ref{tab:2}
for the dot-depth) yields inductively that each level in the Straubing-Thérien
hierarchy is contained in the corresponding level of the dot-depth hierarchy.
The containment is actually strict, and this makes it natural to investigate
the exact relationship between the two hierarchies. Also clearly, both
hierarchies fully cover all star-free languages.

\subsubsection{Strictness of the Hierarchies}
\label{sec:strictness}
\noindent
The first natural question is whether these definitions actually yield strict
(or infinite, this is equivalent in this case) hierarchies  or whether they
collapse. The dot-depth hierarchy was shown to be strict
by~\citeN{BroKnaStrict} for alphabets of size at least 2 on integer levels:
one can show that $L_n$ defined inductively by $L_0=\varepsilon$ and
$L_n=(aL_{n-1}b)^*$ is at level~$n$ in the dot-depth hierarchy. Another proof
of the fact that the hierarchy is strict based on algebra was given by
\citeN{StrauConcat}. Yet other proofs were presented by Thomas
\citeNN{ThomStrict2,ThomStrict}, using arguments based on \efgame
games. 
All these proofs easily imply that the hierarchy is strict on \emph{all}
levels, including half levels.

Regarding the Straubing-Thérien hierarchy, strictness was established by
\citeN{Margolis&Pin:Products-group-languages:1985:a} (see also
\cite{therien:powersurvey} for a short proof). Strictness actually follows
from a more general result of~\citeN{StrauVD} that connects both
hierarchies, see below.

The fact that both hierarchies are strict makes it relevant to investigate the
membership problem at each level of each of these hierarchies. Relatively few
results are known, but this question motivated a wealth of fruitful ideas. We
shall describe progress in this line of research in
Section~\ref{sec:connections-algebra}. Before, let us
connect the combinatorial definitions with the ones relying on first-order
logic, which we presented in Section~\ref{sec:quant-altern-hier}.

\subsection{Connections with Logic}
\label{sec:hierarchies}

The interest in the dot-depth and Straubing-Thérien hierarchies increased
after relationships were discovered in the eighties, first by Thomas, then by
Perrin and Pin between them and logical hierarchies. Recall that we defined 
two alternation hierarchies within first-order logic in Section~\ref{sec:words}:
the order hierarchy, which counts alternations between blocks of $\exists$ and
$\forall$ quantifiers for formulas in the signature
$\{{<},\ P_a\mid a\in A\}$, and the enriched hierarchy, which counts the same
alternations for formulas in the signature
$\{{<},\ {+}1,\ \mathit{min},\ \mathit{max},\ P_a\mid a\in A\}$.

\smallskip Recall also that \citeN{sfo} proved that star-free languages are
exactly first-order definable ones. \citeN{ThomEqu} discovered a more precise
correspondence, level by level, between the dot-depth hierarchy of star-free
languages and the enriched quantifier alternation hierarchy within \fo. Note
that~\citeN{ThomEqu} did not actually state the result for half levels, as
they were not considered. However, it can be easily derived from the arguments
of the paper.

\begin{theorem}[\citeN{ThomEqu}] \label{thm:ddequi}
  Let $i \geqslant 0$. Then,
  \begin{itemize}
  \item A language has dot-depth $i$ if and only it is definable in \bswsi.
  \item A language has dot-depth $i+\frac{1}{2}$ if and only it is definable in \siws{i+1}.
  \end{itemize}
\end{theorem}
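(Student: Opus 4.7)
The natural plan is to prove both correspondences simultaneously by induction on $i$, reducing the boolean-combination case to the half-level case via the observation that $\mathcal{B}\Sigma_{i+1}$ is, by definition, the closure under boolean operations of $\Sigma_{i+1}$, mirroring exactly the way level $i+1$ of dot-depth is built from level $i+\frac12$. So the entire statement reduces to proving, for every $i\ge 0$:
\[
\text{dot-depth }i+\tfrac12 \;=\; \Sigma_{i+1}(<,+1,\min,\max).
\]
For the base case $i=0$, I would check directly that languages of the form $u A^* v$ (and their finite unions, which is what polynomial closure of level $0$ delivers) are exactly the $\Sigma_1(<,+1,\min,\max)$-definable ones: existentially quantifying positions for each letter of $u$ (pinned to the start via $\min$ and $+1$) and each letter of $v$ (pinned to the end via $\max$ and $+1$) gives a quantifier-free kernel, hence a $\Sigma_1$ sentence; conversely, a $\Sigma_1(<,+1,\min,\max)$ sentence with quantifier-free kernel is a finite disjunction of such prefix/suffix/internal-pattern conditions, each of which is a finite union of languages of the form $u A^* v$.

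For the inductive step, I would prove each inclusion separately. For the easy direction, consider a marked product $K_0 a_1 K_1 \cdots a_n K_n$ with each $K_j$ at dot-depth $i$, hence (by induction) defined by a $\mathcal{B}\Sigma_i(<,+1,\min,\max)$ formula $\varphi_j$. I would write the sentence
\[
\exists x_1 \cdots \exists x_n \bigl( x_1 < \cdots < x_n \wedge \bigwedge_j P_{a_j}(x_j) \wedge \varphi_0^{[<x_1]} \wedge \varphi_1^{[x_1,x_2]} \wedge \cdots \wedge \varphi_n^{[>x_n]} \bigr),
\]
where $\varphi_j^{[x,y]}$ denotes the relativization of $\varphi_j$ to the open interval $(x,y)$. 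Because the enriched signature includes $+1$, $\min$, and $\max$, the relativized predicates $\min$ and $\max$ of a subinterval become $+1$-successor/predecessor of the endpoints, and the relativized ${<}$ is just ${<}$; so relativization preserves the class $\mathcal{B}\Sigma_i(<,+1,\min,\max)$. The outer existential block then yields a $\Sigma_{i+1}$ sentence. Closing under union gives all of dot-depth $i+\frac12$.

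For the converse direction, take a $\Sigma_{i+1}(<,+1,\min,\max)$ sentence $\exists x_1 \cdots \exists x_n\,\psi$ with $\psi \in \Pi_i(<,+1,\min,\max) \subseteq \mathcal{B}\Sigma_i(<,+1,\min,\max)$. I would enumerate over every ordering $\sigma$ of the $x_j$'s (allowing coincidences) and every assignment of letters to them; for each such choice the truth of $\psi$ on a word depends only on the tuple of letters at the landmarks and on the ``types'' of the factors between consecutive landmarks. The key technical point is a converse relativization lemma: a formula in $\mathcal{B}\Sigma_i(<,+1,\min,\max)$ with two free parameters $x,y$ is equivalent to a boolean combination of conditions on the factor strictly between $x$ and $y$ (expressed by a parameter-free $\mathcal{B}\Sigma_i$ formula), conditions on the prefix up to $x$ and on the suffix from $y$, and local conditions on the neighborhoods of $x$ and $y$. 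By induction each of the factor conditions defines a dot-depth $i$ language, and the sentence becomes a finite union of marked products, i.e.\ dot-depth $i+\frac12$.

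The main obstacle is the relativization lemma, both in the forward direction (keeping the class unchanged when we restrict to an interval) and in the converse direction (factoring an arbitrary $\mathcal{B}\Sigma_i$ formula on a word into independent conditions on its pieces). Here the presence of ${+}1$, $\min$, $\max$ is essential and is exactly what makes the enriched hierarchy, rather than the order hierarchy, match dot-depth. Once this lemma is established carefully by induction in parallel with the main statement, the argument goes through smoothly; everything else is routine bookkeeping on prenex normal forms and enumeration of orderings.
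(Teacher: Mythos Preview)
The paper does not prove this theorem: it is stated in the historical Section~\ref{sec:hierarchies} and attributed to \citeN{ThomEqu}, with the remark that Thomas did not treat half levels but that they ``can be easily derived from the arguments of the paper.'' So there is no proof in the present paper to compare your proposal against.

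Your outline is essentially the standard argument and is sound. The forward direction (marked products to $\Sigma_{i+1}$) is exactly right, and your observation that relativization to an open interval preserves the level in the enriched signature because the interval's $\min$/$\max$ become $x{+}1$/$y{-}1$ is the key point. For the converse, your ``converse relativization lemma'' is really a composition theorem: the $\mathcal{B}\Sigma_i$-type of bounded rank of a word with marked positions is determined by the labels at the marks and the $\mathcal{B}\Sigma_i$-types of the factors in between. Thomas proves this via Ehrenfeucht--Fra\"iss\'e games (the pre-congruence property of the game relations), which is cleaner than the purely syntactic factorization you sketch; your formulation in terms of ``conditions on the factor, the prefix, the suffix, and local neighborhoods'' would work but needs the composition theorem underneath it to be made rigorous. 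Either route yields the same result with comparable effort.
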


This connection with finite model theory and descriptive complexity sustained
an earlier informal statement by~\citeN{DBLP:journals/ita/Brzozowski76}
arguing that dot-depth is a relevant complexity parameter. The argument was
based on the fact that star-free expressions can express feedback-free
circuits, and that concatenation increases the depth of such circuits:
\emph{since concatenation (or ``dot'' operator) is linked to the sequential
  rather than the combinational nature of a language, the number of
  concatenation levels required to express a given aperiodic language should
  provide a useful measure of complexity.} Since it was known that the
nonelementary complexity of standard problems for \fo is tied to quantifier
alternation~\cite{Stockmeyer:1973:WPR:800125.804029}, Theorem~\ref{thm:ddequi}
brought mathematical evidence that the level in the dot-depth hierarchy of a language
is indeed a meaningful complexity measure, thus supporting Brzozowski's intuition.

\smallskip
A statement similar to Theorem~\ref{thm:ddequi} was established by~\citeN{PPOrder} for the
Straubing-Thérien hierarchy, which corresponds to the order hierarchy.

\begin{theorem}[\citeN{PPOrder}] \label{thm:stequi} Let
  $i \geqslant 0$ be an integer. Then,
  \begin{itemize}
  \item A language has level $i$ in the Straubing-Thérien hierarchy if and only if it is
    definable in \bswi.
  \item A language has level $i+\frac{1}{2}$ in the Straubing-Thérien hierarchy if and only if
    it is definable in \siw{i+1}.
  \end{itemize}
\end{theorem}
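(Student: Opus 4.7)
The plan is to prove both items by simultaneous induction on $i\ge 0$. The second item (the half-level characterization) is where all the work lies, because once it is established at level $i+\frac{1}{2}$, item~1 at level $i+1$ follows immediately: level $i+1$ of the Straubing-Thérien hierarchy is by definition the boolean closure of level $i+\frac{1}{2}$, and \bsw{i+1} is by definition the boolean closure of \siw{i+1}, so the equivalence transfers directly. The base case $i=0$ is elementary: level~0 equals $\{\emptyset,A^*\}$, which matches \bsw{0} because quantifier-free sentences are semantically $\top$ or $\bot$; and the polynomial closure of $\{\emptyset,A^*\}$ consists of finite unions of languages of the form $A^*a_1A^*\cdots a_nA^*$, which is exactly what \siw{1} sentences define (put the quantifier-free matrix in disjunctive normal form and distribute existentials over disjunctions).

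For the inductive step, the goal is to show that the polynomial closure of \bswi coincides with \siw{i+1}. For the inclusion from right to left, a marked product $L_0a_1L_1\cdots a_nL_n$ with each $L_j$ defined by a \bswi formula $\psi_j$ is captured by the sentence that existentially quantifies over positions $x_1<\cdots<x_n$ labeled $a_1,\ldots,a_n$ and asserts the relativization of each $\psi_j$ to the interval delimited by consecutive selected positions; relativization preserves quantifier alternation, so the matrix remains \bswi. For the converse, a \siw{i+1} sentence has prenex form $\exists x_1\cdots\exists x_n\,\psi(\vec x)$ with $\psi$ in \piw{i}, and I would decompose its language as a finite union, indexed by the order type and the labels of the witnesses, of marked products $K_0a_1K_1\cdots a_n K_n$, where each $K_j$ is the set of words $w$ such that $\psi$ relativized to the corresponding interval holds in $w$. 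Since \piw{i}\ $\subseteq$\ \bswi (a \piw{i} formula is the negation of a \siwi formula, and hence in particular a boolean combination of such), the induction hypothesis places each $K_j$ in level~$i$.

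The main obstacle is the prenex rearrangement required in the forward inclusion: placing an outer existential block in front of a \bswi matrix must yield a \siw{i+1} formula and not merely a \siw{i+2} one. I would handle this by putting the matrix in disjunctive normal form, distributing the outer existentials over the disjunctions, separating each conjunction into its \siwi and \piwi conjuncts, and merging the outer existential block with the leading $\exists$-blocks of the \siwi side. Completing this step cleanly requires the companion semantic inclusion \bsw{i-1}\ $\subseteq$\ \piw{i}, which must be carried along in the same induction; this auxiliary fact is the bookkeeping device that keeps the alternation count exact and is the only truly syntactic ingredient of the proof.
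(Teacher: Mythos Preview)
The paper does not give its own proof of this theorem: it is quoted as a result of Perrin and Pin (the reference \texttt{PPOrder}) in the historical Section~3, with no argument supplied. There is therefore nothing in the paper to compare your proposal against.

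That said, your outline is the standard inductive route and is essentially sound, with one imprecision worth flagging. In the direction ``\siw{i+1} sentence $\Rightarrow$ level $i+\tfrac12$'', you write that each factor $K_j$ is ``the set of words $w$ such that $\psi$ relativized to the corresponding interval holds in $w$''. This is not quite well defined: $\psi(\vec x)$ is a single \piw{i} formula speaking about the whole word, and it does not restrict to one interval in any canonical way. The correct mechanism is a composition argument: once the order type and labels of $\vec x$ are fixed, the truth of $\psi$ depends only on the \bswi-type (of the appropriate rank) of each interval between consecutive witnesses --- this is exactly the pre-congruence/congruence property of $\ksieq{i}$ and $\kbceq{i}$ recorded in the paper as Lemma~\ref{lem:efconcat} and Fact~\ref{fct:equivalence}. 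The factors $K_j$ are then (finite unions of) such type classes, which lie in level~$i$ by the induction hypothesis. Your sketch collapses this step; the idea is right, but the bookkeeping device is types and congruence, not a direct relativization of~$\psi$.

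Your handling of the other direction and of the ``main obstacle'' (that $\exists\vec x$ in front of a \bswi matrix yields \siw{i+1} rather than \siw{i+2}) is correct. The auxiliary fact you actually need there is simply that each level \siw{j} and \piw{j} is closed under finite conjunctions and disjunctions, which is a routine induction on~$j$; your phrasing in terms of ``\bsw{i-1}\ $\subseteq$\ \piw{i}'' is a slightly roundabout way of getting at the same closure.
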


In addition, \citeN{PPOrder} introduced half levels as the closure under
finite unions and intersections of marked products of the preceding level (it
turns out that intersection is actually useless, see~\cite{Arfi_1991,PinIntersection13}). Finally, they extended the
correspondence to infinite~words.

\medskip The results obtained during the 70s and the 80s fostered many
connections among several communities of researchers, working in automata
theory, semigroup theory or finite model theory, and laid the ground of a
clean framework, with tools from these different fields. The research effort
continued in the 90s, in particular with the developments of algebraic methods
to investigate membership problems.

\subsection{Connections with Algebra: the Syntactic Approach}
\label{sec:connections-algebra}

\medskip
Knowing that both hierarchies are strict and that they
capture a meaningful complexity measure, the most natural question is
whether we can compute the level in each of these hierarchies of an
input regular language. This corresponds to solving membership for
each level. Even though the membership problem is standard nowadays,
this is only after Schützenberger's work that it was identified as the
salient problem to look at. Moreover, \citeN{sfo} also proposed a
convenient tool to solve this problem, namely the syntactic monoid.
See \cite{Pin_1997} for a comprehensive survey on this topic. 

\subsubsection{Syntactic Monoids: Definition and Seminal Result} The \emph{syntactic congruence}
$\sim_L$ of a language~$L$, defined by \citeN{schsynt}, relates those words
that cannot be distinguished by the language when embedded in the same
context. Formally,
\[
  u\sim_L v\Longleftrightarrow (\forall x,y\in A^*,\ xuy\in L\Leftrightarrow xvy\in L).
\]
The key result of Myhill and Nerode \citeyear{mn58} implies that a language is
regular if and only if this congruence has finite index. Hence, in this case, the
quotient set $A^*/{\sim_L}$ is a computable finite monoid, called the
\emph{syntactic monoid} of the language. An easy to check but important
property is that $L$ is a union of $\sim_L$-classes, so that the so-called
\emph{syntactic morphism} from $A^*$ to $A^*/{\sim_L}$ that maps a word to its
$\sim_L$-class recognizes $L$ (in the sense that $L$ is a union of
$\sim_L$-classes, and so it is the preimage of a subset of the syntactic monoid under
the syntactic morphism).

\smallskip
Schützenberger's Theorem precisely states that a language is
star-free if and only if it is \emph{aperiodic}, \emph{i.e.}, its syntactic monoid
satisfies the equation
\[
  x^\omega=x^{\omega+1},
\]
where $\omega$ represents some large integer, which can be computed from the
language as well. This means that for every element $x$ of the syntactic
monoid of the language, the equality $x^\omega=x^{\omega+1}$ has to hold.
Since the syntactic monoid of the input language is finite and computable from
any representation of the language, checking whether it is is aperiodic is a
\emph{decidable} property. To sum up, Schützenberger's Theorem~\citeyear{sfo} reduces
a nontrivial semantic property (to be definable in some fragment for a
language) into a purely syntactic, easily testable condition (to satisfy an
equation for a finite, computable algebra).

\smallskip The importance of this result stems from two reasons:

\begin{itemize}
\item First, Schützenberger established membership as the standard problem
  that is worth investigating in order to understand a class of regular
  languages. This is justified, since obtaining a decidable characterization
  requires a deep insight about the class, as this amounts to capturing in a
  \emph{single} algorithm \emph{all} properties that can be expressed within
  the class.

\item Schützenberger also proposed a methodology which proved successful in
  solving other membership problems. Let us briefly explain the core of his
  strategy. The hardest direction is to build a star-free expression for a
  language whose syntactic monoid is aperiodic. The key observation is that
  either \emph{all} languages recognized by a syntactic monoid are star-free,
  or \emph{none} of them~is. Hence, instead of building a star-free expression
  for a single language, one may rather do so for all languages recognized by
  its syntactic morphism. The payoff of this approach may not be immediate, as
  the goal is more demanding than the original~one. Yet, the languages
  recognized by the syntactic monoid are connected one another, which
  makes the method amenable to induction as soon as one can decompose each
  language into simpler ones using only star-free operations.
\end{itemize}

Despite the impact that is acknowledged nowadays to Schützenberger's
methodology, about ten more years were necessary to cement it as a fundamental
approach.

\subsubsection{Validation of the Syntactic Approach}
Notable breakthroughs after Schützenberger's Theorem were obtained
by Simon, a student of Brzozowski, in his PhD~\citeyear{SimonPhD} shortly
after the dot-depth hierarchy was defined. His results had a major impact on
research in the theoretical computer science community, particularly two of
them, characterizing important subclasses of level one [\citeNP{BSlocalConf};
\citeyearNP{BSlocal}],~\cite{simon75}:
\begin{enumerate}[label=$\alph*)$]
\item The class of locally testable languages, \emph{i.e.}, such that
  membership of a word in such a language is determined only by looking at
  infixes, prefixes, infixes up to a given length. This result was also
  obtained independently by
  \citeN{McNaughton:Algebraic-decision-procedures-local:1974:a} and
  \citeN{Zalcstein:Locally-testable-languages:1972:a}.
  It is easy to check that these languages form a subclass of dot-depth one.

\item The class of piecewise testable languages, \emph{i.e.}, such that
  membership of a word in such a language is determined only by looking at its
  scattered subwords up to a given length. It is the boolean algebra generated
  by languages of the form $A^*a_1A^*\cdots A^*a_nA^*$. This is exactly the
  first level of the Straubing-Thérien hierarchy, corresponding to the
  class~$\bsw1$. Note however that this hierarchy was not already defined at
  that time.
\end{enumerate}

Before presenting other results about the hierarchies, let
us comment these results and explain why they deeply influenced the theoretical computer
science landscape.

\begin{itemize}
\item The main reason why Simon's results were recognized as important is that
  they supported Schützenberger's methodology as the ``right'' one to tackle
  membership questions, by underlining the key role played by the syntactic
  monoid in automata theory. Indeed, Schützenberger and Simon both used the
  same strategy in order to obtain their decidable characterizations, by
  reducing membership to checking equations on the syntactic monoid. This
  common approach was further validated by \citeN{EilenbergB} who established
  a one-to-one correspondence between varieties of regular languages and
  varieties of finite monoids. It was complemented by a theorem of
  \citeN{Reiterman:Birkhoff-theorem-finite-algebras:1982:a}, which shows that
  these algebraic classes can be described by a (possibly infinite)
  set of equations (such as
  $x^\omega=x^{\omega+1}$ for aperiodic monoids, which characterize star-free
  languages). Note however that Eilenberg's and Reiterman's theorems
  are generic results, useless for actual characterizations. They do
  \emph{not} provide a uniform solution to all membership problems:
  the actual decidable characterization depends of course on the class
  under investigation (the topic of this paper is precisely to
  establish such characterizations for levels in the hierarchies).
  Yet, Eilenberg's and Reiterman's theorems entail that any 
  class of regular languages that forms a variety can be characterized by
  equations satisfied by all syntactic monoids of languages of the class.
  Since all integer levels in the hierarchies are indeed varieties, a major
  research direction was to understand the variety of finite monoids
  associated to them. In this paper, we provide equations for \bsw2.

\item  A second reason why the study on the hierarchies in general and
  Simon's results in particular were recognized as important is that they
  connected several areas: automata theory, finite semigroup theory, but also
  combinatorics on words (see \cite{SakaSimon} or more recently
  \cite{pk-phs-mk-indexSimon}) and finite model theory. They received a number
  of proofs, either reminiscent of the original ones
  \cite{Lallement:1979:SCA:539871,Pin:Varietes-langages-formels:1984:a,Howie91,Pin:Mathematical-Foundations-Automata-Theory:2015:a},
  or using arguments of different flavors. For instance, just for the case of
  piecewise testable languages,
  \citeN{Straubing&Therien:Partially-ordered-finite-monoids:1988:a} gave an
  alternate proof based on an early use of ordered monoids in automata theory.
  Ordered monoids turned out to be a key notion in the study of the
  hierarchies (see below), and the result was reproved by
  \citeN{HenckellPinJordered}. Almeida~\citeNN{AlmeidaBS1,AlmeidaBook}
  presented a proof based on profinite topology, \citeN{Higgins_1997} a proof
  using representations by transformation semigroups,
  \citeN{Klima:Piecewise-testable-languages-combinatorics:2011:a} a purely
  algebraic one. Simon's decision criteria for both classes were refined to
  understand the computational hardness of the associated membership
  problems~\cite{Stern:Complexity-some-problems-from:1985:a,DFA-sf-PSPACE} and
  to improve the complexity of original
  algorithms~\cite{Stern:Characterizations,PinLTT,PinLTTjournal,Trahtman:LTTDFA,Trahtman:PT-LTT2001,KlimaPTComb}.

\item  At last, Simon's work contains ingredients that inspired several
  researchers to solve other membership problems. For instance, the result on
  locally testable languages \cite{BSlocal} introduces the notion of graph
  congruence, reused by \citeN{KnastGraph83} to give a membership algorithm
  for level~1 of the dot-depth hierarchy. Simon's result was also influential
  for characterizing the class of locally threshold testable languages, where
  membership of a word depends not only on the set of infixes, but also on the
  number of such infixes counted up to a threshold. This class was
  characterized by Beauquier and Pin~\citeNN{Beauquier_1989,Beauquier_1991} by
  relying on a deep paper of
  \citeN{Therien&Weiss:Graph-congruences-wreath-products:1985:a} that used
  again graph congruences (a completely different proof by \citeN{bojLTT}
  relies on the decidability of Presburger logic and Parikh's theorem). Graph
  congruences in turn are the premises of the framework developed by
  \citeN{TilsonCat}, motivated by difficult decision problems in semigroup
  theory (in particular, the decidability of the well-known Krohn-Rhodes
  hierarchy, which classifies languages according to serial~decomposition).
\end{itemize}
\medskip
Following Simon's results, level~1 in the dot-depth hierarchy was successfully
characterized by Knast. The proof is, however, much more technical. To sum up,
\begin{itemize}
\item \citeN{simon75} characterized level 1 in Straubing-Thérien
  hierarchy, or equivalently the fragment $\bsw1$ of first-order logic.
\item Knast~\citeyear{knast83,KnastGraph83} characterized level 1 in the dot-depth
  hierarchy, or equivalently the fragment $\bsws1$ of first-order logic.
\end{itemize}
While these results and others for classes outside the hierarchies gather
evidence that the syntactic approach is relevant to tackle membership
problems, the time intervals between significant contributions regarding
levels in the hierarchy show that the problem is difficult. Despite a wealth
of results towards a solution for level 2, the last complete statement until
the conference version of the present paper~\cite{pzqalt} regarding integer
levels went back to Knast~\citeyear{knast83,KnastGraph83}. 

\subsubsection{Connections between the Hierarchies}
It should be noted that after Knast's result, researchers became primarily
focused on the Straubing-Thérien hierarchy, rather than on the dot-depth one.
The reason is that \citeN{StrauVD} proved that it is the most fundamental of
the two from the membership point of view. More precisely, for any integer
level $i\geq2$, membership for level $i$ in the dot-depth hierarchy can be
effectively reduced to membership for the same level in the Straubing-Thérien
hierarchy. This was generalized
to half levels by \citeN{pinweilVD}. This explains why we also work with the Straubing-Thérien hierarchy
in the present paper.  We shall detail the actual
reductions in Section~\ref{sec:transfer}.

\subsubsection{Limits of the Syntactic Approach}
In view of Straubing's result, the principal objective of researchers became
to solve membership for level~2 in the Straubing-Thérien hierarchy (at that
time, half levels were not already defined). While a lot
of effort was devoted to solving this problem, this proved very difficult.
Over the years, several attempts were made:

\begin{itemize}
\item First, partial results were obtained by restricting the set of possible
  input languages. For example, level~2 was characterized by \citeN{StrauDD2}
  for languages over an alphabet of size 2. Other partial results were
  obtained by \citeN{Cowan_1993}, building on results of \citeN{Weil_1989}
  and~\citeN{Straubing_1992}.

\item Second, many upper bounds of the actual level~2 were introduced. Usually
  defined by a set of equations and having a decidable membership
  problem, these upper bounds were often presented as conjectures. When such a
  conjecture was disproved, a new one was proposed to tighten the gap between
  the proposed candidate and the actual level~2. For instance, Straubing
  \citeNN{StrauDD2Conf,StrauDD2} proposed such a candidate and proved that it
  holds in some particular cases~\cite{Straubing_1992}. Another version was
  proposed by~\citeN{pin:hal-00143953}, and refined by themselves
  in~\citeyear{Pin_2001}. More recently, \citeN{AK2009} disproved the
  conjecture of Straubing, and proposed a new candidate~\cite{AK2010}. All
  these conjectures actually provided strict upper bounds for level 2.

\item A third approach was to reduce the decidability of level~2 to distinct
  mathematical problems. A remarkable example is the relationship between the
  decidability of level 2 and a purely algebraic problem. This connection was
  discovered by \citeN{pin-straubing:upper}: they considered the variety
  generated by all finite monoids of upper triangular boolean matrices, and
  proved that it corresponds exactly to level~2 in the Straubing-Thérien
  hierarchy. Unfortunately, this problem turned out to be as hard as the
  original one. 
\end{itemize}

\subsubsection{Half Levels and Ordered Monoids} All these attempts underlined
that level~2 was difficult to attack directly. This motivated the
investigation of the half levels, introduced by~\citeN{PPOrder}. At first
glance, they may seem to be just an additional refinement, but this is not the
case. First, half levels are arguably more fundamental than integer levels, since
each integer level can be reconstructed from the preceding half level by
closure under boolean operations. Also importantly, half levels are simpler to
deal with, and understanding them is a first step towards membership
algorithms for integer levels. For instance, gathering enough information
about level $\frac32$ is crucial in our approach to the solution of the
membership problem at level~2.

The main issue with half levels is that they are not closed under complement. This
is a problem for generalizing Schützenberger's methodology, which translates
the semantic membership problem into a property of the syntactic monoid.
Indeed, the reason why the syntactic approach works for ``varieties'' is
that, for such a class~\Cs, either all or none of the languages recognized by
a syntactic monoid belong to \Cs. This is precisely what fails for half
levels, since a language is recognized by a monoid if and only if so is its complement. In other
words, the syntactic monoid is not well suited to capture classes that are not
closed under complement, and has therefore to be adapted if one wants
to generalize Eilenberg's Theorem.

Nonetheless, Arfi \citeNN{arfi87,Arfi_1991} managed to show that levels
$\frac12$ and $\frac32$ of the Straubing-Thérien hierarchy have decidable
membership, and to describe the associated classes of languages. This is very
easy for level $\frac12$. A downside of this approach for level $\frac32$ is
that it relied on involved results of~\citeN{Hashiguchi_1983}, thus hiding
the core of the argument. Note also that Straubing's transfer result did not
apply to half levels, since it relied on the correspondence between varieties
of languages and varieties of finite semigroups. This made it relevant to
investigate level $\frac32$ in the dot-depth hierarchy as well, a task
successfully achieved by \citeN{gssig2}. However, this combinatorial and
technical proof is not easily amenable to generalization.

This made it crucial to understand what could be saved from Schützenberger's
approach. In fact, Arfi's characterization for level $\frac12$ is explicitly
stated as a property to be satisfied by the syntactic monoid. This property is
not an equation, but a closure property of the accepting set of the language.
This led \citeN{pinordered} to develop an Eilenberg-Schützenberger's
methodology for classes that are not closed under complement. Pin's idea was
to equip monoids with a partial order relation compatible with multiplication
and to constrain accepting sets to be upward closed. This yields an adapted
notion of recognizability, for which the set of languages recognized by an
ordered syntactic monoid is not closed anymore under complement, but still
carries enough structure and information to recover the generic
methodology~\cite{pinordered}, including equational descriptions of such
classes~\cite{pin:hal-00143951}. Cleaner decidability membership algorithms
were subsequently re-obtained for level~$\frac32$ \cite{pwdelta_conf,pwdelta,bfacto}.
Instead of Hashiguchi's black box result, the proofs in these papers rely on a
simple tool that we shall also use: the factorization forest Theorem of
\citeN{simonfacto}. Finally, Straubing's results connecting the
Straubing-Thérien and the dot-depth hierarchies were also generalized to this
new setting~\cite{pinweilVD}, thus giving an alternate proof of the
decidability of level $\frac32$ in the dot-depth hierarchy.

\subsubsection{Level 2 and Above} This paper continues this research
effort. As explained in the introduction, a key ingredient in our
approach is to consider the \emph{separation problem}, which is more
demanding than membership. The core of our results is a solution to
this problem for \siwd, i.e., the level~$\frac32$ in the
Straubing-Thérien hierarchy (for which membership is already known to
be decidable). We are then able to obtain membership algorithms for
both \siwt and \bswd by building upon this first separation algorithm.
This highlights the fact that a solution for the separation problem
associated to some class carries information that can be exploited to
tackle weaker problems (such as membership) for more complicated
classes. In particular, a good illustration of this is the fact that
our membership algorithm for \siwt follows from a generic connection
between separion and membership: for \emph{any integer $i$}, a separation algorithm for
\siwi yields a membership algorithm for $\siw{i+1}$. On the
other hand, our membership algorithm for \bswd results from a specific
and detailed analysis of the separation algorithm for level~$\siwd$.

Finally, note that while we work with the order hierarchy (i.e., the
Straubing-Thérien hierarchy) in most sections, we come back to the
enriched hierarchy (i.e., the dot-depth hierarchy) at the end of the
paper. Using previously known transfer theorems, we are able to lift
all results that we have proved for levels in the order hierarchy to
the same levels in the enriched hierarchy.

\section{Tools}
\label{sec:tools}
In this section, we recall the definitions of two well-known combinatorial
tools used several times in the paper:

\begin{itemize}
\item the \emph{\efgame game} variant corresponding to levels of the order
  hierarchy, which are a mean to capture their expressive power in terms of
  games. For more on \efgame games, see for instance \cite{bookstraub,Immerman:DC:1999,LibkinFMT:2004}.
\item the definition of regular languages in terms of \emph{monoids}. This
  definition makes it possible to use convenient combinatorial results, in
  particular Simon's Factorization Forests Theorem, which we also present in
  this section.
\end{itemize}

\subsection{Logical Tools: \efgame Games}

It is usual to classify first-order formulas according to their
\emph{quantifier rank}, \emph{i.e.}, the length of the longest sequence of
nested quantifiers in the formula. For example, the following formula
\[
  \forall x\ P_a(x) \Rightarrow ((\exists y\ (y < x \wedge P_c(y)) \wedge (\exists y\exists z\ (x < y < z) \wedge P_b(y)))
\]
\noindent
has quantifier rank $3$. We use the quantifier rank to associate to our logics
binary relations over the set $A^*$. We begin with the logics \siwi. Set
$k,i \in \nat$ and $w,w' \in A^*$. We write
\[
  w \ksieq{i} w'
\]
if \emph{any} \siw{i} formula of quantifier rank $k$ satisfied by $w$ is also
satisfied by $w'$. Observe that since a \piw{i} formula is the negation of a
\siw{i} formula, we have $w \ksieq{i} w'$ if and only if any \piw{i} formula
of quantifier rank $k$ satisfied by $w'$ is also satisfied by $w$. Moreover,
the following facts are immediate from the definition.

\begin{fact} \label{fct:preorder}
  For all $k,i \in \nat$, $\ksieq{i}$ is a preorder.
\end{fact}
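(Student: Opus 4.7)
The plan is to unfold the definition of $\ksieq{i}$ and check the two axioms of a preorder, namely reflexivity and transitivity. The proof is essentially immediate from the set-theoretic form of the definition, so I would present it as a short verification rather than develop any machinery.

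First I would record reflexivity: for any word $w \in A^*$ and any \siw{i} formula $\varphi$ of quantifier rank~$k$, if $w \models \varphi$ then trivially $w \models \varphi$, so $w \ksieq{i} w$. No case analysis is required and no property of \siw{i} beyond the definition is used.

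Next I would establish transitivity. Suppose $w \ksieq{i} w'$ and $w' \ksieq{i} w''$, and let $\varphi$ be an arbitrary \siw{i} formula of quantifier rank~$k$ with $w \models \varphi$. By the first relation, $w' \models \varphi$; by the second, $w'' \models \varphi$. Since $\varphi$ was arbitrary, this gives $w \ksieq{i} w''$. I would then remark, only for emphasis, that symmetry is not claimed (and indeed fails in general for $i \geq 1$), because \siw{i} is not closed under negation; this motivates the notation $\lesssim$ rather than $\cong$.

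There is no real obstacle here: the statement is a direct consequence of the definition together with the trivial observation that ``$w \models \varphi \Rightarrow w' \models \varphi$'' composes transitively over a fixed family of formulas. The only thing to be careful about is to keep the quantifier rank $k$ and the level $i$ fixed throughout the transitivity argument, so that the formula $\varphi$ witnessing the chain of implications lies in the same class \siw{i} at rank $k$ at each step.
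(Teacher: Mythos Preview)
Your proposal is correct and matches the paper's approach: the paper states this fact without proof, treating it as immediate from the definition, and your verification of reflexivity and transitivity is exactly the routine unfolding that justifies this. There is nothing to add.
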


\begin{fact} \label{fct:definition} For all $k,i \in \nat$, a language $L
  \subseteq A^*$ can be defined by a \siw{i} formula of rank $k$ if and only if
  $L$ is saturated by $\ksieq{i}$, \emph{i.e.}, if and only if
  \[
    L = \{w \mid \exists w' \in L \text{ s.t. } w' \ksieq{i} w\}.
  \]
\end{fact}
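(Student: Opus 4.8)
The plan is to prove the two implications separately: the forward one is a direct unwinding of the definition of $\ksieq{i}$, and the backward one is a short construction of a defining formula, using only that there are finitely many relevant formulas up to logical equivalence.

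\emph{Definability $\Rightarrow$ saturation.} Assume $L$ is defined by a \siw{i} formula $\varphi$ of quantifier rank $k$. The inclusion $L \subseteq \{w \mid \exists w' \in L \text{ s.t. } w' \ksieq{i} w\}$ is immediate from reflexivity of $\ksieq{i}$ (Fact~\ref{fct:preorder}), taking $w' = w$. Conversely, if $w' \in L$ and $w' \ksieq{i} w$, then $w' \models \varphi$; since $\varphi$ is a \siw{i} formula of quantifier rank $k$, the definition of $\ksieq{i}$ yields $w \models \varphi$, hence $w \in L$.

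\emph{Saturation $\Rightarrow$ definability.} I would rely on two standard facts about \siw{i} formulas of quantifier rank $k$ over the fixed finite alphabet $A$: (i) up to logical equivalence there are only finitely many such formulas (induction on $k$, bounding the number of inequivalent formulas in a bounded number of free variables); and (ii) a finite conjunction or disjunction of \siw{i} formulas of rank $k$ is again equivalent to one (consecutive blocks of like quantifiers merge when prenexing, while $\wedge$ and $\vee$ do not increase quantifier depth). Given (i)--(ii), for $w \in A^*$ let $\Phi_w$ be the conjunction of all \siw{i} formulas of quantifier rank $k$ satisfied by $w$ (a finite conjunction up to equivalence, hence again such a formula). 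Then $w \models \Phi_w$, and for every $v$ we have $v \models \Phi_w$ iff $v$ satisfies every \siw{i} formula of rank $k$ satisfied by $w$, that is, iff $w \ksieq{i} v$. Set $\psi := \bigvee_{w \in L}\Phi_w$; since there are only finitely many distinct $\Phi_w$, this is a finite disjunction, hence again a \siw{i} formula of quantifier rank $k$. I claim $\psi$ defines $L$: if $v \in L$ then $v \models \Phi_v$ by reflexivity and $\Phi_v$ is one of the disjuncts of $\psi$, so $v \models \psi$; conversely, if $v \models \psi$ then $v \models \Phi_w$ for some $w \in L$, so $w \ksieq{i} v$, and the saturation hypothesis applied with $w' := w$ gives $v \in L$.

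The only step deserving a moment's care is item (ii) --- checking that boolean combinations built from $\wedge$ and $\vee$ keep us inside \siw{i} at the same quantifier rank --- which rests on the fact that quantifier rank is read off the formula as written, whereas \siw{i}-membership concerns its prenex normal form. Everything else (reflexivity, the counting in (i), and the final verification that $\psi$ defines $L$) is routine manipulation of the definitions of $\ksieq{i}$ and of definability.
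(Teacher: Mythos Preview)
Your proof is correct and is exactly the standard argument one would give. The paper does not actually prove Fact~\ref{fct:definition}: it introduces it with the sentence ``the following facts are immediate from the definition'' and leaves it at that. Your write-up is thus a careful unpacking of what the paper takes for granted, and the two points you flag --- finiteness of \siw{i} formulas of rank $k$ up to equivalence, and closure of that class under finite $\wedge,\vee$ via interleaving of like-quantifier blocks during prenexing --- are precisely the ingredients that make the fact ``immediate''.
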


We now extend this definition to the logics \bswi. Set $k,i \in \nat$ and
$w,w' \in A^*$. We write $w \kbceq{i} w'$ if $w$ and $w'$ satisfy the same
\bswi formulas of quantifier rank $k$. By definition, \bswi formulas are
finite boolean combinations of \siwi formulas. We thus obtain the following
two facts:
\begin{fact} \label{fct:equivalence}
  For all $k,i \in \nat$, $\kbceq{i}$ is the equivalence relation
  induced by $\ksieq{i}$, \emph{i.e.},
  \[
    w \kbceq{i} w' \text{ if and only if } w \ksieq{i} w' \text{ and } w' \ksieq{i} w.
  \]
  Moreover, for all fixed $k,i \in \nat$, $\kbceq{i}$ has finite index.
\end{fact}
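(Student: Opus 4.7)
My plan has two parts, matching the two claims in the fact.

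\textbf{Equivalence of the two descriptions.} The backward direction is essentially by definition: if $w \ksieq{i} w'$ and $w' \ksieq{i} w$, then $w$ and $w'$ satisfy \emph{exactly} the same \siwi formulas of rank $k$, so they agree on every propositional combination of such formulas, i.e.\ on every \bswi formula of rank $k$; hence $w \kbceq{i} w'$. Conversely, every \siwi formula $\varphi$ of rank $k$ is (trivially) a \bswi formula of rank $k$, so $w \kbceq{i} w'$ forces $w \models \varphi \Leftrightarrow w' \models \varphi$; doing this in both directions gives $w \ksieq{i} w'$ and $w' \ksieq{i} w$. This is the only place where one uses that \bswi is closed under negation (hence $w \kbceq{i} w'$ implies the symmetry that $\ksieq{i}$ alone does not provide).

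\textbf{Finite index.} I would deduce this from the standard fact that, over the finite signature $\{<\} \cup \{P_a \mid a \in A\}$, there are only finitely many first-order formulas of quantifier rank at most $k$ up to logical equivalence; for sentences, this is usually proved by induction on $k$ using the semantic normal forms of \efgame/Hintikka types. Since \siwi is a syntactic subfragment of \fow, the number of inequivalent \siwi-sentences of rank $k$ is therefore finite; call this finite set $\Phi$. The relation $\ksieq{i}$ is then completely determined by the map $w \mapsto \{\varphi \in \Phi \mid w \models \varphi\}$, which takes only finitely many values, so $\ksieq{i}$ has finite index as a preorder. By the first part of the fact, $\kbceq{i} = \ksieq{i} \cap (\ksieq{i})^{-1}$ is the equivalence relation associated with this preorder, and thus also has finite index.

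The only mildly nontrivial ingredient is the finiteness of inequivalent formulas of bounded rank, which is completely standard and could be cited rather than reproved; everything else is a direct unfolding of the definitions of \siwi, \bswi and the relations $\ksieq{i}$, $\kbceq{i}$. I do not anticipate any real obstacle here; the main thing to be careful about is to state the argument in a way that makes clear why the symmetry in $\kbceq{i}$ comes precisely from closure of \bswi under complement, since this is the single asymmetry that distinguishes $\ksieq{i}$ (a preorder) from $\kbceq{i}$ (an equivalence).
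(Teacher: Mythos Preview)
Your proposal is correct and matches the paper's intent exactly; the paper does not actually give a proof of this fact, treating it as immediate from the observation that \bswi formulas are finite boolean combinations of \siwi formulas. Your argument is the standard unfolding of this observation and is precisely what the paper leaves implicit.
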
 
\begin{fact} \label{fct:definition2}
  For all $k,i \in \nat$, a language $L \subseteq A^*$ can be defined by
  a \bsw{i} formula of rank $k$ if and only if $L$ is a union of
  equivalence classes of $\kbceq{i}$, that is, if and only if
  \[
    L = \{w \mid \exists w' \in L \text{ s.t. } w' \kbceq{i} w\}.
  \]
\end{fact}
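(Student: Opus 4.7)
\textbf{Proof plan for Fact~\ref{fct:definition2}.} The statement is the standard Ehrenfeucht-Fra\"iss\'e-style characterization, so the plan is to prove the two implications directly, reusing Fact~\ref{fct:equivalence}.

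For the forward direction, suppose $L$ is defined by a \bswi sentence $\varphi$ of quantifier rank $k$. By definition of $\kbceq{i}$, any two words $w,w'$ with $w \kbceq{i} w'$ satisfy exactly the same \bswi sentences of rank $k$, hence satisfy $\varphi$ simultaneously. Thus $L$ is saturated by $\kbceq{i}$, which is precisely the displayed condition.

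For the backward direction, assume that $L$ is a union of $\kbceq{i}$-equivalence classes. The first key observation is that, since the signature is finite (a binary symbol $<$ and the unary predicates $P_a$ for $a \in A$) and we consider sentences, there are only finitely many \bswi sentences of quantifier rank $k$ up to logical equivalence. Hence, for every word $w$, one may define
\[
  \varphi_w \;\stackrel{\text{def}}{=}\; \bigwedge \{\psi \mid \psi \text{ is a \bswi sentence of rank } k \text{ with } w \models \psi\},
\]
which, up to logical equivalence, is a finite conjunction and therefore itself a \bswi sentence of rank $k$. By construction, for any word $w'$, we have $w' \models \varphi_w$ if and only if $w' \kbceq{i} w$, so $\varphi_w$ defines the $\kbceq{i}$-class of $w$. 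Next, by Fact~\ref{fct:equivalence} the relation $\kbceq{i}$ has finite index, so $L$ is a \emph{finite} union of $\kbceq{i}$-classes, say $L = [w_1] \cup \cdots \cup [w_n]$. Then
\[
  L \text{ is defined by } \varphi_{w_1} \vee \cdots \vee \varphi_{w_n},
\]
which is a finite disjunction of \bswi sentences of rank $k$ and hence again a \bswi sentence of rank $k$. This yields the desired formula.

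The only point requiring care is the finiteness argument: one needs both the finiteness of the index of $\kbceq{i}$ (given by Fact~\ref{fct:equivalence}) and the finiteness, up to logical equivalence, of the set of \bswi sentences of quantifier rank $k$ over our fixed signature. Neither is truly an obstacle here, but both are essential to ensure that the conjunctions and disjunctions built above are genuine (finite) \bswi sentences of rank $k$ rather than infinitary formulas.
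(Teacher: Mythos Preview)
Your argument is correct and is the standard proof of this folklore fact. The paper does not actually prove Fact~\ref{fct:definition2}: it is stated as a ``Fact'' and left to the reader as immediate from the definitions (together with Fact~\ref{fct:equivalence}). Your write-up fills in precisely the details one would expect, and there is nothing to compare against.

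One minor simplification you could make in the backward direction: rather than quantifying over all \bswi sentences of rank~$k$, it is slightly cleaner to use Fact~\ref{fct:equivalence} directly and take $\varphi_w$ to be the conjunction of all \siwi sentences of rank~$k$ satisfied by~$w$ together with the negations of all \siwi sentences of rank~$k$ not satisfied by~$w$. This is manifestly a \bswi sentence of rank~$k$, and by Fact~\ref{fct:equivalence} it defines exactly the $\kbceq{i}$-class of~$w$. This avoids having to argue separately that there are only finitely many \bswi sentences of rank~$k$ up to equivalence (which is of course true, but is itself a consequence of the corresponding statement for \siwi).
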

We can now define \efgame games. A specific \efgame game can be associated to
every logic. Here, we define the game tailored to the logics \siwi in the
quantifier alternation hierarchy. This means that these games characterize the
preorders \ksieq{i} (and therefore, by Fact~\ref{fct:equivalence}, also the
equivalence \kbceq{i}).

\medskip
\noindent
{\bf \efgame Games.} Before giving the definition, a remark is in
order. There are actually two ways to define the class of
\siw{i}-definable languages.  First, one can consider \emph{all} first-order
formulas and say that a formula is \siw{i} if it has at most $i$
blocks of quantifiers once rewritten in prenex normal form. 
However, one could also restrict the set of
allowed formulas to only those that are already in prenex form and have at
most $i$ blocks of quantifiers. While this does not change the class
of \siw{i}-definable languages as a whole, this changes the set of
formulas of quantifier rank $k$ for a fixed $k$. Therefore, this
changes the preorder $\ksieq{i}$. This means that there is a version
of the \efgame game for each definition. In this paper, we use the
version corresponding to the definition that considers \emph{all}
first-order~formulas.

\medskip
Set $i \geqslant 1$. We define the game associated to \siw{i}. The board of
the game consists of two words $w,w'\in A^*$ and there are two players
called \emph{Spoiler and Duplicator}. Moreover, initially, there
exists a distinguished word among $w,w'$ that we call the \emph{active 
  word} (this word may change as the play progresses). The game is set
to last a predefined number $k$ of rounds. When the play starts, both
players have $k$ pebbles. Finally, there is a parameter that gets
updated during the game, a counter $c$ called the \emph{alternation
  counter}. Initially, $c$ is set to $0$ and has to be bounded by
$i-1$.

At the start of each round $j$, Spoiler chooses a word, either $w$ or
$w'$. Spoiler can always choose the active word, in which case both $c$
and the active word remain unchanged. However, Spoiler can only choose
the word that is not active when $c < i - 1$, in which case the active
word is switched and $c$ is incremented by $1$ (in particular this
means that the active word can be switched at most $i-1$ times). If
Spoiler chooses $w$ (resp. $w'$), he puts a pebble on a position $x_j$ in
$w$ (resp. $x'_j$ in $w'$).

Duplicator must answer by putting a pebble at a position $x'_j$ in
$w'$ (resp. $x_j$ in $w$). Moreover, Duplicator must ensure that all
pebbles that have been placed up to this point satisfy the following
condition: for all  $\ell_1,\ell_2 \lmo j$, the labels at positions
$x^{}_{\ell_1},x'_{\ell_1}$ are the same, and $x_{\ell_1} < x_{\ell_2}$
if and only if $x'_{\ell_1} < x'_{\ell_2}$.

Duplicator wins if she manages to play for all $k$ rounds, and Spoiler
wins as soon as Duplicator is unable to play.

\begin{lemma}[Game definition of $\ksieq{i}$, folklore] \label{lem:efgame}
  For all $k,i \in \nat$ and $w,w' \in A^{*}$, $w \ksieq{i} w'$ if and
  only if Duplicator has a winning strategy for playing $k$ rounds in
  the \siw{i} game played over $w,w'$ with $w$ as the initial active word.
\end{lemma}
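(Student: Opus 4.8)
\medskip
The plan is to prove the statement by induction on the number $k$ of rounds, via a more general claim that applies to arbitrary positions of the game, not only to the initial one. A position consists of two words, one designated as \emph{active} --- call it $W_0$ and the other $W_1$ --- together with tuples $\bar a=(a_1,\dots,a_m)$ of positions of $W_0$ and $\bar b=(b_1,\dots,b_m)$ of positions of $W_1$ of the same length forming a partial isomorphism (matching labels, matching relative order), and a counter value $c$ with $0\le c\le i-1$. The generalized claim will read: Duplicator wins the remaining $k$-round game from this position if and only if every \siw{i-c} formula $\varphi(x_1,\dots,x_m)$ of quantifier rank at most $k$ such that $(W_0,\bar a)\models\varphi$ also satisfies $(W_1,\bar b)\models\varphi$. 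The lemma is the instance with $W_0=w$, $W_1=w'$, empty tuples and $c=0$. The base case $k=0$ is immediate: Duplicator vacuously wins, while a formula of quantifier rank $0$ is quantifier-free and hence evaluates identically on $(\bar a)$ and $(\bar b)$ because these tuples form a partial isomorphism. For the inductive step (so $k\ge1$), one may freely assume all formulas are in negation normal form, which changes neither the quantifier rank nor the number of quantifier blocks of the prenex form.

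For the implication ``Duplicator wins $\Rightarrow$ logical inclusion'', I would induct on the structure of $\varphi$. For $\varphi=\varphi_1\wedge\varphi_2$ or $\varphi_1\vee\varphi_2$ I would invoke the hypothesis on the subformulas with the \emph{same} position and the \emph{same} $k$, which is legitimate since \siw{j} is closed under finite unions and intersections, so that $\varphi_1,\varphi_2$ remain \siw{i-c} of rank $\le k$. For $\varphi=\exists x\,\psi$ I would let Spoiler place a pebble on a witness in the active word $W_0$ (a move that leaves the counter unchanged), let Duplicator answer by her winning strategy, and conclude by the round-induction hypothesis applied to $\psi$, which is \siw{i-c} of rank $\le k-1$. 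The case $\varphi=\forall x\,\psi$ can occur only when $\varphi$ has strictly fewer than $i-c$ blocks, hence $i-c\ge2$ and the counter can still be incremented; here I would argue the contrapositive, letting Spoiler switch the active word to $W_1$ and play there, then applying the round-induction hypothesis to the negation normal form of $\neg\psi$ --- whose prenex form has at most $i-c-1$ blocks, so is \siw{i-c-1} --- at the new position, whose counter is $c+1$.

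For the converse ``logical inclusion $\Rightarrow$ Duplicator wins'', I would argue the contrapositive: from a winning first move of Spoiler, construct a distinguishing formula of the prescribed shape. Suppose that move places a fresh pebble $e$ in the active word $W_0$ (so the counter stays $c$). If Duplicator has no legal answer, then the quantifier-free type $\tau(x,\bar x)$ of $e$ over $\bar a$ is realized in $(W_0,\bar a)$ but not in $(W_1,\bar b)$, and $\exists x\,\tau$ --- a \siw{i-c} formula of rank $1\le k$ --- separates the two sides. Otherwise, by the round-induction hypothesis every legal answer $f$ yields a \siw{i-c} formula $\chi_f(\bar x,x)$ of rank $\le k-1$ with $(W_0,\bar a e)\models\chi_f$ and $(W_1,\bar b f)\not\models\chi_f$; as quantifier rank is bounded there are only finitely many such $\chi_f$ up to equivalence, so I take their conjunction $\chi$ and set $\psi:=\exists x\,(\tau(x,\bar x)\wedge\chi(\bar x,x))$. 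This $\psi$ is \siw{i-c} of rank $\le k$, holds in $(W_0,\bar a)$ via the witness $e$, and fails in $(W_1,\bar b)$ because any witness there would be a legal answer $f$, at which the conjunct $\chi_f$ of $\chi$ fails. If instead Spoiler's winning move switches the active word and plays a pebble $e$ in $W_1$ (which forces $c<i-1$, hence $i-c\ge2$), the same construction is dualized: from the resulting sub-positions I build a \siw{i-c-1} conjunction $\chi$ and set $\psi:=\neg\,\exists x\,(\tau(x,\bar x)\wedge\chi(\bar x,x))$, which is a $\Pi$-formula with at most $i-c-1$ blocks and therefore still \siw{i-c}.

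The main obstacle I expect lies in this converse direction, precisely in the combination step: the recursively obtained distinguishing formulas, one for each possible answer of Duplicator, must be merged into a single formula \emph{without} creating a spurious quantifier alternation. This is exactly where one uses that each class \siw{j} and \piw{j} is closed under finite conjunction and disjunction, and that prepending a block of existential (resp.\ universal) quantifiers to a \siw{j} (resp.\ \piw{j}) formula does not increase its number of quantifier blocks. A milder point to handle is that, following the convention adopted in the paper, \ksieq{i} is defined through \emph{all} first-order formulas rather than only prenex ones, so the quantifier rank of a formula may be strictly smaller than that of its prenex form; this causes no difficulty because the structural induction above processes the boolean connectives at no cost in rounds, which is exactly the behaviour of the game.
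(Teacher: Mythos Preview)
The paper does not actually prove this lemma: it is labelled ``folklore'' and is stated without proof, so there is no argument in the paper to compare your proposal against. Your proof is the standard one and is correct. The generalized invariant you state---at counter value $c$ with current active word $W_0$, Duplicator survives $k$ more rounds iff every \siw{i-c} formula of rank at most $k$ transfers from $(W_0,\bar a)$ to $(W_1,\bar b)$---is exactly the right strengthening to make the induction go through, and your treatment of the four syntactic cases (the two boolean connectives at no cost in rounds, $\exists$ keeping the active word, $\forall$ forcing a switch and a drop in the level index) matches the paper's convention that \ksieq{i} is defined via \emph{all} first-order formulas rather than only prenex ones. The closure of each \siw{j} under finite conjunctions and disjunctions, and the fact that prepending an existential quantifier to a \siw{j} formula keeps it \siw{j}, are precisely the facts needed to glue the per-answer distinguishing formulas $\chi_f$ into a single one without adding an alternation; you identify this correctly as the only delicate point.
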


Note that we will often use Lemma~\ref{lem:efgame} implicitly and
alternate between the original and the game definition of
$\ksieq{i}$. We now give a few classical lemmas on \efgame games
that we reuse several times in our proofs. We begin with a
lemma stating that $\ksieq{i}$ is a pre-congruence, \emph{i.e.}, that
it is compatible with the concatenation product.

\begin{lemma}[Pre-congruence Lemma]
  \label{lem:efconcat}
  Let $i \in \nat$ and let $w^{}_1,w'_1,w^{}_2,w'_2 \in A^*$. Then
  \begin{equation*}
    (w^{}_1 \ksieq{i} w'_1\text{ and } w^{}_2 \ksieq{i} w'_2) \implies w^{}_1w^{}_2 \ksieq{i}
    w'_1w'_2.
  \end{equation*}
\end{lemma}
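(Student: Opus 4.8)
The plan is to prove the Pre-congruence Lemma using the game characterization of $\ksieq{i}$ from Lemma~\ref{lem:efgame}, combining winning strategies for Duplicator on the two pairs of factors into a single winning strategy on the concatenations. Concretely, assume $w_1 \ksieq{i} w_1'$ and $w_2 \ksieq{i} w_2'$, so by the lemma Duplicator has winning strategies $\sigma_1$ (for $k$ rounds on $w_1,w_1'$ with $w_1$ active) and $\sigma_2$ (for $k$ rounds on $w_2,w_2'$ with $w_2$ active). I want to build a strategy $\sigma$ for Duplicator on $w_1w_2, w_1'w_1'$ with $w_1w_2$ active. The key observation is that a position of $w_1w_2$ lies either in the $w_1$-part or in the $w_2$-part, and likewise for $w_1'w_2'$; Duplicator will play her "$w_1$-pebbles" according to $\sigma_1$ and her "$w_2$-pebbles" according to $\sigma_2$, treating the two factor games as running in parallel.

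The main steps are as follows. First, set up the correspondence: identify $\{1,\dots,|w_1|\}$ inside $w_1w_2$ with positions of $w_1$, and $\{|w_1|+1,\dots,|w_1|+|w_2|\}$ with positions of $w_2$ (shifted), and similarly for the primed word. Second, describe $\sigma$: when Spoiler places a pebble in one of the four words, determine which factor it falls in; feed that move to the corresponding sub-game $\sigma_1$ or $\sigma_2$ (Spoiler is always allowed to play in whichever word he likes there, since the sub-games are played with no alternation restriction issues beyond what we track — see the obstacle below), and have Duplicator copy the response $\sigma_j$ dictates into the matching factor of the other concatenation. Third, verify the winning condition: we must check that after every round, for all pebbles placed so far, (a) labels match and (b) the order relation is preserved, i.e. $x_{\ell_1} < x_{\ell_2} \iff x'_{\ell_1} < x'_{\ell_2}$. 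Within a single factor, (a) and (b) hold because the corresponding sub-strategy is winning. Across factors, (a) is vacuous for ordering and fine for labels; for (b), a pebble in the $w_1$-part is always to the left of a pebble in the $w_2$-part in $w_1w_2$, and a pebble in the $w_1'$-part is always to the left of a pebble in the $w_2'$-part in $w_1'w_2'$ — so cross-factor comparisons always give the same answer on both sides. Hence Duplicator survives all $k$ rounds.

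The step I expect to be the main obstacle is correctly bookkeeping the \emph{alternation counter} and the \emph{active word}, which is the whole subtlety distinguishing the $\siw{i}$ game from the plain quantifier-rank \efgame game. One must check that whenever Spoiler switches the active word in the big game (costing one increment of $c$, allowed only while $c < i-1$), this switch is faithfully mirrored as a legal switch in exactly the sub-game whose factor is involved, and that the \emph{other} sub-game's active word is kept consistent "for free" without spending an alternation there. The clean way to handle this is to maintain the invariant that at every moment the active word in sub-game~$1$ is the $w_1$-side that sits inside the globally active concatenation, and similarly for sub-game~$2$; a switch in the big game then switches the active side in \emph{both} sub-games simultaneously, but this costs an alternation only in the sub-game Spoiler actually pebbled in, while in the other sub-game we are free to pre-switch the active word "silently" because Duplicator may always reflect whatever active-word configuration is handed to her — more precisely, because $\sigma_j$ being winning for $k$ rounds from either starting active word is \emph{not} automatic, one instead observes that we only ever invoke $\sigma_1$ (resp.\ $\sigma_2$) on the prefix of rounds in which Spoiler played in $w_1$ (resp.\ $w_2$), with a consistently updated active word and alternation count bounded by $i-1$. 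Making this invariant precise and checking that the alternation budget is never exceeded in either sub-game is the delicate part; once it is in place, the label and order conditions follow as sketched above, and an application of Lemma~\ref{lem:efgame} in the reverse direction finishes the proof.
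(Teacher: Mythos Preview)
Your proposal is correct and follows exactly the same approach as the paper: use the game characterization (Lemma~\ref{lem:efgame}) and combine Duplicator's two winning strategies factor-wise. The paper's own proof is a two-sentence sketch that simply asserts the strategies ``can be easily combined,'' so you have in fact written out the details that the paper omits, and you correctly identify the alternation-counter bookkeeping as the only nontrivial point. That point can be stated a bit more crisply than you do: the active word in each sub-game is determined by the active word in the big game, and the sequence of active words seen in sub-game~$j$ is a subsequence of the big-game sequence, so its number of switches is at most that of the big game, hence at most $i-1$.
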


\begin{proof}
  By Lemma~\ref{lem:efgame}, Duplicator has winning strategies in the
  \siw{i} games over $w^{}_1,w'_1$ and $w^{}_2,w'_2$, with $w^{}_1,w^{}_2$
  as initial active words respectively. These strategies can be
  easily combined into a strategy for the \siw{i} game over
  $w^{}_1w^{}_2$ and $w'_1w'_2$, with $w^{}_1w^{}_2$ as initial active word. We
  conclude that $w^{}_1w^{}_2 \ksieq{i} w'_1w'_2$.
\end{proof}

The second lemma is a well-known property of full first-order
logic, which implies that, unlike monadic second order logic, first-order
logic cannot express modulo counting. This property is called \emph{aperiodicity}.

\begin{lemma}[Aperiodicity Lemma]
  \label{lem:aperiodic}
  Let $k,k_1,k_2 \in \nat$ be such that $k_1,k_2 \gmo 2^{k}-1$. Let $v
  \in A^*$. Then
  \[
    \forall i\in\nat,\quad v^{k_1} \ksieq{i} v^{k_2}
  \]
\end{lemma}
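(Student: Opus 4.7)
The plan is to reduce to a standard Ehrenfeucht--Fraïssé game argument for full first-order logic, since any restriction imposed on Spoiler in the \siw{i} game only shrinks the set of strategies available to him: a winning strategy for Duplicator in the full \fo $k$-round game on $v^{k_1}$ and $v^{k_2}$ is a fortiori winning in the \siw{i} game for every $i\in\nat$. By Lemma~\ref{lem:efgame}, it therefore suffices to exhibit a Duplicator strategy in the full $k$-round \fo game on $v^{k_1}$ versus $v^{k_2}$, with either word as initial active word.

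Next I would view each word $v^{n}$ as a sequence of $n$ consecutive ``blocks'', each equal to $v$, and identify positions by pairs (block index, offset within block). Labels and order between two positions are entirely determined by the block offsets and by the relative order (and difference) of their block indices. I would prove by induction on $k$ the following stronger claim: whenever $k_1,k_2\geq 2^{k}-1$, Duplicator wins the $k$-round game while maintaining the invariant that after $j\leq k$ rounds, listing the pebbles $p_1,\ldots,p_j$ in increasing position order, each pair $(p_\ell,p_{\ell+1})$ of consecutive pebbles has the same within-block offsets on the two sides, and the number $d_\ell$ of complete blocks of $v$ strictly between them on the $v^{k_1}$ side and the corresponding number $d_\ell'$ on the $v^{k_2}$ side satisfy
\[
  \text{either }\ d_\ell=d_\ell',\quad\text{or}\quad d_\ell\geq 2^{k-j}-1\ \text{ and }\ d_\ell'\geq 2^{k-j}-1,
\]
and symmetrically for the prefix before $p_1$ and the suffix after $p_j$.

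For the base case $j=k=0$ the invariant is vacuously true (no pebble), and $k_1,k_2\geq 2^{0}-1=0$ always holds. For the inductive step, when Spoiler places a pebble in a ``gap'' of block-size $d$ on some side, two cases arise. If $d=d'$ on both sides, Duplicator copies the move exactly (same offset within block, same relative block index inside the gap); the two new sub-gaps again have equal block-sizes, so the invariant survives with threshold $2^{k-j-1}-1$. If instead $d,d'\geq 2^{k-j}-1$, then whatever Spoiler does, Duplicator has enough room: she picks the same offset within a block, and chooses the block index so that each of the two newly created sub-gaps has block-size $\geq 2^{k-j-1}-1$ on her side. This is possible because the inequality $2(2^{k-j-1}-1)+1\leq 2^{k-j}-1$ leaves at least one admissible block index in the middle, and for the extreme case where Spoiler's move is close to one boundary Duplicator simply matches the smaller side exactly (it is below the new threshold) and keeps the larger side above $2^{k-j-1}-1$. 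Labels and order are preserved because offsets are copied and block-order is preserved by construction.

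The main obstacle I expect is precisely this case analysis in the inductive step: carefully checking that, with $2^{k-j}-1$ complete blocks available in a gap, Duplicator can always answer any Spoiler move and leave both resulting sub-gaps either identical or $\geq 2^{k-j-1}-1$ on each side. Once this is verified, after $k$ rounds the pebble placements form a partial isomorphism (labels and order coincide), so Duplicator wins the full \fo $k$-round game on $v^{k_1}$ and $v^{k_2}$. Since $k_1,k_2\geq 2^{k}-1$ by hypothesis, we conclude $v^{k_1}\ksieq{i}v^{k_2}$ for every $i\in\nat$, as required.
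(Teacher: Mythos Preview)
Your proposal is correct and follows exactly the approach the paper has in mind: the paper's own proof simply says ``This is well known for full first-order logic and easy to prove by induction on $k$'' and cites Straubing's book, and your reduction to the full \fo game followed by the standard block-gap invariant argument is precisely that well-known proof. There is nothing to add.
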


\begin{proof}
  This is well known for full first-order logic and easy to prove by induction
  on~$k$ (see~\cite{bookstraub} for details).
\end{proof}

We finish with another classical property, which we call the
\emph{$\sic i$-property}. Contrary to the pre-congruence or aperiodicity
properties, the $\sic i$-property is specific to \siw{i}. It will be central
in the~proofs.

\begin{lemma}[$\sic i$-property Lemma] \label{lem:siprop}
  Let $i \in \nat$, and let $k,\ell,r,\ell',r' \in \nat$ be such that
  $\ell,r,\ell',r' \gmo 2^k$ and let $u,v \in A^*$ such that $v \ksieq{i} u$.
  Then we have:
  \[
    u^{\ell}u^{r} \ksieq{i+1} u^{\ell'}vu^{r'}.
  \]
\end{lemma}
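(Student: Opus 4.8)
The plan is to work through the \efgame game characterisation of $\ksieq{i+1}$ provided by Lemma~\ref{lem:efgame}: it suffices to exhibit a winning strategy for Duplicator in the $\siw{i+1}$ game with $k$ rounds played over $W_0 := u^{\ell}u^{r} = u^{\ell+r}$ and $W_1 := u^{\ell'}vu^{r'}$, with $W_0$ as initial active word and the alternation counter bounded by $i$. The structural point to exploit is that, as long as no switch has occurred, Spoiler is confined to $W_0$, which contains no occurrence of the ``foreign'' factor $v$; hence Spoiler can place a pebble \emph{inside} the distinguished occurrence of $v$ in $W_1$ only \emph{after} having switched at least once, leaving him at most $i-1$ further switches --- exactly the budget of a $\siw{i}$ game. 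This is where the hypothesis $v \ksieq{i} u$ enters.

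Concretely, write $W_1 = L\,V\,R$ with $L = u^{\ell'}$, $V = v$, $R = u^{r'}$, and view $W_0$ as a run of $\ell+r \ge 2^{k+1}$ consecutive copies of $u$. Duplicator maintains three independent sub-boards. (i) On the ``prefix block'' she matches positions played in $L$ with positions inside an initial segment of $W_0$'s run, via the standard order-preserving correspondence for ordered sums of words (legitimate since $\ell,\ell' \ge 2^k$); (ii) symmetrically, the ``suffix block'' matches $R$ with a final segment of $W_0$'s run (since $r,r' \ge 2^k$); (iii) on the ``middle block'', the first time Spoiler pebbles a position inside $V$ --- which, as noted, forces $W_1$ to already be active, hence a switch to have occurred --- Duplicator \emph{reserves} one still-unused copy of $u$ in $W_0$, chosen order-consistently at the junction of the prefix and suffix regions (such a copy exists since at most $k$ pebbles have been placed while $W_0$ has $\ge 2^{k+1}$ copies), and from then on answers inside $V$ versus this reserved copy by following a winning Duplicator strategy for the $\siw{i}$ game on $(v,u)$ with $v$ as active word, which exists by $v \ksieq{i} u$ and Lemma~\ref{lem:efgame}. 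One then checks that every big-game switch occurring after the reservation flips the active side of this sub-board, so it undergoes at most $i-1$ switches --- within the $\siw{i}$ budget --- and that labels and the order of pebbles are respected, using that the prefix region, the reserved copy, and the suffix region of $W_0$ occur in the same left-to-right order as $L$, $V$, $R$ in $W_1$.

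As a simplification one may first normalise: by the Aperiodicity Lemma, $u^{\ell}u^{r} \ksieq{i+1} u^{\ell'}u^{r'}$ (both exponents exceed $2^k-1$), so by transitivity (Fact~\ref{fct:preorder}) it is enough to prove $u^{\ell'}u^{r'} \ksieq{i+1} u^{\ell'}vu^{r'}$, after which blocks (i) and (ii) above are simply matched by the identity. Either way, the main obstacle is not the insertion of $v$ itself but the bookkeeping of the alternation counter: one must argue that the ``$v$ versus reserved-$u$'' sub-board can only be opened while $W_1$ is active, hence after at least one of the $i$ allowed switches has been spent, so that the remaining $\le i-1$ switches fit exactly a $\siw{i}$ sub-game --- this is precisely why the level drops from $i+1$ to $i$. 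The ordered-sum bookkeeping for the prefix and suffix blocks, and the check that an unused copy of $u$ is always available in the correct order slot, are routine and use only the size bounds $\ge 2^k$.
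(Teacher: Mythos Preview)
Your central insight --- that Spoiler can only pebble a position inside $V$ after having switched at least once, so that the $(v,u)$ sub-game is entered with $v$ active and with at most $i-1$ switches remaining --- is correct, and the alternation-counter bookkeeping you sketch is fine. The gap is in the ``reserved copy'' step, which you dismiss as routine.

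Take your own normalised version: $W_0 = u^{\ell'}u^{r'}$, $W_1 = u^{\ell'} v u^{r'}$, prefix and suffix blocks matched by the identity. Let Spoiler (without switching) play in $W_0$ at the last position of copy~$\ell'$ in round~1 (matched to the last position of $L$) and then at the first position of copy~$\ell'+1$ in round~2 (matched to the first position of $R$). These two $W_0$-pebbles are \emph{adjacent}. In round~3 Spoiler switches and plays inside $V$; Duplicator must answer strictly between the two $W_0$-pebbles, which is impossible. Your pigeonhole count (``at most $k$ pebbles, $\ge 2^{k+1}$ copies'') guarantees an unused copy \emph{somewhere}, but not one that is order-consistent at the junction. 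Reserving a copy upfront does not help either: Spoiler can then play in that copy \emph{before} any switch, and the middle sub-board opens with $u$ active --- but you only have $v \ksieq{i} u$, not the reverse direction, so you have no strategy there.

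The paper sidesteps this by a short induction on $k$ rather than a static three-block decomposition. If Spoiler switches in round~1, one shows directly that $w' \ksieq{i} w$ by writing $w = u^{\ell}\cdot u\cdot u^{r-1}$ and combining $v \ksieq{i} u$ with aperiodicity and the pre-congruence lemma --- no sub-board is needed at all. If Spoiler plays in $w$ in round~1, Duplicator answers within the first $2^{k-1}$ copies of $u$ in $w'$ (hence safely inside $L$, well to the left of $V$); the game then splits into a prefix part (two large powers of $u$, handled by aperiodicity at rank $k-1$) and a suffix part that is again an instance of the lemma with all four exponents $\ge 2^{k-1}$, handled by the induction hypothesis. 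This induction is precisely what absorbs the ``where is the junction?'' difficulty that your direct decomposition cannot resolve.
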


\begin{proof}
  Set $w = u^\ell u^{r}$ and $w' = u^{\ell'}vu^{r'}$. We prove that
  $w \ksieq{i+1} w'$ using an \efgame argument: we prove that Duplicator has a
  winning strategy for the game in $k$ rounds for \siw{i+1} played on $w,w'$
  with $w$ as initial active~word. The proof goes by induction on $k$. We
  distinguish two cases depending on the value, 0 or 1, of the alternation
  counter $c$ after Spoiler has played the first round.

  \medskip
  \noindent
  {\bf Case 1: $c=1$.} In this case, by definition of the game, it
  suffices to prove that $w' \ksieq{i} w$. From our hypothesis we
  already know that $v \ksieq{i} u$. Moreover, it follows from
  Lemma~\ref{lem:aperiodic} that $u^{\ell'} \ksieq{i} u^\ell$
  and $u^{r'} \ksieq{i} u^{r - 1}$. It then follows from
  Lemma~\ref{lem:efconcat} that $w' \ksieq{i} w$.

  \medskip
  \noindent
  {\bf Case 2: $c=0$.} By definition, this means that Spoiler has played on some
  position $x$ in $w$. Therefore $x$ is inside a copy of the word $u$. Since
  $w$ contains at least $2^{k+1}$ copies of $u$, by symmetry we can assume
  that there are at least $2^{k}$ copies of $u$ to the right of $x$. We now
  define a position $x'$ inside $w'$ that will serve as Duplicator's
  answer. We choose $x'$ so that it belongs to a copy of $u$ inside $w'$ and
  is at the same relative position inside this copy as $x$ is in its own copy
  of~$u$. Therefore, to fully define $x'$, it only remains to define the copy
  of $u$ in which we choose~$x'$. Let $n$ be the number of copies of $u$ to
  the \emph{left} of $x$ in $w$, that is, $x$ belongs to the
  $(n+1)$-th copy of $u$   starting from the left of $w$. If $n <
  2^{k-1}-1$, then $x'$ is chosen inside the $(n+1)$-th copy of $u$
  starting from the left of $w'$. Otherwise, $x'$ is chosen inside the
  $2^{k-1}$-th copy of $u$ starting from the left of $w'$. Observe
  that these copies always exist and occur before the factor $v$, since $\ell'\gmo 2^k$.

  Set $w=w_puw_q$ and $w'=w'_puw'_q$, where the two distinguished `$u$' factors
  are the copies containing positions $x,x'$. By definition of
  the game, it suffices to prove that $w_p \sieq{k-1}{i+1}
  w'_p$ and $w_q \sieq{k-1}{i+1} w'_q$ to conclude that Duplicator
  can play for the remaining $k-1$ rounds. If $n < 2^{k-1}-1$, then by
  definition, $w_p=w'_p$, therefore it is immediate that
  $w_p \sieq{k-1}{i+1} w'_p$. Otherwise, both $w_p$ and
  $w_p'$ are concatenations of at least $2^{k-1}-1$ copies of
  $u$. Therefore $w_p \sieq{k-1}{i+1} w'_p$ follows
  Lemma~\ref{lem:aperiodic}. Finally observe that by definition $w_q$ and $w'_q$
  are of the form $w_q =u^{\ell_1}u^{r}$ and $w'_q=u^{\ell'_1}vu^{r'}$ for some
  $\ell_1$ and $\ell'_1$ such that $\ell_1 + r \gmo
  2^k$ (by the assumption made at the beginning of Case 2) and $\ell'_1,r' \gmo
  2^{k-1}$ (by the choice made by Duplicator and hypothesis on $r'$).
  Therefore, it is immediate by induction on $k$ that $w_q \sieq{k-1}{i+1} w'_q$.
\end{proof}

\subsection{Algebraic Tools: Monoids and Simon's Factorization Forests  Theorem}

A \emph{semigroup} is a set $S$ equipped with an associative
multiplication denoted by '$\cdot$'. A \emph{monoid}~$M$
is a semigroup in which there exists a neutral element denoted
$1_M$. Observe that $A^*$ is a monoid with concatenation as the
multiplication and $\varepsilon$ as the neutral element.

An element $e$ of a semigroup is \emph{idempotent} if $e\cdot e=e$. Given any
finite semigroup $S$, it is well known that there is a number $\omega(S)$,
denoted by $\omega$ when $S$ is understood from the context, such that
$s^\omega$ is an idempotent for each element $s$ of $S$:
$s^\omega = s^\omega \cdot s^\omega$.

Monoids are a standard tool to recognize regular languages.  Let $L$ be a
language and~$M$ be a monoid. We say that \emph{$L$ is recognized by $M$} if
there exists a monoid morphism $\alpha : A^* \rightarrow M$ and an
\emph{accepting set} $F \subseteq M$ such that $L=\alpha^{-1}(F)$. Kleene's
theorem states that a language is regular if and only if it can be recognized
by a \emph{finite monoid}.

The usual approach to characterize a class of regular languages is to abstract
it as a class of monoids, each recognizing only languages in the class, and
such that conversely any language is recognized by one of these monoids. For
such an approach to work, the class of languages has to fulfill some
properties. In particular, since any monoid recognizing a language also
recognizes its complement, this approach only makes sense when the class of
languages is closed under complement (among other operations).

In the paper however, we investigate classes of languages, such as \siwi, that are \emph{not}
closed under complement. For such classes, one needs to use \emph{ordered
  monoids} as recognizing structures. An ordered monoid is a monoid endowed
with a partial order '$\lmo$' which is compatible with multiplication:
$s\lmo t$ and $s'\lmo t'$ imply~$ss'\lmo tt'$.

We say that $L$ is recognized by an ordered monoid $M$ if there exist a
monoid morphism $\alpha : A^* \rightarrow M$ and an \emph{upward closed}
accepting set $F \subseteq M$ such that $L=\alpha^{-1}(F)$. One also says that
$\alpha$ recognizes $L$. The condition for
$F$ of being upward closed means that if $s\in F$ and $s\lmo t$, then also
$t\in F$. Note that if $\alpha$ recognizes $L$, although $A^*\setminus L = \alpha^{-1}(M\setminus F)$, the set
$M\setminus F$ is not necessarily upward closed, hence
$A^*\setminus L$ is not necessarily recognized by~$\alpha$.

\medskip
\noindent {\textbf{Syntactic Ordered Monoid of a Language.}}
Given a regular language $L$, one can compute a canonical finite ordered
monoid that recognizes it as follows. The \emph{syntactic preorder} $\lmo_L$ of
a language $L$ is defined on pairs of words in $A^*$ by $w \lmo_L w'$ if for all
$u,v \in A^*$, $uwv \in L \Rightarrow uw'v \in L$. Similarly, we define
$\equiv_L$, the \emph{syntactic equivalence} of $L$ as follows:
$w \equiv_L w'$ if $w \lmo_L w'$ and $w' \lmo_L w$. One can verify that
$\lmo_L$ and $\equiv_L$ are compatible with multiplication. Therefore, the
quotient $M_L$ of $A^*$ by $\equiv_L$ is an ordered monoid for the partial order
induced by the preorder~$\lmo_L$. One can check that $M_L$ can be effectively
computed from $L$. Moreover, the ordered monoid $M_L$ recognizes $L$.
See~\cite{pinordered} for details. We call $M_L$ the \emph{syntactic ordered
  monoid of $L$} and the associated morphism the \emph{syntactic morphism}.

\medskip
\noindent
{\bf Morphisms and Separation.} When working on separation, we
consider as input two regular languages $L_0,L_1$. It will be
convenient to have a \emph{single} monoid recognizing both of them,
rather than having to deal with two objects. This can always be
assumed without loss of generality as such a monoid can easily be
constructed as follows. Let $M_0,M_1$ be monoids recognizing $L_0,L_1$ together
with the morphisms $\alpha_0,\alpha_1$, respectively. Then, $M_0
\times M_1$ equipped with the componentwise multiplication $(s_0,s_1)
\cdot (t_0,t_1)=(s_0 t_0,s_1 t_1)$ is a monoid that recognizes both
$L_0$ and $L_1$ with the morphism $\alpha : w \mapsto
(\alpha_0(w),\alpha_1(w))$.

\medskip
\noindent
{\bf Alphabet Compatible Morphisms.} In our \sicd-separation
algorithm, it will be convenient to work with morphisms that satisfy
an additional property. A morphism $\alpha: A^* \rightarrow M$ is said
to be \emph{alphabet compatible} if for all $u,v \in A^*$, $\alpha(u)
= \alpha(v) $ implies $\content{u} = \content{v}$. Note that when
$\alpha$ is alphabet compatible, $\content{s}$ is
well defined for all $s \in M$ as the unique subset $B$ of $A$ such that for all $u \in
\alpha^{-1}(s)$, we have $\content{u} = B$ (if $s$ has no preimage then we
simply set $\content{s} = \emptyset$).

To any morphism $\alpha: A^* \rightarrow M$ into a finite monoid $M$,
we associate a morphism $\beta$, called the \emph{alphabet completion}
of $\alpha$, that recognizes all languages recognized by $\alpha$ and
is alphabet compatible. If $\alpha$ is already alphabet compatible,
then $\beta = \alpha$. Otherwise, observe that $2^A$ is a monoid with
union as the multiplication. Hence, we can define $\beta$ as the
morphism:
\[
  \begin{array}{rlcl}
    \beta: & A^* & \to     & M \times 2^A \\
           & w   & \mapsto & (\alpha(w),\content{w}).
  \end{array}
\]
It is straightforward to verify that any language recognized by a morphism
into a finite (ordered) monoid is also recognized by its alphabet completion.

\medskip
\noindent
{\bf Simon's Factorization Forests Theorem.} In several of our proofs,
we make use of a combinatorial result on monoids: Simon's Factorization
Forests Theorem~\cite{simonfacto}. We state this theorem here. For more details on
factorization forests and a proof of the theorem, we refer the reader
to~\cite{bfacto,cfacto,kfacto,tc-handbook15}.

Let $M$ be a finite monoid and $\alpha: A^* \rightarrow M$ a
morphism. An \emph{$\alpha$-factorization forest} is an ordered
unranked tree whose nodes are labeled by words in $A^*$ and such that
for any inner node $x$ with label $w$, if $x_1,\dots,x_n$ are its
children listed from left to right with labels $w_1,\dots,w_n$, then
$w = w_1\cdots w_n$. Moreover, any node in the forest must be of one of
the three following kinds:

\begin{itemize}
\item \emph{leaf nodes}, which are labeled by either a single letter or
  the empty word.
\item \emph{binary nodes}, which have exactly two children.
\item \emph{idempotent nodes}, which have an arbitrary number of
  children whose labels $w_1,\dots,w_n$ satisfy $\alpha(w_1) = \cdots
  = \alpha(w_n) = e$ for some idempotent $e \in M$.
\end{itemize}
If $w \in A^*$, an \emph{$\alpha$-factorization forest for $w$} is an
$\alpha$-factorization forest whose root is labeled by $w$.

\begin{theorem}[\cite{simonfacto,kfacto}] \label{thm:facto}
  For all words $w \in A^*$, there exists an $\alpha$-factorization forest for
  $w$ of height at most $3|M|-1$.
\end{theorem}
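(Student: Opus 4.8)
The plan is to prove a slightly stronger statement by induction, so that the induction carries through cleanly: for every subset $T \subseteq M$, every word $w$ all of whose factor-images lie in the subsemigroup generated by $T$ admits an $\alpha$-factorization forest of height at most $3|T|-1$. (Taking $T = M$ gives the theorem.) Here the height of a single leaf is $1$, say, with the precise offset adjusted to match the statement. First I would dispose of the trivial cases: if $w$ is empty or a single letter, it is a leaf, of height~$1 \le 3|T|-1$ once $|T|\ge 1$. So assume $|w| \ge 2$, and let $S$ be the subsemigroup of $M$ generated by the images of all nonempty factors of $w$; we induct on $|S|$.

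The heart of the argument is the following dichotomy applied to the image $s = \alpha(w)$. Consider the $\mathcal{R}$-class (or, symmetrically, the $\mathcal{L}$-class) structure of $S$. The key combinatorial fact I would invoke is a Ramsey-type / linked-pair argument: cut $w$ at every position and look at the images of the prefixes; by finiteness one finds a long ``stable'' stretch. Concretely, the standard approach splits into two cases. \emph{Case A: some proper nonempty factor of $w$ has image generating all of $S$.} Then $w = w_1 w_2$ where $w_1, w_2$ are both nonempty and at least one of them, say $w_1$, already has $\alpha$-image-semigroup equal to $S$ — wait, that is not quite it; rather one argues by a left-to-right scan that $w$ can be cut into a bounded number of blocks each of whose \emph{proper} factors generate a strictly smaller subsemigroup, OR one block is ``most of'' $w$. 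The cleaner route, which I would actually follow, is Simon's: look at the sequence of $\mathcal{R}$-classes of the prefixes $\alpha(w_{[1..j]})$ as $j$ increases; this sequence is $\mathcal{R}$-descending, so it changes at most $|M|$ times (number of $\mathcal{R}$-classes). This partitions $w$ into at most $|M|$ blocks; within a block the prefixes stay in one $\mathcal{R}$-class, which by Green's lemma lets one write that block as an idempotent node sitting over pieces whose factor-semigroups are strictly smaller, to which induction applies with bound $3(|S|-1)-1$; gluing the at-most-$|M|$ blocks with a left-comb of binary nodes costs an additive $|M|$-ish term. Balancing the two contributions is exactly what produces the constant $3$.

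The single step I expect to be the main obstacle is getting the height arithmetic to close with the constant $3|M|-1$ rather than something weaker like $O(|M|^2)$ or $2^{|M|}$. The naive nesting — one level of recursion per element removed from $S$, each costing a chain of $|S|$ binary nodes — gives a quadratic bound. The trick, and the part that requires care, is to argue \emph{simultaneously} on both the $\mathcal{R}$-order and the $\mathcal{L}$-order (or to use the stabilizer structure inside a single $\mathcal{D}$/$\mathcal{H}$-class) so that each unit of ``subsemigroup size'' pays for a \emph{bounded} number of tree levels on average, yielding the linear bound with constant exactly $3$. I would structure this as: (i) an idempotent node handles the ``$\mathcal{H}$-class is a group'' situation for free in height $\le 1 + 3(|S|-1)-1$; (ii) a bounded left-comb of binary nodes handles the descent through $\mathcal{R}$-classes; (iii) symmetrically a bounded right-comb handles $\mathcal{L}$-classes; and (iv) a bookkeeping lemma shows $1 + 1 + 1 + \max(\text{subcases with } |S|-1) \le 3|S|-1$. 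The cleanest presentation, if brevity is wanted, is simply to cite the refined proof in~\cite{kfacto} for the sharp constant, proving here only the qualitative ``bounded height'' version by the $\mathcal{R}$-class-descent induction sketched above, which already suffices for every application in this paper.
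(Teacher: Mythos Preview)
The paper does not prove this theorem at all: it is stated with a citation to \cite{simonfacto,kfacto}, and the surrounding text explicitly says ``For more details on factorization forests and a proof of the theorem, we refer the reader to~\cite{bfacto,cfacto,kfacto,tc-handbook15}.'' So there is no in-paper proof to compare against; the theorem is used as a black box.

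Your sketch follows the standard Green's-relations approach of the cited references (Simon's original argument refined by Kufleitner for the sharp constant), so in spirit it matches what the paper is pointing to. Your own diagnosis is accurate: the qualitative bounded-height version via $\mathcal{R}$-class descent is straightforward, and the delicate part is the bookkeeping that yields the linear constant $3$ rather than something quadratic. Since every use of Theorem~\ref{thm:facto} in this paper only needs \emph{some} bound depending on $|M|$ (it feeds into the definition of $\ell_{2,n}$ and into the induction in Proposition~\ref{prop:signec2}), your suggestion to prove the qualitative version and cite \cite{kfacto} for the sharp constant is exactly right and is effectively what the paper does by citing the result outright.
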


\section{\texorpdfstring{\iChains and \iJuns}{Σ\textiinferior-Chains and Σ\textiinferior-Junctures}}
\label{sec:chains}
In this section, we introduce our last tool, the set of \emph{\ichains}. It is
specific to the paper and is central to all our results. Such a set can be
associated to any morphism $\alpha: A^* \rightarrow M$, and the notion is
designed with the separation problem for \sici and \bsci in mind: both
problems can be reduced to the computation of this set.

In this section, we only give the definition of \ichains. We postpone the link
with separation and membership to Section~\ref{sec:generic}. We split the
presentation in two parts. In the first part, we define \emph{\ichains}. In
the second part, we define a refined notion: \emph{\ijuns}. This second notion
carries more information than standard \ichains and is actually more than we
need to make the link with separation. However, we will have to work with this
stronger notion in order to be able to compute \dchains in
Section~\ref{sec:comput}.

\subsection{\texorpdfstring{\iChains}{Σ\textiinferior-Chains}}

\noindent
{\bf \Chains.} Set $M$ as a finite monoid. A \emph{\chain} for $M$ is
a word over the alphabet $M$, \emph{i.e.}, an element of $M^*$. A
remark about notation is in order here. A word is usually denoted as
the concatenation of its letters. Since $M$ is a monoid, this would be
ambiguous here since $st$ could either mean a word with 2 letters $s$
and $t$, or the product of $s$ and $t$ in $M$. To avoid confusion, we
will write $(s_1,\dots,s_n)$ a \chain for $M$ of length $n$. Note that
when $M$ is clear from the context, we will simply speak of \chains,
leaving $M$ implicit.

For all $n \in \nat$, observe that $M^n$, the set of \chains of length
$n$, is a monoid when equipped with the componentwise
multiplication. In the paper, we denote \chains by
$\bar{s},\bar{t},\dots$ and sets of \chains by $\Ss,\Ts,\dots$. As
explained above, given a monoid $M$, we are not interested in all
\chains for $M$, but only in those that carry information with
respect to the logic \siw{i} and some morphism $\alpha: A^*
\rightarrow M$, which we call the \emph{\ichains for $\alpha$}.

\medskip
\noindent
{\bf \iChains.} Fix $i \in \nat$, we begin by defining a set of
\emph{\ikchains} for each fixed quantifier rank $k$. The set of
\ichains will then be the intersection of all sets of \ikchains. 

\smallskip
When $i = 0$, we set by convention $\Csik[\alpha] = M^*$ for all
$k$. Otherwise, when $i \geqslant 1$, we let
\[
  (s_1,\dots,s_n) \in
  \Csik[\alpha] \text{~~if~~} \exists w_1,\dots,w_n \in A^* \text{ with }
  \begin{cases}
    w_1
    \ksieq{i} \cdots \ksieq{i} w_n \text{ and}\\
    \forall j,\ \alpha(w_j)=s_j.
  \end{cases}
\]
We
can now define the \emph{set of 
  \ichains for $\alpha$} as the set $$\Csi[\alpha] =
\bigcap_k\Csik[\alpha].$$

\begin{remark}
  Observe that the set of \ikchains of length $2$ can be viewed as an
  abstraction of \ksieq{i} over the set $M$ with respect to $\alpha$. An
  important observation is that this abstraction is no longer a preorder: in
  general, this is a non-transitive relation. This is because $(r,s)$ and
  $(s,t)$ are \ikchains of length~2 iff there are words $u,v$ mapped to $r,s$
  and $v',w$ mapped to $s,t$, respectively, such that $u\ksieq{i}v$ and
  $v'\ksieq{i}w$. However, $v$ and $v'$ may be completely unrelated.  In
  particular, this means that the whole set of \ikchains carries more
  information than the set of \ikchains of length $2$ only.
\end{remark}

It will often be convenient to speak only of \ichains of a given fixed
length.  For any fixed $n \in \nat$, we let $\Csikn[\alpha]$ be the
set of \ikchains of length~$n$ for $\alpha$, \emph{i.e.},
$\Csikn[\alpha] = \Csik[\alpha] \cap M^n$. We define $\Csin[\alpha]$
similarly. We have the following lemma.

\begin{lemma} \label{fct:chainref}
  For any $i,k,n \in \nat$,
  \[
    \begin{array}{lclcl}
      \Cs_i[\alpha] &\subseteq &\Cs_i^{k+1}[\alpha] &\subseteq
      &\Cs_i^k[\alpha]. \\[.5ex]
      \Csin[\alpha] &\subseteq &\Cs_{i,n}^{k+1}[\alpha] &\subseteq
      &\Csikn[\alpha].
    \end{array}
  \]
  Moreover, for all $i,n \in \nat$, there exists $\kappa_{i,n} \in
  \nat$ such that for any $k \geqslant \kappa_{i,n}$,
  \[
    \begin{array}{lclcl}
      \Csin[\alpha] & = & \Cs_{i,n}^{\kappa_{i,n}}[\alpha] & = & \Csikn[\alpha]
    \end{array}
  \]
\end{lemma}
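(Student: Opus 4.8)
The lemma has two parts: a monotonicity/nesting statement for the sets of chains, and a stabilization statement saying that for fixed length $n$ the decreasing chain of sets $\Csikn[\alpha]$ eventually becomes constant. For the first part, the plan is to unfold the definitions. The inclusion $\Cs_i^{k+1}[\alpha] \subseteq \Cs_i^k[\alpha]$ follows because any \siw{i} formula of quantifier rank $k$ is also a \siw{i} formula of quantifier rank $k+1$ (rank is a bound, not an exact count), so $w \sieq{k+1}{i} w'$ implies $w \ksieq{i} w'$; hence a witnessing sequence $w_1,\dots,w_n$ for membership in $\Cs_i^{k+1}[\alpha]$ is also a witness for $\Cs_i^k[\alpha]$. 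The inclusion $\Cs_i[\alpha] \subseteq \Cs_i^{k+1}[\alpha]$ is immediate from the definition $\Cs_i[\alpha] = \bigcap_k \Cs_i^k[\alpha]$. Intersecting everything with $M^n$ gives the analogous chain for the length-$n$ versions, since $\Csikn[\alpha] = \Csik[\alpha] \cap M^n$ by definition and intersection preserves inclusions.

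For the stabilization part, the key observation is that $M^n$ is a \emph{finite} set (as $M$ is finite), so $\{\Csikn[\alpha]\}_{k \in \nat}$ is a decreasing sequence of subsets of a finite set. Any such sequence must stabilize: there is some $\kappa_{i,n}$ with $\Cs_{i,n}^{k}[\alpha] = \Cs_{i,n}^{\kappa_{i,n}}[\alpha]$ for all $k \geqslant \kappa_{i,n}$. It then remains to identify this stable value with $\Csin[\alpha]$. One inclusion, $\Csin[\alpha] \subseteq \Cs_{i,n}^{\kappa_{i,n}}[\alpha]$, is a special case of the first part. For the reverse inclusion, suppose $\bar{s} \in \Cs_{i,n}^{k}[\alpha]$ for all $k \geqslant \kappa_{i,n}$; combined with the nesting of the first part (which gives $\Cs_{i,n}^{k}[\alpha] \supseteq \Cs_{i,n}^{k'}[\alpha]$ for $k \leqslant k'$, and also $\Cs_{i,n}^{k}[\alpha] = \Cs_{i,n}^{\kappa_{i,n}}[\alpha] \supseteq \Cs_{i,n}^{k}[\alpha]$ for $k < \kappa_{i,n}$ trivially), we get $\bar{s} \in \Cs_{i,n}^{k}[\alpha]$ for \emph{all} $k \in \nat$, hence $\bar{s} \in \bigcap_k \Cs_{i,n}^{k}[\alpha] = \Csin[\alpha]$. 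So the stable value is exactly $\Csin[\alpha]$.

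I do not expect any real obstacle here; the statement is essentially a bookkeeping lemma and every step is a routine unfolding of definitions plus the pigeonhole-style fact that a decreasing sequence of subsets of a finite set stabilizes. The only point that requires a moment's care is making sure the $i=0$ convention ($\Csik[\alpha] = M^*$ for all $k$) is consistent with all the claimed inclusions and with stabilization — but it trivially is, since all the sets involved are then equal to $M^*$ (or $M^n$ after intersecting), and one can take $\kappa_{0,n} = 0$. I would also note explicitly that $\kappa_{i,n}$ depends on $\alpha$ as well, though the statement suppresses this, which is fine since $\alpha$ is fixed throughout.
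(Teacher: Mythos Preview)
Your proposal is correct and follows essentially the same approach as the paper's proof, which is simply: ``The first property is immediate from the definitions. The existence of $\kappa_{i,n}$ follows from the first property and the fact that for all fixed $n$, $M^n$ is a finite set.'' You have merely unpacked these two sentences in full detail, which is entirely appropriate.
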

\begin{proof}
  The first property is immediate from the definitions. The existence of
  $\kappa_{i,n}$ follows from the first property and the fact that for
  all fixed $n$, $M^n$ is a finite set.
\end{proof}

Notice that for a given $k$, the set $\Csikn[\alpha]$ can be computed by brute force,
by calculating all $\kbceq{i}$-classes in $A^*$ (which can be done by enumerating
all the finitely many nonequivalent formulas of rank~$k$ in \siwi). Therefore,
computing (an upper bound on) $\kappa_{i,n}$ immediately yields computability of
$\Csin[\alpha]$. However, while the existence of $\kappa_{i,n}$
is easy to prove, its computation in nontrivial. It may happen
that $\Cs_{i,n}^{k}[\alpha]=\Cs_{i,n}^{k+1}[\alpha]$, but
$\Cs_{i,n}^{k+1}[\alpha]\supsetneq\Cs_{i,n}^{k+2}[\alpha]$. We will obtain
a bound on $\kappa_{i,n}$ as a byproduct of our algorithm for computing \ichains
presented in Section~\ref{sec:comput}.

\medskip
\noindent {\bf Closure Properties.} We finish the definitions by stating
simple closure properties of the sets $\Csik[\alpha]$ and $\Csi[\alpha]$:
closure under subwords, closure under stutter and closure under product. These
three properties are illustrated on an example in Figure~\ref{fig:clos}, where
\chains are represented pictorially: we draw the \chain $(s_1,s_2,\ldots,s_n)$
as
\begin{center}
  \begin{tikzpicture}[every node/.style={circle,anchor=center,inner sep=0pt,minimum size=.5mm,label distance=1.2mm}]
    \fill (-2,0) circle (1mm) node (s1)[label=above:$s_1$]{};
    \fill (-1.5,0) circle (1mm) node (s2)[label=above:$s_2$]{};
    \draw (0,0) node (s3){$\ldots$};
    \fill (1.5,0) circle (1mm) node (sn)[label=above:$s_n$]{};
    \draw (s1) -- (s2) -- ($(s2)+(.5,0)$);
    \draw ($(sn)-(.5,0)$)  -- (sn) ;
  \end{tikzpicture}
\end{center}

\begin{figure}[b]
  \begin{center}
    \begin{tikzpicture}[every node/.style={circle,anchor=center,inner sep=0pt,minimum size=.5mm,label distance=1.2mm}]
      \fill (-1,.5) circle (1mm) node (s1)[label=above:$s_1$]{};
      \fill (-.5,.5) circle (1mm) node (s2)[label=above:$s_2$]{};
      \fill (0,.5) circle (1mm) node (s3)[label=above:$s_3$]{};
      \fill (.5,.5) circle (1mm) node (s4)[label=above:$s_4$]{};
      \fill (1,.5) circle (1mm) node (s5)[label=above:$s_5$]{};
      \fill (1.5,.5) circle (1mm) node (s6)[label=above:$s_6$]{};
      \node[anchor=east] (c1) at (3.5,.5) {$ \in \Csi[\alpha]\quad$};
      \draw (s1) -- (s2) -- (s3) -- (s4) -- (s5)  -- (s6) ;

      \fill (-1,-.5) circle (1mm) node (t1)[label={90:$t_1$}]{};
      \fill (-.5,-.5) circle (1mm) node (t2)[label=above:$t_2$]{};
      \fill (0,-.5) circle (1mm) node (t3)[label=above:$t_3$]{};
      \fill (.5,-.5) circle (1mm) node (t4)[label=above:$t_4$]{};
      \fill (1,-.5) circle (1mm) node (t5)[label=above:$t_5$]{};
      \fill (1.5,-.5) circle (1mm) node (t6)[label=above:$t_6$]{};
      \node[non,anchor=east] (c2) at (3.5,-.5) {$ \in \Csi[\alpha]\quad$};
      \draw (t1) -- (t2) -- (t3) -- (t4) -- (t5)  -- (t6) ;

      \draw[very thick,decorate,decoration={brace}] (c1.north east) to coordinate (cc) (c2.south east);

      \draw[ars] ($(cc)+(0.5,0.0)$) to ($(cc)+(1.8,0.0)$);

      \fill (8.,1.5)  circle (1mm) node (s1)[label=above:$s_1$]{};
      \fill (8.5,1.5) circle (1mm) node (s2)[label=above:$s_2$]{};
      \fill (9.,1.5) circle (1mm) node (s4)[label=above:$s_4$]{};
      \node (c3) at (10.,1.5) {$ \in \Csi[\alpha]\quad$};
      \node[anchor=center] (c3b) at (9,2.3) {\emph{Closure under subwords:}};
      \draw (s1) -- (s2) -- (s4);

      \fill (6.25,-.4) circle (1mm) node (t1)[label=above:$s_1$]{};
      \fill (6.75,-.4) circle (1mm) node (t2)[label=above:$s_2$]{};
      \fill (7.25,-.4) circle (1mm) node (t3)[label=above:$s_2$]{};
      \fill (7.75,-.4) circle (1mm) node (t4)[label=above:$s_2$]{};
      \fill (8.25,-.4) circle (1mm) node (t5)[label=above:$s_3$]{};
      \fill (8.75,-.4) circle (1mm) node (t6)[label=above:$s_4$]{};
      \fill (9.25,-.4) circle (1mm) node (t7)[label=above:$s_5$]{};
      \fill (9.75,-.4) circle (1mm) node (t8)[label=above:$s_6$]{};
      \fill (10.25,-.4) circle (1mm) node (t9)[label=above:$s_6$]{};
      \node (c3) at (11.25,-.4) {$ \in \Csi[\alpha]\quad$};
      \node (c3b) at (9,.4) {\emph{Closure under stutter:}};
      \draw (t1) -- (t2) -- (t3) -- (t4) -- (t5) -- (t6) -- (t7) -- (t8) --(t9);

      \fill (6.2,-2.4)  circle (1mm) node (r1)[label=above:$s_1t_1$]{};
      \fill (7,-2.4)  circle (1mm) node (r2)[label=above:$s_2t_2$]{};
      \fill (7.8,-2.4)  circle (1mm) node (r3)[label=above:$s_3t_3$]{};
      \fill (8.6,-2.4)  circle (1mm) node (r4)[label=above:$s_4t_4$]{};
      \fill (9.4,-2.4)  circle (1mm) node (r5)[label=above:$s_5t_5$]{};
      \fill (10.2,-2.4)  circle (1mm) node (r6)[label=above:$s_6t_6$]{};
      \node (c5) at (11.25,-2.4) {$ \in \Csi[\alpha]\quad$};
      \draw node (c5b) at (9,-1.5) {\emph{Closure under product:}};
      \draw (r1) -- (r2) -- (r3) -- (r4) -- (r5) -- (r6) ;
      \draw[very thick,decorate,decoration={brace}] (5.75,-2.85) to (5.75,2.85);
    \end{tikzpicture}
  \end{center}
  \caption{Closure properties of \ichains (example on $\Csi[\alpha]$)}
  \label{fig:clos}
\end{figure}

Observe first that since the relation $\ksieq{i}$ is transitive for
all $i,k$, the sets $\Csik[\alpha]$ and $\Csi[\alpha]$ are closed
under subwords.

\begin{fact} \label{fct:high}
  Let $i,k \in \nat$ and let $\Xs = \Csi[\alpha]$ or $\Xs = \Csik[\alpha]$. Then $\Xs$
  is closed under subwords. That is, for all $(s_1,\dots,s_n) \in \Xs$ and all $j
  \leqslant n$, we have $(s_1,\dots,s_{j-1},s_{j+1},\dots,s_n) \in \Xs$.
\end{fact}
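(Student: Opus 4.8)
The plan is to recognize this Fact as nothing more than transitivity of $\ksieq{i}$, transported to $\alpha$-images. First I would dispose of the degenerate case $i = 0$: here $\Csik[\alpha] = M^*$ by convention, which is trivially closed under subwords. For $i \geqslant 1$, the next step is to reduce to the case $\Xs = \Csik[\alpha]$, since $\Csi[\alpha] = \bigcap_k \Csik[\alpha]$ and an intersection of subword-closed sets is again subword-closed (if dropping a coordinate keeps a chain in $\Csik[\alpha]$ for every $k$, then it keeps it in the intersection).

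The core step is then to unfold the definition of membership in $\Csik[\alpha]$. Given $(s_1,\dots,s_n) \in \Csik[\alpha]$ and an index $j \leqslant n$, I would fix witnessing words $w_1,\dots,w_n \in A^*$ with $\alpha(w_\ell) = s_\ell$ for all $\ell$ and $w_1 \ksieq{i} \cdots \ksieq{i} w_n$, and simply reuse the same words after deleting $w_j$. When $j = 1$ or $j = n$, the truncated chain of inequalities $w_2 \ksieq{i} \cdots \ksieq{i} w_n$ (resp.\ $w_1 \ksieq{i} \cdots \ksieq{i} w_{n-1}$) is immediately still a valid ascending sequence, so it witnesses membership of the shortened chain. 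Otherwise $1 < j < n$, and since $\ksieq{i}$ is a preorder by Fact~\ref{fct:preorder}, from $w_{j-1} \ksieq{i} w_j \ksieq{i} w_{j+1}$ we obtain $w_{j-1} \ksieq{i} w_{j+1}$; hence
\[
  w_1 \ksieq{i} \cdots \ksieq{i} w_{j-1} \ksieq{i} w_{j+1} \ksieq{i} \cdots \ksieq{i} w_n,
\]
so $w_1,\dots,w_{j-1},w_{j+1},\dots,w_n$ witness $(s_1,\dots,s_{j-1},s_{j+1},\dots,s_n) \in \Csik[\alpha]$.

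I do not expect a genuine obstacle: once the definitions are spelled out, this is a one-line argument. The only minor points to keep track of are that, in the case $\Xs = \Csi[\alpha]$, the witnessing words are allowed to depend on $k$ (which is harmless, since we argue separately for each $k$), and that although the statement is phrased for the removal of a single coordinate, closure under arbitrary subwords follows by iterating the deletion step.
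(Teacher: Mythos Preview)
Your proposal is correct and matches the paper's approach exactly: the paper simply observes that closure under subwords is immediate from transitivity of $\ksieq{i}$ and does not give any further detail. Your write-up just spells out this observation carefully (handling $i=0$, reducing $\Csi[\alpha]$ to the $\Csik[\alpha]$ case via intersection, and invoking Fact~\ref{fct:preorder}), which is fine.
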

An interesting consequence of Fact~\ref{fct:high} is that, by
Higman's lemma, $\Csi[\alpha]$ and $\Csik[\alpha]$ are both regular
languages over the alphabet $M$. However, this observation is
essentially useless in our argument as Higman's lemma provides no way
for actually computing a recognizing device for the language
$\Csi[\alpha]$.

Another immediate property of $\Csi[\alpha]$ and $\Csik[\alpha]$ is closure
under duplication of letters (also called stutter).

\begin{fact} \label{fct:dupli}
  Let $i,k \in \nat$ and let $\Xs = \Csi[\alpha]$ or $\Xs = \Csik[\alpha]$. Then,
  $\Xs$ is closed under stutter. That is, for all $(s_1,\dots,s_n) \in \Xs$ and all
  $j \leqslant n$, we have $(s_1,\dots,s_{j},s_{j},\dots,s_n) \in \Xs$.
\end{fact}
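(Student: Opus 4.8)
The plan is to unwind the definition of $\Csik[\alpha]$ and exploit reflexivity of the preorder $\ksieq{i}$. Suppose $(s_1,\dots,s_n)\in\Csik[\alpha]$ with $i\ge 1$; by definition there are words $w_1\ksieq{i}\cdots\ksieq{i}w_n$ with $\alpha(w_\ell)=s_\ell$ for each $\ell$. Fix $j\le n$. I would simply insert a second copy of $w_j$ immediately after the first, obtaining the sequence of comparisons
\[
  w_1\ksieq{i}\cdots\ksieq{i}w_{j-1}\ksieq{i}w_j\ksieq{i}w_j\ksieq{i}w_{j+1}\ksieq{i}\cdots\ksieq{i}w_n,
\]
where the new comparison $w_j\ksieq{i}w_j$ holds because $\ksieq{i}$ is a preorder, hence reflexive (Fact~\ref{fct:preorder}). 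Since $\alpha(w_j)=s_j$ serves both the $j$-th and the $(j+1)$-th entries, this sequence witnesses $(s_1,\dots,s_{j-1},s_j,s_j,s_{j+1},\dots,s_n)\in\Csik[\alpha]$. For $i=0$ there is nothing to check, since $\Csik[\alpha]=M^*$ by convention, which is trivially closed under duplication of letters.

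For the case $\Xs=\Csi[\alpha]$, I would invoke $\Csi[\alpha]=\bigcap_k\Csik[\alpha]$: if $(s_1,\dots,s_n)$ lies in every $\Csik[\alpha]$, then by the previous paragraph so does its stuttering at position $j$, for every $k$, hence the stuttered chain lies in the intersection. Note that no uniformity in $k$ is needed here, as the witnessing words may be chosen separately for each $k$.

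There is no genuine obstacle in this argument: the statement is an immediate consequence of the reflexivity of $\ksieq{i}$, the only points worth spelling out being the degenerate case $i=0$ and the passage from the finite-rank sets $\Csik[\alpha]$ to their intersection $\Csi[\alpha]$. (Alternatively one can phrase the step combinatorially: the letter-duplication map is a monoid morphism $M^n\to M^{n+1}$ and $w_j$ is a common preimage of the two repeated entries, but the reflexivity formulation is the most direct.)
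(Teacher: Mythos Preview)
Your proof is correct and matches the paper's intent: the paper states this fact without proof, calling it ``immediate,'' and the argument it has in mind is precisely the reflexivity of $\ksieq{i}$ that you spell out. There is nothing to add.
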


Finally, since $\ksieq{i}$ is compatible with the concatenation
operation for any $k$ (see Lemma~\ref{lem:efconcat}, the pre-congruence Lemma), it is immediate
that \ichains of length $n$ are closed under product (\emph{i.e.}, componentwise
multiplication).
\begin{fact} \label{fct:chaincomp}
  For all $i,k,n \in \nat$, both $\Csin[\alpha]$ and $\Csikn[\alpha]$ are
  submonoids of $M^n$. 
\end{fact}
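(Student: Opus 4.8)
The plan is to check directly, for each of the sets $\Csikn[\alpha]$ and $\Csin[\alpha]$, the two submonoid requirements: that it contains the neutral element $(1_M,\dots,1_M)$ of $M^n$, and that it is closed under the componentwise product. First I would dispose of the degenerate case $i=0$, where $\Cs_0^k[\alpha]=M^*$ by convention, so that $\Csikn[\alpha]=M^n=\Csin[\alpha]$ is trivially a submonoid; hence in what follows one may assume $i\ge 1$.

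For the neutral element, the witness to exhibit is simply $w_1=\cdots=w_n=\varepsilon$: since $\alpha$ is a monoid morphism we have $\alpha(\varepsilon)=1_M$, and since $\ksieq{i}$ is a preorder (Fact~\ref{fct:preorder}) it is reflexive, so $\varepsilon\ksieq{i}\cdots\ksieq{i}\varepsilon$; thus $(1_M,\dots,1_M)\in\Csikn[\alpha]$ for every $k$, and therefore also in $\Csin[\alpha]=\bigcap_k\Csikn[\alpha]$. For closure under product, take $\bar s=(s_1,\dots,s_n)$ and $\bar t=(t_1,\dots,t_n)$ in $\Csikn[\alpha]$, with witnessing words $u_1\ksieq{i}\cdots\ksieq{i}u_n$ (where $\alpha(u_j)=s_j$) and $v_1\ksieq{i}\cdots\ksieq{i}v_n$ (where $\alpha(v_j)=t_j$). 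The Pre-congruence Lemma (Lemma~\ref{lem:efconcat}) applied to each consecutive pair gives $u_jv_j\ksieq{i}u_{j+1}v_{j+1}$ for $j<n$, so the words $u_1v_1,\dots,u_nv_n$ form an $\ksieq{i}$-increasing sequence, and $\alpha(u_jv_j)=\alpha(u_j)\alpha(v_j)=s_jt_j$; hence $\bar s\cdot\bar t=(s_1t_1,\dots,s_nt_n)\in\Csikn[\alpha]$.

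Since these witnesses do not depend on $k$, the same argument establishes the claim for $\Csin[\alpha]$; alternatively one just observes that $\Csin[\alpha]=\bigcap_k\Csikn[\alpha]$ is an intersection of submonoids of $M^n$ and hence a submonoid. I do not expect any genuine obstacle here: the statement is an immediate consequence of the Pre-congruence Lemma together with reflexivity of $\ksieq{i}$, and the only points needing a moment's care are separating out the $i=0$ convention and the light index bookkeeping in the product argument.
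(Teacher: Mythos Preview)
Your proposal is correct and takes the same approach as the paper, which simply remarks that the fact is immediate from the Pre-congruence Lemma (Lemma~\ref{lem:efconcat}) without spelling out the details. Your treatment is more explicit---separating out the $i=0$ convention, exhibiting $\varepsilon$ as witness for the identity, and tracking the indices---but the underlying argument is identical.
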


\noindent
This ends the definition of \ichains. This leaves two issues.
\begin{itemize}
\item First, we need to explain the link between the computation of \ichains and
  our decisions problems. We establish this link in Section~\ref{sec:generic}. For
  example, we show that the separation problem for \siwi reduces to the
  computation of all \ichains of length $2$.
\item The second issue is finding an algorithm, which, given a morphism
  $\alpha$, computes the set of associated \ichains. We will present such an
  algorithm for \dchains in Section~\ref{sec:comput}. However, this algorithm
  has to work with a refined notion called \emph{``\/\ijuns''}. We now define this
  notion.
\end{itemize}

\subsection{\texorpdfstring{\iJuns}{Σ\textiinferior-Junctures}}

Our algorithm computes more than we actually need to solve separation. The
crucial information is to determine when several \ichains with the same first
element can be ``synchronized''. To explain what we mean, consider two
\ichains $(s,t_1)$ and $(s,t_2)$ of length $2$. By definition, for all $k$
there exist words $w,w_1,w',w_2$ whose images under $\alpha$ are
$s,t_1,s,t_2$ respectively, and such that $w \ksieq{i} w_1$ and
$w' \ksieq{i} w_2$. In some cases (but not all), it will be possible to
choose $w = w'$ for all $k$. The goal of the notion of \ijuns is to record
the cases in which this is true. The reason why we need to capture this extra
information is that (1) it can be computed inductively, which is not clear for
\ichains, and (2) it contains more information than \ichains do.

\smallskip We first define the generic notion of \emph{\jun}, and then a
specific notion, dedicated to our problem, called \emph{\ijun}.

\medskip\noindent {\bf \Juns.} Let $M$ be a finite monoid. A \emph{\jun} for
$M$ is a pair $(s,\Ss)$ where $s \in M$ and $\Ss \subseteq M^*$ is a set of
\chains. 
If
$(s,\Ss)$ is a \jun and $\bar{t} = (t_1,\dots,t_n)$ is a \chain, we write
$\bar{t} \in (s,\Ss)$ if
\[
  t_1 = s \quad \text{\it and} \quad (t_2,\dots,t_n) \in \Ss.
\]
Thus, a \jun $(s,\Ss)$ abstracts a set of \chains all having the same first
element, namely~$s$. Although we will not use it, it is convenient to view a
\jun as an $M$-labeled tree, where only the root is branching.
\figurename~\ref{fig:juncture0} pictures the \jun
$\bigl(s_0,\big\{(r_1,r_2),(s_1), (t_1,t_2,t_3)\big\}\bigr)$, which abstracts the
set of \chains $\big\{(s_0,r_1,r_2),(s_0,s_1), (s_0,t_1,t_2,t_3)\big\}$ ``synchronized''
at $s_0$.
\begin{figure}[htpb]
  \centering
  \begin{tikzpicture}[every node/.style={circle,inner sep=0pt,minimum size=2mm}]
    \fill (0,0) circle (0.1cm) node (s0)[label=above:$s_0$]{};
    \fill (2,0) circle (0.1cm) node (s1)[label=above:$s_1$]{};
    \fill (2,1) circle (0.1cm) node (r1)[label=above:$r_1$]{};
    \fill (4,1) circle (0.1cm) node (r2)[label=above:$r_2$]{};
    \fill (2,-1) circle (0.1cm) node (t1)[label=above:$t_1$]{};
    \fill (4,-1) circle (0.1cm) node (t2)[label=above:$t_2$]{};
    \fill (6,-1) circle (0.1cm) node (t3)[label=above:$t_3$]{};

    \draw  (s0) -- (r1) -- (r2);
    \draw  (s0) -- (s1) ;
    \draw  (s0) -- (t1) -- (t2) -- (t3);
  \end{tikzpicture}
  \caption{\Jun $\bigl(s_0,\big\{(r_1,r_2),(s_1), (t_1,t_2,t_3)\big\}\bigr)$}
  \label{fig:juncture0}
\end{figure}

If $(s,\Ss)$ and $(t,\Ts)$ are \juns, we write $(s,\Ss) \subseteq (t,\Ts)$
when $s=t$ and $\Ss \subseteq \Ts$, \emph{i.e.}, when
$\{\bar{s} \mid \bar{s} \in (s,\Ss)\} \subseteq \{\bar{t} \mid \bar{t} \in
(t,\Ts)\}$.
In other words, this means that the tree representing $(s,\Ss)$ is obtained
from the tree representing $(t,\Ts)$ by simply removing some branches from the
root.

Note that a \chain is in particular a \jun. For this reason, we use the same
notation for sets of \chains and sets of \juns: $\fR,\fS,\fT,\dots$ If $\fT$ is a set \juns,
we define $\downclos \fT$, the \emph{downset} of $\fT$, as the set:
\[
  \downclos \fT=\big\{(r,\Rs) \mid \exists (s,\Ss) \in \fT,\ (r,\Rs)\subseteq (s,\Ss)\big\}.
\]
In other words, $\downclos \fT$ is the set of \juns represented by trees
obtained by possibly removing some branches to trees in $\fT$.

Finally, for any $n \geqslant 1$, a \jun $(s,\Ss)$ is said to have \emph{length
  $n$} when $\Ss \subseteq M^{n-1}$, \emph{i.e.}, when \chains $\bar{s} \in
(s,\Ss)$ all have the same length $n$. In this paper, we shall only use such \juns,
such as the one pictured in
\figurename~\ref{fig:juncture}.
\begin{figure}[htpb]
  \centering
  \begin{tikzpicture}[every node/.style={circle,inner sep=0pt,minimum size=2mm}]
    \fill (0,0) circle (0.1cm) node (s0)[label=above:$s_0$]{};
    \fill (2,0) circle (0.1cm) node (s1)[label=above:$s_1$]{};
    \fill (4,0) circle (0.1cm) node (s2)[label=above:$s_2$]{};
    \fill (6,0) circle (0.1cm) node (s3)[label=above:$s_3$]{};
    \fill (2,1) circle (0.1cm) node (r1)[label=above:$r_1$]{};
    \fill (4,1) circle (0.1cm) node (r2)[label=above:$r_2$]{};
    \fill (6,1) circle (0.1cm) node (r3)[label=above:$r_3$]{};
    \fill (2,-1) circle (0.1cm) node (t1)[label=above:$t_1$]{};
    \fill (4,-1) circle (0.1cm) node (t2)[label=above:$t_2$]{};
    \fill (6,-1) circle (0.1cm) node (t3)[label=above:$t_3$]{};

    \draw  (s0) -- (r1) -- (r2) -- (r3);
    \draw  (s0) -- (s1) -- (s2) -- (s3);
    \draw  (s0) -- (t1) -- (t2) -- (t3);
  \end{tikzpicture}
  \caption{The \jun
    $\big\{s_0,\big\{(r_1,r_2,r_3),(s_1,s_2,s_3),(t_1,t_2,t_3)\big\}\big\}\in M\times2^{M^3}$}
  \label{fig:juncture}
\end{figure}

Observe that for all $n \geqslant 1$, the set $M \times 2^{M^{n-1}}$ of \juns of
length~$n$ is a monoid for the operation:
\[
  (s,\Ss) \cdot (t,\Ts) = (s\cdot t,\,\Ss \cdot \Ts) = \big(st,\{\bar{s}\bar{t} \in M^{n-1} \mid \bar{s} \in \Ss, \quad \bar{t} \in \Ts\}\big).
\]
As for \chains, we are not interested in all \juns but only in those
that carry information with respect to \siwi for some $i$.
We call \emph{\ijuns} these particular \juns.

\medskip
\noindent {\bf \iJuns.} To define \ijuns, we mimic the definition of
\ichains. Fix $i \geqslant 1$. We begin by defining a set of \ikjuns for each fixed
quantifier rank $k$. For all $k \in \nat$, we define the set $\fCik[\alpha]$
of \emph{\ikjuns for $\alpha$}. Let $(t,\Ts)$ be a \jun. We let
$(t,\Ts) \in \fCik[\alpha]$~if
\begin{itemize}
\item all \chains in $\Ts$ have the same length, say $n-1$, and
\item there exists $w \in A^*$ such that $\alpha(w) = t$, and for
  all \chains $(t_2,\dots,t_n) \in \Ts$, there exist $w_2,\dots,w_n
  \in A^*$ satisfying
  \begin{equation}
    \label{eq:sigma2chain}
    w \ksieq{i} w_2 \ksieq{i} \cdots \ksieq{i} w_n,
  \end{equation}
  and for all $j=2,\dots,n$,
  \begin{equation}
    \label{eq:12}
    \alpha(w_j)=t_j.
  \end{equation}
\end{itemize}
We call such a word $w$ a \emph{$k$-witness} of the \ikjun. With the tree
representation of \ikjuns, as in~\figurename~\ref{fig:juncture}, this means
that we can actually label each node by \emph{two} values: one in the finite
monoid $M$ (the same value as in \figurename~\ref{fig:juncture}), and one in
$A^*$, such that
\begin{itemize}
\item for any node whose labeling in $M$ is $s$ and whose labeling in $A^*$ is
  $u$, we have $\alpha(u)=s$,
\item for any edge from a node labeled $u\in A^*$ to one of its children
  labeled $v\in A^*$, we have $u\ksieq i v$.
\end{itemize}
Thus, a $k$-witness of the \ikjun is a possible word-labeling of the root.

\medskip\noindent
We finally define $\fCi[\alpha]$, the \emph{set of \ijuns for $\alpha$} as
the set
\begin{equation*}
  \fCi[\alpha] = \bigcap_k \fCik[\alpha].
\end{equation*}
An immediate observation, already mentioned above, is that \ijuns carry at
least as much information as \ichains, as stated in the following fact.

\begin{fact}
  \label{fct:recoverchains}
  Let $i \geqslant 1$ and let $(s_1,\dots,s_n)$ be a
  \chain. Then,
  \[
    \begin{array}{rcl}
      (s_1,\dots,s_n) \in \Csi[\alpha] & \text{if and only if} & \big(s_1,\big\{(s_2,\dots,s_n)\big\}\big) \in
                                                               \fCi[\alpha].
    \end{array}
  \]
\end{fact}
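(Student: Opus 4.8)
The plan is to establish the equivalence rank by rank and then intersect over all ranks. Concretely, I would first prove that for every fixed quantifier rank $k\in\nat$,
\[
  (s_1,\dots,s_n)\in\Csik[\alpha]
  \quad\Longleftrightarrow\quad
  \bigl(s_1,\{(s_2,\dots,s_n)\}\bigr)\in\fCik[\alpha],
\]
and then conclude using $\Csi[\alpha]=\bigcap_k\Csik[\alpha]$ and $\fCi[\alpha]=\bigcap_k\fCik[\alpha]$: if the two membership tests coincide at every level $k$, so do the intersections over all~$k$.

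For the rank-$k$ equivalence, the whole point is that a $k$-witness $w$ of the \ikjun $\bigl(s_1,\{(s_2,\dots,s_n)\}\bigr)$ plays exactly the role of the first word $w_1$ in the definition of $\Csik[\alpha]$. In the forward direction, given words $w_1\ksieq{i}\cdots\ksieq{i}w_n$ with $\alpha(w_j)=s_j$ for all $j$, I would take $w:=w_1$ as the $k$-witness: the singleton $\{(s_2,\dots,s_n)\}$ trivially satisfies the uniform-length requirement built into the definition of \ikjuns (there is only one \chain in it), $\alpha(w)=s_1$, and the tail $w_2,\dots,w_n$ witnesses the unique \chain $(s_2,\dots,s_n)$ since $w\ksieq{i}w_2\ksieq{i}\cdots\ksieq{i}w_n$ with $\alpha(w_j)=s_j$. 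Conversely, from a $k$-witness $w$ of $\bigl(s_1,\{(s_2,\dots,s_n)\}\bigr)$ together with its associated words $w_2,\dots,w_n$, I would set $w_1:=w$ and read off directly that $w_1\ksieq{i}\cdots\ksieq{i}w_n$ with $\alpha(w_j)=s_j$ for all $j$, i.e.\ $(s_1,\dots,s_n)\in\Csik[\alpha]$. The degenerate case $n=1$ (where the set of tails is the singleton containing the empty \chain) is handled the same way: both conditions reduce to the existence of a word mapped to $s_1$.

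I do not expect a genuine obstacle here; this is a direct unfolding of the two definitions. The only points that need a moment's attention are that the definition of \ikjuns imposes a uniform length on the \chains in the set (which is vacuous for a singleton) and that the hypothesis $i\geqslant 1$ is precisely what makes $\fCi[\alpha]$ defined in the first place. The substance of the fact is really a remark: the $\fCi$ apparatus only adds bookkeeping about synchronization beyond what $\Csi$ records, and on a single \chain there is nothing to synchronize, so the two notions agree.
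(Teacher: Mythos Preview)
Your proposal is correct and matches the paper's treatment: the paper does not even write out a proof, merely calling the fact ``an immediate observation'' from the definitions. Your rank-by-rank unfolding is exactly the obvious way to justify that remark, and there is nothing more to it.
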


Recall that we will restrict ourselves to \juns of a fixed
length $n \geqslant 1$. We denote by $\fCikn[\alpha]$ and $\fCin[\alpha]$
the corresponding restrictions:
\begin{align*}
  \fCikn[\alpha] &= \fCik[\alpha]\cap (M \times 2^{M^{n-1}}), \\
  \fCin[\alpha] &= \fCi[\alpha] \cap (M \times 2^{M^{n-1}}).
\end{align*}
For example, the \ijun of \figurename~\ref{fig:juncture} belongs to~$\fC_{i,4}[\alpha]$.
Observe that Lemma~\ref{fct:chainref} can be generalized to \ijuns, as stated
in the following lemma.

\begin{lemma} \label{fct:chainref2}
  For any $i \geqslant 1$ and $k,n \in \nat$, we have
  \[
    \begin{array}{lclcl}
      \fC_i[\alpha] &\subseteq &\fC_i^{k+1}[\alpha] &\subseteq
      &\fC_i^k[\alpha]. \\[1.2ex]
      \fC_{i,n}[\alpha] &\subseteq &\fC_{i,n}^{k+1}[\alpha] &\subseteq
      &\fC_{i,n}^k[\alpha].
    \end{array}
  \]
  Moreover, for all $i,n \in \nat$, there exists $\ell_{i,n} \in
  \nat$ such that for any $k \geqslant \ell_{i,n}$,
  \[
    \begin{array}{lclcl}
      \fC_{i,n}[\alpha] & = & \fC_{i,n}^{\ell_{i,n}}[\alpha] & = & \fC_{i,n}^k[\alpha].
    \end{array}
  \]
\end{lemma}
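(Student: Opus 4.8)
The plan is to mirror, almost verbatim, the proof of Lemma~\ref{fct:chainref}: the chain of inclusions is a formal consequence of the definitions, while the existence of the stabilization index $\ell_{i,n}$ is a finiteness argument. The only point where the situation differs from Lemma~\ref{fct:chainref} is that the relevant finite ambient set is now $M \times 2^{M^{n-1}}$, the set of \juns of length $n$, rather than $M^n$.

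First I would establish the inclusions of the top row. The containment $\fC_i[\alpha] \subseteq \fC_i^{k+1}[\alpha]$ holds by definition, since $\fC_i[\alpha] = \bigcap_k \fC_i^k[\alpha]$. For $\fC_i^{k+1}[\alpha] \subseteq \fC_i^k[\alpha]$, I would take an arbitrary $(t,\Ts) \in \fC_i^{k+1}[\alpha]$ together with a witness $w \in A^*$ and, for each \chain $(t_2,\dots,t_n) \in \Ts$, the words $w_2,\dots,w_n$ furnished by the definition, so that $w \sieq{k+1}{i} w_2 \sieq{k+1}{i} \cdots \sieq{k+1}{i} w_n$ and $\alpha(w_j) = t_j$ for every $j$. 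The only thing to observe is that $\sieq{k+1}{i}$ refines $\sieq{k}{i}$: every \siw{i} formula of quantifier rank at most $k$ is in particular one of quantifier rank at most $k+1$, hence $w \sieq{k+1}{i} w'$ implies $w \sieq{k}{i} w'$ for all $w, w'$. So the same $w$ and the same $w_j$'s witness that $(t,\Ts) \in \fC_i^k[\alpha]$. The bottom row then follows by intersecting the three sets of the top row with $M \times 2^{M^{n-1}}$; distributing that same intersection over $\bigcap_k$ also gives $\fC_{i,n}[\alpha] = \bigcap_k \fC_{i,n}^k[\alpha]$.

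Next I would prove the existence of $\ell_{i,n}$. Fix $i \geqslant 1$ and $n \in \nat$. By the inclusions just established, $\bigl(\fC_{i,n}^k[\alpha]\bigr)_{k \in \nat}$ is a non-increasing (for inclusion) sequence of subsets of the finite set $M \times 2^{M^{n-1}}$. Hence the sequence of their cardinalities is a non-increasing sequence of nonnegative integers, so it is eventually constant; let $\ell_{i,n}$ be any index beyond which it is constant. Since an inclusion between two finite sets of equal cardinality is an equality, this gives $\fC_{i,n}^k[\alpha] = \fC_{i,n}^{\ell_{i,n}}[\alpha]$ for every $k \geqslant \ell_{i,n}$, and therefore $\fC_{i,n}[\alpha] = \bigcap_k \fC_{i,n}^k[\alpha] = \fC_{i,n}^{\ell_{i,n}}[\alpha]$.

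I do not expect a genuine obstacle, this being a routine adaptation of Lemma~\ref{fct:chainref}. The one mild subtlety is the refinement $\sieq{k+1}{i} \subseteq \sieq{k}{i}$, which relies on the convention fixed earlier that \emph{all} first-order formulas are allowed (not just prenex ones), so that padding a formula with a dummy quantifier does not change whether it is \siw{i}. As was already the case for $\kappa_{i,n}$ in Lemma~\ref{fct:chainref}, the index $\ell_{i,n}$ obtained in this way is non-constructive; a computable bound will only emerge later, as a byproduct of the algorithm computing \dchains in Section~\ref{sec:comput}.
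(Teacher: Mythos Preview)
Your proposal is correct and matches the paper's approach: the paper does not even spell out a proof of Lemma~\ref{fct:chainref2}, treating it as the obvious analogue of Lemma~\ref{fct:chainref} (whose proof is itself the two-line argument ``immediate from the definitions'' plus ``finiteness of the ambient set''). Your write-up is simply a careful unfolding of that same reasoning, with $M \times 2^{M^{n-1}}$ in place of $M^n$.
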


Note that it is immediate from the definitions that the bound $\ell_{i,n}$ in
Lemma~\ref{fct:chainref2} is also an upper bound on $\kappa_{i,n}$ in
Lemma~\ref{fct:chainref}. We will obtain an upper bound on $\ell_{2,n}$ when
proving the completeness of our algorithm computing \djuns in
Section~\ref{sec:comput}.

\medskip\noindent
{\bf Closure Properties.} We finish the section by generalizing the
closure properties of \ichains to \ijuns. An illustration of all four
closure properties can be found in Figure~\ref{fig:clos2}.

\begin{figure}[ht]
  \begin{center}
    \begin{tikzpicture}[scale=.8]
      \node[non,anchor=east,align=center] (c1) at (.6,-1)
      {\begin{tikzpicture}[scale=.9,every node/.style={circle}]
          \fill (0,0) circle (0.1cm) node (s0)[label=above:$s_0$]{};
          \fill (1,0.5) circle (0.1cm) node (s1)[label=above:$s_1$]{};
          \fill (2,0.5) circle (0.1cm) node (s2)[label=above:$s_2$]{};
          \fill (3,0.5) circle (0.1cm) node (s3)[label=above:$s_3$]{};
          \fill (4,0.5) circle (0.1cm) node (s4)[label=above:$s_4$]{};
          \fill (1,-.5) circle (0.1cm) node (r1)[label=below:$r_1$]{};
          \fill (2,-.5) circle (0.1cm) node (r2)[label=below:$r_2$]{};
          \fill (3,-.5) circle (0.1cm) node (r3)[label=below:$r_3$]{};
          \fill (4,-.5) circle (0.1cm) node (r4)[label=below:$r_4$]{};

          \node (t) at (5.5,0) {$\in \fCi[\alpha]$};
          \draw  (s0) -- (r1) -- (r2) -- (r3)  -- (r4); 
          \draw  (s0) -- (s1) -- (s2) -- (s3) -- (s4); 
        \end{tikzpicture}};

      \node[non,anchor=east,align=center] (c2) at ($(c1.east)-(0,4)$)
      {\begin{tikzpicture}[scale=.9,every node/.style={circle}]
          \fill (0,0) circle (0.1cm) node (t0)[label=below:$t_0$]{};
          \fill (1,0) circle (0.1cm) node (t1)[label=below:$t_1$]{};
          \fill (2,0) circle (0.1cm) node (t2)[label=below:$t_2$]{};
          \fill (3,0) circle (0.1cm) node (t3)[label=below:$t_3$]{};
          \fill (4,0) circle (0.1cm) node (t4)[label=below:$t_4$]{};

          \node (t) at (5.5,0) {$\in \fCi[\alpha]$};
          \draw  (t0) -- (t1) -- (t2) -- (t3) -- (t4); 
        \end{tikzpicture}};

      \draw[very thick,decorate,decoration={brace}] (c1.north east) to
      coordinate (cc) (c2.south east);

      \draw[ars] ($(cc)+(0.4,0.0)$) to ($(cc)+(1,0.0)$);

      \node[non,anchor=south west,align=left] (cl0) at (1.9,3.5)
      {\hspace*{11ex}\emph{Closure under subsets:}\\[2ex]\begin{tikzpicture}
          \fill (0,0) circle (0.1cm) node (s0)[label=above:$s_0$]{};
          \fill (1,0) circle (0.1cm) node (s1)[label=above:$s_1$]{};
          \fill (2,0) circle (0.1cm) node (s2)[label=above:$s_2$]{};
          \fill (3,0) circle (0.1cm) node (s3)[label=above:$s_3$]{};
          \fill (4,0) circle (0.1cm) node (s4)[label=above:$s_4$]{};
          \node (t) at (4.3,-0.2) {$\in \fCi[\alpha]$};
          \draw  (s0) -- (s1) -- (s2) -- (s3) -- (s4); 
        \end{tikzpicture}};

      \node[non,anchor=south west,align=left] (cl1) at ($(cl0.west)-(0,4.5)$)
      {\hspace*{10ex}\emph{Closure under subwords:}\\[2ex]
        \begin{tikzpicture}[scale=1,every node/.style={circle}]
          \fill (0,0) circle (0.1cm) node (s0)[label=above:$s_0$]{};
          \fill (1,0.5) circle (0.1cm) node (s1)[label=above:$s_2$]{};
          \fill (2,0.5) circle (0.1cm) node (s2)[label=above:$s_4$]{};
          \fill (1,-.5) circle (0.1cm) node (r1)[label=below:$r_1$]{};
          \fill (2,-.5) circle (0.1cm) node (r2)[label=below:$r_2$]{};

          \node (t) at (2.3,-0.4) {$\in \fCi[\alpha]$};
          \draw  (s0) -- (r1) -- (r2);
          \draw  (s0) -- (s1) -- (s2);
        \end{tikzpicture}};

      \node[non,anchor=west,align=left] (cl2) at ($(cl1.west)-(0,3.5)$)
      {\hspace*{12ex}\emph{Closure under stutter:}\\[2ex]
        \begin{tikzpicture}[scale=1,every node/.style={circle}]
          \fill (0,0) circle (0.1cm) node (s0)[label=above:$s_0$]{};
          \fill (1,0.5) circle (0.1cm) node (s1)[label=above:$s_0$]{};
          \fill (2,0.5) circle (0.1cm) node (s2)[label=above:$s_1$]{};
          \fill (3,0.5) circle (0.1cm) node (s3)[label=above:$s_2$]{};
          \fill (4,0.5) circle (0.1cm) node (s4)[label=above:$s_3$]{};
          \fill (5,0.5) circle (0.1cm) node (s5)[label=above:$s_3$]{};
          \fill (6,0.5) circle (0.1cm) node (s6)[label=above:$s_4$]{};
          \fill (1,-.5) circle (0.1cm) node (r1)[label=below:$r_1$]{};
          \fill (2,-.5) circle (0.1cm) node (r2)[label=below:$r_1$]{};
          \fill (3,-.5) circle (0.1cm) node (r3)[label=below:$r_1$]{};
          \fill (4,-.5) circle (0.1cm) node (r4)[label=below:$r_2$]{};
          \fill (5,-.5) circle (0.1cm) node (r5)[label=below:$r_3$]{};
          \fill (6,-.5) circle (0.1cm) node (r6)[label=below:$r_4$]{};

          \node (t) at (6.1,0) {$\in \fCi[\alpha]$};
          \draw  (s0) -- (r1) -- (r2) -- (r3)  -- (r4) -- (r5) -- (r6); 
          \draw  (s0) -- (s1) -- (s2) -- (s3) -- (s4) -- (s5) -- (s6); 
        \end{tikzpicture}};

      \node[non,anchor=west,align=left] (cl3) at ($(cl2.west)-(0,4)$)
      {\hspace*{11ex}\emph{Closure under product:}\\[2ex]
        \begin{tikzpicture}[scale=1,every node/.style={circle}]
          \fill (0,0) circle (0.1cm) node (s0)[label=above:$s_0t_0$]{};
          \fill (1,0.5) circle (0.1cm) node (s1)[label=above:$s_1t_1$]{};
          \fill (2,0.5) circle (0.1cm) node (s2)[label=above:$s_2t_2$]{};
          \fill (3,0.5) circle (0.1cm) node (s3)[label=above:$s_3t_3$]{};
          \fill (4,0.5) circle (0.1cm) node (s4)[label=above:$s_4t_4$]{};
          \fill (1,-.5) circle (0.1cm) node (r1)[label=below:$r_1t_1$]{};
          \fill (2,-.5) circle (0.1cm) node (r2)[label=below:$r_2t_2$]{};
          \fill (3,-.5) circle (0.1cm) node (r3)[label=below:$r_3t_3$]{};
          \fill (4,-.5) circle (0.1cm) node (r4)[label=below:$r_4t_4$]{};

          \node (t) at (4.1,0) {$\in \fCi[\alpha]$};
          \draw  (s0) -- (r1) -- (r2) -- (r3)  -- (r4); 
          \draw  (s0) -- (s1) -- (s2) -- (s3) -- (s4); 
        \end{tikzpicture}};
      \draw[very thick,decorate,decoration={brace}]  to (cl0.north west)
      coordinate (cc)  (cl3.south west);
    \end{tikzpicture}
  \end{center}
  \caption{Closure properties of \ijuns (example on $\fCi[\alpha]$)}
  \label{fig:clos2}
\end{figure}

The first property we state is closure under subsets.
\begin{fact}\label{fct:subsets} Let $i,k \geqslant 1$ and let $\fX = \fCi[\alpha]$
  or $\fX = \fCik[\alpha]$. Then, $\fX$ is closed under the following operation:
  for all $(r,\Rs) \in \fX$ and all $\Rs' \subseteq \Rs$, we have $(r,\Rs') \in \fX$. In
  other words, we have $\downclos \fX = \fX$.
\end{fact}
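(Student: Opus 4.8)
The plan is to deduce the statement directly from the definition of \ikjuns, the point being that membership in $\fCik[\alpha]$ depends on the second component of a \jun only through the \chains it contains, and is therefore preserved when that component is shrunk.

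Concretely, first I would fix $i,k \geqslant 1$, take $(r,\Rs) \in \fCik[\alpha]$ together with a $k$-witness $w \in A^*$ --- so $\alpha(w) = r$, the \chains of $\Rs$ all have one common length, and for each $(t_2,\dots,t_n) \in \Rs$ there are $w_2,\dots,w_n \in A^*$ with $w \ksieq{i} w_2 \ksieq{i} \cdots \ksieq{i} w_n$ and $\alpha(w_j) = t_j$ --- and then observe that for any $\Rs' \subseteq \Rs$ the \emph{same} word $w$ witnesses $(r,\Rs') \in \fCik[\alpha]$. Indeed, the only clauses to check for $(r,\Rs')$ are that its \chains share a common length, that $\alpha(w) = r$, and that each \chain of $\Rs'$ is reachable from $w$ by a $\ksieq{i}$-increasing sequence of words with the prescribed $\alpha$-images; the first holds because $\Rs' \subseteq \Rs$ (inheriting the common length when $\Rs' \neq \emptyset$, and being vacuous when $\Rs' = \emptyset$, in which case only the existence of some $w$ with $\alpha(w) = r$ remains, already provided), the second is unchanged, and the third is a restriction of what $w$ already does for $\Rs$. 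For $\fX = \fCi[\alpha] = \bigcap_k \fCik[\alpha]$ one then intersects over $k$: if $(r,\Rs) \in \fCik[\alpha]$ for every $k$, the previous argument gives $(r,\Rs') \in \fCik[\alpha]$ for every $k$, hence $(r,\Rs') \in \fCi[\alpha]$. The same reasoning applies verbatim to the length-restricted variants $\fCikn[\alpha]$ and $\fCin[\alpha]$.

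It then remains to rephrase this as $\downclos \fX = \fX$. The inclusion $\fX \subseteq \downclos \fX$ is immediate from $(s,\Ss) \subseteq (s,\Ss)$. Conversely, if $(r,\Rs) \in \downclos \fX$, then $(r,\Rs) \subseteq (s,\Ss)$ for some $(s,\Ss) \in \fX$; by the definition of $\subseteq$ on \juns this forces $r = s$ and $\Rs \subseteq \Ss$, so $(r,\Rs) = (s,\Rs) \in \fX$ by the closure under subsets just established. I do not anticipate any genuine obstacle here: the argument is a routine unfolding of the definition of \ikjuns, the only mildly delicate points being the degenerate case $\Rs' = \emptyset$ and the convention that a \jun with empty \chain-set counts as having every length $n \geqslant 1$.
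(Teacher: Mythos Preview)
Your proposal is correct and matches the paper's approach: the paper states this fact without proof, treating it as immediate from the definition of \ikjuns, and your argument is precisely the routine unfolding of that definition (including the harmless edge case $\Rs' = \emptyset$) that justifies it.
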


We now generalize closure under subwords to \ijuns.

\begin{fact}\label{fct:high2} Let $i,k \geqslant 1$ and let $\fX = \fCi[\alpha]$
  or $\fX = \fCik[\alpha]$. Then, $\fX$ is closed under the following
  operation: let $(r,\Rs) \in \fX$, and let $\Rs'$ be a set of \chains of
  the same length that are all subwords of \chains in $\Rs$. Then
  $(r,\Rs') \in \fX$.
\end{fact}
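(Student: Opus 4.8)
The plan is to prove the assertion first for $\fX = \fCik[\alpha]$ at a fixed rank $k$, and then to get the case $\fX = \fCi[\alpha]$ for free: since $\fCi[\alpha] = \bigcap_k \fCik[\alpha]$, if $(r,\Rs)$ lies in every $\fCik[\alpha]$, then the fixed-rank statement forces $(r,\Rs')$ into every $\fCik[\alpha]$, hence into $\fCi[\alpha]$. So all the content is at fixed $k$.

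At fixed $k$, I would unfold the definition of \ikjun: membership $(r,\Rs) \in \fCik[\alpha]$ yields a $k$-witness $w \in A^*$ with $\alpha(w)=r$ that works simultaneously for every \chain of $\Rs$. The key point is that this same $w$ is a $k$-witness of $(r,\Rs')$. Verifying this amounts to two things: (1) the chains of $\Rs'$ all have the same length, which is a hypothesis; and (2) for each $\bar s = (s_2,\dots,s_m) \in \Rs'$, exhibiting words $w'_2,\dots,w'_m$ with $w \ksieq{i} w'_2 \ksieq{i} \cdots \ksieq{i} w'_m$ and $\alpha(w'_\ell)=s_\ell$. For (2), I would fix a \chain $\bar t = (t_2,\dots,t_n)\in\Rs$ of which $\bar s$ is a subword, say $s_\ell = t_{j_\ell}$ for indices $2 \leqslant j_2 < \cdots < j_m \leqslant n$; apply the definition of \ikjun to $\bar t$ with the witness $w$ to obtain $w_2,\dots,w_n$ with $w \ksieq{i} w_2 \ksieq{i} \cdots \ksieq{i} w_n$ and $\alpha(w_\ell)=t_\ell$; and then set $w'_\ell := w_{j_\ell}$. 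The chain $w \ksieq{i} w'_2 \ksieq{i} \cdots \ksieq{i} w'_m$ follows from the longer chain of $\ksieq{i}$-inequalities by transitivity of $\ksieq{i}$ (Fact~\ref{fct:preorder}), collapsing the omitted steps, and $\alpha(w'_\ell) = \alpha(w_{j_\ell}) = t_{j_\ell} = s_\ell$.

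I do not expect a genuine obstacle: this is precisely the \ikjun-analogue of the closure of \ichains under subwords (Fact~\ref{fct:high}), and as there, the only real ingredient is transitivity of $\ksieq{i}$, everything else being bookkeeping on the subword indices together with the observation that the witness of the larger \ikjun can be reused verbatim. The one mild point I would state carefully is that a single $w$ serves for all chains of $\Rs$ (which is how the definition of \ikjun is phrased), so it automatically serves for all chains of $\Rs'$; and the degenerate cases ($\Rs' = \emptyset$, or chains of length $0$) need no separate treatment since the verification becomes vacuous while $\alpha(w)=r$ still holds.
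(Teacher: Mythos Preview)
Your proof is correct and is exactly the argument the paper has in mind: the paper states this as a Fact without proof, treating it as immediate from the definition of \ikjuns together with transitivity of $\ksieq{i}$ (Fact~\ref{fct:preorder}), and your write-up spells out precisely those details.
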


We next generalize closure under stutter to \ijuns.

\begin{fact}\label{fct:dupli2}
  Let $i,k \geqslant 1$ and let $\fX = \fCi[\alpha]$ or $\fX = \fCik[\alpha]$. Then
  $\fX$ is closed under the following operation: let $(r,\Rs) \in \fX$, and
  let $\Rs'$ be a set of \chains of the same length, each
  of the form $r^j\bar{r}'$, where $j \geqslant 0$ and $\bar{r}'$ is a stutter
  of some \chain in $\Rs$. Then $(r,\Rs') \in \fX$.
\end{fact}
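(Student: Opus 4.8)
The plan is to show that a single $k$-witness of $(r,\Rs)$ is still a $k$-witness of $(r,\Rs')$; the statement then follows at once from the definition of $\fCik[\alpha]$, and, intersecting over all $k$, from that of $\fCi[\alpha]$. Since the stutter operation in the statement does not refer to $k$, I would first reduce to the case $\fX = \fCik[\alpha]$ with $k$ fixed, treating $\fCi[\alpha] = \bigcap_k \fCik[\alpha]$ at the very end.

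So fix $(r,\Rs) \in \fCik[\alpha]$ and let $w \in A^*$ be a $k$-witness: $\alpha(w) = r$, and for every \chain $\bar{s} = (s_2,\dots,s_m) \in \Rs$ there are words $v^{\bar{s}}_2,\dots,v^{\bar{s}}_m$ with $w \ksieq{i} v^{\bar{s}}_2 \ksieq{i} \cdots \ksieq{i} v^{\bar{s}}_m$ and $\alpha(v^{\bar{s}}_\ell) = s_\ell$ for $\ell = 2,\dots,m$. I claim the very same $w$ witnesses $(r,\Rs')$. Fix an arbitrary \chain $\bar{c} \in \Rs'$. By hypothesis $\bar{c} = r^j\bar{r}'$ with $j \geqslant 0$, where $\bar{r}' = (r'_1,\dots,r'_p)$ is a stutter of some $\bar{s} = (s_2,\dots,s_m) \in \Rs$; being a stutter means there is a non-decreasing surjection $\sigma : \{1,\dots,p\} \to \{2,\dots,m\}$ with $r'_\ell = s_{\sigma(\ell)}$ for all $\ell$, so in particular $\sigma(1) = 2$ and $\sigma(p) = m$. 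As the sequence of words realizing $\bar{c}$ below the root $w$, I take $w$ repeated $j$ times (each with image $r$; this block is empty if $j = 0$), followed by $v^{\bar{s}}_{\sigma(1)},\dots,v^{\bar{s}}_{\sigma(p)}$ (with images $s_{\sigma(1)},\dots,s_{\sigma(p)}$, i.e. $r'_1,\dots,r'_p$). It remains to check that this is a $\ksieq{i}$-chain starting below $w$: inside the block of $j$ copies of $w$, consecutive words are equal, hence $\ksieq{i}$-related since $\ksieq{i}$ is a preorder (Fact~\ref{fct:preorder}); the last word of that block is $w$, and $w \ksieq{i} v^{\bar{s}}_{\sigma(1)} = v^{\bar{s}}_2$ is part of the witness property of $w$ for $\bar{s}$; and within the block $v^{\bar{s}}_{\sigma(1)},\dots,v^{\bar{s}}_{\sigma(p)}$, two consecutive words are either equal (when $\sigma(\ell) = \sigma(\ell+1)$) or $\ksieq{i}$-related by transitivity of $\ksieq{i}$ along $v^{\bar{s}}_2 \ksieq{i} \cdots \ksieq{i} v^{\bar{s}}_m$ (when $\sigma(\ell) < \sigma(\ell+1)$). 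Together with $\alpha(w) = r$, this shows $w$ is a $k$-witness for $\bar{c}$; since $\bar{c} \in \Rs'$ was arbitrary and $w$ did not depend on it, $w$ is a $k$-witness for $(r,\Rs')$, so $(r,\Rs') \in \fCik[\alpha]$. Finally, if $(r,\Rs) \in \fCi[\alpha]$ then $(r,\Rs) \in \fCik[\alpha]$ for all $k$, hence $(r,\Rs') \in \fCik[\alpha]$ for all $k$, hence $(r,\Rs') \in \fCi[\alpha]$.

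There is no serious obstacle; this is the analogue for \juns of the stutter property for \ichains (Fact~\ref{fct:dupli}), in the same spirit as Fact~\ref{fct:high2}. The only point requiring care is the feature of the definition of $\fCik[\alpha]$ that forces the root to carry a \emph{single} word $w$ valid simultaneously for all \chains of the \jun: one must verify that each stuttered \chain of $\Rs'$ can be realized using this fixed $w$ together with the witness words already available for the corresponding \chain of $\Rs$. This works precisely because a stutter reindexes positions by a non-decreasing surjection — which keeps the associated sequence of words monotone for $\ksieq{i}$ by reflexivity and transitivity — and because the stuttered block begins at $s_2$, the position where $w \ksieq{i} v^{\bar{s}}_2$ attaches it to the prefix $r^j$.
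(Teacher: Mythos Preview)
Your proof is correct and is exactly the natural argument. The paper states Fact~\ref{fct:dupli2} without proof (it is listed among the ``immediate'' closure properties illustrated in Figure~\ref{fig:clos2}), so there is nothing to compare against; your write-up simply makes explicit the verification the paper leaves to the reader, namely that the same $k$-witness $w$ for $(r,\Rs)$ serves for $(r,\Rs')$ by reusing the witness words along each branch, reindexed via the stutter and prefixed by copies of $w$.
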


It remains to generalize closure under product.

\begin{fact}\label{fct:chaincomp2}
  For all $k,n \in \nat$,   $\fCin[\alpha]$ and $\fCikn[\alpha]$ are
  submonoids of $M \times 2^{M^{n-1}}$.
\end{fact}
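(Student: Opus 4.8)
The plan is to verify directly the two defining conditions of a submonoid of $M \times 2^{M^{n-1}}$: that the neutral element lies in the set in question, and that the set is closed under the componentwise product $(s,\Ss)\cdot(t,\Ts)=(st,\Ss\cdot\Ts)$. The argument parallels the proof of Fact~\ref{fct:chaincomp} for \ichains, the only new point being that a $k$-witness is now required to be a \emph{single} word that works simultaneously for all \chains recorded by the \jun. This causes no real difficulty: the concatenation of $k$-witnesses of two \juns is a $k$-witness of their product, essentially because the Pre-congruence Lemma (Lemma~\ref{lem:efconcat}) tells us that $\ksieq{i}$ is stable under concatenation, and concatenation of words distributes over componentwise multiplication of \chains.

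For the neutral element: the identity of $M \times 2^{M^{n-1}}$ is the \jun $\bigl(1_M,\{(1_M,\dots,1_M)\}\bigr)$, whose unique recorded \chain is the constant \chain of length $n-1$. I would take $w=\varepsilon$ as a $k$-witness for every $k$: one has $\alpha(\varepsilon)=1_M$, and for the recorded \chain $(1_M,\dots,1_M)$ one sets $w_2=\cdots=w_n=\varepsilon$, so that $\alpha(w_j)=1_M$ while the chain of inequalities $\varepsilon\ksieq{i}\cdots\ksieq{i}\varepsilon$ holds by reflexivity of $\ksieq{i}$ (Fact~\ref{fct:preorder}). Hence the identity lies in $\fCikn[\alpha]$ for every $k$, and therefore also in $\fCin[\alpha]=\bigcap_k\fCikn[\alpha]$.

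For closure under product, take $(s,\Ss),(t,\Ts)\in\fCikn[\alpha]$ with respective $k$-witnesses $w$ and $w'$. I claim $ww'$ is a $k$-witness of $(st,\Ss\cdot\Ts)$. Indeed $\alpha(ww')=\alpha(w)\alpha(w')=st$, and a generic \chain of $\Ss\cdot\Ts$ has the form $(s_2t_2,\dots,s_nt_n)$ with $(s_2,\dots,s_n)\in\Ss$ and $(t_2,\dots,t_n)\in\Ts$; choosing words $u_2,\dots,u_n$ witnessing the first \chain for $w$ and $v_2,\dots,v_n$ witnessing the second for $w'$, we get $\alpha(u_jv_j)=s_jt_j$ for each $j$, and two applications of Lemma~\ref{lem:efconcat}---first to $w\ksieq{i}u_2$ together with $w'\ksieq{i}v_2$, then inductively to $u_j\ksieq{i}u_{j+1}$ together with $v_j\ksieq{i}v_{j+1}$---yield $ww'\ksieq{i}u_2v_2\ksieq{i}\cdots\ksieq{i}u_nv_n$. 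Thus $(st,\Ss\cdot\Ts)\in\fCikn[\alpha]$, and intersecting over $k$ gives the same for $\fCin[\alpha]$. I do not expect any genuine obstacle here: the statement is just the \jun-level counterpart of Fact~\ref{fct:chaincomp}, and the only points needing a moment's attention are bookkeeping---that $\Ss$ or $\Ts$ may be empty, in which case being a $k$-witness reduces to having image $st$, which $ww'$ does; and that \chains of length $n-1$ multiply to \chains of length $n-1$, so that the product of two length-$n$ \juns is again a length-$n$ \jun.
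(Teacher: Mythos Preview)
Your proof is correct and follows precisely the approach the paper intends: the paper does not spell out a proof of this fact, treating it as an immediate consequence of the Pre-congruence Lemma (Lemma~\ref{lem:efconcat}), exactly as you do. Your explicit handling of the neutral element via $w=\varepsilon$ and of the empty-$\Ss$ edge case is correct and merely fills in routine details the paper leaves implicit.
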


\section{\texorpdfstring{Generic Results: From \iChains to Separation and Membership}{Generic Results: From {Σ\textiinferior-Chains} to Separation and Membership}}
\label{sec:generic}
In this section, we make explicit the connection between \ichains and our two
decision problems: membership and separation. For all levels in the hierarchy,
we prove that both problems can be reduced to the computation of specific
information about the set of \ichains associated to a morphism recognizing both
input languages. Of course, the amount of required information depends on
whether we consider \siwi or \bswi, and on whether we consider membership or
separation. Note that in order to be stated and proved, all these theorems
only require \ichains: \ijuns are not needed. The section is organized into three~parts.
\begin{itemize}
\item In the first one, we explain the most immediate link: separation for
  \siwi reduces to the computing all \ichains of length $2$.
\item In the second part, we prove that deciding membership for \siwi requires
  less information: only the \qchains{i-1} of length~$2$ are needed.
\item Finally, in the last part we prove reductions for \bswi.
\end{itemize}

\subsection{\texorpdfstring{The Separation Problem for \siwi}{The Separation
    Problem for {Σ\textiinferior(<)}}}

\begin{theorem} \label{thm:sep}
  Let $L_1,L_2$ be regular languages that are both recognized by a
  morphism $\alpha: A^* \rightarrow M$ into a finite monoid $M$ and let
  $F_1,F_2 \subseteq M$ be the corresponding accepting sets. Set $i \in \nat$. Then,
  the following properties hold:
  \begin{enumerate}
  \item\label{item:11} $L_1$ is $\siwi$-separable from $L_2$ iff for all $s_1,s_2 \in
    F_1,F_2$, we have $(s_1,s_2) \not\in \Csi[\alpha]$.
  \item\label{item:12} $L_1$ is $\piwi$-separable from $L_2$ iff for all $s_1,s_2 \in
    F_1,F_2$, we have $(s_2,s_1) \not\in \Csi[\alpha]$.
  \end{enumerate}
\end{theorem}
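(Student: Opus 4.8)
The plan is to read the equivalence directly off the definitions of $\Csi[\alpha]$ and of $\siwi$-definability, the only delicate point being a uniform choice of quantifier rank. I will prove the two implications of~\ref{item:11} separately, and then deduce~\ref{item:12} from the complementation duality between $\siwi$ and $\piwi$.

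For the ``only if'' direction of~\ref{item:11}, I would assume that some language $K\in\siwi$ separates $L_1$ from $L_2$. By Fact~\ref{fct:definition}, $K$ is definable by a $\siwi$ formula of some quantifier rank $k$, hence $K$ is saturated by $\ksieq{i}$. Suppose for contradiction that $(s_1,s_2)\in\Csi[\alpha]$ for some $s_1\in F_1$ and $s_2\in F_2$. Since $\Csi[\alpha]\subseteq\Csik[\alpha]$ by Lemma~\ref{fct:chainref}, there are words $w_1,w_2\in A^*$ with $w_1\ksieq{i}w_2$, $\alpha(w_1)=s_1$ and $\alpha(w_2)=s_2$. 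Then $w_1\in\alpha^{-1}(F_1)=L_1\subseteq K$, and since $K$ is $\ksieq{i}$-saturated it follows that $w_2\in K$; but $w_2\in\alpha^{-1}(F_2)=L_2$, contradicting $K\cap L_2=\emptyset$.

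For the ``if'' direction of~\ref{item:11}, I would first turn the hypothesis into a statement at a fixed rank: since $F_1\times F_2$ is finite and the sets $\Csik[\alpha]$ are decreasing in $k$ (Lemma~\ref{fct:chainref}; one may take $k=\kappa_{i,2}$), there is a single $k$ such that $(s_1,s_2)\notin\Csik[\alpha]$ for all $s_1\in F_1$ and $s_2\in F_2$. As a candidate separator I would take the $\ksieq{i}$-saturation of $L_1$,
\[
  K=\{w\in A^*\mid \exists w_1\in L_1,\ w_1\ksieq{i}w\}.
\]
Because $\ksieq{i}$ is a preorder (Fact~\ref{fct:preorder}), $K$ equals its own $\ksieq{i}$-closure, so $K\in\siwi$ by Fact~\ref{fct:definition}, and reflexivity of $\ksieq{i}$ gives $L_1\subseteq K$. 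It remains to see that $K\cap L_2=\emptyset$: if some $w$ were in $K\cap L_2$, picking $w_1\in L_1$ with $w_1\ksieq{i}w$ and setting $s_1=\alpha(w_1)\in F_1$, $s_2=\alpha(w)\in F_2$, the pair of words $w_1,w$ would witness $(s_1,s_2)\in\Csik[\alpha]$, contradicting the choice of $k$. Hence $K$ separates $L_1$ from $L_2$.

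Finally, for~\ref{item:12}, since $\piwi$ consists exactly of the complements of $\siwi$ languages and regular languages are effectively closed under complement, $L_1$ is $\piwi$-separable from $L_2$ if and only if $L_2$ is $\siwi$-separable from $L_1$; applying~\ref{item:11} to the pair $(L_2,L_1)$, recognized by $\alpha$ with accepting sets $F_2,F_1$, gives precisely the condition $(s_2,s_1)\notin\Csi[\alpha]$ for all $s_1\in F_1$, $s_2\in F_2$. As for difficulty, there is no genuine obstacle here --- the statement essentially unwinds the definitions --- and the only spot requiring a little care is the extraction of a single quantifier rank in the ``if'' direction, which is immediate from the monotonicity of the $\Csik[\alpha]$ and the finiteness of $M$.
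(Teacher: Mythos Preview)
Your proof is correct and follows essentially the same approach as the paper: both directions of~\ref{item:11} are argued identically (contradiction via a rank-$k$ separator for ``only if'', and the $\ksieq{i}$-saturation of $L_1$ at rank $\kappa_{i,2}$ as explicit separator for ``if''), and your derivation of~\ref{item:12} via the $\siwi$/$\piwi$ complementation duality is exactly what the paper means by ``obtained by symmetry''.
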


Theorem~\ref{thm:sep} reduces \siwi-separation to finding an algorithm that,
given a morphism~$\alpha$, computes all the associated \ichains of length $2$.
This computation is simple when $i=1$, and actually already
known~\cite{pvzmfcs13}. In Section~\ref{sec:comput}, we present
an algorithm for the case $i=2$. In fact, we do not compute \ichains directly:
our algorithm computes the more general set of \ijuns, $\fCi[\alpha]$, and
\ichains are then recovered from this set using Fact~\ref{fct:recoverchains}.
This makes Theorem~\ref{thm:sep} effective for $i \leqslant 2$. The problem has
also been solved recently for $i \geqslant 3$, although the proof is much more
involved~\cite{pseps3}. We finish this section with the proof of
Theorem~\ref{thm:sep}.

\begin{proof}[of Theorem~\ref{thm:sep}]
  We prove Item~\ref{item:11}. Item~\ref{item:12} is obtained by symmetry. Assume first that
  $L_1$ is $\siwi$-separable from $L_2$ and let $K$ be a separator.
  By contradiction, suppose that there exist $s_1,s_2 \in F_1,F_2$ such
  that $(s_1,s_2) \in \Csi[\alpha]$. By definition, we know that $K$ can be defined by
  a \siwi formula. Let $k$ be its quantifier rank. By hypothesis, we have
  $(s_1,s_2) \in \Csik[\alpha]$ so that  there exist $w_1,w_2$ mapped by $\alpha$ to
  $s_1,s_2$ respectively, such that $w_1 \ksieq{i} w_2$. In particular, we obtain
  $w_1 \in L_1 \subseteq K$ and $w_2 \in L_2$. Moreover, by choice of
  $k$ and since $w_1 \ksieq{i} w_2$, we also have $w_2 \in K$. This is a
  contradiction since $K$ is by hypothesis a separator, so it cannot
  intersect $L_2$.

  It remains to prove the other direction. Assume that for all
  $s_1,s_2 \in F_1,F_2$, we have $(s_1,s_2) \not\in \Csi[\alpha]$ and let
  $\ell=\kappa_{i,2}$ be as defined in Lemma~\ref{fct:chainref}, that is, such
  that $\Csitwo[\alpha] = \Cs_{i,2}^{\kappa_{i,2}}[\alpha]$. We claim that the
  language
  $$K = \{w \mid \exists w_ 1\in L_1 \text{ s.t. } w_ 1\sieq{\ell}{i} w\},$$ which
  is \siwi-definable by Fact~\ref{fct:definition}, is a separator. Indeed, $K$
  clearly contains $L_1$. If $K$ intersects $L_2$, then by definition of
  $\ell$, there would exist $s_1,s_2 \in F_1,F_2$ such that
  $(s_1,s_2) \in \Csi[\alpha]$, which is false by hypothesis.
\end{proof}

\begin{remark}
  \label{rq:rank-sep}
  Note that the above proof of Theorem~\ref{thm:sep} shows that if two
  languages recognized by~$\alpha$ are \siwi-separable, then they are
  separable by a \siwi formula of rank at most~$\kappa_{i,2}$. In other words,
  the rank $k$ at which the sets $\Csitwo^k[\alpha]$ stabilize is an upper bound
  for the rank of possible separators of languages recognized by $\alpha$.
\end{remark}
\subsection{\texorpdfstring{The Membership Problem for \siwi}{The Membership Problem for {Σ\textiinferior}(<)}}

We now prove that solving membership for \siw{i} requires less
information than separation: only the \qchains{i-1} of length~$2$ need to
be computed.

\begin{theorem} \label{thm:caracsig}
  Let $i \geqslant 1$ and let $L$ be a regular language and $\alpha: A^*
  \rightarrow M$ be its syntactic morphism. For all $i \geqslant 1$, $L$ is
  definable in \siw{i} if and only if $M$ satisfies the following
  property:
  \begin{equation}
    s^{\omega} \lmo s^{\omega}ts^{\omega} \quad \text{for all $(t,s) \in \Cslev{i-1}[\alpha]$}. \label{eq:sig}
  \end{equation}
\end{theorem}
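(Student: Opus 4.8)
The plan is to prove the two implications of the equivalence separately; the direction from \siw{i}-definability to the equation~\eqref{eq:sig} is short and uses only the $\sic i$-property Lemma together with the fact that a \siw{i}-definable language is saturated by $\ksieq{i}$ at a suitable rank, whereas the converse is the substantial part and goes through Simon's Factorization Forest Theorem, which is where the equation is actually exploited.

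\textbf{From \siw{i}-definability to \eqref{eq:sig}.} Suppose $L$ is defined by a \siw{i} formula of rank $k_0$, and fix $(t,s)\in\Cslev{i-1}[\alpha]$; since $\alpha$ is the syntactic morphism it suffices to prove $s^{\omega}\leqslant_L s^{\omega}ts^{\omega}$. Pick $k\geqslant\max(k_0,2)$. As $(t,s)\in\Cslevk{i-1}{k}[\alpha]$, there are words $a,b$ with $\alpha(a)=t$, $\alpha(b)=s$ and $a\ksieq{i-1}b$ at rank $k$. Let $N$ be a multiple of $\omega$ with $N\geqslant 2^{k}$. Applying Lemma~\ref{lem:siprop} with parameter $i-1$, base word $u=b$, inserted word $v=a$ and all four exponents equal to $N$ gives $b^{N}b^{N}\ksieq{i}b^{N}ab^{N}$ at rank $k$. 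Set $w=b^{2N}$ and $w'=b^{N}ab^{N}$; since $N$ is a multiple of $\omega$, $\alpha(w)=s^{2N}=s^{\omega}$ and $\alpha(w')=s^{N}ts^{N}=s^{\omega}ts^{\omega}$. For arbitrary $x,y\in A^{*}$, the pre-congruence Lemma~\ref{lem:efconcat} yields $xwy\ksieq{i}xw'y$, and saturation of $L$ by $\ksieq{i}$ at rank $k$ (Fact~\ref{fct:definition}, using $k\geqslant k_0$) gives $xwy\in L\Rightarrow xw'y\in L$. Hence $w\leqslant_L w'$, i.e.\ $s^{\omega}\leqslant_L s^{\omega}ts^{\omega}$. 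For $i=1$, where $\Cslev{0}[\alpha]=M^{*}$ by convention, one picks arbitrary $a,b$ mapping to $t,s$ and uses instead that $b^{2N}$ is a subword of $b^{N}ab^{N}$, together with monotonicity of existential ($\sic1$) formulas under the subword order, to obtain $b^{2N}\ksieq{1}b^{N}ab^{N}$ directly.

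\textbf{From \eqref{eq:sig} to \siw{i}-definability.} Assume \eqref{eq:sig} holds. By Fact~\ref{fct:definition} it is enough to find a rank $k$ at which $L$ is saturated by $\ksieq{i}$, and since the accepting set $F$ of $\alpha$ is upward closed, it suffices to find $k$ such that $w\ksieq{i}w'$ implies $\alpha(w)\leqslant\alpha(w')$ in $M$. If no such $k$ existed, then for every $k$ the finite set of pairs $(m,m')\in M^{2}$ with $m\not\leqslant m'$ realized as $(\alpha(w),\alpha(w'))$ by some $w\ksieq{i}w'$ at rank $k$ would be non-empty; these sets decrease with $k$ (Lemma~\ref{fct:chainref}), so their intersection would contain a pair $(m,m')\in\Csi[\alpha]$ with $m\not\leqslant m'$. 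Thus the direction reduces to the claim: \emph{for every $(m,m')\in\Csi[\alpha]$ one has $m\leqslant m'$ in $M$.} To prove it, fix $(m,m')\in\Csi[\alpha]$; by Theorem~\ref{thm:facto}, for each rank $k$ we may pick a witness $w\ksieq{i}w'$ with $\alpha(w)=m$, $\alpha(w')=m'$ together with an $\alpha$-factorization forest for $w$ of height at most $3|M|-1$. One then proves, by induction on $h$, that there is a rank $\kappa(h)$ such that if $w$ has an $\alpha$-factorization forest of height $\leqslant h$ and $w\ksieq{i}w'$ holds at rank $\kappa(h)$, then $\alpha(w)\leqslant\alpha(w')$; applying this with $h=3|M|-1$ (a bound independent of $w$) and the corresponding witness proves the claim. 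Leaf nodes and binary nodes of the forest are handled by matching a compatible factorization of $w'$ against the top factorization of $w$, via the game characterization (Lemma~\ref{lem:efgame}) and compatibility of $\leqslant$ with products.

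\textbf{The main obstacle.} The heart of the argument is the idempotent case of this induction, the only place where \eqref{eq:sig} enters. There $w=v_1\cdots v_p$ with $\alpha(v_1)=\cdots=\alpha(v_p)=e$ idempotent and, thanks to the Aperiodicity Lemma~\ref{lem:aperiodic}, with $p$ as large as needed; each $v_j$ carries a subforest of height $\leqslant h-1$. One must analyse the \siw{i} game on $(w,w')$ to isolate, between the images in $w'$ of consecutive blocks, the ``extra material'' carried by $w'$, and to show that bundling such material with $e$ on both sides produces elements $t$ for which $(t,e)\in\Cslev{i-1}[\alpha]$; concretely, for every rank one must exhibit a word mapping to $e$ that is $\ksieq{i-1}$-below a word mapping to $t$, the intuition being that the long power $e^{p}$ on the $w$-side can afford to ``lose a block'' to match any property one alternation lower. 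The inequality $e^{\omega}\leqslant e^{\omega}te^{\omega}$ then absorbs each such $t$, and multiplying these inequalities over all blocks — together with the inductive treatment of the $v_j$ and of the pieces surrounding the idempotent block — gives $\alpha(w)\leqslant\alpha(w')$. The delicate point, hence the main difficulty, is to perform this extraction while keeping the alternation counter of the game under control; this is exactly the step explaining why \qchains{i-1} of length $2$ suffice and why, the argument being generic in $i$, the same characterization holds at every level of the hierarchy, even though $\Cslev{i-1}[\alpha]$ can only be computed for $i-1\leqslant 2$ in this paper.
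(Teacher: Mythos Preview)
Your overall route for the converse is different from the paper's but sound in principle: the paper \emph{constructs} a \siw{i} formula by induction on forest height (building, for each $s\in M$, a formula $\varphi_s$ whose models sit above $s$ in the syntactic order, and at idempotent nodes splicing in an explicit \piw{i-1} subformula $\Gamma$ for the middle piece), whereas you prove the model-theoretic implication $w\ksieq{i}w'\Rightarrow\alpha(w)\leqslant\alpha(w')$ at sufficient rank, again by induction on the forest height of~$w$. Both approaches use Simon's theorem in the same way; the paper's yields an explicit formula, yours is more direct and avoids formula-building.

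However, your idempotent case as written has a gap. You propose to match all $p$ blocks of $w=v_1\cdots v_p$ against corresponding pieces of $w'$, extract the ``extra material'' between consecutive matched images, and multiply the resulting inequalities. But splitting $w'$ into $p$ pieces via the game costs $p-1$ moves, and $p$ is unbounded, so no fixed rank $\kappa(h)$ suffices; the Aperiodicity Lemma does not help here (it compares powers of a \emph{single} word, not products of distinct words sharing an image, and in any case making $p$ larger only worsens the rank loss). The correct decomposition keeps just three pieces: write $w=u_1\,u'\,u_2$ where $u_1,u_2$ are the first and last children of the idempotent node (each of forest height $\leqslant h-1$) and $u'$ is everything in between; two game moves split $w'=v_1v'v_2$ with $u_j\sieq{\kappa(h)-2}{i}v_j$ and $u'\sieq{\kappa(h)-2}{i}v'$. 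Induction on $h$ handles $v_1,v_2$, giving $e\leqslant\alpha(v_1)$ and $e\leqslant\alpha(v_2)$. For the middle, the key observation---which replaces your ``lose a block'' heuristic---is immediate: since every \piw{i-1} formula is already a \siw{i} formula, $u'\ksieq{i}v'$ entails $v'\ksieq{i-1}u'$, whence $(\alpha(v'),e)\in\Cslev{i-1}[\alpha]$ once the rank exceeds $\kappa_{i-1,2}$; then~\eqref{eq:sig} gives $e\leqslant e\cdot\alpha(v')\cdot e\leqslant\alpha(v_1)\alpha(v')\alpha(v_2)=\alpha(w')$. (Incidentally, your gloss ``a word mapping to $e$ that is $\ksieq{i-1}$-below a word mapping to $t$'' has the direction reversed relative to $(t,e)\in\Cslev{i-1}[\alpha]$.)
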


It follows from Theorem~\ref{thm:caracsig} that it suffices to compute the
\qchains{i-1} of length $2$ in order to decide whether a language is definable
in \siw{i}. Also observe that when $i=1$, by definition we have
$(t,1_M) \in \Cslev0[\alpha]$ for all $t \in M$. Therefore,
Equation~\eqref{eq:sig} implies that $1_M \lmo t$ for all $t\in M$.
Conversely, multiplying this inequality on the left and on the right by
$s^\omega$ yields back~\eqref{eq:sig} for all $s,t\in \Cslev0[\alpha]$.
Consequently, Equation~\eqref{eq:sig} may be rephrased as $1_M \lmo t$ for all
$t \in M$, which is the already known equation for \siwu~\cite{pwdelta}.
Similarly, when $i=2$, \eqref{eq:sig} can be rephrased as
$s^{\omega} \lmo s^{\omega}ts^{\omega}$ whenever $t$ is a `subword' of $s$,
which is the previously known equation for \siwd~\cite{pwdelta,bfacto}.

\smallskip Observe that by definition of \piw{i} and \dew{i}, we get
characterizations for these classes as immediate corollaries: recall that a
language is \piw{i}-definable if its complement is \siw{i}-definable, and that
it is \dew{i}-definable if it is both \siw{i}-definable and \piw{i}-definable.

\begin{corollary} \label{cor:caracpi}
  Let $L$ be a regular language and let $\alpha: A^* \rightarrow M$ be its
  syntactic morphism. For all $i \gmo 1$, the following properties hold:
  \begin{itemize}
  \item $L$ is definable in \piw{i} iff $M$ satisfies $s^{\omega}
    \gmo s^{\omega}ts^{\omega}$ for all $(t,s) \in \Cslev{i-1}[\alpha]$.
  \item $L$ is definable in \dew{i} iff $M$ satisfies $s^{\omega}
    = s^{\omega}ts^{\omega}$ for all $(t,s) \in \Cslev{i-1}[\alpha]$.
  \end{itemize}
\end{corollary}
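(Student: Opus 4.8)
The plan is to deduce both equivalences directly from Theorem~\ref{thm:caracsig}, using only the definitions of \piw{i} and \dew{i} in terms of \siw{i}. Recall that $L$ is \piw{i}-definable precisely when its complement $A^*\setminus L$ is \siw{i}-definable, and that $L$ is \dew{i}-definable precisely when it is simultaneously \siw{i}-definable and \piw{i}-definable. So the whole content is to read off what Theorem~\ref{thm:caracsig} says for $A^*\setminus L$.

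The first step is an elementary observation about complements. Since the condition ``$xuy\in L \Leftrightarrow xvy\in L$'' is the same as ``$xuy\notin L \Leftrightarrow xvy\notin L$'', the syntactic congruences $\equiv_L$ and $\equiv_{A^*\setminus L}$ coincide, so $L$ and $A^*\setminus L$ have the same syntactic monoid $M$ and the same syntactic morphism $\alpha\colon A^*\to M$. From the definition of the syntactic preorder, however, one gets $w\lmo_{A^*\setminus L} w'$ iff $w'\lmo_L w$; hence the syntactic ordered monoid of $A^*\setminus L$ is $M$ equipped with the reversed order. The second step, equally routine, is to note that the set $\Cslev{i-1}[\alpha]$ is defined purely from $\alpha$ seen as a morphism of unordered monoids together with the relation $\ksieq{i-1}$ on words, and never refers to the order on $M$; thus the same set $\Cslev{i-1}[\alpha]$ is attached to $L$ and to $A^*\setminus L$, and the idempotent power $\omega$ of $M$ is unchanged as well.

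Given these two facts, the \piw{i} statement follows by a chain of equivalences: $L$ is \piw{i}-definable iff $A^*\setminus L$ is \siw{i}-definable iff, by Theorem~\ref{thm:caracsig} applied to $A^*\setminus L$ whose syntactic ordered monoid is $M$ with the reversed order, that reversed-order monoid satisfies the equation of Theorem~\ref{thm:caracsig} for every $(t,s)\in\Cslev{i-1}[\alpha]$, namely that $s^\omega$ lies below $s^\omega t s^\omega$ in the reversed order. Translated back to the original order on $M$, this says exactly that $M$ satisfies $s^\omega\gmo s^\omega t s^\omega$ for every $(t,s)\in\Cslev{i-1}[\alpha]$. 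The \dew{i} statement is then immediate: $L$ is \dew{i}-definable iff it is both \siw{i}- and \piw{i}-definable iff, combining Theorem~\ref{thm:caracsig} with the \piw{i} case just proved, $M$ satisfies both $s^\omega\lmo s^\omega t s^\omega$ and $s^\omega\gmo s^\omega t s^\omega$ for all $(t,s)\in\Cslev{i-1}[\alpha]$, that is, $s^\omega = s^\omega t s^\omega$ for all such $(t,s)$.

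There is no real obstacle here: the corollary is a formal consequence of Theorem~\ref{thm:caracsig}, and the only points needing a line of justification are the behaviour of the syntactic ordered monoid under complementation and the order-independence of $\Cslev{i-1}[\alpha]$, both of which are immediate from the definitions in Sections~\ref{sec:tools} and~\ref{sec:chains}. All the genuine work has already been done in the proof of Theorem~\ref{thm:caracsig}.
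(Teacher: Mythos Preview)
Your proof is correct and follows exactly the approach the paper indicates: the paper simply states that the corollary is immediate from Theorem~\ref{thm:caracsig} and the definitions of \piw{i} and \dew{i}, without spelling out the details. You have made explicit the two points the paper leaves implicit---that complementation reverses the syntactic order while preserving the underlying monoid and morphism, and that $\Cslev{i-1}[\alpha]$ does not depend on the order---which is precisely what is needed to turn the inequality of Theorem~\ref{thm:caracsig} into its reverse.
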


It now remains to prove Theorem~\ref{thm:caracsig}. For the proof, we
assume that $i \geqslant 2$ (a proof for the case $i = 1$ can be
found in~\cite{pwdelta}). We begin with the simpler `only if'
direction, which is an application of Lemma~\ref{lem:siprop} and is
stated in the next proposition.

\begin{proposition} \label{prop:signec}
  Let $L$ be a \siw{i}-definable language and let $\alpha : A^*
  \rightarrow M$ be its syntactic morphism. Then $\alpha$
  satisfies~\eqref{eq:sig}.
\end{proposition}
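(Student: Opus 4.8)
The goal is to show that the syntactic morphism $\alpha$ of a \siw{i}-definable language satisfies~\eqref{eq:sig}, i.e.\ that $s^{\omega} \lmo s^{\omega}ts^{\omega}$ whenever $(t,s) \in \Cslev{i-1}[\alpha]$. Fix such a pair $(t,s)$. By definition of \Cslev{i-1}, for every quantifier rank $k$ there exist words $u,v \in A^*$ with $\alpha(u)=t$, $\alpha(v)=s$, and $u \ksieq{i-1} v$. The plan is to feed this into Lemma~\ref{lem:siprop} (the $\sic i$-property Lemma) with the roles of $u$ and $v$ there played by our $t$-preimage and $s$-preimage: taking $\ell,r,\ell',r' \gmo 2^k$, the lemma gives $v^{\ell}v^{r} \ksieq{i} v^{\ell'}uv^{r'}$, i.e.\ a word mapping to $s^{\ell+r}$ is \siw{i}-dominated (at rank $k$) by a word mapping to $s^{\ell'}ts^{r'}$.

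The next step is to turn this into an inequality in the syntactic monoid. Since $L$ is \siw{i}-definable, fix a \siw{i} sentence $\varphi$ defining $L$ and let $k$ be its quantifier rank; apply the above with this $k$ and with $\ell,r,\ell',r'$ all chosen to be multiples of $\omega = \omega(M)$ that are $\gmo 2^k$ — say $\ell=r=\ell'=r'=N\omega$ for $N$ large enough. Then $v^{\ell}v^{r}$ maps to $s^{\omega}$ and $v^{\ell'}uv^{r'}$ maps to $s^{\omega}ts^{\omega}$. For arbitrary $x,y \in A^*$, prepending $x$ and appending $y$ to both words preserves the relation $\ksieq{i}$ by Lemma~\ref{lem:efconcat} (the pre-congruence Lemma), so $x v^{\ell}v^{r} y \ksieq{i} x v^{\ell'}uv^{r'} y$. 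Because $\varphi$ has rank $k$ and defines $L$, and \siw{i} formulas are preserved along $\ksieq{i}$, we get $x v^{\ell}v^{r} y \in L \implies x v^{\ell'}uv^{r'} y \in L$. As $x,y$ range over $A^*$, this is exactly the statement that $v^{\ell}v^{r} \lmo_L v^{\ell'}uv^{r'}$ in the syntactic preorder, which under the syntactic morphism reads $s^{\omega} \lmo s^{\omega}ts^{\omega}$, as desired.

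The only subtlety — and the step I would be most careful about — is making sure the exponents can be chosen simultaneously to satisfy the hypothesis $\ell,r,\ell',r' \gmo 2^k$ of Lemma~\ref{lem:siprop} \emph{and} to produce exactly $s^{\omega}$ and $s^{\omega}ts^{\omega}$ after applying $\alpha$. This is where the stability of idempotent powers is used: since $s^{m\omega}=s^{\omega}$ for all $m\gmo 1$, any choice of $\ell=r=\ell'=r'$ that is a common multiple of $\omega$ and is at least $2^k$ works; note $v^{\ell}v^{r}$ maps to $s^{\ell+r}=s^{2N\omega}=s^{\omega}$ and $v^{\ell'}uv^{r'}$ to $s^{\ell'}\,t\,s^{r'}=s^{\omega}ts^{\omega}$. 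Everything else is routine: the pre-congruence of $\ksieq{i}$ handles the contexts $x,\,y$, and the definition of the syntactic preorder packages the result. No genuine obstacle arises here; Lemma~\ref{lem:siprop} does the real work, which is precisely why the proposition is flagged in the text as "the simpler `only if' direction."
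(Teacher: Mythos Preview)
Your proof is correct and follows essentially the same approach as the paper's: fix the rank $k$ of a defining \siw{i} sentence, use the definition of $\Cslev{i-1}[\alpha]$ to get witnesses $u,v$ with $u \ksieq{i-1} v$, apply Lemma~\ref{lem:siprop} with exponents that are multiples of $\omega$ at least $2^k$, then use the pre-congruence Lemma~\ref{lem:efconcat} to add contexts and conclude via the syntactic preorder. The only cosmetic difference is that the paper names the $s$-preimage $u$ and the $t$-preimage $v$ (the reverse of your convention) and takes $\ell=r=\ell'=r'=2^k\omega$ directly rather than an arbitrary large multiple $N\omega$.
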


\begin{proof}
  By hypothesis, $L$ is defined by some \siw{i} formula $\varphi$. Let $k$ be
  its quantifier rank. Let $(t,s) \in \Cs_{i-1}[\alpha]$. We need to prove
  that $s^{\omega} \lmo s^{\omega}ts^{\omega}$. Since $(t,s) \in
  \Cs_{i-1}[\alpha]$, by definition, there exist $v,u$ such that $\alpha(v) =
  t$, $\alpha(u) = s$ and $v \ksieq{i-1} u$.  By the \sici-Property Lemma (Lemma~\ref{lem:siprop}), we
  immediately obtain
  \[
    u^{2^k\omega}\cdot u^{2^k\omega} \ksieq{i}u^{2^k\omega} \cdot v \cdot
    u^{2^k\omega}.
  \]
  It follows from the Pre-congruence Lemma (Lemma~\ref{lem:efconcat}) that for any $w_1,w_2 \in A^*$
  we have:
  \begin{equation*} w_1 \cdot u^{2^k\omega}\cdot
    u^{2^k\omega} \cdot w_2 \ \ \ksieq{i}\ \  w_1 \cdot u^{2^k\omega} \cdot v
    \cdot u^{2^k\omega} \cdot w_2.
  \end{equation*}
  By choice of $k$ and definition of $\ksieq{i}$, this means that $w_1 \cdot u^{2^k\omega} \cdot w_2 \in L$
  implies that $w_1 \cdot u^{2^k\omega} v u^{2^k\omega} \cdot w_2 \in
  L$. By definition of the syntactic preorder, this means that $s^{\omega}
  \lmo s^{\omega}ts^{\omega}$.
\end{proof}

It now remains to prove the harder `if' direction of
Theorem~\ref{thm:caracsig}. We use induction to construct a formula
for the language $L$. We rely on Simon's Factorization Forest Theorem
for the induction, which we state in the following proposition.

\begin{proposition} \label{prop:signec2}
  Let $i\geqslant 2$ and let $\alpha: A^* \rightarrow M$ be a morphism into a finite monoid $M$
  that satisfies~\eqref{eq:sig}. Then for all $h \geqslant 1$ and all $s \in
  M$, there exists a $\siwi$ formula $\varphi$ such that for all $w \in A^*$:
  \begin{itemize}
  \item if $w \models \varphi$ then $s \leqslant \alpha(w)$.
  \item if $\alpha(w) = s$ and $w$ admits an $\alpha$-factorization
    forest of height at most $h$ then $w \models \varphi$.
  \end{itemize}
\end{proposition}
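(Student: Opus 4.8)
The plan is to proceed by induction on the height $h$ of the $\alpha$-factorization forest, building for each $s\in M$ a $\siwi$ formula $\varphi_{h,s}$ with the stated soundness/completeness properties. The base case $h=1$ is easy: a forest of height $1$ for $w$ means $w$ is a single letter or $\varepsilon$, so $\alpha^{-1}(s)\cap\{\text{words with such a forest}\}$ is a finite set of words of length at most $1$; one takes $\varphi_{1,s}$ to be a disjunction of quantifier-free-definable ``$w$ looks like $a$'' conditions, upgraded to ensure $s\leq\alpha(w)$ by replacing the exact value with a disjunction over all $s'\geq s$ reachable this way. For the inductive step, I would look at the root of a height-$(h+1)$ forest for $w$ and distinguish the three node types.

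\emph{Binary root.} Here $w=w_1w_2$ with $\alpha(w_i)=s_i$, $s_1s_2=s$, and each $w_i$ admits a forest of height at most $h$. By induction we have $\varphi_{h,s_1},\varphi_{h,s_2}$. The natural move is to ``split'' the word with an existentially quantified position: guess a cut point $x$, relativize $\varphi_{h,s_1}$ to positions $<x$ and $\varphi_{h,s_2}$ to positions $\geq x$, and take a disjunction over all factorizations $s_1s_2=s$. The quantifier-alternation bookkeeping is the only subtle point: relativization to an initial/final segment using only $<$ does not add alternations (one just adds the guard $y<x$ or $y\geq x$ inside each existing quantifier, which keeps the block structure), and prefixing one existential quantifier $\exists x$ does not increase the number of $\exists/\forall$ blocks of a $\siwi$ formula since its outermost block is already existential (or it has fewer than $i-1$ alternations). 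So the result is still $\siwi$, and soundness ($s\leq\alpha(w)$) follows from compatibility of $\leq$ with multiplication together with the inductive soundness of $\varphi_{h,s_1},\varphi_{h,s_2}$.

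\emph{Idempotent root.} This is the main obstacle and the place where Equation~\eqref{eq:sig} and the hypothesis $i\geq2$ are used. Here $w=w_1\cdots w_n$ with $\alpha(w_j)=e$ idempotent for all $j$, and each $w_j$ has a forest of height at most $h$. One cannot simply iterate the binary construction ($n$ is unbounded). Instead I would use the $\Cs_{i-1}$-chain hypothesis: the key observation is that if $v\ksieq{i-1}u$ with $\alpha(v)=t$, $\alpha(u)=s=e$, then by \eqref{eq:sig} we get $e=e^\omega\leq e^\omega t e^\omega=ete$, i.e. inserting a ``subword-like'' factor between copies of $e$ can only move us up in the order. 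Concretely, I would build the formula for the idempotent node to express: ``the word can be cut into blocks, all but boundedly many of which satisfy some $\varphi_{h,e'}$ with $e'\geq$ the relevant idempotent, and the exceptional blocks are controlled by a $\qchains{i-1}[e]$-valued guess''. The real engine is a $\siwi$-formula that, using the extra alternation afforded by $i\geq2$ (one $\forall$ block to say ``for all positions, locally things look like $e$-blocks''), captures exactly the closure properties of $\Cs_{i-1}$-chains (closure under stutter and product, Facts~\ref{fct:dupli} and \ref{fct:chaincomp}) together with \eqref{eq:sig}. I expect this is handled via the $\ksieq{i-1}$-class of a ``contracted'' word $u^\omega$ or similar, invoking Lemma~\ref{lem:siprop} (the $\sic i$-property) to show that what the $\siwi$ formula can detect about the long product $w_1\cdots w_n$ is exactly its $\ksieq{i}$-type, which by the chain hypothesis is bounded below (in $\leq$) by $e$. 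Working out the precise formula and verifying it is genuinely $\siwi$ (no hidden alternation from nesting the $\forall$ inside an $\exists$ inside relativizations) is the technical heart of the proof; everything else is routine once this node type is dispatched.
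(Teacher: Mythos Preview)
Your overall induction on the forest height $h$ and your treatment of the leaf and binary-root cases match the paper's approach. The gap is in the idempotent-root case, and it is a real one: your sketch---``cut into blocks, all but boundedly many of which satisfy some $\varphi_{h,e'}$'' together with ``one $\forall$ block to say `for all positions, locally things look like $e$-blocks'\,''---does not yield a $\siwi$ formula. The inductively given $\psi_e$ is already $\siwi$; relativizing it under a genuine universal quantifier over unboundedly many block boundaries pushes the result to $\piw{i+1}$, not $\siwi$. Nothing in the chain closure facts you cite repairs this alternation blow-up.

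The missing idea is simpler than what you attempt. When the root is an idempotent node with children $w_1,\dots,w_n$ all mapping to the idempotent $e$, write $w=w_1\cdot u\cdot w_n$ with $u=w_2\cdots w_{n-1}$. Only the two outer factors are handled by the inductive $\psi_e$ (they have forests of height at most $h-1$); the middle factor $u$---whose forest may be tall---is handled by a \emph{single} $\piw{i-1}$ formula $\Gamma$ defining the language $K=\{u\mid \exists v\in\alpha^{-1}(e),\ u\sieq{\kappa}{i-1} v\}$, where $\kappa$ is chosen via Lemma~\ref{fct:chainref} so that $\Cs_{i-1,2}^{\kappa}[\alpha]=\Cs_{i-1,2}[\alpha]$. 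Every $u\in K$ then has $(\alpha(u),e)\in\Cs_{i-1,2}[\alpha]$, whence Equation~\eqref{eq:sig} gives $e\leq e\cdot\alpha(u)\cdot e$, which is exactly the soundness you need. Since $\Gamma$ is $\piw{i-1}$ (hence $\siwi$) and the relativized $\psi_e^\ell,\psi_e^r$ are $\siwi$, the disjunct $\exists x\exists y\ (x<y\wedge\psi_e^\ell(x)\wedge\Gamma(x,y)\wedge\psi_e^r(y))$ stays in $\siwi$. This is the ``precise formula'' you flag as the technical heart; without this first/middle/last trick the idempotent case does not close.
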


Assume for now that Proposition~\ref{prop:signec2} holds and let $L$ be
a regular language whose syntactic morphism $\alpha: A^*
\rightarrow M$ satisfies~\eqref{eq:sig}. Given $h = 3|M|-1$, for all
$s \in M$, we denote by $\varphi_s$ the \sici formula associated to
$s$ by Proposition~\ref{prop:signec2}. Since, by
Theorem~\ref{thm:facto}, all words admit an $\alpha$-factorization
forest of height at most $3|M| - 1$, we have
\begin{enumerate}[ref=(\arabic*)]
\item\label{item:4} if $w \models \varphi_s$ then $s \leqslant \alpha(w)$.
\item\label{item:50} if $\alpha(w) = s$ then $w \models \varphi_s$.
\end{enumerate}
Let $F$ be the accepting set of $L$ and define $\varphi = \bigvee_{s
  \in F} \varphi_s$. By Item~\ref{item:50} above, we have $L \subseteq \{w \mid w
\models \varphi\}$. Moreover, by definition of recognizability by an
ordered monoid, the set $F$ is upward closed, that is, if $s\in F$ and $s\leqslant t$
then $t\in F$. Hence, Item~\ref{item:4} above implies that $ \{w \mid w
\models \varphi\} \subseteq L$. We conclude that $\varphi$ defines $L$.
This finishes the proof of Theorem~\ref{thm:caracsig}. It now remains
to prove Proposition~\ref{prop:signec2}.

\begin{proof}[of Proposition~\ref{prop:signec2}]
  Set $h \geqslant 1$ and $s \in M$. We construct the formula by induction on
  $h$. Assume first that $h = 1$. Note that the words having an
  $\alpha$-factorization forest of height at most $1$ are either single
  letters or the empty word. Consider the language
  $L_s = \{w \mid |w| \leqslant 1 \text{ and } \alpha(w) = s\}$. Since $L_s$ is
  finite, it can be defined by a \siwi formula $\varphi$ (indeed, since
  $i \geqslant 2$, for any word $w$ one can easily define a $\siwd$ formula whose
  only model is $w$). By definition, $\varphi$ satisfies the conditions of
  Proposition~\ref{prop:signec2}.

  \smallskip
  Assume now that $h > 1$. There are two cases depending on whether $s$
  is idempotent or not. We treat the idempotent case (the other case is
  essentially a simpler version of this proof). Hence we assume that $s$ is an
  idempotent, that we denote by $e$. We first construct $\varphi$ and then prove that it
  satisfies the conditions of the proposition. It is defined as the
  disjunction of several formulas that we define first.

  \medskip\noindent
  {\bf Using Induction.} For all $t \in M$, one can use induction to
  construct a \siwi formula $\psi_t$ such that for all $w \in A^*$,
  \begin{itemize}
  \item if $w \models \psi_t$ then $t \leqslant \alpha(w)$.
  \item if $\alpha(w) = t$ and $w$ admits an $\alpha$-factorization
    forest of height at most $(h-1)$, then $w \models \psi_t$.
  \end{itemize}
  By restricting quantifications, one can modify each of these formulas
  to construct two other formulas $\psi^\ell_t(x)$ and $\psi^r_t(x)$
  both having a single free variable $x$ and such that:
  \begin{itemize}
  \item $w,x \models \psi^\ell_t(x)$ iff the prefix $u$ of $w$
    obtained by keeping only positions $y < x$ satisfies $\psi_t$.
  \item $w,x \models \psi^r_t(x)$ iff the suffix $v$ of $w$
    obtained by keeping only positions $y \geqslant x$ satisfies $\psi_t$.
  \end{itemize}
  Note that these formulas do not have extra quantifiers, so that they also
  belong to $\siwi$.

  \medskip
  \noindent
  {\bf Using \pic{i-1}.} Recall that by Lemma~\ref{fct:chainref}, there
  exists an integer $\kappa$ such that for all $k \geqslant \kappa$:
  \[
    \Csgen{i-1}{k}{2}[\alpha] = \Cs_{i-1,2}[\alpha]
  \]
  Consider the language
  $$K = \bigcup_{w \in \alpha^{-1}(e)} \{u \mid u \sieq{\kappa}{i-1} w\}.$$
  By choice of $\kappa$, for any $u \in K$, we have
  $(\alpha(u),e) \in \Csgen{i-1}{\kappa}{2}[\alpha]=\Cs_{i-1,2}[\alpha]$. Since
  $e = e^2$, one may use Equation~\eqref{eq:sig} to obtain that for all $u \in K$:
  \begin{equation}
    \label{eq:eq}
    e \leqslant e\cdot \alpha(u)\cdot e.
  \end{equation}
  Moreover,  by the dual version of Fact~\ref{fct:definition}, $K$ can be defined by a \piw{i-1} formula
  $\Gamma$ (in particular $\Gamma$ is \siwi).
  We define $\Gamma(x,y)$ as the formula with two free variables $x,y$
  such that $w,x,y \models \Gamma(x,y)$ if and only if $x < y$ and the infix
  $u$ obtained by keeping all positions $z$ in $w$ such that $x \leqslant z < y$
  satisfies $\Gamma$. Note again that this formula can be chosen in \siwi.

  \medskip
  \noindent
  {\bf Definition of $\varphi$.} Finally, we can define the desired
  formula. It is the disjunction of three subformulas. Intuitively, the first
  one captures words having an $\alpha$-factorization forest of height at most
  $h-1$, the second one, words having an $\alpha$-factorization forest of
  height $h$ and whose root is a binary node, and the third one, words with an
  $\alpha$-factorization forest of height $h$ and whose root is an idempotent
  node.
  \[
    \varphi = \psi_e \vee \left(\bigvee_{{t_1}{t_2} = e} \exists x\ \psi_{t_1}^\ell(x)
      \wedge \psi_{t_2}^r(x)\right)  \vee \left(\exists x\exists y\ x < y \wedge
      \psi_e^\ell(x) \wedge
      \Gamma(x,y) \wedge \psi_e^r(y)\right)
  \]
  Note that by definition, $\varphi$ is a
  \siwi formula. We need to prove that it satisfies the conditions of the
  proposition.

  Choose some $w \in A^*$ and assume first that $w \models \varphi$. We need to
  prove that $e \leqslant \alpha(w)$.
  \begin{itemize}
  \item If $w \models \psi_e$, then this is by definition of $\psi_e$.
  \item If
    $w \models \exists x\ \psi^{\ell}_{t_1}(x) \wedge \psi^r_{t_2}(x)$ for
    $t_1{t_2} = e$, then by definition, $w = w_1w_2$ with
    $t_1 \leqslant \alpha(w_1)$ and ${t_2} \leqslant \alpha(w_2)$. It follows that
    $e = t_1{t_2} \leqslant \alpha(w_1w_2) = \alpha(w)$.
  \item Finally, if
    $w \models \exists x\exists y\ x < y \wedge \psi_e(x) \wedge \Gamma(x,y)
    \wedge \psi_e(y)$,
    we obtain that $w = w_1uw_2$ with $e \leqslant \alpha(w_1)$, $u \in K$ and
    $e \leqslant \alpha(w_2)$. By~\eqref{eq:eq}, we know that
    $e \leqslant e\alpha(u)e \leqslant \alpha(w_1uw_2) = \alpha(w)$, which terminates
    this direction.
  \end{itemize}

  Conversely, assume that $\alpha(w) = e$ and that $w$ admits an
  $\alpha$-factorization forest of height at most $h$. We have to prove that
  $w$ satisfies $\varphi$. There are again three cases.
  \begin{itemize}
  \item First, if $w$ has an $\alpha$-factorization forest of height at most
    $h-1$, then $w \models \psi_e$, so $w \models \varphi$.
  \item Second, if $w$ admits an $\alpha$-factorization forest of height $h$
    whose root is a binary node, then $w = w_1w_2$ with $w_1,w_2$ admitting
    forests of height at most $h-1$. Set $t_1= \alpha(w_1)$ and
    ${t_2} = \alpha(w_2)$. Observe that $t_1t_2 = \alpha(w) = e$. By the
    induction hypothesis and definition of the formulas $\psi_t$, we have
    $w_1 \models \psi_{t_1}$ and $w_2 \models \psi_{t_2}$, hence
    $w \models \exists x\ \psi_{t_1}^\ell(x) \wedge \psi_{t_2}^r(x)$. It
    follows that $w \models \varphi$ since $t_1t_2 = e$.
  \item Finally, if $w$ admits an $\alpha$-factorization forest of height $h$
    whose root is an idempotent node, then $w = w_1uw_2$ with
    $\alpha(w_1) = \alpha(u) = \alpha(w_2) = e$ and $w_1,w_2$ admitting
    forests of height at most $h-1$. It follows that $w_1 \models \psi_{e}$
    and $w_2 \models \psi_e$. Moreover, since $\alpha(u) = e$, it is
    immediate that $u \in K$, hence $u \models \Gamma$. We conclude that
    $w \models \exists x\exists y\ x < y \wedge \psi_e^\ell(x) \wedge
    \Gamma(x,y) \wedge \psi_e^r(y)$, whence $w \models \varphi$.
  \end{itemize}
  This concludes the proof of Proposition~\ref{prop:signec2}.
\end{proof}

\subsection{\texorpdfstring{Separation and Membership for \bswi}{Separation and Membership for B{Σ\textiinferior}(<)}}

In this last part, we prove that being able to compute more
information about the set of \ichains yields solutions to both
separation and membership for \bswi. What is needed is a
property called \emph{alternation} that we define now.

\medskip
\noindent
{\bf Alternation.} Let $M$ be a finite monoid. We say that a \chain
$(s_1,\dots,s_n) \in M^*$ has \emph{alternation} $\ell$ if there are
exactly $\ell$ indices $i$ such that $s_i \neq s_{i+1}$. We say that a
set of \chains $\Ss$ has \emph{bounded alternation} if there exists a
bound $\ell \in \nat$ such that all \chains in $\Ss$ have alternation
at most $\ell$.

\begin{theorem} \label{thm:sepbc}
  Let $L_1,L_2$ be regular languages, both recognized by the same
  morphism $\alpha: A^* \rightarrow M$ into a finite monoid $M$ and let
  $F_1,F_2 \subseteq M$ be their respective accepting sets. Let $i \in \nat$. Then
  $L_1$ is $\bswi$-separable from $L_2$ if and only if for all $s_1, s_2
  \in F_1,F_2$, $s_1 \neq s_2$ and $\Csi[\alpha] \cap \{s_1,s_2\}^*$ has
  bounded alternation.
\end{theorem}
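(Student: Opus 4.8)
The statement is an "iff" about $\bswi$-separability of $L_1=\alpha^{-1}(F_1)$ from $L_2=\alpha^{-1}(F_2)$. I would prove the two directions separately, and in both directions the bridge is the game characterization of $\kbceq{i}$ from Fact~\ref{fct:equivalence} (i.e. $w \kbceq{i} w'$ iff $w\ksieq{i}w'$ and $w'\ksieq{i}w$) together with Fact~\ref{fct:definition2} (a language is $\bswi$-definable of rank $k$ iff it is a union of $\kbceq{i}$-classes). So the guiding idea is: a $\bswi$-separator exists iff for some $k$, no $\kbceq{i}$-class meets both $L_1$ and $L_2$; and one has to translate "there is a $\kbceq{i}$-chain linking $s_1\in F_1$ to $s_2\in F_2$ of unbounded length" into the condition that $\Csi[\alpha]\cap\{s_1,s_2\}^*$ has unbounded alternation.

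\textbf{Necessity (contrapositive).} Suppose $L_1$ is $\bswi$-separable from $L_2$ by some $K$, definable by a $\bswi$ formula of quantifier rank $k$; so $K$ is a union of $\kbceq{i}$-classes. First, if some $s\in F_1\cap F_2$ then $\alpha^{-1}(s)\subseteq L_1\cap L_2$ is nonempty, contradicting disjointness of a separator; hence $s_1\ne s_2$ for all $s_1\in F_1,s_2\in F_2$. Next, suppose for contradiction that $\Csi[\alpha]\cap\{s_1,s_2\}^*$ has unbounded alternation for some $s_1\in F_1$, $s_2\in F_2$. Then for arbitrarily large $\ell$ there is a chain $(x_1,\dots,x_m)\in\Csi[\alpha]$ with each $x_j\in\{s_1,s_2\}$ and alternation $\ge\ell$; by closure under subwords (Fact~\ref{fct:high}) we may take it to be exactly an alternating chain $(s_1,s_2,s_1,s_2,\dots)$ of length $\ell+1$. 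Using $\Csi[\alpha]\subseteq\Csik[\alpha]$ (Lemma~\ref{fct:chainref}), pick $\ell$ larger than the number of $\kbceq{i}$-classes; unfolding the definition of $\Csik[\alpha]$ gives words $w_1\ksieq{i}w_2\ksieq{i}\cdots\ksieq{i}w_{\ell+1}$ with $\alpha(w_j)=s_j$, so $w_j\in L_1$ for $j$ odd and $w_j\in L_2$ for $j$ even. Since $K$ separates, $w_1\in K$ and $w_2\notin K$, so along the chain the membership in $K$ must flip; but as $\ksieq{i}$ is a preorder and $K$ is $\ksieq{i}$-upward-closed (it is a union of $\kbceq{i}$-classes and $w\ksieq{i}w'$ with $w\in K$... careful: $K$ is a union of $\cong$-classes, not $\lesssim$-upward-closed). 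The correct argument: among $w_1,\dots,w_{\ell+1}$, by pigeonhole two indices $j<j'$ with $w_j\kbceq{i}w_{j'}$; but then $w_j,w_{j'}$ are in the same $K$-class, so both in $K$ or both out, whereas the alternating $s_1,s_2$ pattern together with $F_1\cap F_2=\emptyset$ forces... This is the delicate point: I need that no two words at positions of different parity can be $\kbceq{i}$-equivalent when $\alpha$-images are the distinct $s_1,s_2$ — which need not hold. The cleaner route is: since $w_j\ksieq{i}w_{j+1}$ for all $j$, and a $\bswi$-formula of rank $k$ is a boolean combination of $\siwi$-formulas of rank $k$, the "side" of $K$ can change at most as the underlying $\siwi$-profile changes, and $\ksieq{i}$-monotonicity of each $\siwi$-formula means each atomic $\siwi$ truth value can flip only once along the chain; with finitely many atoms, $K$'s truth value stabilizes after boundedly many flips. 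Hence if $\ell$ exceeds (a function of) $k$, $w_1$ and $w_{\ell+1}$ lie on the same side of $K$ — impossible since $w_1\in L_1\subseteq K$ and $w_{\ell+1}\in L_1\cup L_2$ with the $L_2$ case contradicting disjointness and the $L_1$ case... actually I only need $w_1\in K$ and some even-indexed $w_2\notin K$, and monotonicity after one flip is enough. So: \emph{each $\siwi$-rank-$k$ formula is monotone along $\ksieq{i}$}, hence along the chain its value changes at most once; a boolean combination of $N$ such formulas changes value at most $N$ times; choosing $\ell>N$ yields $w_1,w_{\ell+1}$ on the same side of $K$, and since the chain is alternating this forces $w_1\in K$ and $w_2\notin K$ to coexist with $w_1,w_2,\dots$ eventually returning to $L_1$-words all in $K$, contradiction with some $L_2$-word in $K$ — tightening this into a clean contradiction is the main work.

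\textbf{Sufficiency.} Assume $s_1\ne s_2$ and $\Csi[\alpha]\cap\{s_1,s_2\}^*$ has bounded alternation, for every $s_1\in F_1,s_2\in F_2$. I want a $\bswi$-separator. The plan: show there is $k$ such that no $\kbceq{i}$-class meets both $L_1$ and $L_2$; then take $K$ to be the union of all $\kbceq{i}$-classes meeting $L_1$, which by Fact~\ref{fct:definition2} is $\bswi$-definable, contains $L_1$, and is disjoint from $L_2$. To find such $k$: suppose not; then for every $k$ there are $u_k\in L_1$, $v_k\in L_2$ with $u_k\kbceq{i}v_k$, i.e. $u_k\ksieq{i}v_k$ and $v_k\ksieq{i}u_k$. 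Passing to $\alpha$-images $s_1:=\alpha(u_k)\in F_1$, $s_2:=\alpha(v_k)\in F_2$ (infinitely many $k$ share a fixed pair by finiteness of $M$), the relations $u_k\ksieq{i}v_k\ksieq{i}u_k$ give $(s_1,s_2,s_1)\in\Csik[\alpha]$ for all those $k$, hence $(s_1,s_2,s_1)\in\Csi[\alpha]$. If $s_1=s_2$ we contradict the hypothesis $s_1\ne s_2$; so $s_1\ne s_2$, and then by closure under stutter and product of $\Csi[\alpha]$ (Facts~\ref{fct:dupli},~\ref{fct:chaincomp}) one manufactures from $(s_1,s_2,s_1)$ arbitrarily long alternating chains $(s_1,s_2,s_1,s_2,\dots)\in\Csi[\alpha]\cap\{s_1,s_2\}^*$ — e.g. $(s_1,s_2,s_1)\cdot(s_1,s_1,s_2)$-type products, or simply iterate the "prepend/append" available from the monoid structure of $\Cs_{i,2}$ — giving unbounded alternation, contradicting the hypothesis. (The precise combinatorial trick: from $(s_1,s_2,s_1)\in\Csi[\alpha]$ and componentwise products one gets $(s_1,s_2,s_1,s_2,s_1,\dots)$; I would spell this out via an explicit factorization.) Hence the desired $k$ exists and the separator is built.

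\textbf{Main obstacle.} The genuinely delicate step is the necessity direction: correctly formalizing "a $\bswi$-formula of rank $k$ can change truth value only boundedly often along an $\ksieq{i}$-increasing chain" and extracting from this the clean contradiction with the alternating $s_1,s_2$ pattern. The subtlety is that $\kbceq{i}$-equivalence between words of different parity along the chain is \emph{not} automatic, so the pigeonhole must be applied at the level of the finitely many $\siwi$-rank-$k$ atoms (each monotone along $\ksieq{i}$), not at the level of $\kbceq{i}$-classes directly. On the sufficiency side, the only real content is the closure-manipulation that turns a single $3$-chain $(s_1,s_2,s_1)$ with $s_1\ne s_2$ into arbitrarily long alternating chains, which is routine once one writes it as a product of stuttered short chains.
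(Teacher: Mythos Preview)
Your overall architecture is right, but there is one genuine gap and one place where you abandoned a working idea too quickly.

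\medskip
\textbf{Sufficiency: the manufacturing step is a real gap.} From $u_k\kbceq{i}v_k$ you correctly extract $(s_1,s_2,s_1)\in\Csik[\alpha]$ for all $k$, hence $(s_1,s_2,s_1)\in\Csi[\alpha]$. But you cannot in general ``manufacture'' longer alternating chains from this using only subword, stutter, and componentwise product: none of these operations increases alternation (stutter and subword weakly decrease it, and product creates new elements $s_1s_2,\ldots$ rather than longer $\{s_1,s_2\}$-chains). The fix is trivial and is exactly what the paper does: since $u_k\kbceq{i}v_k$ gives \emph{both} $u_k\ksieq{i}v_k$ and $v_k\ksieq{i}u_k$, you already have
\[
u_k\ \ksieq{i}\ v_k\ \ksieq{i}\ u_k\ \ksieq{i}\ v_k\ \ksieq{i}\ \cdots
\]
of \emph{any} length, so $(s_1,s_2)^n\in\Csik[\alpha]$ for every $n$ and every $k$, hence $(s_1,s_2)^*\subseteq\Csi[\alpha]$. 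Don't stop at length $3$.

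\medskip
\textbf{Necessity: you gave up on pigeonhole too early.} Your worry that ``two words at positions of different parity might be $\kbceq{i}$-equivalent'' is not an obstacle: if that happens, one lies in $L_1\subseteq K$, the other in $L_2\subseteq A^*\setminus K$, yet they share a $\kbceq{i}$-class, contradicting that $K$ is a union of such classes---that case is actually the \emph{easy} one. The paper's version is even cleaner: apply pigeonhole only to the odd-position words $u_1,\dots,u_{n+1}$ (all mapped to $s_1$), with $n$ the number of $\kbceq{i}$-classes; get $u_j\kbceq{i}u_{j'}$ for some $j<j'$; then
\[
u_j\ \ksieq{i}\ v_j\ \ksieq{i}\ u_{j'}\ \ksieq{i}\ u_j,
\]
so $u_j\kbceq{i}v_j$ with $u_j\in L_1$, $v_j\in L_2$, contradicting separability. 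Your alternative ``monotone flip'' argument is also correct---each $\siwi$ atom of rank $k$ switches at most once along a $\ksieq{i}$-chain, so a boolean combination of $N$ atoms switches at most $N$ times, while the $K$-value must alternate at every step---but you did not finish it, and the pigeonhole route is shorter.
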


Theorem~\ref{thm:sepbc} reduces the separation problem for \bswi to
finding an algorithm which, given a morphism $\alpha$, computes all
pairs $(s_1,s_2) \in M^2$ such that $\Csi[\alpha] \cap \{s_1,s_2\}^*$
has bounded alternation. The problem has been solved when $i=1$
in~\cite{pvzmfcs13}. Above $i=1$, the problem remains open, even when
$i=2$. Note that due to closure of $\Csi[\alpha]$ under subwords,
$\Csi[\alpha] \cap \{s_1,s_2\}^*$ has unbounded alternation if and only if it
contains the language of all \chains  $(s_1,s_2,s_1,s_2,\ldots,s_1,s_2)$, that we
denote by~$(s_1,s_2)^*$.

\smallskip
Before proving Theorem~\ref{thm:sepbc}, we establish
a simple corollary
which states that solving \emph{membership} for \bswi requires slightly less
information. This statement will allow us to solve membership for \bswd in
Section~\ref{sec:caracbc}.

\begin{corollary} \label{cor:membc}
  Let $L$ be a regular language and let $\alpha: A^* \rightarrow M$ be
  its syntactic morphism. Then $L$ is definable in \bswi if and only if
  $\Csi[\alpha]$ has bounded alternation.
\end{corollary}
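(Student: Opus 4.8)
The plan is to derive the corollary from Theorem~\ref{thm:sepbc} via the standard reduction of membership to separation from the complement. Write $L=\alpha^{-1}(F)$. Since $\bswi$ is closed under complement, $L$ is $\bswi$-definable if and only if $L$ is $\bswi$-separable from $A^*\setminus L=\alpha^{-1}(M\setminus F)$, and this complement is recognized by the very same morphism~$\alpha$. Applying Theorem~\ref{thm:sepbc} with $L_1=L$, $L_2=A^*\setminus L$, $F_1=F$ and $F_2=M\setminus F$, and noting that $s_1\neq s_2$ is automatic when $s_1\in F$ and $s_2\in M\setminus F$, we get that $L$ is $\bswi$-definable if and only if $\Csi[\alpha]\cap\{s_1,s_2\}^*$ has bounded alternation for all $s_1\in F$ and $s_2\in M\setminus F$. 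It remains to prove that this is equivalent to $\Csi[\alpha]$ having bounded alternation. The implication from left to right being trivial, the whole content reduces to the claim that \emph{if $\Csi[\alpha]$ has unbounded alternation, then $\Csi[\alpha]\cap\{s_1,s_2\}^*$ has unbounded alternation for some $s_1\in F$ and $s_2\in M\setminus F$.}

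The first step is to reduce unbounded alternation to a two-element pattern. As noted in the text just after Theorem~\ref{thm:sepbc}, closure of $\Csi[\alpha]$ under subwords gives that $\Csi[\alpha]\cap\{a,b\}^*$ has unbounded alternation exactly when $(a,b)^*\subseteq\Csi[\alpha]$. I would first show that unbounded alternation of $\Csi[\alpha]$ forces $(a,b)^*\subseteq\Csi[\alpha]$ for some $a\neq b$ in $M$: given a \chain of alternation $m$, collapsing its maximal constant blocks yields a sequence $(v_0,\dots,v_m)$ over $M$ with $v_j\neq v_{j+1}$ for all $j$, which is a subword of the original \chain and hence lies in $\Csi[\alpha]$ by Fact~\ref{fct:high}. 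Since $M$ is finite, when $m$ is large enough some ordered pair $(a,b)$ with $a\neq b$ occurs as a consecutive pair $(v_j,v_{j+1})$ at indices $j_1<\dots<j_N$ with $N$ as large as we please; because $j_t<j_{t+1}$ and $v_{j_t+1}=b\neq a=v_{j_{t+1}}$ together force $j_t+1<j_{t+1}$, the positions $j_1,\,j_1+1,\,j_2,\,j_2+1,\dots,\,j_N,\,j_N+1$ are strictly increasing, so $(a,b,a,b,\dots,a,b)$ with $N$ copies of $(a,b)$ is a subword of $(v_0,\dots,v_m)$ and therefore lies in $\Csi[\alpha]$. As there are finitely many pairs, a single pair $(a,b)$ works for every $N$, i.e.\ $(a,b)^*\subseteq\Csi[\alpha]$.

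The second step uses that $\alpha$ is the \emph{syntactic} morphism of $L$. Since $a\neq b$ in $M=A^*/{\equiv_L}$, there are words whose $\alpha$-images $x,y\in M$ satisfy that exactly one of $xay$, $xby$ belongs to $F$; in particular $xay\neq xby$. The constant \chains $(x,\dots,x)$ and $(y,\dots,y)$ belong to $\Csi[\alpha]$ (either directly from reflexivity of $\ksieq{i}$, or as stutters of the one-letter \chains $(x)$ and $(y)$), so closure under product (Fact~\ref{fct:chaincomp}) applied to the componentwise product $(x,\dots,x)\cdot(a,b,a,b,\dots)\cdot(y,\dots,y)=(xay,xby,xay,xby,\dots)$ gives $(xay,xby)^*\subseteq\Csi[\alpha]$. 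Taking $\{s_1,s_2\}=\{xay,xby\}$ with $s_1\in F$ and $s_2\in M\setminus F$ yields the required witness, which proves the claim and hence the corollary.

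I expect the main obstacle to be the combinatorial extraction in the second step: formalising that a \chain of large alternation contains a long strictly alternating two-element subword. Once one passes to the collapsed block sequence and uses closure under subwords, this is only a pigeonhole argument, but it is the single spot demanding some care — the rest is a direct combination of Theorem~\ref{thm:sepbc}, the closure properties of \ichains, and the defining property of the syntactic monoid.
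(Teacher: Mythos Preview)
Your proposal is correct and follows essentially the same route as the paper: reduce membership to separation from the complement, apply Theorem~\ref{thm:sepbc}, extract a two-element pattern $(a,b)$ from unbounded alternation via closure under subwords, and then use the syntactic-monoid property together with closure under product to move the pattern into $F\times(M\setminus F)$. The paper compresses your pigeonhole extraction into the single sentence ``since $\Csi[\alpha]$ is closed under subwords, there exist $s,t\in M$ such that $\Csi[\alpha]\cap\{s,t\}^*$ has unbounded alternation,'' but the argument is the same. One minor slip: when you write ``the implication from left to right being trivial,'' the trivial direction is actually that bounded alternation of $\Csi[\alpha]$ implies bounded alternation of each $\Csi[\alpha]\cap\{s_1,s_2\}^*$; your subsequent claim correctly identifies and proves the nontrivial contrapositive.
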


\begin{proof}
  Recall that $L$ is \bswi-definable  iff $L$ is \bswi-separable
  from its complement. We prove both directions by contrapositive. Let
  $F=\alpha(L)$ be the accepting set of~$L$.

  Assume first that $L$ is not definable in \bswi. By
  Theorem~\ref{thm:sepbc}, this means that there exist $s,t \in M$
  such that $s \in F$ and $t \not\in F$ and $\Csi[\alpha] \cap
  \{s,t\}^*$ has unbounded alternation. Hence $\Csi[\alpha]$ has
  unbounded alternation.

  For the converse, we use the fact that $\alpha$ is the syntactic morphism
  of~$L$. Assume that $\Csi[\alpha]$ has unbounded alternation. By definition
  and since $\Csi[\alpha]$ is closed under subwords, this means that there
  exist $s,t \in M$ such that $\Csi[\alpha] \cap \{s,t\}^*$ has unbounded
  alternation. Since $\alpha$ is the syntactic morphism of $L$, there exist
  $r,r' \in M$ such that either $rsr' \in F$ and $rtr' \not\in F$ or
  $rtr' \in F$ and $rsr' \not\in F$. In both cases,
  $\Csi[\alpha] \cap \{rsr',rtr'\}^*$ has unbounded alternation, since \ichains
  are closed under product. By Theorem~\ref{thm:sepbc}, it follows that $L$ is
  not \bswi-separable from its complement, whence it is not definable
  in~\bswi.
\end{proof}

It remains to prove Theorem~\ref{thm:sepbc}, which we do in the rest of this section.

\begin{proof}[of Theorem~\ref{thm:sepbc}]
  There are two directions, both proved by
  contrapositive.

  Assume first that $L_1$ is not \bswi-separable from $L_2$. We have to find
  $s_1,s_2\in F_1,F_2$ such that $s_1=s_2$, or such that
  $\Csi[\alpha] \cap \{s_1,s_2\}^*$ has unbounded alternation. Using
  Fact~\ref{fct:definition2}, for all $k$, one can find $w_{1,k} \in L_1$ and
  $w_{2,k} \in L_2$ such that $w_{1,k} \kbceq{i} w_{2,k}$. Since $M$ is finite, we
  may assume without loss of generality that there exist $s_1,s_2 \in M$ such
  that for all $k$, $\alpha(w_{1,k}) = s_1$ and $\alpha(w_{2,k}) = s_2$. Observe
  that by definition $s_1 \in F_1$ and $s_2 \in F_2$. If $s_1=s_2$, then we are
  done. Otherwise, $s_1\neq s_2$ and we prove that
  $\Csi[\alpha] \cap \{s_1,s_2\}^*$ has unbounded alternation. Indeed, for all
  $k$, we have
  \[
    w_{1,k} \ksieq{i} w_{2,k} \ksieq{i} w_{1,k} \ksieq{i} w_{2,k} \ksieq{i} w_{1,k}
    \ksieq{i} w_{2,k} \ksieq{i} \cdots
  \]
  Hence by definition, $(s_1,s_2)^* \subseteq \Csi[\alpha]$ which terminates the
  proof of this direction.

  Conversely, assume that there exist $s_1 \in F_1$ and $s_2 \in F_2$ such
  that $\Csi[\alpha] \cap \{s_1,s_2\}^*$ has unbounded alternation. We prove
  that $L_1$ and $L_2$ are not \bswi-separable. More precisely, we show that
  for all $k \in \nat$ there exist $w_1 \in L_1$ and $w_2 \in L_2$ such that
  $w_{1} \kbceq{i} w_{2}$. The result will then follow from
  Fact~\ref{fct:definition2} again.

  Set $k \in \nat$ and set $n$ as the number of equivalence classes of
  $\kbceq{i}$ (recall $\kbceq{i}$ has finite index). Consider the \chain
  $(s_1,s_2)^{n+1} \in \Csik[\alpha]$, that is, the \chain
  $(s_1,s_2,s_1,s_2,\ldots,s_1,s_2)$ of length $2(n+1)$. By definition there exist words
  $u_1,\dots,u_{n+1}$ mapped to $s_1$ under $\alpha$ and $v_1,\dots,v_{n+1}$ mapped
  to $s_2$ under $\alpha$, such that
  \[
    u_1 \ksieq{i} v_1 \ksieq{i} u_2 \ksieq{i} v_2 \ksieq{i} \cdots \ksieq{i} u_{n+1}
    \ksieq{i} v_{n+1}
  \]
  By choice of $n$ and by the pigeonhole principle, we get $j < j'$
  such that $u_j \kbceq{i} u_{j'}$. Hence,
  \[
    u_j \ksieq{i} v_j \ksieq{i} u_{j'} \ksieq{i} u_j
  \]
  It follows that $u_j \kbceq{i} v_j$ and it suffices to set $w_1 = u_j$ and
  $w_2 = v_j$ to terminate the proof.
\end{proof}

\section{\texorpdfstring{Computing \dChains}{Computing Σ\texttwoinferior-Chains}}
\label{sec:comput}
In this section, we present an algorithm which, given a morphism and a integer
$n \geqslant 1$ as input, computes all associated \dchains of length $n$. We
already know by Theorems~\ref{thm:sep} and~\ref{thm:caracsig} that achieving
this for $n=2$ yields an algorithm deciding the separation problem for \siwd
and \piwd and algorithms deciding the membership problem for \siwt, \piwt and
\dewt. In Section~\ref{sec:caracbc}, we will obtain as well an algorithm deciding
the membership problem for \bswd.

Note that our algorithm is designed to work with \emph{alphabet compatible}
morphisms only. As shown in the next lemma, this not restrictive: the
problem of computing the \dchains associated to any morphism can
always be reduced to this case.

\begin{lemma} \label{lem:extchains}
  Set $i \geqslant 1$ and $n \geqslant 1$. Given $\alpha: A^* \to M$ a morphism into a finite
  monoid $M$ and $\beta: A^* \to M \times 2^A$ its alphabet completion,
  we have the following property:
  \[
    \Csin[\alpha] = \big\{(s_1,\dots,s_n) \mid \exists B_1,\dots,B_n \in 2^A
    \text{ s.t. } ((s_1,B_1),\dots,(s_n,B_n)) \in \Csin[\beta]\big\}.
  \]
\end{lemma}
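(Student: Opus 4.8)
The plan is to prove the two inclusions separately, after making one key observation: the relation $\ksieq{i}$ is a relation between words of $A^*$ and does not depend on any morphism. The only role a morphism plays in the definition of the sets $\Csin[\cdot]$ is to record the images of the witnessing words. Consequently $\Csin[\alpha]$ and $\Csin[\beta]$ ask for witnessing sequences satisfying \emph{the same} chain of $\ksieq{i}$-comparisons; they differ only in whether one tracks $\alpha(w_j)$ or $\beta(w_j) = (\alpha(w_j),\content{w_j})$.

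For the inclusion $\supseteq$, I would simply project. Given $B_1,\dots,B_n \in 2^A$ with $((s_1,B_1),\dots,(s_n,B_n)) \in \Csin[\beta]$, Lemma~\ref{fct:chainref} provides, for every rank $k$, words $w_1 \ksieq{i} \cdots \ksieq{i} w_n$ with $\beta(w_j) = (s_j,B_j)$; applying the first projection gives $\alpha(w_j) = s_j$, so $(s_1,\dots,s_n) \in \Csikn[\alpha]$ for every $k$, hence $(s_1,\dots,s_n) \in \Csin[\alpha]$.

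The substantial direction is $\subseteq$, and the main obstacle is a uniformity issue. Given $(s_1,\dots,s_n) \in \Csin[\alpha]$, for each rank $k$ one can fix witnesses $w^{(k)}_1 \ksieq{i} \cdots \ksieq{i} w^{(k)}_n$ with $\alpha(w^{(k)}_j) = s_j$; but their alphabets $\content{w^{(k)}_j}$ may vary with $k$, whereas the statement demands a single tuple $(B_1,\dots,B_n)$ working for all ranks at once. The plan is to resolve this by pigeonhole: since $(2^A)^n$ is finite, there is an infinite set $K \subseteq \nat$ of ranks on which the tuple $(\content{w^{(k)}_1},\dots,\content{w^{(k)}_n})$ is constant, equal to some $(B_1,\dots,B_n)$. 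Then for each $k \in K$ the sequence $w^{(k)}_1,\dots,w^{(k)}_n$ witnesses $((s_1,B_1),\dots,(s_n,B_n)) \in \Cs_{i,n}^k[\beta]$, because $\beta(w^{(k)}_j) = (\alpha(w^{(k)}_j),\content{w^{(k)}_j}) = (s_j,B_j)$. Finally I would invoke Lemma~\ref{fct:chainref} for $\beta$ --- legitimate since $M \times 2^A$ is a finite monoid --- to obtain a stabilization rank $\kappa'$ with $\Csin[\beta] = \Cs_{i,n}^k[\beta]$ for all $k \geqslant \kappa'$, and then pick $k \in K$ with $k \geqslant \kappa'$: this yields $((s_1,B_1),\dots,(s_n,B_n)) \in \Csin[\beta]$, which is exactly what the right-hand side requires. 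I do not expect any further difficulty; the only point to be careful about is precisely that one cannot in general choose the witnessing words independently of $k$, so the argument must route through the finiteness of $2^A$ together with the eventual stabilization of the sets $\Cs_{i,n}^k[\beta]$.
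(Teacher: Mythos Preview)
Your proof is correct. The paper's argument is essentially the same in spirit but organized more economically, and the comparison is worth noting.

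The paper first observes that the equality already holds rank by rank: for every fixed $k$,
\[
  \Csikn[\alpha] = \bigl\{(s_1,\dots,s_n) \mid \exists B_1,\dots,B_n \in 2^A \text{ s.t. } ((s_1,B_1),\dots,(s_n,B_n)) \in \Csikn[\beta]\bigr\},
\]
which is immediate from the definitions (given witnesses $w_1,\dots,w_n$ for the left-hand side, just set $B_j = \content{w_j}$). It then invokes Lemma~\ref{fct:chainref} once to pick a single $k$ at which \emph{both} $\Csikn[\alpha]$ and $\Csikn[\beta]$ have stabilized, and reads off the desired equality at that $k$.

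Your $\subseteq$ argument reaches the same conclusion but via an unnecessary detour: the ``uniformity issue'' you raise --- that the alphabets of the witnesses might drift with $k$ --- disappears once you work at a single stabilized rank for both morphisms. Your pigeonhole step on $(2^A)^n$ is a valid workaround, but it is solving a problem that the paper's formulation never creates. What your route buys is that you only need stabilization for $\beta$, not for $\alpha$; what the paper's route buys is brevity and the avoidance of the extra combinatorial step.
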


\begin{proof}
  It is immediate from the definitions that for all $k>0$, we have
  \[
    \Csikn[\alpha] = \big\{(s_1,\dots,s_n) \mid \exists B_1,\dots,B_n \in 2^A
    \text{ s.t. } ((s_1,B_1),\dots,(s_n,B_n)) \in \Csikn[\beta]\big\}.
  \]
  Now from Lemma~\ref{fct:chainref}, there exists some $k \in \nat$ such
  that $\Csikn[\alpha] = \Csin[\alpha]$ and $\Csikn[\beta] = \Csin[\beta]$.
\end{proof}

We can now present the algorithm. We organize the section into three
parts. In the first one, we describe the separation algorithm itself. The two
remaining parts are devoted to the proofs of its soundness and
completeness.

\subsection{\texorpdfstring{An algorithm that computes \dchains}{An algorithm that computes Σ\texttwoinferior-chains}}

For the remainder of this section, we fix an \emph{alphabet
  compatible} morphism $\alpha: A^* \rightarrow M$ into a finite
monoid $M$. Recall that this means that for any $s \in M$, \content{s}
is well-defined as $\content{w}$, for any $w\in\alpha^{-1}(s)$.
For any fixed $n \geqslant 1$, we explain how to compute
the following two sets:
\begin{enumerate}
\item the set $\Cstwon[\alpha]$ of \dchains of length $n$ for
  $\alpha$.
\item the set $\fCtwon[\alpha]$ of \djuns of length $n$ for
  $\alpha$.
\end{enumerate}
In fact, our algorithm directly computes the second item,
\emph{i.e.}, $\fCtwon[\alpha]$. Recall that by
Fact~\ref{fct:recoverchains}, this is enough to obtain the first item
as well. Note that considering \djuns is necessary for the technique
to work, even if we are only interested in computing \dchains.

\medskip
\noindent {\bf Outline.} We begin by explaining what our algorithm
does. For this outline, assume $n = 2$. Observe that for all $w \in
A^*$, we have $(\alpha(w),\bigl\{\alpha(w)\bigr\}) \in
\fC_{2,2}[\alpha]$. The algorithm starts from the set containing only these trivial \djuns, and
then saturates this set with two operations, which both preserve membership in
$\fC_{2,2}[\alpha]$. Let us describe these two operations.
\begin{itemize}
\item The first one is multiplication: by Fact~\ref{fct:chaincomp2},
  $\fC_{2,2}[\alpha]$ is a submonoid of $M \times 2^M$.
\item The second operation exploits the following specific property of~\siwd,
  which is a consequence of of the \sici-property Lemma (Lemma~\ref{lem:siprop}): for all words $u,v,w,w'$,
  we have
  \begin{equation}
    \label{eq:djun}
    \forall k\ \exists \ell\qquad \text{\ \ \ } \bigl[w \ksieq{2} u, \ w \ksieq{2} v \text{ and
    } \content{w'} = \content{w}\bigr] \ \ \ \Longrightarrow\ \ \  w^{2\ell} \ksieq{2}
    u^\ell w' v^{\ell}.
  \end{equation}
  This is why \djuns are needed: in order to use this property, we need to
  have a \emph{single word~$w$} such that $w \ksieq{2} u$ and $w \ksieq{2} v$, and
  this information is not provided by \dchains alone. Once abstracted at the
  monoid level, Equation~\eqref{eq:djun} yields an operation that states that
  whenever $(s,\Ss)$ belongs to $\fC_{2,2}[\alpha]$, then so does
  $(s,\Ss)^\omega \cdot (1_M,\Ts) \cdot (s,\Ss)^\omega$, where $\Ts$ is the
  set $\{t \mid \content{t} = \content{s}\}$.  Note that this is also where we
  need $\alpha$ to be alphabet compatible.
\end{itemize}
Let us now formalize this procedure and generalize it to arbitrary length.

\medskip
\noindent {\bf Algorithm.} As we explained, our algorithm is a \emph{least
  fixpoint}. We start from a set of trivial \djuns and saturate this set with
two operations until stabilization. Denote by $n \geqslant 1$ the
common length of all \chains in \juns we want to compute. We initialize our
fixpoint algorithm with $\fI_n \subseteq M \times 2^{M^{n-1}}$ defined by
$$\fI_n = \bigl\{(\alpha(w),\{(\alpha(w),\dots,\alpha(w))\}) \mid w \in A^*\bigr\}.$$

\smallskip

We now describe our fixpoint operation. To any set of \juns $\fR
\subseteq M \times 2^{M^{n-1}}$, we associate another subset $\Sat_n(\fR)$ of $M \times
2^{M^{n-1}}$ such that
\begin{equation*}
  \Sat_n(\fR) \supseteq \fR,
\end{equation*}
defined as a lowest fixpoint (with respect to inclusion).  We will then prove
that for all $n$, one can extract from $\Sat_n(\fI_n)$ the set
$\fCtwon[\alpha]$ (and therefore also $\Cstwolen{n}[\alpha]$ by
Fact~\ref{fct:recoverchains}).

\smallskip
For length $n=1$, we simply set $\Sat_1$ as the identity, \emph{i.e.}, $\Sat_1(\fR) =
\fR$. This is because, by definition, $\fCtwolen1[\alpha] = \fI_1$. We
now define $\Sat_n$ for a length $n \geqslant 2$.  
For $\fR \subseteq M \times 2^{M^{n-1}}$, we define $\Sat_n(\fR)$ as the smallest
subset of $M\times 2^{M^{n-1}}$ containing $\fR$ and satisfying the three
following closure properties:
\begin{enumerate}[label=$(\mathit{{Op}}_{\arabic*})$,ref=$\mathit{{Op}}_{\arabic*}$]
\item \label{eq:dwn} $\downclos \Sat_n(\fR) \subseteq \Sat_n(\fR)$.
\item \label{eq:mul} $\Sat_n(\fR) \cdot \Sat_n(\fR) \subseteq \Sat_n(\fR)$.
\item \label{eq:oper} For all $(s,\Ss) \in \Sat_n(\fR)$, if $\Ts =
  \big\{(t_1,\dots,t_{n-1}) \in \Cs_{2,n-1}[\alpha] \mid \content{t_1} =
  \content{s}\big\}$, then
  \[
    (s,\Ss)^\omega \cdot (1_M,\Ts) \cdot (s,\Ss)^{\omega} \in \Sat_n(\fR).
  \]
\end{enumerate}
It is straightforward that $\Sat_n(\fR)$
can be effectively computed from~$\fR$
and $\Cs_{2,n-1}[\alpha]$
using a smallest fixpoint algorithm. Note however that the definition of
$\Sat_n$
is parametrized by the set $\Cstwolen{n-1}[\alpha]$,
\emph{i.e.}, the set of \dchains of length $n-1$.
This means that in order to compute $\Sat_n$,
we need to have previously computed the \dchains of length $n-1$.
This set can be computed by the same algorithm at stage $n-1$:
indeed, from its output $\Sat_{n-1}(\fI_{n-1})$,
one can compute the set of \djuns of length $n-1$,
and then by Fact~\ref{fct:recoverchains}, the set of all \dchains of length
$n-1$.

\smallskip
This finishes the definition of the algorithm. Its soundness and
completeness are stated in the following proposition.

\begin{proposition} \label{prop:compu}
  Given $n \gmo 1$ and $\ell_{2,n} = 9n|M|^2 \cdot 2^{|M|^{n-1}}$, we
  have
  \begin{equation}
    \label{eq:corr-compl}
    \fCtwon[\alpha] = \fCgen{2}{\ell_{2,n}}{n}[\alpha] = \Sat_n(\fI_n).
  \end{equation}
\end{proposition}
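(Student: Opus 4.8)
The plan is to prove the chain of equalities in~\eqref{eq:corr-compl} by establishing a two-sided containment, with the set $\fCgen{2}{\ell_{2,n}}{n}[\alpha]$ sitting in the middle as the bridge. Concretely, I would prove
\[
  \Sat_n(\fI_n) \subseteq \fCtwon[\alpha] \subseteq \fCgen{2}{\ell_{2,n}}{n}[\alpha] \subseteq \Sat_n(\fI_n),
\]
where the first inclusion is \emph{soundness} (every juncture produced by the fixpoint really is a \djun), the second is the trivial monotonicity from Lemma~\ref{fct:chainref2} (since $\fCtwon[\alpha]=\bigcap_k\fCtwon^k[\alpha]\subseteq\fCgen{2}{\ell_{2,n}}{n}[\alpha]$), and the third is \emph{completeness with a quantitative bound} (every \djun that survives up to quantifier rank $\ell_{2,n}=9n|M|^2\cdot 2^{|M|^{n-1}}$ is actually generated by the algorithm). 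Soundness and completeness are the two substantive directions and should be treated in the two later subsections the paper announces; here I would set up the skeleton and indicate how each direction goes. Note that once this is proved it also pins down $\ell_{2,n}$ as an explicit upper bound for the stabilization rank in Lemma~\ref{fct:chainref2}, and hence for $\kappa_{2,n}$ in Lemma~\ref{fct:chainref}, as remarked after that lemma.

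\textbf{Soundness} ($\Sat_n(\fI_n)\subseteq\fCtwon[\alpha]$): I would argue that $\fCtwon[\alpha]$ itself satisfies the three closure properties $(\ref{eq:dwn})$, $(\ref{eq:mul})$, $(\ref{eq:oper})$ and contains $\fI_n$, so that the \emph{smallest} set with these properties is contained in it. Containment of $\fI_n$ is immediate: for any $w$, the juncture $(\alpha(w),\{(\alpha(w),\dots,\alpha(w))\})$ is a $k$-\djun for every $k$, witnessed by $w$ itself together with the reflexivity of $\ksieq{2}$. Closure under $(\ref{eq:dwn})$ is Fact~\ref{fct:subsets}, and closure under $(\ref{eq:mul})$ is Fact~\ref{fct:chaincomp2}. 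The only non-formal point is $(\ref{eq:oper})$: given $(s,\Ss)\in\fCtwon[\alpha]$ and $\Ts=\{(t_1,\dots,t_{n-1})\in\Cs_{2,n-1}[\alpha]\mid\content{t_1}=\content{s}\}$, one must show $(s,\Ss)^\omega(1_M,\Ts)(s,\Ss)^\omega\in\fCtwon[\alpha]$. Here is where Equation~\eqref{eq:djun} does its work: fix a target rank $k$; pick $\ell$ large for that $k$; take a $(2\ell)$-witness $w$ of $(s,\Ss)$ (using $\fCtwon[\alpha]\subseteq\fCtwon^{2\ell}[\alpha]$) and, for each chain $(t_1,\dots,t_{n-1})\in\Ts$, words $w'_1\ksieq{2}\cdots\ksieq{2}w'_{n-1}$ realizing it with $\alpha(w'_1)=t_1$ and $\content{w'_1}=\content{w}$; then \eqref{eq:djun} gives $w^{2\ell}\ksieq{2}w_1^{\ell}w'_1v_1^{\ell}$ for each branch, and concatenating along the common prefix $w^{\ell}\cdots$, the word $w^{2\ell\cdot(\text{number of $\Ss$-factors})}$ (or the appropriate product built from the $2\ell$-witness's own branch words) serves as a $k$-witness for $(s,\Ss)^\omega(1_M,\Ts)(s,\Ss)^\omega$. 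This requires carefully tracking the alphabet-compatibility hypothesis ($\content{w'_1}=\content{w}$ is exactly the hypothesis in \eqref{eq:djun}), and checking that the resulting chains land in the right $M$-components after applying $\alpha$ and using $e=e^2$-type idempotency facts about the $\omega$-powers.

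\textbf{Completeness} ($\fCgen{2}{\ell_{2,n}}{n}[\alpha]\subseteq\Sat_n(\fI_n)$) is the main obstacle, and it is where Simon's Factorization Forest Theorem (Theorem~\ref{thm:facto}) enters, as the paper's introduction already flags. Given a \djun $(t,\Ts)$ surviving to rank $\ell_{2,n}$, with $k$-witness $w$, I would take an $\alpha$-factorization forest for $w$ of height at most $3|M|-1$ and induct on the height, showing that the juncture can be assembled from $\fI_n$ by the three operations. Leaves land in $\fI_n$; binary nodes use $(\ref{eq:mul})$; idempotent nodes are the crux and use $(\ref{eq:oper})$ — at an idempotent node with value $e$, the ``middle'' factor $u$ of the forest has $\alpha(u)=e$ and, because everything in the branch is within a bounded $\ksieq{1}$-level (since $\piw{1}\subseteq\siwd$), the juncture's non-trivial branches through $u$ are exactly abstracted by the $\Ts$-set of $(\ref{eq:oper})$; one then invokes closure under stutter/subwords (Facts~\ref{fct:dupli2} and~\ref{fct:high2}) and down-closure $(\ref{eq:dwn})$ to match the actual juncture. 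The delicate bookkeeping is the rank budget: each level of the forest consumes a constant number of quantifiers in the \efgame argument (to split $w$ and track the at-most-$n-1$ branches and the factorization structure), the forest has height $\le 3|M|-1$, the fixpoint $\Sat_n$ stabilizes after at most $|M\times 2^{M^{n-1}}| \le |M|\cdot 2^{|M|^{n-1}}$ iterations each needing a few quantifiers, and the factor $9n$ absorbs the per-step overhead for handling $n$-tuples; multiplying these out must come in under $9n|M|^2\cdot 2^{|M|^{n-1}}$. Getting this arithmetic to close — rather than the structural induction itself — is the part I expect to require the most care, and it is presumably why the paper devotes a full subsection to it.
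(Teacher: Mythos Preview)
Your skeleton is right and matches the paper: prove the cycle $\Sat_n(\fI_n)\subseteq\fCtwon[\alpha]\subseteq\fCgen{2}{\ell_{2,n}}{n}[\alpha]\subseteq\Sat_n(\fI_n)$, with soundness as the first inclusion and completeness as the last. Your soundness sketch is essentially what the paper does (it works with $\fCgen{2}{k}{n}[\alpha]$ for each fixed $k$, but your variant is fine). The completeness sketch, however, has two genuine gaps.

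First, you omit the outer induction on $n$, and it is not optional. Operation~\eqref{eq:oper} is parametrized by the \emph{true} set $\Cs_{2,n-1}[\alpha]$. At an idempotent node you want the ``middle'' juncture to be contained in $(e,\Ts)$; but its branches a~priori lie only in $\Csgen{2}{p}{n-1}[\alpha]$ for the current rank~$p$, and you need $\Csgen{2}{p}{n-1}[\alpha]=\Cs_{2,n-1}[\alpha]$ once $p\geqslant\ell_{2,n-1}$, which is exactly the proposition at level $n-1$. Your remark about a ``bounded $\ksieq{1}$-level'' does not do this job: the set $\Ts$ in \eqref{eq:oper} consists of \dchains, not \qchains{1}. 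Second, your idempotent-node argument is too glib. At such a node $w=u_1\cdots u_m$ the number $m$ of children is unbounded, so you cannot invoke the decomposition $m-1$ times (each costs one unit of rank), and the left/right factors $\fg_n^{p}(u_1),\fg_n^{p}(u_m)$ from a three-part split as in Proposition~\ref{prop:signec2} are \emph{not} idempotent in $M\times 2^{M^{n-1}}$, so you cannot feed them to \eqref{eq:oper}. The paper's fix is a pigeonhole on the running products $\fg_n^{p}(u_1)\cdots\fg_n^{p}(u_j)$ inside the monoid $M\times 2^{M^{n-1}}$ of size $k_n=|M|\cdot 2^{|M|^{n-1}}$ to extract an idempotent $(e,\Es)\in\Sat_n(\fI_n)$ that can be ``inserted'', followed by a secondary induction on the number of distinct insertable idempotents (the \emph{index}, bounded by $k_n$) to peel off a suffix of smaller index. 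The factor $k_n$ in $\ell_{2,n}=9n|M|k_n$ arises precisely from this pigeonhole-plus-index bookkeeping, not from the number of fixpoint iterations of $\Sat_n$ as you suggest.
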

\noindent
Proposition~\ref{prop:compu} establishes both soundness and completeness of
the algorithm:
\begin{itemize}
\item the inclusion $\Sat_n(\fI_n)\subseteq
  \fCtwon[\alpha]$ gives its soundness: $\Sat_n(\fI_n)$
  \emph{only} consists of \djuns of length $n$,
\item the containment $\Sat_n(\fI_n)\supseteq \fCtwon[\alpha]$ gives its
  completeness: the set $\Sat_n(\fI_n)$ contains \emph{all} \djuns of
  length~$n$.
\end{itemize}
It also establishes a bound $\ell_{2,n}$ on a sufficient quantifier rank, whose
existence was already known from Lemma~\ref{fct:chainref2}. This bound is a
byproduct of the proof of the algorithm. It is of particular interest for
separation and Theorem~\ref{thm:sep}. Indeed, one can prove that for any two
languages that are \siwd-separable and recognized by $\alpha$, the separator
can be chosen with quantifier rank $\ell_{2,2}$ (refer to
Remark~\ref{rq:rank-sep}). From Theorem~\ref{thm:sep}, we also get
decidability of the separation problem for \siwd, as stated in the following
corollary.

\begin{corollary} \label{cor:decidsep}
  Given as input two regular languages $L_1,L_2$ it is decidable to test
  whether $L_1$ can be $\siwd$-separated (resp. $\piwd$-separated) from
  $L_2$.
\end{corollary}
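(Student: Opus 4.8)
The plan is to assemble the pieces already in place. Theorem~\ref{thm:sep} reduces $\siwd$- and $\piwd$-separability of two languages recognized by a common morphism $\alpha\colon A^{*}\to M$ to testing, for the relevant pairs of accepting elements, membership of $(s_1,s_2)$ (resp.\ $(s_2,s_1)$) in $\Cstwo[\alpha]$; and Proposition~\ref{prop:compu} together with Fact~\ref{fct:recoverchains} shows that the length-$2$ part of $\Cstwo[\alpha]$ is computable. So the proof will just verify that each step of this pipeline is effective, starting from arbitrary regular inputs.

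First I would put the input in the right form. Given regular languages $L_1,L_2$, compute their syntactic monoids and accepting sets, and form the product monoid $M_0=M_{L_1}\times M_{L_2}$ with the morphism $\alpha_0\colon w\mapsto(\alpha_{L_1}(w),\alpha_{L_2}(w))$, which recognizes both $L_1$ and $L_2$, say with accepting sets $F_1,F_2\subseteq M_0$ (see Section~\ref{sec:tools}). Since the algorithm of Section~\ref{sec:comput} is designed for alphabet compatible morphisms, replace $\alpha_0$ by its alphabet completion $\beta\colon A^{*}\to M_0\times 2^{A}$; this is again computable, and by Lemma~\ref{lem:extchains} the set $\Cstwolen2[\alpha_0]$ can be recovered from $\Cstwolen2[\beta]$.

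Next I would run the fixpoint procedure. For $n=2$ the operator $\Sat_2$ is parametrized only by the set $\Cstwolen1[\beta]$ of $\dchains$ of length $1$, which by $\Sat_1=\mathrm{id}$ and Fact~\ref{fct:recoverchains} is the trivial, immediately available set $\{(s)\mid s\in\beta(A^{*})\}$. Hence the least-fixpoint computation of $\Sat_2(\fI_2)$ terminates, and by Proposition~\ref{prop:compu} we get $\fCtwolen2[\beta]=\Sat_2(\fI_2)$; Fact~\ref{fct:recoverchains} then extracts $\Cstwolen2[\beta]$, and Lemma~\ref{lem:extchains} yields $\Cstwolen2[\alpha_0]=\Cstwo[\alpha_0]\cap M_0^{2}$. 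Finally, apply Theorem~\ref{thm:sep}: $L_1$ is $\siwd$-separable from $L_2$ iff $(s_1,s_2)\notin\Cstwo[\alpha_0]$ for all $s_1\in F_1$, $s_2\in F_2$, and $\piwd$-separable iff $(s_2,s_1)\notin\Cstwo[\alpha_0]$ for all such pairs; since $M_0$ is finite, these are finite checks.

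There is no genuine obstacle here: the statement is essentially a corollary of Theorem~\ref{thm:sep} and Proposition~\ref{prop:compu}. The only points worth spelling out are that passing to a single alphabet compatible morphism is harmless — justified by the product construction and by Lemma~\ref{lem:extchains} — and that the recursion on the length $n$ in $\Sat_n$ does not bite for $n=2$, because $\Sat_1$ is the identity and $\Cstwolen1[\beta]$ is trivial. The substantial content, namely soundness, completeness, and termination of the fixpoint (Proposition~\ref{prop:compu}), is proved separately and may be assumed here.
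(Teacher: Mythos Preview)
Your proposal is correct and follows essentially the same approach as the paper: the corollary is stated there without a separate proof, as an immediate consequence of Theorem~\ref{thm:sep} (reducing separation to length-$2$ \dchains) and Proposition~\ref{prop:compu} (computability of \djuns, hence \dchains). You have simply spelled out the effective pipeline---product morphism, alphabet completion via Lemma~\ref{lem:extchains}, and the base case $n=2$ of the fixpoint---that the paper leaves implicit.
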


Similarly, we get decidability of the membership problem for \siwt,
\piwt and \dewt from Theorem~\ref{thm:caracsig}.

\begin{corollary} \label{cor:decidthree}
  Given as input a regular language $L$, the following problems are
  decidable:
  \begin{itemize}
  \item whether $L$ is definable in \siwt.
  \item whether $L$ is definable in \piwt.
  \item whether $L$ is definable in \dewt.
  \end{itemize}
\end{corollary}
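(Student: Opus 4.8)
The plan is to assemble three results already established above: the algebraic characterizations of \siwt-, \piwt- and \dewt-definability (Theorem~\ref{thm:caracsig} and Corollary~\ref{cor:caracpi}), which reduce all three membership questions to computing the \dchains of length~$2$; the fixpoint algorithm computing \dchains (Proposition~\ref{prop:compu}); and the reduction to alphabet compatible morphisms (Lemma~\ref{lem:extchains}), which is needed because Proposition~\ref{prop:compu} assumes this hypothesis on the input morphism.

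Concretely, given a regular language $L$, I would first compute its syntactic ordered monoid $M$ together with the syntactic morphism $\alpha: A^* \to M$, which is effective by the Myhill--Nerode theorem. Then compute the alphabet completion $\beta: A^* \to M \times 2^A$ of $\alpha$; it is alphabet compatible and still recognizes $L$. Now run the algorithm of Proposition~\ref{prop:compu} in order of increasing length: at length $1$ it returns $\Cstwolen1[\beta]$, which is simply the set of one-element chains $(s)$ with $s$ in the image of $\beta$, and feeding this set as the parameter of the operation~\ref{eq:oper}, at length $2$ the algorithm returns $\Sat_2(\fI_2)$, which by Proposition~\ref{prop:compu} equals $\fCtwotwo[\beta]$. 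From this finite set one extracts $\Cstwotwo[\beta] \subseteq (M \times 2^A)^2$ via Fact~\ref{fct:recoverchains}, and then, by Lemma~\ref{lem:extchains}, projecting away the $2^A$-components yields $\Cstwotwo[\alpha]$, the set of \dchains of length~$2$ for the syntactic morphism.

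It then remains to test equations on the finite ordered monoid $M$. By Theorem~\ref{thm:caracsig} applied with $i = 3$ (where, since the relevant argument is a pair $(t,s)$, only \dchains of length~$2$ matter), $L$ is definable in \siwt iff $s^\omega \le s^\omega t s^\omega$ holds in $M$ for every $(t,s) \in \Cstwotwo[\alpha]$; by Corollary~\ref{cor:caracpi}, $L$ is definable in \piwt iff $s^\omega \ge s^\omega t s^\omega$ for every such pair, and in \dewt iff $s^\omega = s^\omega t s^\omega$. Since $M$ is finite, $\omega$ is computable, $\Cstwotwo[\alpha]$ is an explicit finite set, and the partial order on $M$ is known, each of these conditions is decided by a finite inspection, which establishes the three decidability statements.

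There is no genuine obstacle in this corollary itself: all the substance is contained in Proposition~\ref{prop:compu}, whose soundness and completeness are assumed here. The only points that deserve a little care are organizational---the definition of $\Sat_n$ is parametrized by $\Cstwolen{n-1}[\alpha]$, so the length-$2$ \dchains can only be obtained once the (trivial) length-$1$ ones are in hand, and one must pass through the alphabet completion $\beta$ to meet the alphabet-compatibility hypothesis before translating the result back to $\alpha$ with Lemma~\ref{lem:extchains}.
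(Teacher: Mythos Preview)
Your proposal is correct and follows essentially the same route as the paper: the corollary is obtained by combining the equational characterizations of Theorem~\ref{thm:caracsig} and Corollary~\ref{cor:caracpi} (for $i=3$) with the computability of \dchains of length~$2$ given by Proposition~\ref{prop:compu}, passing through the alphabet completion via Lemma~\ref{lem:extchains}. You have simply spelled out the details more explicitly than the paper, which records the corollary in a single sentence.
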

Moreover, we will see in Section~\ref{sec:caracbc} that an algorithm
for the membership problem for \bswd can also be obtained by relying
on Proposition~\ref{prop:compu}.

\medskip It now remains to prove Proposition~\ref{prop:compu}, that it is the
soundness and completeness of the algorithm. We devote the rest of
Section~\ref{sec:comput} to this proof.

We proceed by induction on $n$. Observe that when $n = 1$, all
three sets $\fCtwon[\alpha]$, $\fCgen{2}{\ell_{2,n}}{n}[\alpha]$ and
$\Sat_n(\fI_n)$ are, by definition, all equal to $\fI_n$. Therefore, the
result is immediate for $n=1$.

Assume now that $n \geqslant 2$ and set $\ell_{2,n}$ and $\ell_{2,n-1}$ as defined
in Proposition~\ref{prop:compu}. Our induction hypothesis implies the
following fact.

\begin{fact} \label{fct:inducomp}
  We have $\fCtwolen{n-1}[\alpha] =
  \fCgen{2}{\ell_{2,n-1}}{n-1}[\alpha]$. In particular,
  $\Cstwolen{n-1}[\alpha] = \Csgen{2}{\ell_{2,n-1}}{n-1}[\alpha]$.
\end{fact}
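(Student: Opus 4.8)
The statement to prove is Fact~\ref{fct:inducomp}: under the running induction hypothesis on $n$, we have $\fCtwolen{n-1}[\alpha] = \fCgen{2}{\ell_{2,n-1}}{n-1}[\alpha]$, and consequently $\Cstwolen{n-1}[\alpha] = \Csgen{2}{\ell_{2,n-1}}{n-1}[\alpha]$. This is, essentially, just an unpacking of the induction hypothesis of the proof of Proposition~\ref{prop:compu}, so the proof should be short. The proof of Proposition~\ref{prop:compu} proceeds by induction on $n$; we are in the inductive step, where $n \geqslant 2$ and we may assume the full conclusion \eqref{eq:corr-compl} of Proposition~\ref{prop:compu} holds for all smaller lengths. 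Applying it at length $n-1 \geqslant 1$ gives immediately
\[
  \fCtwolen{n-1}[\alpha] = \fCgen{2}{\ell_{2,n-1}}{n-1}[\alpha] = \Sat_{n-1}(\fI_{n-1}),
\]
which is the first assertion (we only need the first equality here).

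\textbf{Deriving the statement about \dchains.} For the second assertion, I would invoke Fact~\ref{fct:recoverchains}, which relates \dchains and \djuns: a \chain $(s_1,\dots,s_{n-1})$ lies in $\Csi[\alpha]$ iff $\bigl(s_1,\bigl\{(s_2,\dots,s_{n-1})\bigr\}\bigr)$ lies in $\fCi[\alpha]$, and the analogous equivalence holds at each fixed rank $k$ directly from the definitions of $\Csikn[\alpha]$ and $\fCikn[\alpha]$ (the singleton-juncture case of the definition of \ikjuns reduces exactly to the definition of \ikchains). Hence from $\fCtwolen{n-1}[\alpha] = \fCgen{2}{\ell_{2,n-1}}{n-1}[\alpha]$ one reads off, by restricting to singleton junctures, that $\Cstwolen{n-1}[\alpha] = \Csgen{2}{\ell_{2,n-1}}{n-1}[\alpha]$. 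Equivalently, since $\fCgen{2}{k}{n-1}[\alpha] \supseteq \fCtwolen{n-1}[\alpha]$ for every $k$ by Lemma~\ref{fct:chainref2}, and likewise $\Csgen{2}{k}{n-1}[\alpha] \supseteq \Cstwolen{n-1}[\alpha]$ by Lemma~\ref{fct:chainref}, the only content is the reverse inclusion at rank $k = \ell_{2,n-1}$, which is precisely what the induction hypothesis supplies for junctures and what Fact~\ref{fct:recoverchains} transfers to chains.

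\textbf{Main obstacle.} There is no real obstacle here: the entire force of Fact~\ref{fct:inducomp} is already packaged into the inductive hypothesis of the surrounding induction on $n$, and the only subtlety is making sure the bookkeeping is coherent — namely that $\ell_{2,n-1} = 9(n-1)|M|^2\cdot 2^{|M|^{n-2}}$ as given by the formula in Proposition~\ref{prop:compu} instantiated at $n-1$, and that the induction is genuinely on length (so that the case $n-1$ is available, including $n-1 = 1$, where all three sets in \eqref{eq:corr-compl} collapse to $\fI_{n-1}$ by definition). I would therefore state the proof in one or two sentences: apply Proposition~\ref{prop:compu} at length $n-1$ (valid by the induction on $n$) to get the equality of junctures, then restrict to singleton junctures via Fact~\ref{fct:recoverchains} to get the equality of chains.
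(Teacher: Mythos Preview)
Your proposal is correct and matches the paper's approach exactly: the paper presents this fact as an immediate consequence of the induction hypothesis on $n$ in the proof of Proposition~\ref{prop:compu}, with no further argument given. Your unpacking via Fact~\ref{fct:recoverchains} for the chain statement is the intended (and only) content.
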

We shall prove the following inclusions, which clearly
entail~\eqref{eq:corr-compl} and Proposition~\ref{prop:compu}:
\[
  \fCtwon[\alpha] \subseteq \fCgen2{\ell_{2,n}}{n}[\alpha] \subseteq
  \Sat_n(\fI_n) \subseteq \fCtwon[\alpha]
\]
That $\fCtwon[\alpha] \subseteq \fCgen2{\ell_{2,n}}{n}[\alpha]$ is immediate
from Fact~\ref{fct:chainref2}. Hence, two inclusions are left to prove:
\begin{itemize}
\item $\Sat_n(\fI_n) \subseteq \fCtwon[\alpha]$ (corresponding to
  soundness).
\item $\fCgen2{\ell_{2,n}}{n}[\alpha] \subseteq \Sat_n(\fI_n)$
  (corresponding to completeness).
\end{itemize}
\noindent
We give each proof its own subsection: soundness is shown in
Section~\ref{sec:corr-algor} and completeness in
Section~\ref{sec:compl-algor}. Note that Fact~\ref{fct:inducomp}
(\emph{i.e.}, induction on $n$) is only used for proving completeness.

\subsection{Soundness of the Algorithm}
\label{sec:corr-algor}
In this subsection, we prove that $\Sat_n(\fI_n) \subseteq \fCtwon[\alpha]$. This is a
consequence of the following proposition.

\begin{proposition}
  \label{prop:correc}
  For all $k \in \nat$, $\Sat_n(\fI_n) \subseteq \fCgen2{k}{n}[\alpha]$.
\end{proposition}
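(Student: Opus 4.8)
The plan is to prove Proposition~\ref{prop:correc} by induction on the fixpoint construction of $\Sat_n(\fI_n)$. Fix $k \in \nat$. Since $\Sat_n(\fI_n)$ is the smallest set containing $\fI_n$ and closed under the three operations \ref{eq:dwn}, \ref{eq:mul}, \ref{eq:oper}, it suffices to show that $\fCgen2{k}{n}[\alpha]$ itself contains $\fI_n$ and is closed under these three operations; then minimality of $\Sat_n(\fI_n)$ gives the inclusion. Note that $k$ is fixed throughout, so we are really showing that $\fCtwolen n^k[\alpha]$ is a ``fixpoint-closed'' set, and then $\Sat_n(\fI_n) \subseteq \fCtwolen n^k[\alpha]$ for every $k$, whence $\Sat_n(\fI_n) \subseteq \bigcap_k \fCtwolen n^k[\alpha] = \fCtwon[\alpha]$.

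First I would check the base case: $\fI_n \subseteq \fCgen2{k}{n}[\alpha]$. A generator $(\alpha(w),\{(\alpha(w),\dots,\alpha(w))\})$ is a \djun with $k$-witness $w$: taking $w_2 = \cdots = w_n = w$ we trivially have $w \ksieq{2}^k w \ksieq{2}^k \cdots \ksieq{2}^k w$ and $\alpha(w_j) = \alpha(w)$, so \eqref{eq:sigma2chain}--\eqref{eq:12} hold (here we use that $\ksieq2^k$ is reflexive, Fact~\ref{fct:preorder}). Next, closure under \ref{eq:dwn}: $\fCgen2{k}{n}[\alpha]$ is downward closed (i.e.\ $\downclos \fCtwolen n^k[\alpha] = \fCtwolen n^k[\alpha]$) by Fact~\ref{fct:subsets} restricted to length $n$. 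Closure under \ref{eq:mul}: $\fCtwolen n^k[\alpha]$ is a submonoid of $M \times 2^{M^{n-1}}$ by Fact~\ref{fct:chaincomp2}, so it is closed under the componentwise product.

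The main work, and the expected obstacle, is closure under \ref{eq:oper}: given $(s,\Ss) \in \fCgen2{k}{n}[\alpha]$ and $\Ts = \{(t_1,\dots,t_{n-1}) \in \Cs_{2,n-1}[\alpha] \mid \content{t_1} = \content{s}\}$, we must show $(s,\Ss)^\omega \cdot (1_M,\Ts) \cdot (s,\Ss)^\omega \in \fCgen2{k}{n}[\alpha]$. This is exactly the place where the $\sic2$-specific property~\eqref{eq:djun} is invoked. Let $w$ be a $k$-witness of $(s,\Ss)$; then $\alpha(w) = s$. Let $\omega$ be the idempotent power for the monoid $M \times 2^{M^{n-1}}$ (or large enough), write the target \djun as $(s^\omega,\Ss') \cdot (1_M,\Ts) \cdot (s^\omega,\Ss'')$ with $\Ss',\Ss''$ the appropriate products, and we must produce a $k$-witness $W$ for it together with, for each chain in the product set, a $\ksieq2^k$-increasing sequence of words realizing it under $\alpha$. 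The candidate is $W = w^{\ell}$ for a suitable $\ell$; by~\eqref{eq:djun} (applied with the same $w$ playing the roles of $u=v=w$ and with $w'$ ranging over preimages of the elements of $\Ts$, using $\content{w'} = \content{w}$ since $\content{t_1} = \content{s}$ and $\alpha$ is alphabet compatible) one gets $w^{2\ell} \ksieq2 w^\ell w' v^\ell$ for any such configuration. The delicate points are: (i) choosing $\ell$ large enough that the relevant $\ksieq2^k$ comparisons hold for the fixed rank $k$ (here one uses that~\eqref{eq:djun} provides, for each $k$, some $\ell$ that works, together with the aperiodicity Lemma~\ref{lem:aperiodic} to absorb powers of $s$ into $s^\omega$ at the monoid level); (ii) threading a chain of $\ell$ (the middle factor) inside the two $s^\omega$-blocks so that the whole length-$n$ chain structure of $(s,\Ss)^\omega$ on the left, of $\Ts$ in the middle, and of $(s,\Ss)^\omega$ on the right, is preserved by the concatenation of the witness words --- this uses the Pre-congruence Lemma~\ref{lem:efconcat} to glue the $\ksieq2^k$-chains from $\Ss$, from $\Ts$ (via $\Cs_{2,n-1}[\alpha] \subseteq \Cs_{2,n-1}^k[\alpha]$, Lemma~\ref{fct:chainref}), and from $\Ss$ again. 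I would verify~\eqref{eq:djun} itself as a separate lemma (it is stated as a consequence of Lemma~\ref{lem:siprop}, the $\sic i$-property Lemma, instantiated at $i=2$, $w \ksieq2 u$, $w \ksieq2 v$, via an \efgame argument splitting on the alternation counter after Spoiler's first move), and then reduce the closure-under-\ref{eq:oper} step to a bookkeeping combination of it with Lemmas~\ref{lem:efconcat} and~\ref{lem:aperiodic}. Once the three closure properties and the base case are established, minimality of $\Sat_n(\fI_n)$ closes the proof.
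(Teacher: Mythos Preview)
Your proposal is correct and follows essentially the same approach as the paper: fix $k$, show that $\fCgen2{k}{n}[\alpha]$ contains $\fI_n$ and is closed under \eqref{eq:dwn}, \eqref{eq:mul}, \eqref{eq:oper}, then conclude by minimality of $\Sat_n(\fI_n)$. The paper's treatment of the key step \eqref{eq:oper} is slightly more direct---it takes $W = u^{2h}$ with $h = \omega \cdot 2^{2k}$ (so that $s^{2h} = s^\omega$ algebraically, without invoking Lemma~\ref{lem:aperiodic}) and reduces immediately to $u^{h}u^{h} \ksieq{2} u^{h}v_2u^{h}$ via Lemma~\ref{lem:siprop} applied to $v_2 \sieq{k}{1} u^{2^k}$---but your route through~\eqref{eq:djun} reaches the same point.
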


Since by definition, $\fCtwon[\alpha] = \bigcap_{k \in \nat}
\fCgen2{k}{n}[\alpha]$, it is immediate from
Proposition~\ref{prop:correc} that $\Sat_n(\fI_n) \subseteq
\fCtwon[\alpha]$ which terminates the soundness proof.

\begin{proof}[of Proposition~\ref{prop:correc}]
  Let $k \in \nat$. It is immediate from the definitions that
  $\fI_n \subseteq \fCgen2{k}{n}[\alpha]$.  Hence, by definition of $\Sat_n$, it
  suffices to prove that $\fCgen2{k}{n}[\alpha]$ is closed under
  Operations~\eqref{eq:dwn}, \eqref{eq:mul} and~\eqref{eq:oper}, \emph{i.e.}, that
  \begin{enumerate}
  \item\label{item:5}
   $\downclos \fCgen2{k}{n}[\alpha] \subseteq \fCgen2{k}{n}[\alpha]$.
  \item\label{item:6}
    $\fCgen2{k}{n}[\alpha] \cdot \fCgen2{k}{n}[\alpha] \subseteq
    \fCgen2{k}{n}[\alpha]$.
  \item\label{item:7} for all $(s,\Ss) \in \fCgen2{k}{n}[\alpha]$, if
    $\Ts = \{(t_1,\dots,t_{n-1}) \in \Cs_{2,n-1}[\alpha] \mid \content{t_1} =
    \content{s}\}$, then
    \[
      (s,\Ss)^\omega \cdot (1_M,\Ts) \cdot (s,\Ss)^{\omega} \in \fCgen2{k}{n}[\alpha].
    \]
  \end{enumerate}

  \smallskip Item~\ref{item:5} is exactly Fact~\ref{fct:subsets}:
  $\fCgen2{k}{n}[\alpha]$ is closed under subsets. That~\ref{item:6}
  holds follows from Fact~\ref{fct:chaincomp2}:
  $\fCgen2{k}{n}[\alpha]$ is a submonoid of~$M \times 2^{M^{n-1}}$.
  It remains to prove
  Item~\ref{item:7}. 
  For this, set $(s,\Ss)$ and $\Ts$ as in Item~\ref{item:7}. Let
  $B = \content{s}$. Let
  \begin{equation*}
    (r,\Rs) = (s,\Ss)^\omega \cdot (1_M,\Ts) \cdot (s,\Ss)^{\omega}.
  \end{equation*}
  We have to prove that $\Rs$ belongs to $\fCgen2{k}{n}[\alpha]$.

  Let $h = \omega \times 2^{2k}$, where $\omega=\omega(M \times 2^{M^{n-1}})$, so
  that by definition $(s,\Ss)^\omega =
  (s,\Ss)^h=  (s,\Ss)^{2h}$. Therefore:
  \begin{equation*}
    (r,\Rs) = (s,\Ss)^h \cdot (1_M,\Ts) \cdot (s,\Ss)^{h}.
  \end{equation*}
  Since $(s,\Ss)\in \fCgen2{k}{n}[\alpha]$, there exists a $k$-witness $u \in A^*$ for $(s,\Ss)$, \emph{i.e.}, such that
  \begin{equation*}
    \alpha(u) = s,
  \end{equation*}
  and for every \dchain $(s_2,\dots,s_n) \in \Ss$ there exist
  $u_2,\dots,u_n \in A^*$ satisfying
  \begin{equation}
    \label{eq:1}
    \begin{cases}
      u \ksieq{2} u_2 \ksieq{2} \cdots \ksieq{2} u_n,\\
      \forall j,\ \alpha(u_j)=s_j.
    \end{cases}
  \end{equation}
  Observe that $\alpha(u) = s$ implies that $\content{u} = \content{s} = B$.
  Let $$w = u^{2h},$$ so that $\content{w} = \content{u} = B$ and
  $\alpha(w) = \alpha(u)^{2h} = s^{2h} = r$. We prove that
  $(r,\Rs) \in \fCgen2{k}{n}[\alpha]$ with $w$ as $k$-witness. It suffices to show that for any \chain
  $(r_2,\dots,r_n) \in \Rs$, there exist $w_2,\dots,w_n \in A^*$ satisfying
  $w \ksieq{2} w_2 \ksieq{2} \cdots \ksieq{2} w_n$ and such that $\alpha(w_j)=r_j$ for all $j$.

  \smallskip Let $(r_2,\dots,r_n) \in \Rs$. By definition of $\Rs$, we have
  $(r_2,\dots,r_n) = (s'_2t_2s''_2,\dots,s'_nt_ns''_n)$ with $(s'_2,\dots,s'_n),$
  $(s''_2,\dots,s''_n) \in \Ss^{h}$ and $(t_2,\dots,t_n) \in \Ts$. Since
  $(s'_2,\dots,s'_n) \in
  \Ss^{h}$, 
  using $h$ times~\eqref{eq:1} and the fact that $\ksieq{2}$ is a
  pre-congruence (Lemma~\ref{lem:efconcat}), we obtain words
  $u'_2,\dots,u'_n \in A^*$ such that
  \begin{equation}
    \label{eq:2}
    \begin{cases}
      u^{h} \ksieq{2} u'_2 \ksieq{2} \cdots \ksieq{2} u'_n\\
      \forall j,\ \alpha(u'_j)=s'_j.
    \end{cases}
  \end{equation}
  Similarly, since $(s''_2,\dots,s''_n) \in
  \Ss^{h}$ we get $u''_2,\dots,u''_n\in A^*$ such that
  \begin{equation}
    \label{eq:3}
    \begin{cases}
      u^{h} \ksieq{2} u''_2 \ksieq{2} \cdots \ksieq{2} u''_n\\
      \forall j,\ \alpha(u''_j)=s''_j.
    \end{cases}
  \end{equation}
  On the other hand, since $(t_2,\dots,t_n) \in \Ts$, we obtain that
  $\content{t_2} = B$ and $(t_2,\dots,t_n) \in \Cstwolen{n-1}[\alpha]$. Hence, we
  get words $v_2,\dots,v_n \in A^*$, such that
  \begin{equation}
    \label{eq:4}
    \begin{cases}
      v_2 \ksieq{2} \cdots \ksieq{2} v_n\\
      \forall j\geqslant 2,\ \alpha(v_j)=t_j.
    \end{cases}
  \end{equation}
  Observe that this implies in particular that $\content{v_2} = B$. For all
  $j \geqslant 2$, set
  \begin{equation*}
    w_j = u'_jv^{}_ju''_j.
  \end{equation*}
  Note that for all $j \geqslant 2$, $\alpha(w_j) = s'_jt_js''_j = r_j$. It remains to prove
  that $w \ksieq{2} w_2 \ksieq{2} \cdots \ksieq{2} w_n$ to terminate the
  proof. That $w_2 \ksieq{2} \cdots \ksieq{2} w_n$ is immediate by~\eqref{eq:2},
  \eqref{eq:3} and \eqref{eq:4}, since $\ksieq2$
  is a pre-congruence (by Lemma~\ref{lem:efconcat} again). Since $w = u^{2h}$, the
  remaining inequality to prove~is
  \begin{equation}
    \label{eq:5}
    u^{h}u^{h} \ksieq{2} u'_2v^{}_2u''_2.
  \end{equation}
  Since $\ksieq2$ is a pre-congruence by Lemma~\ref{lem:efconcat}, we know by \eqref{eq:2},
  \eqref{eq:3} and \eqref{eq:4} that
  $u^{h} v_2 u^{h} \ksieq{2} u'_2v^{}_2u''_2$. Therefore to
  establish~\eqref{eq:5}, it suffices to prove that
  \begin{equation}
    \label{eq:6}
    u^{h}u^{h} \ksieq{2} u^{h}v_2u^{h}.
  \end{equation}
  Recall that by definition $\content{v_2} = \content{u} = B$. Therefore, it is
  straightforward that
  \begin{equation}
    \label{eq:inegcorrec}
    v_2 \sieq{k}{1} u^{2^k}.
  \end{equation}
  Now,~\eqref{eq:6} follows from Lemma~\ref{lem:siprop}, in view of the choice of
  $h = \omega \times 2^{2k}$ and of~\eqref{eq:inegcorrec}.
\end{proof}
 
\subsection{Completeness of the Algorithm}
\label{sec:compl-algor}
We prove that for any $\ell \geqslant \ell_{2,n} = 9n|M|^2 \cdot
2^{|M|^{n-1}}$, we have $\fCgen2{\ell}{n}[\alpha]
\subseteq \Sat_n(\fI_n)$.  We
denote by $k_n$ the size of the set of \juns of length $n$, \emph{i.e.}, $$k_n
= |M \times 2^{M^{n-1}}|.$$ In particular, this means that $\ell_{2,n}
= 9n|M|k_n$. The proof is by induction and relies on Simon's
Factorization Forests Theorem. To state the induction, we need more
terminology.

\medskip
\noindent
{\bf Generated \Juns.} Set $k \in \nat$, $w \in A^*$. We set $\fg_n^k(w) \in M \times 2^{M^{n-1}}$ as the  maximal
\jun  of $\fCgen{2}{k}{n}[\alpha]$ that has $w$ as a $k$-witness. Formally,
setting $\fg_n^k(w)=(\alpha(w),\Gs)$, we have $(t_2,\dots,t_{n}) \in \Gs$ if and only if
there exist $w_2,\dots,w_{n} \in A^*$ satisfying
\begin{itemize}
\item for all $j$, $\alpha(w_j)=t_j$.
\item $w \ksieq{2} w_2 \ksieq{2} \cdots \ksieq{2} w_{n}$.
\end{itemize}
By definition, any $\fg_n^k(w)$ is a \dkjun of length $n$: $\fg_n^k(w)
\in \fCgen{2}{k}{n}[\alpha]$. Moreover, by definition we have
\[
  \fCgen{2}{k}{n}[\alpha] = \downclos \big\{\fg_n^k(w) \mid w \in A^*\big\}.
\]
We illustrate this definition with two lemmas that will be useful in the
proof. The first one states that $\fg_n^k(w)$ gets smaller as $k$
gets larger.
\begin{lemma} \label{lem:genref}
  Let $w \in A^*$, $n \in \nat$ and $k < \ell$. We have,
  \[
    \fg_n^{\ell}(w) \subseteq \fg_n^{k}(w).
  \]
\end{lemma}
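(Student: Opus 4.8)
The plan is to unwind both sides of the desired inclusion $\fg_n^{\ell}(w) \subseteq \fg_n^{k}(w)$ directly from the definition of the generated \jun, using only the fact that $\ksieq{2}$ refines itself as the quantifier rank increases. Recall that $\fg_n^{k}(w) = (\alpha(w),\Gs_k)$ where $(t_2,\dots,t_n) \in \Gs_k$ iff there exist witnesses $w_2,\dots,w_n$ with $\alpha(w_j) = t_j$ and $w \sieq{k}{2} w_2 \sieq{k}{2} \cdots \sieq{k}{2} w_n$, and similarly $\fg_n^{\ell}(w) = (\alpha(w),\Gs_{\ell})$. Both \juns have the same first component $\alpha(w)$, so by the definition of $\subseteq$ on \juns it suffices to show $\Gs_{\ell} \subseteq \Gs_k$.

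First I would take an arbitrary \chain $(t_2,\dots,t_n) \in \Gs_{\ell}$, so that there are words $w_2,\dots,w_n$ with $\alpha(w_j) = t_j$ for all $j$ and $w \sieq{\ell}{2} w_2 \sieq{\ell}{2} \cdots \sieq{\ell}{2} w_n$. The key point is the first inclusion in Lemma~\ref{fct:chainref2} (equivalently, the elementary observation that a \siw{2} formula of rank $k$ is in particular a \siw{2} formula of rank $\ell$ when $k < \ell$): for any $u,u' \in A^*$, $u \sieq{\ell}{2} u'$ implies $u \sieq{k}{2} u'$. Applying this to each of the $n-1$ comparisons $w \sieq{\ell}{2} w_2$, $w_2 \sieq{\ell}{2} w_3$, \dots, $w_{n-1} \sieq{\ell}{2} w_n$ yields $w \sieq{k}{2} w_2 \sieq{k}{2} \cdots \sieq{k}{2} w_n$, while the conditions $\alpha(w_j) = t_j$ are untouched. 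Hence the same witnesses $w_2,\dots,w_n$ certify that $(t_2,\dots,t_n) \in \Gs_k$. Since $(t_2,\dots,t_n)$ was arbitrary, $\Gs_{\ell} \subseteq \Gs_k$, and therefore $\fg_n^{\ell}(w) \subseteq \fg_n^{k}(w)$.

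There is essentially no obstacle here: the statement is a routine monotonicity fact, and the only ingredient is the rank-refinement of $\ksieq{2}$, which is already packaged in Lemma~\ref{fct:chainref2} (and was established earlier via Lemma~\ref{lem:efgame} and the syntactic definition of $\sieq{k}{i}$). The proof is a one-line unfolding of definitions once that refinement is invoked; the only mild care needed is to note that the first components of the two \juns coincide so that the $\subseteq$ relation on \juns reduces to the set inclusion $\Gs_{\ell} \subseteq \Gs_k$.
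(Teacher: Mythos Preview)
Your proof is correct and follows exactly the same approach as the paper's: both rely on the single observation that $u \sieq{\ell}{2} v$ implies $u \sieq{k}{2} v$ when $k < \ell$, and the rest is unfolding the definition of $\fg_n^k(w)$. The paper compresses this into one sentence, while you spell out the reduction to $\Gs_\ell \subseteq \Gs_k$ and the witness argument explicitly; one small remark is that the rank-refinement fact is more directly the definition of $\ksieq{2}$ than Lemma~\ref{fct:chainref2} (which is a consequence of it), but you already note the elementary justification in parentheses.
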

\begin{proof}
  Immediate from the fact that if $k < \ell$, then for all $u,v$, $u \sieq{\ell}{2} v
  \Rightarrow u \ksieq{2} v$.
\end{proof}

\noindent
Our second lemma is a decomposition result that we will use several times.

\begin{lemma}[Decomposition Lemma] \label{lem:gendecomp}
  Let $w,w' \in A^*$ and $k \geqslant 1$. Then
  \[
    \fg_n^k(ww') \subseteq \fg_n^{k-1}(w) \cdot
    \fg_n^{k-1}(w').
  \]
\end{lemma}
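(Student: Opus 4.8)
The plan is to unfold the definition of $\fg_n^k(ww')$ and show that any \chain appearing in it can be "cut" at the boundary between the $w$-part and the $w'$-part into a \chain coming from $\fg_n^{k-1}(w)$ and one coming from $\fg_n^{k-1}(w')$, whose componentwise product recovers the original \chain. Concretely, write $\fg_n^k(ww')=(\alpha(ww'),\Gs)$ and take $(t_2,\dots,t_n)\in\Gs$. By definition there are words $z_2,\dots,z_n\in A^*$ with $\alpha(z_j)=t_j$ and
\[
  ww' \ksieq{2} z_2 \ksieq{2} \cdots \ksieq{2} z_n.
\]
The key step is to produce, from each inequality $ww'\ksieq2 z_2$ and more generally $z_{j-1}\ksieq2 z_j$ (with $z_1:=ww'$), a factorization $z_j=x_jy_j$ such that $w\ksieq{2}' x_2\ksieq{2}' \cdots$ and $w'\ksieq{2}' y_2\ksieq{2}' \cdots$ at rank $k-1$. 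This is the standard \efgame "splitting" argument: given a winning strategy for Duplicator in the rank-$k$ $\siw2$ game on $(z_{j-1},z_j)$ where $z_{j-1}$ factors as a concatenation, Duplicator can, at the cost of one pebble (hence dropping to rank $k-1$), commit to a factorization point of $z_j$ matching the factorization point of $z_{j-1}$, and thereafter play the two sub-games independently. One has to be slightly careful that the split positions chosen along the chain $z_1,z_2,\dots,z_n$ are consistent, i.e. that the prefix of $z_{j}$ we cut off is $\ksieq{2}$-above (at rank $k-1$) the prefix we cut off $z_{j-1}$; this follows because Duplicator's response positions in the splitting argument are forced to respect order, and the factorization point of $z_{j-1}=x_{j-1}y_{j-1}$ is itself one of the "positions" that can be transferred.

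Having obtained such factorizations $z_j=x_jy_j$, set $a_j=\alpha(x_j)$ and $b_j=\alpha(y_j)$, so that $t_j=a_jb_j$ and $\alpha(x_2)=a_2$ etc. By construction $w=x_1'y_1'$ trivially with $x_1'=w$, $y_1'=w'$; more precisely after the splitting we have
\[
  w \sieq{k-1}{2} x_2 \sieq{k-1}{2} \cdots \sieq{k-1}{2} x_n,
  \qquad
  w' \sieq{k-1}{2} y_2 \sieq{k-1}{2} \cdots \sieq{k-1}{2} y_n,
\]
which says exactly that $(a_2,\dots,a_n)$ is a \chain witnessed by $w$ at rank $k-1$ and $(b_2,\dots,b_n)$ one witnessed by $w'$ at rank $k-1$. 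Hence $(a_2,\dots,a_n)$ belongs to the \chain-set of $\fg_n^{k-1}(w)$ and $(b_2,\dots,b_n)$ to that of $\fg_n^{k-1}(w')$. Since $\alpha(ww')=\alpha(w)\alpha(w')$ matches the first components, and $(t_2,\dots,t_n)=(a_2b_2,\dots,a_nb_n)$, we conclude $(t_2,\dots,t_n)$ lies in the \chain-set of the product \jun $\fg_n^{k-1}(w)\cdot\fg_n^{k-1}(w')$. As this holds for every $(t_2,\dots,t_n)\in\Gs$, we get $\fg_n^k(ww')\subseteq \fg_n^{k-1}(w)\cdot\fg_n^{k-1}(w')$, using also that $\downclos$ of the right-hand side is itself (Fact~\ref{fct:subsets}), though in fact the containment is already at the level of the \jun itself.

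\textbf{Main obstacle.} The delicate point is the coherence of the split along the whole chain $z_1\ksieq2 z_2 \ksieq2 \cdots \ksieq2 z_n$: a naive argument would split each inequality $z_{j-1}\ksieq2 z_j$ independently and obtain a factorization point of $z_j$ relative to $z_{j-1}$, but to glue everything one needs a single factorization $z_j=x_jy_j$ that works simultaneously as "target" for the split of $z_{j-1}\ksieq2 z_j$ and as "source" for the split of $z_j\ksieq2 z_{j+1}$. The clean way to handle this is to perform the splitting inductively from left to right: having fixed the factorization point in $z_{j-1}$, treat it as a marked position and carry it through the rank-$k$ game on $(z_{j-1},z_j)$ — Duplicator's (mandatory, order-respecting) answer to that marked position defines the factorization point of $z_j$, at the cost of exactly one pebble across the \emph{whole} sequence of games, since the transferred position is the same logical object each time. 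This is why one loses only a single unit of quantifier rank ($k\rightsquigarrow k-1$) rather than $n-1$ of them. I would isolate this as the core \efgame sub-lemma and then the rest is bookkeeping with $\alpha$ and Fact~\ref{fct:chaincomp2}.
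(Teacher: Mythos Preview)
Your proposal is correct and follows essentially the same route as the paper: take a chain $(t_2,\dots,t_n)$ in $\fg_n^k(ww')$, get witnessing words $z_2,\dots,z_n$, and inductively split each $z_j=x_jy_j$ by having Spoiler spend the first pebble on the boundary position in $z_{j-1}$, so that Duplicator's answer fixes the split point in $z_j$.

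One small clarification on your ``Main Obstacle'' paragraph: your worry about coherence is the right one and your inductive resolution is exactly what the paper does, but the reason you drop only from $k$ to $k-1$ (and not to $k-n+1$) is simpler than your phrasing suggests. It is not that ``the transferred position is the same logical object'' across games; rather, each of the $n-1$ relations $z_{j-1}\ksieq{2} z_j$ is \emph{independently} a rank-$k$ game, and in each one you spend exactly one round on the split, leaving rank $k-1$ for both halves. There is no accumulation because you are not composing the games---you are splitting $n-1$ separate rank-$k$ games, each into two rank-$(k-1)$ games. The references to Fact~\ref{fct:subsets} and Fact~\ref{fct:chaincomp2} at the end are unnecessary; the containment of \juns is direct once you have the splits.
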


\begin{proof}
  Set $(r,\Rs) = \fg_n^k(ww')$, $(s,\Ss) = \fg_n^{k-1}(w)$ and
  $(t,\Ts) = \fg_n^{k-1}(w')$. By definition, $r=\alpha(ww')$,
  $s=\alpha(w)$ and $t =\alpha(w')$, hence $r=st$. It remains to
  prove that $\Rs \subseteq \Ss \cdot \Ts$. Let $(r_2,\ldots,r_n) \in \Rs$. By
  definition, there exist $u_2,\dots,u_n$ such that for all $j$,
  $\alpha(u_j) = r_j$, and
  \begin{equation}
    w\cdot w' \sieq{k}{2} u_2 \sieq{k}{2} \cdots \sieq{k}{2} u_n.\label{eq:cut}
  \end{equation}
  Using $(n-1)$ times a simple \efgame argument, one for each $\sieq{k}{2}$
  relation in \eqref{eq:cut}, we obtain that all words $u_j$ can be
  decomposed as $u_j = v^{}_{j} \cdot v'_{j}$ such that
  \[
    \begin{array}{lllll}
      w & \sieq{k-1}{2} & v_{2}  & \sieq{k-1}{2} \cdots \sieq{k-1}{2} & v_{n} \\[1.2ex]
      w' & \sieq{k-1}{2} & v'_{2} & \sieq{k-1}{2} \cdots \sieq{k-1}{2} & v'_{n}
    \end{array}
  \]
  For instance, $v^{}_2$ and $v'_2$ are obtained by playing the $k$-round \efgame
  game over $w\cdot w'$ and $u_2$, where the first move of Spoiler is to play in
  $ww'$ on the first letter of $w'$. The answer of Duplicator in $u_2$
  splits this word into two factors, $v^{}_2$ and $v'_2$.

  Now for all $j$, set $s_{j} = \alpha(v_{j})$ and  $t_{j} =
  \alpha(v'_{j})$. By definition, we have $(s_2,\dots,s_n) \in \Ss$ and
  $(t_2,\dots,t_n) \in  \Ts$. Moreover, by definition
  \[
    (r_2,\dots,r_n) = (s_2t_2,\dots,s_nt_n).
  \]
  It follows that $(r_2,\dots,r_n) \in \Ss \cdot \Ts$ which terminates
  the proof.
\end{proof}

We can now prove that
$\fCgen2{\ell}{n}[\alpha] \subseteq \Sat_n(\fI_n)$ when
$\ell \geqslant \ell_{2,n}$ (recall that $\ell_{2,n} = 9n|M|k_n$, where
$k_n=|M\times 2^{M^{n-1}}|$). This is a consequence of the next proposition.

\begin{proposition} \label{prop:comp}
  Set $\ell \geqslant  \ell_{2,n}$ and $w \in A^*$, then $\fg_n^\ell(w) \in
  \Sat_n(\fI_n)$.
\end{proposition}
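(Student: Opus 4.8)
The plan is to prove Proposition~\ref{prop:comp} by induction on the height $h$ of an $\alpha$-factorization forest of $w$, which may be taken to be at most $3|M|-1$ by Theorem~\ref{thm:facto}. I would in fact prove a slightly stronger statement: there is a nondecreasing function $g$ with $g(1)\leq 2$, increasing by an additive constant of order $1$ across a binary node and of order $n\,k_n$ across an idempotent node (where $k_n=|M\times 2^{M^{n-1}}|$), such that $g(3|M|-1)\leq\ell_{2,n}=9n|M|k_n$ and such that: whenever $w$ admits an $\alpha$-factorization forest of height $\leq h$ and $\ell\geq g(h)$, then $\fg_n^\ell(w)\in\Sat_n(\fI_n)$. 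The stated bound $\ell_{2,n}$ then falls out of this accounting, using that the number of idempotent nodes on any root-to-leaf path is itself bounded by a constant times $|M|$.

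\emph{Leaf and binary nodes.} If $h=1$, then $w$ is a single letter $a$ or the empty word; since $\ell\geq 2$, $w$ is defined by a $\sic2$ sentence of rank at most $2$ (``there is exactly one position and it is labelled $a$'', resp.\ ``there is no position'', the $\picu$ subformula being in $\sic2$ as well), so $\fg_n^\ell(w)$ is the trivial juncture $(\alpha(w),\{(\alpha(w),\dots,\alpha(w))\})\in\fI_n$. If the root is binary, $w=w_1w_2$ with $w_1,w_2$ of height at most $h-1$; the Decomposition Lemma (Lemma~\ref{lem:gendecomp}) gives $\fg_n^\ell(w)\subseteq\fg_n^{\ell-1}(w_1)\cdot\fg_n^{\ell-1}(w_2)$, and by the induction hypothesis (as $\ell-1\geq g(h-1)$) both factors lie in $\Sat_n(\fI_n)$, hence so does their product (Operation~\ref{eq:mul}) and so does $\fg_n^\ell(w)$ by closure under subsets (Operation~\ref{eq:dwn}).

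\emph{Idempotent node.} This is the crux. Here $w=w_1\cdots w_m$ with $\alpha(w_i)=e=e^2$ for all $i$; since $\alpha$ is alphabet compatible, all the $w_i$ --- hence $w$ itself --- have the same content $B:=\content{e}$. A first useful observation is that every word $x$ with $w\sieq{\ell}{2}x$ satisfies $\content{x}=B$: apply the $\sicu$ sentences $\exists y\,P_b(y)$ for $b\in B$ together with the $\picu$ (hence $\sic2$) sentence $\forall y\,\bigvee_{b\in B}P_b(y)$. Each child $w_i$ has height at most $h-1$, so by induction $(e,\Ss_i):=\fg_n^{\ell'}(w_i)\in\Sat_n(\fI_n)$ for a common budget $\ell'$ slightly below $\ell$, and the diagonal chain $(e,\dots,e)$ lies in each $\Ss_i$. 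Let $\Ts=\{(t_2,\dots,t_n)\in\Cs_{2,n-1}[\alpha]\mid\content{t_2}=B\}$ be the set attached to $s=e$ in Operation~\ref{eq:oper}. The goal is to reduce everything to a single instance of that operation: one shows
\[
  \fg_n^\ell(w)\ \subseteq\ (e,\Ss_1)^\omega\cdot(1_M,\Ts)\cdot(e,\Ss_1)^\omega ,
\]
which, since $(e,\Ss_1)\in\Sat_n(\fI_n)$, puts the right-hand side into $\Sat_n(\fI_n)$ by Operation~\ref{eq:oper}, and then Operation~\ref{eq:dwn} closes the idempotent case and the induction.

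\emph{Proof of the inclusion, and the main obstacle.} Fix $(r_2,\dots,r_n)\in\fg_n^\ell(w)$ with witnesses $w\sieq{\ell}{2}w_2\sieq{\ell}{2}\cdots\sieq{\ell}{2}w_n$ and $\alpha(w_j)=r_j$; by the content remark each $\content{w_j}=B$, and since $\ell\geq\ell_{2,n-1}$ while the length-$(n-1)$ chains stabilise at rank $\ell_{2,n-1}$ (Fact~\ref{fct:inducomp}, i.e.\ the induction on~$n$), the chain of the $w_j$'s already shows $(r_2,\dots,r_n)\in\Cs_{2,n-1}^{\ell}[\alpha]=\Cs_{2,n-1}[\alpha]$. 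It then remains to exhibit a factorisation $r_j=s'_jt_js''_j$ with $(s'_j)_j,(s''_j)_j\in\Ss_1^\omega$ and $(t_j)_j\in\Ts$; the intended mechanism is to split each witness $w_j$ along the splitting $w=w_1\cdot(w_2\cdots w_m)$ (and symmetrically at the right end), using the Decomposition Lemma argument, and then to absorb the peeled pieces into powers of $w_1$ via the $\sic2$-property Lemma (Lemma~\ref{lem:siprop}), together with the elementary fact $v\sieq{k}{1}w_1^{2^k}$ valid whenever $\content{v}=B=\content{w_1}$, and the fact that $\sieq{\ell}{2}$ is a pre-congruence (Lemma~\ref{lem:efconcat}) --- this is precisely steps (\ref{eq:4})--(\ref{eq:6}) of the soundness proof, run backwards. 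The main obstacle is to make this peeling terminate without spending an unbounded number of rounds, since one cannot afford to peel all $m$ children one at a time: the idempotency of $e$, the common content $B$, and Operation~\ref{eq:oper} must be combined in an inner induction on the number of children so that, after removing only a bounded prefix and suffix of children, the remainder is ``swallowed'' by the $(1_M,\Ts)$-block. Carrying out this inner argument while simultaneously tracking the round budget --- so that the total loss over the $\leq 3|M|$ forest levels stays below $\ell_{2,n}=9n|M|k_n$, and never drops below $\ell_{2,n-1}$, which is needed to invoke the stabilisation of the length-$(n-1)$ chains --- is the delicate part of the proof.
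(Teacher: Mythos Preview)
Your outer induction on the height of an $\alpha$-factorization forest is exactly the paper's scaffolding, and your treatment of leaves and binary nodes is correct and matches the paper (Proposition~\ref{prop:compi}, Case~1). The gap is in the idempotent case.

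The target inclusion you aim for,
\[
  \fg_n^\ell(w)\ \subseteq\ (e,\Ss_1)^\omega\cdot(1_M,\Ts)\cdot(e,\Ss_1)^\omega,
\]
is not the right object, and your sketch does not establish it. Peeling $w=w_1\cdot(w_2\cdots w_m)$ via the Decomposition Lemma buys you \emph{one} factor of $\Ss_1$ on the left (and, symmetrically, one factor of $\Ss_m$---not $\Ss_1$---on the right), whereas Operation~\eqref{eq:oper} demands the \emph{same} element raised to the $\omega$-th power on both sides. There is no reason for $\Ss_1\cdot\Ts$ to be contained in $\Ss_1^\omega\cdot\Ts\cdot\Ss_1^\omega$: in general $\Ss_1\not\subseteq\Ss_1^\omega$, and multiplying by the constant chain $(e,\dots,e)\in\Ss_1^\omega$ replaces each $r_j$ by $er_je$, which need not equal $r_j$. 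Lemma~\ref{lem:siprop} does not help here: it produces inequalities of the form $u^{2h}\ksieq{2}u^hvu^h$, which is the direction used in the \emph{soundness} proof, not its converse; ``running it backwards'' is not available. Finally, the ``inner induction on the number of children'' you allude to cannot work as stated, since $m$ is unbounded and each peel costs a round.

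The missing idea is a different inner induction. The paper does \emph{not} reduce to a single instance of Operation~\eqref{eq:oper} with a fixed base. Instead, it applies pigeonhole to the partial products $\fg_n^{p}(w_1)\cdots\fg_n^{p}(w_j)$ inside the finite monoid $M\times 2^{M^{n-1}}$ (of size $k_n$) to find, within the first $k_n$ children, an \emph{idempotent} $(e,\Es)\in\Sat_n(\fI_n)$ that can be \emph{inserted} into the product without changing it (Fact~\ref{fct:insert}). One then takes the \emph{largest} position at which this particular idempotent can be inserted and splits there: the right block has strictly fewer insertable idempotents---this count, the \emph{index} of the decomposition, is the actual inner induction parameter (Lemma~\ref{lem:inducase3})---while in the left block one inserts $(e,\Es)$ twice to manufacture the pattern $(e,\Es)\cdot(e,\Ts')\cdot(e,\Es)$ and only then invokes Operation~\eqref{eq:oper} with $(e,\Es)$ (not $(e,\Ss_1)$) as the base. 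Because the index is bounded by $k_n$, the total round loss per idempotent level is $O(k_n)$, which is what makes the bound $\ell_{2,n}=9n|M|k_n$ come out.
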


It is immediate from Proposition~\ref{prop:comp} that for any
$\ell \geqslant \ell_{2,n}$,
$\big\{\fg_n^\ell(w) \mid w \in A^*\big\} \subseteq \Sat_n(\fI_n)$. Since we
know that
$\fCgen{2}{\ell}{n}[\alpha] = \downclos \big\{\fg_n^\ell(w) \mid w \in A^*\big\}$, we
obtain that $\fCgen2{\ell}{n}[\alpha] \subseteq \downclos
\Sat_n(\fI_n) = \Sat_n(\fI_n)$ which
yields completeness.

\smallskip
It remains therefore to prove Proposition~\ref{prop:comp}. The proof is once
again by induction on the height of the $\alpha$-factorization forest of $w$.
We state the induction in the following proposition. Recall again that
$k_n = |M \times 2^{M^{n-1}}|$ is the size of the set of \juns of length $n$.

\begin{proposition} \label{prop:compi}
  Let $h \geqslant 1$ and let $k \geqslant h \cdot 3k_n + \ell_{2,n-1}$. Then
  for any $w \in A^*$ that admits an $\alpha$-factorization forest of
  height at most $h$, we have $\fg_n^k(w) \in \Sat_n(\fI_n)$.
\end{proposition}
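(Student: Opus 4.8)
The plan is to argue by induction on $h$. Write $N=M\times 2^{M^{n-1}}$ for the monoid of \juns of length $n$, so $|N|=k_n$ and $\ell_{2,n}=9n|M|k_n$. For $h=1$, the word $w$ is a single letter or $\varepsilon$; each such word is the unique model of a $\siwd$ formula of quantifier rank $2$ (``the word is empty'' and ``there is exactly one position, labelled $a$'' are $\Pi_1$, hence $\Sigma_2$, formulas), so since $k\geq 3k_n+\ell_{2,n-1}\geq 2$ we get $\fg_n^k(w)=(\alpha(w),\{(\alpha(w),\dots,\alpha(w))\})\in\fI_n$. For $h\geq 2$, fix an $\alpha$-factorization forest of minimal height for $w$. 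If its height is $<h$, the induction hypothesis for $h-1$ applies to $w$ with the same $k$ (as $k\geq(h-1)3k_n+\ell_{2,n-1}$). If its root is a binary node, $w=w_1w_2$ with each $w_i$ of forest height $\leq h-1$, and the Decomposition Lemma (Lemma~\ref{lem:gendecomp}) gives $\fg_n^k(w)\subseteq\fg_n^{k-1}(w_1)\cdot\fg_n^{k-1}(w_2)$; since $k-1\geq(h-1)3k_n+\ell_{2,n-1}$, the induction hypothesis, closure under products \eqref{eq:mul} and closure under downsets \eqref{eq:dwn} finish this case.

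The crux is the idempotent-node case: $w=w_1\cdots w_m$ with $\alpha(w_1)=\cdots=\alpha(w_m)=e$ idempotent and each $w_j$ of forest height $\leq h-1$. The plan is first to set $k'=k-3k_n\geq (h-1)3k_n+\ell_{2,n-1}$ and $g_j=\fg_n^{k'}(w_j)$, so that each $g_j\in\Sat_n(\fI_n)$ by the induction hypothesis and all $g_j$ have first component $e$; and then to \emph{re-factorize} the product: apply Simon's Factorization Forest Theorem (Theorem~\ref{thm:facto}) to the evaluation morphism $N^*\to N$ and to the word $g_1\cdots g_m\in N^*$, obtaining a factorization forest $F$ of height at most $3|N|-1=3k_n-1$ (this is precisely why $3k_n$ rounds are budgeted per level of the $\alpha$-factorization forest). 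Viewing $F$ as a bracketing of $w_1\cdots w_m$, I would run a secondary induction along $F$: for a node corresponding to a factor $u$ with $\mathrm{ev}$-value $\nu\in N$, sitting at rank level $k''$ ($=k$ minus the depth of the node, so $k''\geq k'+1$), one shows $\fg_n^{k''}(u)\in\Sat_n(\fI_n)$ (and, in a strengthened form, $\fg_n^{k''}(u)\subseteq\nu$). Leaves of $F$ are single $w_j$'s, covered by the induction hypothesis for $h-1$; binary nodes of $F$ are handled like the binary case above (Decomposition Lemma, \eqref{eq:mul}, \eqref{eq:dwn}); the only delicate nodes are idempotent ones.

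At an idempotent node of $F$, labelled by an idempotent $\epsilon=(e,\Ss)\in N$ with children $W_1,\dots,W_t$: $\epsilon$ is a product of $g_j$'s, hence lies in $\Sat_n(\fI_n)$ by \eqref{eq:mul}, so operation \eqref{eq:oper} puts $\epsilon^{\omega}\cdot(1_M,\Ts)\cdot\epsilon^{\omega}$ into $\Sat_n(\fI_n)$, where $\Ts=\{(t_1,\dots,t_{n-1})\in\Cstwolen{n-1}[\alpha]\mid\content{t_1}=\content{e}\}$; since $\epsilon$ is idempotent in $N$ and $e$ in $M$, this element equals $(e,\Ss\Ts\Ss)$. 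By closure under downsets it remains to show $\fg_n^{k''}(W_1\cdots W_t)\subseteq(e,\Ss\Ts\Ss)$. Given a \chain $(r_2,\dots,r_n)$ of the left-hand side with witnesses $W_1\cdots W_t\ksieq{2}v_2\ksieq{2}\cdots\ksieq{2}v_n$, one uses $\Sigma_2$-game arguments (in the spirit of the Decomposition Lemma, but splitting into three pieces) to cut each $v_j=v_j^{\mathrm p}v_j^{\mathrm c}v_j^{\mathrm s}$, by marking in $W_1\cdots W_t$ the boundary after the first $\omega(N)$ blocks and the boundary before the last $\omega(N)$ blocks. The prefix and suffix \chains $(\alpha(v_j^{\mathrm p}))_j,(\alpha(v_j^{\mathrm s}))_j$ land in $\Ss^{\omega}$ (this is where the strengthened secondary induction and the idempotency of $\epsilon$ in $N$ are used, via the same mechanism that makes \eqref{eq:oper} correct). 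For the central \chain $(\alpha(v_j^{\mathrm c}))_j$: since $\alpha(W_1\cdots W_t)=e$ and $\alpha$ is alphabet compatible, every block has content $\content{e}$; because ``all positions carry a letter of $\content{e}$'' is a $\Pi_1$ (hence $\Sigma_2$) formula, $\ksieq{2}$ forces $\content{v_j^{\mathrm c}}=\content{e}$, so $\content{\alpha(v_2^{\mathrm c})}=\content{e}$; moreover $v_2^{\mathrm c}\ksieq{2}\cdots\ksieq{2}v_n^{\mathrm c}$ witnesses $(\alpha(v_j^{\mathrm c}))_j\in\Csgen2{k''-2}{n-1}[\alpha]$, and since $k''-2\geq\ell_{2,n-1}$, Fact~\ref{fct:inducomp} yields $(\alpha(v_j^{\mathrm c}))_j\in\Cstwolen{n-1}[\alpha]$, hence $(\alpha(v_j^{\mathrm c}))_j\in\Ts$. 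The step that truly uses $\siwd$ rather than a general $\siwi$ is that inserting an arbitrary word of content $\content{e}$ in the middle stays $\ksieq{2}$-compatible: this is Equation~\eqref{eq:djun}, a consequence of the $\sic{i}$-property Lemma (Lemma~\ref{lem:siprop}). When $m$, or $t$, is bounded one bypasses the re-factorization, reducing to the binary case by a balanced decomposition that fits within the $3k_n$ budget.

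This suffices for Proposition~\ref{prop:comp}: every word has an $\alpha$-factorization forest of height $\leq 3|M|-1$ (Theorem~\ref{thm:facto}), and $\ell_{2,n}=9n|M|k_n\geq(3|M|-1)\cdot 3k_n+\ell_{2,n-1}$ (using $\ell_{2,n-1}=9(n-1)|M|k_{n-1}\leq 9(n-1)|M|k_n$), so Proposition~\ref{prop:compi} applies with $h=3|M|-1$. The hard part will be this idempotent case: one must reshape a potentially very long product $w_1\cdots w_m$ into a form that operation \eqref{eq:oper} can absorb, while making the rank budgets fit---$3k_n$ rounds per level of the $\alpha$-factorization forest (enough to build and descend the auxiliary Simon forest of $g_1\cdots g_m$ over $N$ and to split off bounded prefixes and suffixes) and an $\ell_{2,n-1}$ cushion (exactly enough, via Fact~\ref{fct:inducomp}, to certify the central \chains inside the inductively available set $\Cstwolen{n-1}[\alpha]$).
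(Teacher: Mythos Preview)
Your base case, your binary-node case, and the ``central'' part of the idempotent case (landing in $\Ts$ via Fact~\ref{fct:inducomp}) are all correct and match the paper. The gap is in the prefix/suffix part: the strengthened invariant $\fg_n^{k''}(u)\subseteq\nu$ cannot be carried through the secondary forest. At an idempotent node of that forest you would need $\fg_n^{k''}(W_1\cdots W_t)\subseteq\epsilon$, and deriving this from the children's invariant via Lemma~\ref{lem:gendecomp} costs $t-1$ rounds, with $t$ unbounded. Even granting the invariant for the children at rank $k''-1$, you cannot use it for the prefix $W_1\cdots W_{\omega(N)}$: after the three-way cut (two rounds) and splitting the prefix ($\omega(N)-1$ more rounds), the $W_i$ sit at rank $k''-1-\omega(N)<k''-1$, and by Lemma~\ref{lem:genref} the inclusion $\fg_n^{k''-1}(W_i)\subseteq\epsilon$ says nothing about the larger set $\fg_n^{k''-1-\omega(N)}(W_i)$. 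So the prefix chain need not lie in $\Ss$, and $\fg_n^{k''}(W_1\cdots W_t)\subseteq(e,\Ss\Ts\Ss)$ is unproved. The appeal to ``the same mechanism that makes \eqref{eq:oper} correct'' does not help: that is the \emph{soundness} direction, which runs the other way.

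The paper sidesteps this by not nesting a second Simon forest. It defines the \emph{index} of the $(e,p)$-decomposition $u_1,\dots,u_m$ as the number of idempotents of $N$ that are insertable within some window of at most $k_n$ consecutive factors, and inducts on this index (Lemma~\ref{lem:inducase3}). A pigeonhole on the first $k_n+1$ partial products $\fg_n^{p}(u_1)\cdots\fg_n^{p}(u_j)$ produces an insertable idempotent $(e,\Es)\in\Sat_n(\fI_n)$ at some $j\le k_n$ (Fact~\ref{fct:insert}); if $\ell$ is the largest position where $(e,\Es)$ is insertable, then $u_{\ell+1}\cdots u_m$ has strictly smaller index, while $u_1\cdots u_\ell$ is bounded by $(e,\Rs)\cdot(e,\Es)\cdot(e,\Ts')\cdot(e,\Es)$ with $(e,\Rs)=\fg_n^p(u_1)\cdots\fg_n^p(u_j)$ a product of at most $k_n$ \emph{original} factors and $(e,\Ts')\subseteq(e,\Ts)$ via Fact~\ref{fct:inducnb}. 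The crucial difference is that the ``prefix'' to be decomposed is always at most $k_n$ original $u_i$'s---never opaque blocks $W_i$ of uncontrolled internal length---so the $3k_n$ budget per outer level really suffices.
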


\noindent
Proposition~\ref{prop:comp} is indeed a consequence of
Proposition~\ref{prop:compi}. This is because
\begin{itemize}
\item any $w \in A^*$ admits an $\alpha$-factorization forest of height at most
  $3|M|-1$, by Theorem~\ref{thm:facto},
\item one can verify that $\ell_{2,n} \geqslant (3|M|-1) \cdot 3k_n +
  \ell_{2,n-1}$.
\end{itemize}
We now prove
Proposition~\ref{prop:compi}. Note that
this is where we use Fact~\ref{fct:inducomp}, \emph{i.e.}, induction on~$n$.

\smallskip

As in the statement of Proposition~\ref{prop:compi}, take $h \geqslant 1$,
$k \geqslant h \cdot 3k_n + \ell_{2,n-1}$ and let $w \in A^*$ admitting an
$\alpha$-factorization forest of height at most $h$. We need to prove that
$\fg_n^k(w) \in  \Sat_n(\fI_n)$. The proof is by induction on $h$.

\smallskip
If $h = 1$, then $w$ admits an $\alpha$-factorization forest which is a
leaf. In that case, $w$ is a single letter word $a \in A$ or the empty word
$\varepsilon$. Observe that $k \geqslant 2$. Therefore, one can check that the
language $\{w\}$ is definable in $\siw2$, hence
$\fg_n^k(w) = \big(\alpha(w),\big\{(\alpha(w),\dots,\alpha(w))\big\}\big)$. It follows that
$\fg_n^k(w) \in \fI_n \subseteq \Sat_n(\fI_n)$, which finishes the proof for
this case.

\smallskip
Assume now that $h > 1$. If the $\alpha$-factorization forest of $w$
is again a leaf, we conclude as above. Otherwise, we apply induction
to the factors given by this factorization forest. In particular, we
will use Lemma~\ref{lem:gendecomp} (the Decomposition Lemma) to decompose $\fg_n^k(w)$ according
to this factorization forest. Then, once the factors have been treated
by induction, we will use the operations in the definition of $\Sat_n$
to lift the result to the whole word $w$. We distinguish two cases
depending on the nature of the topmost node in the
$\alpha$-factorization forest of $w$.

\medskip
\noindent
{\bf Case 1: the topmost node is a binary node.} We use induction on
$h$ and Operation~\eqref{eq:mul} in the definition of $\Sat_n$. By
hypothesis $w = w_1 \cdot w_2$ with $w_1,w_2$ words admitting
$\alpha$-factorization forests of respective heights $h_1,h_2 \leqslant h-1$. Observe
that
\[
  k - 1 \geqslant (h-1) \cdot 3k_n + \ell_{2,n-1}
\]
Therefore, we can apply our induction hypothesis to $w_1,w_2$ and we
obtain that $\fg_n^{k-1}(w_1) \in \Sat_n(\fI_n)$ and $\fg_n^{k-1}(w_2) \in
\Sat_n(\fI_n)$. By Operation~\eqref{eq:mul} in the definition of
$\Sat_n$, it is immediate 
that $\fg_n^{k-1}(w_1) \cdot \fg_n^{k-1}(w_2) \in \Sat_n(\fI_n)$. Moreover, by
Lemma~\ref{lem:gendecomp} (the Decomposition Lemma), $\fg_n^k(w) \subseteq \fg_n^{k-1}(w_1)
\cdot \fg_n^{k-1}(w_2)$. It follows from Operation~\eqref{eq:dwn} that
$\fg_n^k(w) \in \downclos \Sat_n(\fI_n)$, which concludes this case.

\medskip\noindent
{\bf Case 2: the topmost node is an idempotent node.} This is the
most involved case. We use induction on $h$ and the two operations
in the definition of $\Sat_n$. Note that this is also where
Fact~\ref{fct:inducomp} (\emph{i.e.}, induction on $n$ in the general proof of
Proposition~\ref{prop:compu}) is used. We set $$e = \alpha(w).$$ Observe that by
hypothesis of this case, $e$ is an idempotent. Let $$B =
\content{w} = \content{e}.$$ Also set
\begin{equation}
  \label{eq:T}
  \Ts = \big\{(t_1,\dots,t_{n-1}) \in \Cs_{2,n-1}[\alpha] \mid \content{t_1} = B\big\}.
\end{equation}
Since one can test the alphabet of a word in $\siw2$, all
elements of any \dchain of $\Ts$ actually have alphabet~$B$.

\smallskip
We begin by summarizing our hypothesis: $w$ admits what we call an
$(e,p)$-decomposition.

\medskip
\noindent
{\bf $(e,p)$-Decompositions.} For the rest of the section, we set $p = (h-1) \cdot 3k_n +
\ell_{2,n-1}$. Let $u \in A^*$. We say that $u$ admits an
\emph{$(e,p)$-decomposition} $u_1,\dots,u_m$ if

\begin{enumerate}[label=$\alph*)$,ref=\alph*]
\item\label{item:8} $u = u_1 \cdots u_m$,
\item\label{item:9} for all $j$, $\alpha(u_j) = e$ and
\item\label{item:10} for all $j$, $\fg_n^{p}(u_j) \in
  \Sat_n(\fI_n)$.
\end{enumerate}
Note that $\ref{item:9})$ means that $\alpha(u_j)$ is a constant idempotent. In
particular, since $\alpha$ is alphabet compatible, this also implies that all
factors $u_i$ have the same alphabet as $e$, namely $B$. Using
Fact~\ref{fct:inducomp} (\emph{i.e.}, induction on $n$ in the general proof of
Proposition~\ref{prop:compu}) we obtain the following fact.

\begin{fact} \label{fct:inducnb}
  For any $(e,p)$-decomposition $u_1,\dots,u_m$ of a
  word and for all $i \leqslant j$, we have
  $\fg_n^{p}(u_i \cdots u_j) \subseteq (e,\Ts)$, where $\Ts$ is defined
  by~\eqref{eq:T}.
\end{fact}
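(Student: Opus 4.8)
The plan is to unwind the definitions of $\fg_n^p$ and of the set $\Ts$ from~\eqref{eq:T}, relying on three facts: $e$ is idempotent, $\alpha$ is alphabet compatible, and the rank $p$ is already large enough that the length-$(n-1)$ chains available at rank $p$ are genuine \dchains. This last point is precisely where Fact~\ref{fct:inducomp} (the induction on $n$) is used.

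Fix indices $i \leqslant j$ and set $u = u_i \cdots u_j$. Each of the factors $u_i,\dots,u_j$ has $\alpha$-image $e$, and $e$ is idempotent, so $\alpha(u) = e$; hence $\fg_n^p(u)$ has $e$ as its first component, matching that of $(e,\Ts)$. It remains to show that every \chain $(t_2,\dots,t_n)$ occurring in $\fg_n^p(u)$ lies in $\Ts$. By definition of $\fg_n^p$, there are words $w_2,\dots,w_n \in A^*$ with $\alpha(w_\ell) = t_\ell$ for $2 \leqslant \ell \leqslant n$ and $u \sieq{p}{2} w_2 \sieq{p}{2} \cdots \sieq{p}{2} w_n$.

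First I would check that $(t_2,\dots,t_n) \in \Cstwolen{n-1}[\alpha]$. Discarding $u$, the chain $w_2 \sieq{p}{2} \cdots \sieq{p}{2} w_n$ together with $\alpha(w_\ell) = t_\ell$ witnesses, by definition, $(t_2,\dots,t_n) \in \Csgen{2}{p}{n-1}[\alpha]$. Since $p = (h-1)\cdot 3k_n + \ell_{2,n-1} \geqslant \ell_{2,n-1}$, Lemma~\ref{fct:chainref} gives $\Csgen{2}{p}{n-1}[\alpha] \subseteq \Csgen{2}{\ell_{2,n-1}}{n-1}[\alpha]$, and Fact~\ref{fct:inducomp} identifies the latter with $\Cstwolen{n-1}[\alpha]$. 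Next I would check that $\content{t_2} = B$. By alphabet compatibility, each $u_i,\dots,u_j$ has alphabet $B = \content{e}$, so $\content{u} = B$; and the alphabet of a word is testable in \siwd in both directions, since for $a \in B$ the sentence $\exists x\, P_a(x)$ is \siwu while $\forall x\, \bigvee_{a \in B} P_a(x)$ is \piwu, both being \siwd formulas of quantifier rank $1 \leqslant p$. As $u$ satisfies all of these and $u \sieq{p}{2} w_2$, so does $w_2$, whence $\content{w_2} = B$, that is $\content{t_2} = B$. Combining the two steps yields $(t_2,\dots,t_n) \in \Ts$, and therefore $\fg_n^p(u) \subseteq (e,\Ts)$.

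I do not expect a real obstacle here: this is a direct verification. Its only load-bearing ingredient is the inequality $p \geqslant \ell_{2,n-1}$, which, through Fact~\ref{fct:inducomp}, is what lets the bounded-rank chains produced by $\fg_n^p$ be recognized as elements of the genuine set $\Cstwolen{n-1}[\alpha]$ appearing in the definition~\eqref{eq:T} of $\Ts$; the remaining ingredients (idempotency of $e$, alphabet compatibility of $\alpha$, and \siwd-definability of the alphabet of a word) are immediate.
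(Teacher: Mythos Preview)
Your proof is correct and follows essentially the same approach as the paper's: both arguments verify the first component is $e$ by idempotency, use $p \geqslant \ell_{2,n-1}$ together with Fact~\ref{fct:inducomp} to place the chains in $\Cstwolen{n-1}[\alpha]$, and check the alphabet condition via the \siwd-definability of $\content{\cdot}$. Your version spells out the alphabet-testing formulas explicitly where the paper just writes ``since $p \geqslant 2$'', and you cite Lemma~\ref{fct:chainref} where the paper cites Lemma~\ref{lem:genref}, but these are cosmetic differences.
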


\begin{proof}
  By definition, the ``root'' of the \jun $\fg_n^{p}(u_i \cdots u_j)$ is labeled by $\alpha(u_i \cdots u_j)=e$.
  Therefore, we may set $(e,\Ts') = \fg_n^{p}(u_i \cdots u_j)$. Since $p \geqslant
  \ell_{2,n-1}$, that $\Ts' \subseteq \Cs_{2,n-1}[\alpha]$ follows from
  Lemma~\ref{lem:genref} and Fact~\ref{fct:inducomp}. Set
  $(t_1,\dots,t_{n-1}) \in \Ts'$, we have to prove that $\content{t_1} =
  B$. This is because $t_1 = \alpha(v)$ for some word $v$ satisfying
  $u_i \cdots u_j \sieq{p}{2} v$. Since $p \geqslant 2$, it follows that
  $\content{v} = \content{u_i \cdots u_j} = B$, which terminates the
  proof.
\end{proof}

We now use the hypothesis of Case~2 to conclude that $w$ admits an
$(e,p)$-decomposition.

\begin{fact} \label{fct:ebdecomp}
  The word $w$ admits an $(e,p)$-decomposition.
\end{fact}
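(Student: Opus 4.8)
The plan is to read the decomposition off directly from the $\alpha$-factorization forest of $w$ that we are handed in Case~2. Since the root of that forest is an idempotent node, $w$ factors as $w = w_1 \cdots w_m$, where $w_1,\dots,w_m$ are the labels of the children of the root. By definition of an idempotent node this gives $\alpha(w_1) = \cdots = \alpha(w_m) = e$; moreover, deleting the root leaves each child as the root of a subtree of the original forest, so every $w_j$ admits an $\alpha$-factorization forest of height at most $h-1$. I claim that $w_1,\dots,w_m$ is itself an $(e,p)$-decomposition of $w$.

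Checking the three defining conditions, \ref{item:8}) holds by construction, and \ref{item:9}) holds because the root node is idempotent with value $e$; neither requires any work. The only substantive point is \ref{item:10}), namely that $\fg_n^{p}(w_j) \in \Sat_n(\fI_n)$ for each $j$. For this I would invoke the outer induction on $h$ in the proof of Proposition~\ref{prop:compi}: each $w_j$ admits an $\alpha$-factorization forest of height at most $h-1 < h$, and by definition $p = (h-1)\cdot 3k_n + \ell_{2,n-1}$, which is exactly the rank threshold appearing in the statement of Proposition~\ref{prop:compi} at height $h-1$. Applying the induction hypothesis at height $h-1$ with quantifier rank $p$ to each $w_j$ therefore yields $\fg_n^{p}(w_j) \in \Sat_n(\fI_n)$, which is precisely \ref{item:10}). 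This completes the proof of the fact.

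The argument is a bookkeeping step rather than a genuine difficulty, so there is no real obstacle here; the only thing to be careful about is that the rank assigned to the sub-factors is large enough to run the induction hypothesis one level down. This is exactly why $p$ is set to $(h-1)\cdot 3k_n + \ell_{2,n-1}$: it matches the height-$(h-1)$ threshold with no slack wasted. The remaining margin of $3k_n$ between $p$ and the ambient rank $k \geqslant h\cdot 3k_n + \ell_{2,n-1}$, together with the closure operations in the definition of $\Sat_n$, is what will be consumed in the rest of Case~2 to lift the conclusion from the factors $w_j$ (and from blocks of the form $u_i\cdots u_j$) back up to $w$ itself.
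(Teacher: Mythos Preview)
Your proof is correct and follows essentially the same approach as the paper: take the factorization $w = w_1\cdots w_m$ given by the idempotent root node, observe that conditions~\ref{item:8}) and~\ref{item:9}) are immediate, and obtain~\ref{item:10}) by applying the induction hypothesis on $h$ to each $w_j$ at rank $p = (h-1)\cdot 3k_n + \ell_{2,n-1}$. The paper's proof is terser but identical in substance.
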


\begin{proof}
  By hypothesis of Case~2, there exists a decomposition $w_1,\dots,w_m$
  of $w$ that satisfies points $a)$ and $b)$. Moreover, for all $j$,
  $w_j$ admits an $\alpha$-factorization forest of height $h_j \leqslant
  h-1$. Therefore point $c)$ is obtained by induction hypothesis on
  $h$.
\end{proof}

Recall that we want to prove that $\fg_n^k(w) \in \Sat_n(\fI_n)$. In general, the number of factors $m$ in the
$(e,p)$-decomposition of $w$ can be arbitrarily large. In
particular, it is possible that $k - (m-1) < p$. This
means that we cannot simply use Lemma~\ref{lem:gendecomp} as we did in
the previous case to conclude that $\fg_n^k(w) \subseteq
\fg_n^{p}(w_{1}) \cdots
\fg_n^{p}(w_{m})$. However, we will partition $w_1,\dots,w_m$
as a bounded number of subdecompositions that we can treat using the
second operation in the definition of $\Sat_n$. The partition is given
by induction on a parameter of the $(e,p)$-decomposition
$w_1,\dots,w_m$, which we define now.

\medskip
\noindent
{\bf Index of an $(e,p)$-decomposition.} Recall that $k_n = |M \times
2^{M^{n-1}}|$ and let $u \in A^*$ that admits an
$(e,p)$-decomposition $u_1,\dots,u_{m}$. Let $(f,\Fs) \in M \times
2^{M^{n-1}}$ be an idempotent and $j \leqslant m$, we say that $(f,\Fs)$
can be \emph{inserted} at position $j$ is there exists $i \leqslant
(k_n-1)$ such that
\[
  \fg_n^{p}(u_{j-i}) \cdots \fg_n^{p}(u_{j}) \cdot
  (f,\Fs) = \fg_n^{p}(u_{j-i}) \cdots \fg_n^{p}(u_{j}).
\]
The \emph{index} of the $(e,p)$-decomposition $u_1,\dots,u_{m}$ is the number
of distinct idempotents $(f,\Fs) \in M \times 2^{|M|^{n-1}}$ that can
be inserted at some position $j \leqslant m$. Observe that by definition,
the index of any $(e,p)$-decomposition is bounded by $k_n$.

\begin{lemma} \label{lem:inducase3}
  Let $u \in A^*$ admitting an $(e,p)$-decomposition $u_1,\dots,u_m$
  of index $g$ and set $\widehat{k} \geqslant g + 2k_n + p$. Then
  $\fg_n^{\widehat{k}}(u) \in \Sat_n(\fI_n)$.
\end{lemma}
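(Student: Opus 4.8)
The plan is to prove Lemma~\ref{lem:inducase3} by induction on the index $g$ of the $(e,p)$-decomposition $u_1,\dots,u_m$. The ingredients are: the Decomposition Lemma (Lemma~\ref{lem:gendecomp}), which splits $\fg_n^{\widehat k}(vv')\subseteq\fg_n^{\widehat k-1}(v)\cdot\fg_n^{\widehat k-1}(v')$ at the price of one unit of quantifier rank; Lemma~\ref{lem:genref} ($\fg_n$ shrinks as the rank grows); the three closure operations \eqref{eq:dwn}, \eqref{eq:mul}, \eqref{eq:oper} defining $\Sat_n(\fI_n)$; and Facts~\ref{fct:inducnb} and~\ref{fct:inducomp}, the latter carrying the outer induction on $n$. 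The budget $\widehat k\ge g+2k_n+p$ is calibrated so that peeling off a block of length $O(k_n)$ leaves at least $p$ units for that block (enough to identify it, up to $\downclos$, with the product of the corresponding $\fg_n^p(u_j)$'s), while every unit decrease of the index is refunded through the $+g$ term.

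For the base case $g=0$, the only $(e,p)$-decomposition of index $0$ is the empty one, so $u=\varepsilon$ and $\fg_n^{\widehat k}(u)=(1_M,\{(1_M,\dots,1_M)\})\in\fI_n\subseteq\Sat_n(\fI_n)$. I would also dispose at once of the case $m\le 2k_n$ for any $g$: applying Lemma~\ref{lem:gendecomp} $m-1$ times and then Lemma~\ref{lem:genref} gives $\fg_n^{\widehat k}(u)\subseteq\fg_n^{p}(u_1)\cdots\fg_n^{p}(u_m)$ (using $\widehat k-(m-1)\ge p$), where each $\fg_n^{p}(u_j)\in\Sat_n(\fI_n)$ by property~\ref{item:10}) of the decomposition; \eqref{eq:mul} then \eqref{eq:dwn} conclude.

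For the inductive step ($g\ge 1$, $m>2k_n$), the decomposition being long, a pigeonhole argument on the partial products $R_j=\fg_n^{p}(u_1)\cdots\fg_n^{p}(u_j)$ — all of root $e$, hence lying in a set of size $<k_n$ — produces a repetition $R_a=R_b$ with $b\le k_n$, and therefore a non-identity insertable idempotent $V:=X^{\omega}$ with $X=\fg_n^{p}(u_{a+1})\cdots\fg_n^{p}(u_b)$; note $V\in\Sat_n(\fI_n)$ since it is a power of a product of elements $\fg_n^{p}(u_j)\in\Sat_n(\fI_n)$, by \eqref{eq:mul}. The plan is then to locate a long stretch of factors governed by an insertable idempotent, split $u$ around it into a bounded-length prefix, that long stretch, and a bounded-length suffix; to collapse the long stretch using Fact~\ref{fct:inducnb} (which confines its $\fg_n^{p}$-value to $(e,\Ts)$, with $\Ts$ as in~\eqref{eq:T} — this is where alphabet-compatibility of $\alpha$ and Fact~\ref{fct:inducomp}, giving $\Ts\subseteq\Cs_{2,n-1}[\alpha]$, are used) together with operation \eqref{eq:oper} applied to $V$, which puts $V^{\omega}\cdot(1_M,\Ts)\cdot V^{\omega}$ into $\Sat_n(\fI_n)$; to treat the short prefix and suffix as in the bounded case; and to observe that the $(e,p)$-decomposition remaining after removing the collapsed stretch has index at most $g-1$. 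The induction hypothesis then applies to the remaining part with the refunded budget, and Lemma~\ref{lem:gendecomp}, \eqref{eq:mul} and \eqref{eq:dwn} glue everything, since $\fg_n^{\widehat k}(u)$ is $\subseteq$ the resulting product.

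The main obstacle is the bookkeeping tying together the choice of the collapsed stretch, the strict decrease of the index — showing that some insertable idempotent disappears and that no new one is created — and the exact amount of rank ($2k_n$) that must be set aside; this is precisely what forces the detour through operation~\eqref{eq:oper} and Fact~\ref{fct:inducnb} rather than a naive iterated use of the Decomposition Lemma, and it is also where the value $\ell_{2,n}=9n|M|k_n$ ultimately gets pinned down.
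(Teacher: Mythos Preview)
Your high-level plan --- induction on the index $g$, pigeonhole to produce an insertable idempotent in $\Sat_n(\fI_n)$, collapse of a long block via Fact~\ref{fct:inducnb} and Operation~\eqref{eq:oper}, and the Decomposition Lemma for gluing --- is exactly the paper's. Two points need repair. First, your base case is misstated: index $0$ does \emph{not} force the empty decomposition. What is true is that $g=0$ forces $m\le k_n$ (contrapositive of Fact~\ref{fct:insert}), which is already covered by your bounded-$m$ case; just drop the $u=\varepsilon$ claim.

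Second, and more substantively, the structure of your inductive step is garbled. You describe a three-way split into a bounded prefix, a long middle, and a \emph{bounded} suffix, and then separately invoke the induction hypothesis on a ``remaining part'' of index $\le g-1$; these two descriptions are inconsistent, and nothing you wrote explains why the index drops. The paper's split is \emph{two-way}: having found the insertable idempotent $(e,\Es)$ at some $j\le k_n$, one takes $\ell$ to be the \emph{largest} position at which $(e,\Es)$ can be inserted, and cuts $u$ at $\ell$. The tail $u_{\ell+1},\dots,u_m$ is again an $(e,p)$-decomposition whose index is strictly smaller precisely because $(e,\Es)$ can no longer be inserted there (maximality of $\ell$), while every idempotent insertable in the tail was already insertable in the original; this is where the induction hypothesis is spent, using $\widehat k-1\ge (g-1)+2k_n+p$. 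The head $u_1\cdots u_\ell$ is what gets treated directly: either $\ell\le j+k_n\le 2k_n$ and it falls under your bounded case, or one writes $\fg_n^{\widehat k-1}(u_1\cdots u_\ell)\subseteq (e,\Rs)\cdot(e,\Ts')$ with $(e,\Rs)=\fg_n^p(u_1)\cdots\fg_n^p(u_j)\in\Sat_n(\fI_n)$ and $(e,\Ts')\subseteq(e,\Ts)$ by Fact~\ref{fct:inducnb}; insertability of $(e,\Es)$ at \emph{both} $j$ and $\ell$ then lets one rewrite $(e,\Rs)\cdot(e,\Ts')=(e,\Rs)\cdot(e,\Es)\cdot(e,\Ts')\cdot(e,\Es)$ and close via~\eqref{eq:oper}. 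Your sketch contains all these pieces but assigns them to the wrong sides of the split; in particular there is no bounded suffix --- the suffix is the inductive piece and may be arbitrarily long.
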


Before proving this lemma, we use it to conclude Case~2. We know that
$w$ admits an $(e,p)$-decomposition of index $g \leqslant k_n$. By definition,
$k \geqslant 3k_n + p$, hence, it is immediate from
Lemma~\ref{lem:inducase3} that $\fg_n^k(w) \in 
\Sat_n(\fI_n)$. It now remains to prove Lemma~\ref{lem:inducase3}.

\begin{proof}[of Lemma~\ref{lem:inducase3}]
  The proof goes by induction on the index $g$. When $m \leqslant k_n$,
  the result can be obtained from Lemma~\ref{lem:gendecomp} by using $(m-1)$ times
  the argument we used in Case~1. Assume now that $m >
  k_n$, we rely on the following fact:

  \begin{fact} \label{fct:insert}
    There exists a position $j\leqslant k_n$ and an idempotent $(e,\Es) \in \Sat_n(\fI_n)$ that
    can be inserted at position $j$.
  \end{fact}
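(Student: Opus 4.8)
The plan is to run a pigeonhole argument on the prefix products of the juncture values $\fg_n^{p}(u_1),\dots,\fg_n^{p}(u_m)$ inside the finite monoid $M \times 2^{M^{n-1}}$, and then to take an idempotent power to produce the idempotent juncture required.

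First I would record the two facts we have about $u_1,\dots,u_m$: by $c)$ in the definition of an $(e,p)$-decomposition, each $\fg_n^{p}(u_j)$ lies in $\Sat_n(\fI_n)$; and by $b)$, each $\fg_n^{p}(u_j)$ has first component $\alpha(u_j)=e$. Since $\Sat_n(\fI_n)$ is closed under product (Operation~\eqref{eq:mul}), every prefix product $P_j := \fg_n^{p}(u_1)\cdots\fg_n^{p}(u_j)$ belongs to $\Sat_n(\fI_n)$, and its first component is $e^j = e$ because $e$ is idempotent. Now, since $m > k_n = |M\times 2^{M^{n-1}}|$, the $k_n+1$ junctures $P_1,\dots,P_{k_n+1}$ cannot be pairwise distinct, so there are indices $1\leqslant a<b\leqslant k_n+1$ with $P_a=P_b$.

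Next I would set $R := \fg_n^{p}(u_{a+1})\cdots\fg_n^{p}(u_b)$ and $(e,\Es):=R^{\omega}$, where $\omega$ is the idempotent power of $M\times 2^{M^{n-1}}$. Then $R\in\Sat_n(\fI_n)$, hence $R^{\omega}\in\Sat_n(\fI_n)$ (again by Operation~\eqref{eq:mul}); the juncture $(e,\Es)$ is idempotent by construction; and its first component is $(e^{b-a})^{\omega}=e$, so it is indeed of the required form with $e$ the idempotent of Case~2. From $P_aR=P_b=P_a$ one gets $P_aR^{\omega}=P_a$ by iterating the multiplication, that is,
\[
  \fg_n^{p}(u_1)\cdots\fg_n^{p}(u_a)\cdot(e,\Es) = \fg_n^{p}(u_1)\cdots\fg_n^{p}(u_a).
\]
It then suffices to observe that this is exactly the insertion condition with $j=a$ and $i=a-1$: indeed $\fg_n^{p}(u_{j-i})\cdots\fg_n^{p}(u_j)=P_a$, and the index constraints hold since $a\leqslant k_n$ (from $a<b\leqslant k_n+1$) gives both $j=a\leqslant k_n$ and $i=a-1\leqslant k_n-1$.

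I do not expect any genuine obstacle in this argument; the one point to watch is the index bookkeeping, namely that restricting attention to just the first $k_n+1$ prefixes (which is possible precisely because $m>k_n$) keeps the position $j$ and the window length $i$ within the ranges $j\leqslant k_n$ and $i\leqslant k_n-1$ demanded by the definition of ``can be inserted''.
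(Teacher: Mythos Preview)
Your proof is correct and follows essentially the same approach as the paper: a pigeonhole argument on the first $k_n+1$ prefix products in the monoid $M\times 2^{M^{n-1}}$, then taking the $\omega$-power of the segment between the two coinciding prefixes. You are in fact a bit more explicit than the paper in verifying the first component is $e$ and in checking the index constraints $j\leqslant k_n$ and $i\leqslant k_n-1$.
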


  \begin{proof}
    Since all $\fg_n^p(w_i)$ belong to the monoid $M \times 2^{M^{n-1}}$ whose size
    is $k_n$, it follows from the pigeon-hole principle that there exist $j <
    j' \leqslant k_n+1$ such that:
    \[
      \fg_n^{p}(w_{1}) \cdots \fg_n^{p}(w_{j}) = \fg_n^{p}(w_{1}) \cdots \fg_n^{p}(w_{j'}).
    \]
    Hence it suffices to take $(e,\Es) = (\fg_n^{p}(w_{j+1})
    \cdots \fg_n^{p}(w_{j'}))^\omega$. Note that $(e,\Es) \in
    \Sat_n(\fI_n)$ because of Item~$\ref{item:10})$ in the
    definition of $(e,p)$-decompositions and Operation~\eqref{eq:mul} in
    the definition of $\Sat_n$.
  \end{proof}
  Denote by $j\leqslant k_n$ a position given by Fact~\ref{fct:insert}, and set
  $\ell \leqslant m$ as the largest integer such that $(e,\Es)$ can be
  inserted at position $\ell$. In particular, $j\leqslant\ell$. Using
  Lemma~\ref{lem:gendecomp}, we get that
  \[
    \fg_n^{\widehat{k}}(u) \subseteq \fg_n^{\widehat{k}-1}(u_1\cdots u_{\ell}) \cdot
    \fg_n^{\widehat{k}-1}(u_{\ell+1} \cdots u_m).
  \]
  By definition, $u_{\ell+1},\dots,u_m$ is an $(e,p)$-decomposition and it has index
  strictly smaller than that of $u_1, \dots,u_m$ (by definition of $\ell$,
  there is no position between $\ell+1$ and $m$ at which $(e,\Es)$ can be
  inserted). Hence, it is immediate by induction hypothesis that
  \[
    \fg_n^{\widehat{k}-1}(u_{\ell+1} \cdots u_m) \in \Sat_n(\fI_n).
  \]
  It now remains to prove that $\fg_n^{\widehat{k}-1}(u_1\cdots u_{\ell})
  \in  \Sat_n(\fI_n)$. The result will then follow from
  Operations~\eqref{eq:dwn} and~\eqref{eq:mul} in the definition of $\Sat_n$. We distinguish
  two cases depending on the distance between $j$ and $\ell$.

  \medskip
  \noindent
  {\it Case a)} Assume first that $\ell \leqslant j + k_n$. In that case, since
  $j \leqslant k_n$, we have $\ell \leqslant 2k_n$. The result can then be obtained
  from Lemma~\ref{lem:gendecomp} by using $\ell-1$ times the same argument as the one
  we used in Case~1.

  \medskip
  \noindent
  {\it Case b)} It remains to treat the case when $\ell > j + k_n$. This is where
  Operation~\eqref{eq:oper} in the definition of $\Sat_n$ is used.
  Consider the following,
  \[
    \begin{array}{lcl}
      (e,\Rs) & = & \fg_n^{p}(u_1) \cdots
                    \fg_n^{{p}}(u_{j}) \\
      (e,\Ts') & = & \fg_n^{p}(u_{j+1} \cdots u_{j-k_n}) \cdot \fg_n^{p}(u_{\ell-(k_n-1)}) \cdots
                     \fg_n^{{p}}(u_{\ell})
    \end{array}
  \]
  Note that we know from Item~$\ref{item:10})$ in the definition of
  $(e,p)$-decompositions and Operation~\eqref{eq:mul} in the definition of
  $\Sat_n$ that $(e,\Rs) \in \Sat_n(\fI_n)$. Moreover, using
  Fact~\ref{fct:inducnb} we obtain that $(e,\Ts') \subseteq (e,\Ts)$.

  Observe that $(\widehat{k}-1)-(j+k_n-1) \geqslant p$. Hence using $j+k_n-1$ times the
  Decomposition Lemma (Lemma~\ref{lem:gendecomp}) and Lemma~\ref{lem:genref},
  we obtain:
  \[
    \fg_n^{\widehat{k}-1}(u_1\cdots u_{\ell}) \subseteq (e,\Rs) \cdot (e,\Ts')
  \]
  By definition, $(e,\Es) \in \Sat_n(\fI_n)$ can be inserted
  at both positions $j$ and $\ell$, hence we have:
  \[
    \fg_n^{\widehat{k}-1}(u_1\cdots u_{\ell}) \subseteq (e,\Rs) \cdot
    (e,\Es) \cdot (e,\Ts') \cdot (e,\Es)
  \]
  We now prove that $(e,\Es) \cdot (e,\Ts') \cdot (e,\Es) \in
  \Sat_n(\fI_n)$. Since we already know that $(e,\Rs) \in
  \Sat_n(\fI_n)$, it will then follow from
  Operations~\eqref{eq:dwn} and~\eqref{eq:mul} that
  $\fg_n^{\widehat{k}-1}(u_1\cdots u_{\ell}) \in \Sat_n(\fI_n)$.

  Since $(e,\Es) \in \Sat_n(\fI_n)$ and $\content{e} = B$, it
  follows from Operation~\eqref{eq:oper} in the fixpoint procedure that:
  \begin{equation*}
    (e,\Es) \cdot (e,\Ts') \cdot (e,\Es) \subseteq (e,\Es) \cdot (e,\Ts) \cdot (e,\Es) =  (e,\Es) \cdot
    (1_M,\Ts) \cdot (e,\Es) \in \Sat_n(\fI_n).
    \qed
  \end{equation*}
We conclude from Operation~\eqref{eq:dwn} that $(e,\Es) \cdot (e,\Ts')
\cdot (e,\Es) \in \Sat_n(\fI_n)$ which terminates the proof.
\end{proof}

\section{\texorpdfstring{Decidable Characterization of \bswd}{Decidable
    Characterization of BΣ\texttwoinferior(<)}}
\label{sec:caracbc}
In this section we present our decidable characterization for
\bswd. We already proved a (non-effective) characterization of \bswd
in Section~\ref{sec:generic} using the notion of \emph{alternation}.

\smallskip
Recall that a \chain $(s_1,\dots,s_n) \in M^*$ has \emph{alternation}
$\ell$ if there are exactly $\ell$ indices $i$ such that $s_i \neq
s_{i+1}$. Recall also that a set of \chains $\Ss$ has \emph{bounded
  alternation} if there exists a bound $\ell \in \nat$ such that all
\chains in $\Ss$ have alternation at most $\ell$. We know by
Corollary~\ref{cor:membc} that a regular language $L$ is definable in
\bswi if and only if $\Csi[\alpha]$ has bounded alternation with
$\alpha$ as the syntactic morphism of $L$.

\smallskip
In this section, we prove that a third equivalent (effective)
criterion can be given in the special case $i=2$. This criterion is
presented as an equation that needs to be satisfied by the alphabet completion
of the syntactic morphism of the language. This equation is parametrized by \juns of
length $2$ through a relation that we now define.

\medskip
\noindent
{\bf Alternation Schema.} Let $\alpha: A^* \rightarrow M$ be an
alphabet compatible monoid morphism. An \emph{alternation schema}
for $\alpha$ is a triple $(s,s_1,s_2) \in M^3$ such that there exist
$(r_1,\Rs_1), (r_2,\Rs_2), (e,\Es) \in \fC_{2,2}[\alpha]$ with
$(e,\Es)$ \emph{idempotent}, and such that
\begin{itemize}
\item $\content{e} = \content{s}$.
\item $s = r_1er_2$.
\item $s_1 \in \Rs_1 \cdot \Es$.
\item $s_2 \in \Es \cdot \Rs_2$.
\end{itemize}
Observe that, the set of all alternation schemas for $\alpha$ can be
computed from $\fC_{2,2}[\alpha]$.

The purpose of alternation schemas is to abstract over~$M$ a
property of words relatively to \siwd: if $(s,s_1,s_2)$ is an
alternation schema, then for all $k \in \nat$, there exist $w,w_1,w_2
\in A^*$, mapped respectively to $s,s_1,s_2$ under $\alpha$, and such
that for all $u \in \content{s}^*$, $w\ksieq{2} w_1uw_2$ (see
Lemma~\ref{lem:schemprop} below).

We now have all the terminology we need to state our decidable
characterization of \bswd.

\begin{theorem} \label{thm:caracbc}
  Let $L$ be a regular language and let $\alpha: A^* \rightarrow M$ be
  the alphabet completion of its syntactic morphism. The three following
  properties are equivalent:

  \begin{enumerate}
  \item $L$ is definable in \bswd.
  \item $\Cstwo[\alpha]$ has bounded alternation.
  \item $\alpha$ satisfies the following equation:
    \begin{equation}
      \begin{array}{c}
        (s_1t_1)^{\omega}s(t_2s_2)^{\omega} = (s_1t_1)^{\omega}s_1ts_2(t_2s_2)^{\omega} \\[.75ex]
        \text{for $(s,s_1,s_2)$ and $(t,t_1,t_2)$ alternation schemas
        such that $\content{s} = \content{t}$}.
      \end{array}\label{eq:bcs2}
    \end{equation}
  \end{enumerate}
\end{theorem}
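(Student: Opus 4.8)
The plan is to prove the cycle of implications $(1)\Rightarrow(2)\Rightarrow(3)\Rightarrow(1)$; here $(1)\Leftrightarrow(2)$ and $(2)\Rightarrow(3)$ are short, while $(3)\Rightarrow(2)$ is the hard direction, carried out in the technical Sections~\ref{app:ctrees}--\ref{app:width}. For $(1)\Leftrightarrow(2)$, Corollary~\ref{cor:membc} already gives the equivalence when $\alpha$ is the \emph{syntactic} morphism of $L$, whereas in the statement $\alpha$ is its alphabet completion; so the only thing to check is that this replacement does not affect whether $\Cstwo$ has bounded alternation. This is because, by Lemma~\ref{lem:extchains}, the \dchains of the syntactic morphism are exactly the projections onto the monoid component of the \dchains of $\alpha$, and along any chain $w_1\ksieq{2}w_2\ksieq{2}\cdots$ the alphabets $\content{w_j}$ are nondecreasing (membership of a fixed letter is $\siwd$-expressible), so the $2^A$-component of a \dchain of $\alpha$ changes at most $|A|$ times. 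Thus $\Cstwo[\alpha]$ has bounded alternation iff $\Cstwo$ of the syntactic morphism does.

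For $(2)\Rightarrow(3)$, fix two alternation schemas $(s,s_1,s_2)$ and $(t,t_1,t_2)$ with $\content{s}=\content{t}=:B$. By Lemma~\ref{lem:schemprop}, for each quantifier rank $k$ there are words $w,w_1,w_2$ and $v,v_1,v_2$ mapped by $\alpha$ to $s,s_1,s_2$ and to $t,t_1,t_2$ respectively, with $w\ksieq{2}w_1uw_2$ and $v\ksieq{2}v_1uv_2$ for all $u\in B^*$. Since $\content{w}=\content{v}=B$, taking $u=v$ in the first relation and $u=w$ in the second and iterating with the pre-congruence Lemma~\ref{lem:efconcat}, one obtains a $\ksieq{2}$-chain of words whose $\alpha$-images form the chain
\[
s,\; s_1ts_2,\; s_1t_1st_2s_2,\; s_1t_1s_1ts_2t_2s_2,\; (s_1t_1)^2s(t_2s_2)^2,\;\dots,
\]
alternating between $(s_1t_1)^{j}s(t_2s_2)^{j}$ and $(s_1t_1)^{j}s_1ts_2(t_2s_2)^{j}$ with $j$ increasing. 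Restricting to indices $j$ that are multiples of $\omega$ and using closure of $\Cstwo[\alpha]$ under subwords (Fact~\ref{fct:high}), one extracts, for every $N$,
\[
\bigl((s_1t_1)^{\omega}s(t_2s_2)^{\omega},\ (s_1t_1)^{\omega}s_1ts_2(t_2s_2)^{\omega}\bigr)^{N}\in\Cstwo[\alpha].
\]
Bounded alternation of $\Cstwo[\alpha]$ then forces the two entries to coincide, which is exactly Equation~\eqref{eq:bcs2}.

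For $(3)\Rightarrow(2)$, I would argue by contraposition: from the assumption that $\Cstwo[\alpha]$ has unbounded alternation I want to build two alternation schemas $(s,s_1,s_2),(t,t_1,t_2)$ with $\content{s}=\content{t}$ whose instance of~\eqref{eq:bcs2} fails. Closure under subwords gives $p\neq q$ with $(p,q)^*\subseteq\Cstwo[\alpha]$, so the sets $\fC_{2,n}[\alpha]$ contain \djuns exhibiting arbitrarily many alternations. The key is to analyze \emph{how} such objects are produced by the fixpoint of Section~\ref{sec:comput}: since $\fC_{2,n}[\alpha]=\Sat_n(\fI_n)$ for every $n$ by Proposition~\ref{prop:compu}, each such \djun has a derivation from the operations \eqref{eq:dwn}, \eqref{eq:mul}, \eqref{eq:oper}, and a large alternation count can only come from deeply nested uses of~\eqref{eq:oper}. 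Applying Ramsey- and factorization-forest-style pumping to these derivations — this is where the combinatorial notions (a variant of \emph{chains}, and \emph{width}) adapted from~\cite{bpopen} are deployed — one isolates an idempotent \djun $(e,\Es)\in\fC_{2,2}[\alpha]$ and data $(r_1,\Rs_1),(r_2,\Rs_2)\in\fC_{2,2}[\alpha]$, all over the common alphabet $\content{e}$, yielding two alternation schemas whose associated instance of~\eqref{eq:bcs2} fails, contradicting~(3). I expect this last extraction — turning the combinatorial statement ``$\Cstwo[\alpha]$ has unbounded alternation'' into the algebraic data of two schemas violating~\eqref{eq:bcs2}, through a structural analysis of the $\Sat_n$ fixpoint — to be the main obstacle; it is precisely the content of Sections~\ref{app:ctrees}--\ref{app:width}.
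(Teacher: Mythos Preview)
Your proposal is correct and follows the paper's overall architecture: $(1)\Leftrightarrow(2)$ via Corollary~\ref{cor:membc} and Lemma~\ref{lem:extchains} (your remark that the $2^A$-component of a \dchain can alternate at most $|A|$ times spells out what the paper merely cites), an easy implication to (3), and the hard direction $(3)\Rightarrow(2)$ deferred to Sections~\ref{app:ctrees}--\ref{app:width}.

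The one genuine difference is \emph{which} easy implication you prove. The paper establishes $(1)\Rightarrow(3)$ directly: it fixes the rank $k$ of a defining \bswd formula for $L$, invokes Lemma~\ref{lem:schemprop} once at that $k$ for each schema, and shows that the two sides of~\eqref{eq:bcs2} are images of words that are $\kbceq{2}$-equivalent (one direction from Lemma~\ref{lem:schemprop}, the other from aperiodicity, Lemma~\ref{lem:aperiodic}, together with Lemma~\ref{lem:efconcat}); equality then drops out of the syntactic congruence. You instead prove $(2)\Rightarrow(3)$: you iterate the two basic relations $w\ksieq{2}w_1vw_2$ and $v\ksieq{2}v_1wv_2$ into an arbitrarily long $\ksieq{2}$-chain, extract the periodic subchain at indices $j\in\omega\nat$ via Fact~\ref{fct:high}, and let bounded alternation of $\Cstwo[\alpha]$ force the two values to coincide. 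Both are valid; the paper's route is a little shorter and leans on the fact that $\alpha$ refines the syntactic congruence of $L$, while yours is agnostic about the origin of $\alpha$ and would apply to any alphabet compatible morphism whose \dchains have bounded alternation.
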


We know from Proposition~\ref{prop:compu} that all \dchains of length
$3$ and all alternation schemas associated to $\alpha$ can be
computed. Hence, the third item of Theorem~\ref{thm:caracbc} can be
decided and we get the desired corollary.

\begin{corollary} \label{cor:decid2}
  Given as input a regular language $L$, it is decidable to test whether
  $L$ is definable in \bswd.
\end{corollary}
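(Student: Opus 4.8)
The plan is to read off decidability directly from the equivalence between items (1) and (3) of Theorem~\ref{thm:caracbc}, using the algorithm of Proposition~\ref{prop:compu} to make condition (3) effective. Given a regular language $L$ as input, the first step is to compute its syntactic ordered monoid together with the syntactic morphism; this is effective from any standard representation of $L$ by the Myhill--Nerode argument recalled in Section~\ref{sec:tools}. The second step is to pass to the alphabet completion $\alpha: A^* \to M$ of this syntactic morphism, which is again effective via the explicit construction $w \mapsto (\alpha_L(w),\content{w})$ of Section~\ref{sec:tools}; crucially, $\alpha$ is alphabet compatible, so that $\content{s}$ is well defined for every $s \in M$ and all the results of Section~\ref{sec:comput} apply to it.

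The third step is to compute the data that parametrizes equation~\eqref{eq:bcs2}. By Proposition~\ref{prop:compu}, the fixpoint algorithm of Section~\ref{sec:comput} computes, for every fixed $n \geqslant 1$, the set $\fCtwolen{n}[\alpha]$ of \djuns of length $n$ for $\alpha$, and hence, by Fact~\ref{fct:recoverchains}, also the set $\Cstwolen{n}[\alpha]$ of \dchains of length $n$; in particular $\fC_{2,2}[\alpha]$, $\fC_{2,3}[\alpha]$ and $\Cs_{2,3}[\alpha]$ are computable. Since $M$ is finite, $\fC_{2,2}[\alpha]$ is a finite subset of the finite monoid $M \times 2^{M}$, so one can enumerate by brute force all triples $(s,s_1,s_2) \in M^3$ satisfying the four defining conditions of an alternation schema; thus the finite set of alternation schemas for $\alpha$ is computable, as already observed right after their definition.

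The last step is the actual test. The idempotent power $\omega = \omega(M)$ is computable from $M$, and there are only finitely many pairs of alternation schemas $(s,s_1,s_2)$ and $(t,t_1,t_2)$ with $\content{s} = \content{t}$; for each such pair one checks the equality $(s_1t_1)^{\omega}s(t_2s_2)^{\omega} = (s_1t_1)^{\omega}s_1ts_2(t_2s_2)^{\omega}$ in $M$, which is a finite computation. By Theorem~\ref{thm:caracbc}, $L$ is definable in \bswd if and only if all of these equalities hold, which settles decidability.

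Strictly speaking there is no remaining obstacle in this corollary: all the mathematical weight has been placed on Theorem~\ref{thm:caracbc} (the characterization) and Proposition~\ref{prop:compu} (effective computation of \djuns of any fixed length), both of which may be assumed. What remains is purely bookkeeping --- checking that each ingredient (syntactic morphism, alphabet completion, \djuns of lengths $2$ and $3$, alternation schemas, the constant $\omega$) is effectively computable, and that condition (3) amounts to a finite boolean combination of equalities in the finite monoid $M$ --- and this is immediate from the constructions recalled above.
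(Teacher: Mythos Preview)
Your proposal is correct and follows essentially the same route as the paper: invoke the equivalence $(1)\Leftrightarrow(3)$ of Theorem~\ref{thm:caracbc}, compute the alternation schemas from $\fC_{2,2}[\alpha]$ via Proposition~\ref{prop:compu}, and check the finitely many instances of~\eqref{eq:bcs2} in the finite monoid. One harmless redundancy: you list $\fC_{2,3}[\alpha]$ and $\Cs_{2,3}[\alpha]$ among the data to compute, but equation~\eqref{eq:bcs2} is parametrized only by alternation schemas, which are defined purely from $\fC_{2,2}[\alpha]$ (the paper's own sentence before the corollary also mentions length-$3$ \dchains, a vestige of the conference version's three-equation characterization).
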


Note that the characterization of \bswd that we present in
Theorem~\ref{thm:caracbc} is different from the one presented
in the conference version of this paper~\cite{pzqalt}. Indeed,
the characterization of~\cite{pzqalt} was relying on three
equations, \eqref{eq:bcs2} and the following two equations, which are
parametrized by \dchains of length $3$.
\begin{equation}
  \begin{array}{rcl}
    s_1^{\omega}s_3^{\omega} & = & s_1^{\omega}s_2s_3^{\omega} \\[.75ex]
    s_3^{\omega}s_1^{\omega} & = & s_3^{\omega}s_2s_1^{\omega}
  \end{array} \quad \text{for all $(s_1,s_2,s_3) \in \Cstwo[\alpha]$}. \label{eq:bcs1}
\end{equation}

It turns out that~\eqref{eq:bcs1} is actually a consequence
of~\eqref{eq:bcs2}. We state this property in the next
lemma.

\begin{lemma} \label{lem:equiveq}
  Let $\alpha: A^* \rightarrow M$ be an alphabet compatible morphism
  into a finite monoid~$M$. If $\alpha$ satisfies~\eqref{eq:bcs2}, then
  $\alpha$ satisfies~\eqref{eq:bcs1} as well.
\end{lemma}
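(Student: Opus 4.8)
The goal is to show that the equation~\eqref{eq:bcs2}, parametrized by alternation schemas, entails the equation~\eqref{eq:bcs1}, parametrized by \dchains of length~$3$. The natural strategy is to exhibit, for a given \dchain $(s_1,s_2,s_3) \in \Cstwo[\alpha]$, two alternation schemas of the form needed to instantiate~\eqref{eq:bcs2}, and then check that the resulting instance of~\eqref{eq:bcs2} rewrites (possibly after multiplying by idempotent powers) into the desired instance of~\eqref{eq:bcs1}. So first I would unfold the definition of an alternation schema: a triple $(s,s_1,s_2)$ with idempotent $(e,\Es)\in\fC_{2,2}[\alpha]$ and $(r_1,\Rs_1),(r_2,\Rs_2)\in\fC_{2,2}[\alpha]$ such that $\content{e}=\content{s}$, $s=r_1er_2$, $s_1\in\Rs_1\cdot\Es$, $s_2\in\Es\cdot\Rs_2$.

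\textbf{Constructing the schemas.} Given $(s_1,s_2,s_3)\in\Cstwo[\alpha]$, I would first use the closure properties of \dchains (Facts~\ref{fct:high}, \ref{fct:dupli}, \ref{fct:chaincomp}) to produce suitable \chains of length $2$, and then take idempotent powers. Concretely: since $\Cstwo[\alpha]$ is a submonoid of $M^2$ closed under subwords and stutter, from $(s_1,s_2,s_3)$ one extracts pairs such as $(s_1,s_2)$ and $(s_2,s_3)$, and one can form $(s_1^\omega, s_1^\omega)$, $(s_3^\omega, s_3^\omega)$ as idempotent \chains (the diagonal pairs, coming from stutter applied to the trivial \chain $(s_1^\omega)$, which belongs because $A^*$-images of $\ksieq2$-chains include the constant ones). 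The key point is that one needs the alphabet condition $\content{e}=\content{s}$: here I would take $e = s_1^\omega$ (resp.\ $s_3^\omega$) and choose $s$ so that $\content{s}$ matches; since $\alpha$ is alphabet compatible and $(s_1,s_2)$ is a \dchain, $s_1,s_2$ (and also $s_2$) all have the \emph{same} alphabet — indeed words realizing a $\ksieq2$-chain agree on their alphabets, because the alphabet of a word is \siwu-definable hence \siwd-definable. This is the structural fact that makes the construction go through. Using it, $s_1^\omega$, $s_2$, $s_3^\omega$ all share one alphabet $B$, and one can build schemas $(s_1^\omega s_2 s_3^\omega,\, ?,\, ?)$ realizing the left and right contexts needed.

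\textbf{Deriving~\eqref{eq:bcs1}.} Once the two alternation schemas $(s,s_1',s_2')$ and $(t,t_1',t_2')$ with $\content{s}=\content{t}$ are in hand — chosen so that plugging them into~\eqref{eq:bcs2} produces the two sides of one of the equations in~\eqref{eq:bcs1} (say $s_1^\omega s_3^\omega = s_1^\omega s_2 s_3^\omega$) — I would verify the equality of the monoid expressions by direct computation: expand $(s_1t_1)^\omega$, $(t_2s_2)^\omega$ with the chosen values, use idempotency of $e$-powers, and simplify. The second equation of~\eqref{eq:bcs1} ($s_3^\omega s_1^\omega = s_3^\omega s_2 s_1^\omega$) follows by the symmetric choice of contexts (swapping the roles of left and right, i.e.\ of $(r_1,\Rs_1)$ and $(r_2,\Rs_2)$ in the schema definition). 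Throughout, one should double-check that every intermediate pair genuinely lies in $\fC_{2,2}[\alpha]$ — this is where the closure facts (subwords, stutter, product, and $\downclos$) are invoked repeatedly.

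\textbf{Main obstacle.} The delicate part is not the algebraic rewriting but the \emph{bookkeeping of the alphabet condition}: one must be sure that the idempotents one uses as the ``$e$'' of a schema have exactly the alphabet of the ``$s$'' one wants, and this forces a careful choice of which \dchain entries to use and in which order to multiply them. A secondary subtlety is matching the shape of~\eqref{eq:bcs2} — it has the rather rigid form $(s_1t_1)^\omega s (t_2s_2)^\omega = (s_1t_1)^\omega s_1 t s_2 (t_2s_2)^\omega$ — to the simpler shape of~\eqref{eq:bcs1}; the trick is to pick one of the two schemas to be ``trivial'' (contributing only idempotent diagonal \chains, so that $s_1' = t_1' = $ an idempotent, $s \mapsto s_2$, etc.), which collapses the left-hand product $(s_1t_1)^\omega$ to a single idempotent power $s_1^\omega$ and likewise on the right. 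I expect that after making the right trivial-schema choice, the verification reduces to a two-line monoid identity.
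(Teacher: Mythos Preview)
Your overall plan---produce two alternation schemas from the given \dchain $(s_1,s_2,s_3)$, plug them into~\eqref{eq:bcs2}, and simplify---is exactly the paper's approach, and your observation that $\content{s_1}=\content{s_2}=\content{s_3}$ (because alphabets are \siwd-detectable) is the right structural fact underpinning the alphabet condition. The paper's concrete choice is the pair of schemas $(s_1^\omega,\,s_1^\omega,\,s_3^\omega)$ and $(s_1^\omega s_2,\,s_1^\omega,\,s_3^\omega)$; with $s_1'=t_1'=s_1^\omega$ and $s_2'=t_2'=s_3^\omega$, equation~\eqref{eq:bcs2} collapses directly to $s_1^\omega s_3^\omega = s_1^\omega s_2 s_3^\omega$. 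So your intuition about one schema being ``simple'' and the outer idempotents collapsing is on target, though your tentative form $(s_1^\omega s_2 s_3^\omega,\,?,\,?)$ is not what is actually used.

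There is, however, a genuine gap in the toolkit you list. The closure properties you name---subwords, stutter, product, and downset---suffice to build the first schema $(s_1^\omega,s_1^\omega,s_3^\omega)$: take $(e,\Es)=(s_1,\{s_1,s_3\})^\omega$ (this \jun lies in $\fCtwotwo[\alpha]$ because any $k$-witness for $(s_1,s_3)$ is also one for $(s_1,s_1)$) and $(r_1,\Rs_1)=(r_2,\Rs_2)=(1_M,\{1_M\})$. But for the second schema $(s_1^\omega s_2,\,s_1^\omega,\,s_3^\omega)$ these closure facts alone do \emph{not} suffice. One needs $(r_2,\Rs_2)\in\fCtwotwo[\alpha]$ with $r_2$ contributing the factor $s_2$ while $\Es\cdot\Rs_2$ still contains $s_3^\omega$; concretely the paper takes $(r_2,\Rs_2)=(s_1^\omega s_2,\{s_3^\omega\})$, which requires first showing $(s_1^\omega,\{s_3^{2\omega-1}\})\in\fCtwotwo[\alpha]$. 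That \jun is \emph{not} obtainable from subwords, stutter, product and downset applied to the \chains derived from $(s_1,s_2,s_3)$: the exponent $2\omega-1$ is not a multiple of $\omega$, so it does not arise as a power of the idempotent $(e,\Es)$. The missing ingredient is Operation~\eqref{eq:oper} of the fixpoint algorithm---equivalently, Proposition~\ref{prop:compu}, which says $\fCtwotwo[\alpha]=\Sat_2(\fI_2)$ and hence is closed under $(e,\Es)\cdot(1_M,\Ts)\cdot(e,\Es)$ with $\Ts=\{t:\content t=\content e\}$. Since $\content{s_3^{\omega-1}}=\content{e}$, this operation yields the needed \jun after a downset step. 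You should add this to your plan; without it, the verification that the second triple is an alternation schema cannot be completed.
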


\begin{proof}
  Assume that $\alpha$ satisfies~\eqref{eq:bcs2} and let $(s_1,s_2,s_3)
  \in \Cstwo[\alpha]$. We have to prove that $s_1^{\omega}s_3^{\omega}
  = s_1^{\omega}s_2s_3^{\omega}$ and $s_3^{\omega}s_1^{\omega} =
  s_3^{\omega}s_2s_1^{\omega}$. We only prove the first equality, since the
  other one is obtained symmetrically. 

  We claim that $(s_1^\omega,s_1^\omega,s_3^\omega)$ and
  $(s_1^\omega s_2,s_1^\omega,s_3^\omega)$ are alternation schemas.
  Assuming this claim, note that since $(s_1,s_2,s_3)$ is a \dchain, we have
  in particular $\content{s_1} = \content{s_2} = \content{s_3}$, whence
  $\content{s_1^\omega} = \content{s_1^\omega s_2}$. It follows
  from~\eqref{eq:bcs2} that,
  \[
    (s_1^\omega s_1^\omega)^{\omega}s_1^\omega(s_3^\omega
    s_3^\omega)^{\omega} = (s_1^\omega
    s_1^\omega)^{\omega}s_1^\omega \cdot s_1^\omega
    s_2\cdot s_3^\omega (s_3^\omega
    s_3^\omega)^{\omega} 
  \]
  This exactly says that $s_1^{\omega}s_3^{\omega}=
  s_1^{\omega}s_2s_3^{\omega}$, as desired.

  Let us now prove the claim. Observe that since $(s_1,s_2,s_3) \in \Cstwo[\alpha]$, it is
  immediate from the definition of \dchains and \djuns that 
  $(s_1,\{s_1,s_3\}) \in \fC_{2,2}[\alpha]$ and $(s_2,\{s_3\}) \in 
  \fC_{2,2}[\alpha]$. We begin by proving that
  $(s_1^\omega,s_1^\omega,s_3^\omega)$  is an alternation schema.
  Let $(e,\Es) = (s_1,\{s_1,s_3\})^\omega \in \fC_{2,2}[\alpha]$ and
  $(r_1,\Rs_1) = (r_2,\Rs_2) = (1_M,\{1_M\}) \in \fC_{2,2}[\alpha]$.
  By definition, $s_1^\omega = e =r_1er_2$, $s_1^\omega
  \in \Es = \Rs_1\Es$ and $s_3^\omega \in \Es = \Es\Rs_2$. It follows
  that $(s_1^\omega,s_1^\omega,s_3^\omega)$  is an alternation
  schema.

  Finally, we prove that $(s_1^\omega
  s_2,s_1^\omega,s_3^\omega)$ is also an alternation schema. Let again
  $(e,\Es) = (s_1,\{s_1,s_3\})^\omega \in \fC_{2,2}[\alpha]$. Recall
  our algorithm for computing $\fC_{2,2}[\alpha]$ (see
  Section~\ref{sec:comput}). Since $\content{s_3} = \content{s_1} =
  \content{e}$,
  we know from Operation~\eqref{eq:oper} in the algorithm that,
  \[
    (e,\Es) \cdot (1_M,(s_3)^{\omega-1}) \cdot (e,\Es) \in \fC_{2,2}[\alpha]
  \]
  By closure under downset, it follows that
  $((s_1)^\omega,\{(s_3)^{2\omega-1}\}) \in \fC_{2,2}[\alpha]$. We
  define $(r_1,\Rs_1) = (1_M,\{1_M\}) \in \fC_{2,2}[\alpha]$ and 
  $(r_2,\Rs_2) =  ((s_1)^\omega,\{(s_3)^{2\omega-1}\}) \cdot
  (s_2,\{s_3\}) = ((s_1)^\omega s_2,\{(s_3)^{\omega}\}) \in
  \fC_{2,2}[\alpha]$. By definition, $(s_1)^\omega s_2 = r_1er_2$,
  $(s_1)^\omega \in \Rs_1\Es$ and $(s_3)^\omega \in \Es\Rs_2$. Finally,
  $\content{e}=\content{s_1^\omega s_2}$. It follows
  that $(s_1^\omega s_2,s_1^\omega,s_3^\omega)$  is an alternation
  schema.
\end{proof}

Note that we still use Equation~\eqref{eq:bcs1} in the proof of
Theorem~\ref{thm:caracbc} as it will be more convenient to use
it instead of~\eqref{eq:bcs2} in some places.

Another important remark is that there are similarities between
Theorem~\ref{thm:caracbc} and a theorem of~\cite{bpopen} that states
the decidable characterization of an entirely different formalism:
boolean combination of open sets 
of infinite trees. In~\cite{bpopen} as well, the authors present a
notion of ``\chains'' tailored to their formalism (although they do
not make the link with separation). This is not surprising as the
notion of \chain is quite generic for formalisms defined by boolean
combinations and what is specific is the algorithms computing them.

A more surprising fact is that our equations are very similar to
the ones stated in~\cite{bpopen}. Despite this fact, since the
formalisms are of different nature, the way the \chains
of~\cite{bpopen} and the way our \dchains are constructed are
completely independent. This also means that the proofs are also
mostly independent. However, we do reuse several combinatorial
arguments of~\cite{bpopen} at the end of the proof. One could say that
the proofs are both (very different) setups to apply similar
combinatorial arguments in the end.

\medskip It now remains to prove Theorem~\ref{thm:caracbc}. Observe that we
already know from Corollary~\ref{cor:membc} and Lemma~\ref{lem:extchains} that
1~$\Leftrightarrow$~2. To conclude the proof, we shall show that
1~$\Rightarrow$~3 and 3~$\Rightarrow$~2. The direction 3~$\Rightarrow$~2 is
the most involved proof of this paper. We devote three sections to this proof.
In Section~\ref{app:ctrees}, we define a key object for this
proof: \emph{\chains trees}. We then use this object to reduce the proof to
two independent propositions that are then proved in Sections~\ref{app:depth}
and~\ref{app:width}.

\medskip

We finish this section with the much easier 1~$\Rightarrow$~3
direction. Assume that $L$ is a \hbox{\bswd}-definable language and
let $\alpha: A^* \rightarrow M$ be the alphabet completion of its
syntactic morphism. We prove that $\alpha$
satisfies~\eqref{eq:bcs2}. This is an \efgame argument. We begin with
a lemma on alternation schemas, which formalizes the property we
sketched above.

\begin{lemma} \label{lem:schemprop}
  Assume that $(s,s_1,s_2)$ is an alternation schema. Then for all $k
  \in \nat$, there exist $w,w_1,w_2 \in A^*$ such that:
  \begin{itemize}
  \item $\alpha(w) = s, \alpha(w_1) = s_1$ and $\alpha(w_2) = s_2$.
  \item for all $u \in \content{s}^*$, $w \ksieq{2} w_1uw_2$.
  \end{itemize}
\end{lemma}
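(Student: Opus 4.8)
The plan is to unfold the definition of an alternation schema and chase witnesses through the $\fC_{2,2}[\alpha]$ definitions, using the $\sic2$-property Lemma to synchronize everything. Fix $k \in \nat$ and let $k' = \ell_{2,2}$ (or any rank large enough that $\Cs_{2,2}^{k'}[\alpha] = \Cs_{2,2}[\alpha]$ and $\fC_{2,2}^{k'}[\alpha] = \fC_{2,2}[\alpha]$, available from Lemma~\ref{fct:chainref2}), taking $k'$ also larger than the $k$ we were handed. Since $(s,s_1,s_2)$ is an alternation schema, we have $(r_1,\Rs_1),(r_2,\Rs_2),(e,\Es) \in \fC_{2,2}[\alpha]$ with $(e,\Es)$ idempotent, $\content{e}=\content{s}$, $s=r_1er_2$, $s_1 \in \Rs_1\Es$ and $s_2 \in \Es\Rs_2$. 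Write $s_1 = a_1 b_1$ with $a_1 \in \Rs_1$, $b_1 \in \Es$, and $s_2 = b_2 a_2$ with $b_2 \in \Es$, $a_2 \in \Rs_2$.

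First I would extract $k'$-witnesses. From $(r_1,\Rs_1) \in \fC_{2,2}^{k'}[\alpha]$ there is a word $p$ with $\alpha(p)=r_1$ and, for the chain $(a_1) \in \Rs_1$, a word $p_1$ with $\alpha(p_1)=a_1$ and $p \ksieq{2} p_1$ (at rank $k'$). Likewise from $(e,\Es)$ there is a word $f$, $\alpha(f)=e$, with $f \ksieq{2} g_1$ and $f \ksieq{2} g_2$ where $\alpha(g_1)=b_1$, $\alpha(g_2)=b_2$ — note crucially that $f$ is a \emph{single} $k'$-witness that dominates both, which is exactly the extra power of junctures over chains. And from $(r_2,\Rs_2)$ there is $q$, $\alpha(q)=r_2$, with $q \ksieq{2} q_2$, $\alpha(q_2)=a_2$. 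Since $\alpha$ is alphabet compatible and $\content{e}=\content{s}$, and since $f \ksieq{2} g_1,g_2$ and $\ksieq2$ at rank $\geq 1$ preserves the alphabet, we get $\content{f}=\content{g_1}=\content{g_2}=\content{e}=\content{s}=:B$. Now set $w = p\, f^{2h} q$, $w_1 = p_1 g_1 f^{h}$, $w_2 = f^{h} g_2 q_2$ for a suitably large exponent $h$ (of the form $\omega(M)\cdot 2^{2k''}$ for an appropriate $k''$ depending on $k$). Then $\alpha(w) = r_1 e^{2h+?} r_2 = r_1 e r_2 = s$ (using idempotency of $e$), $\alpha(w_1) = a_1 b_1 e^h = s_1$, $\alpha(w_2) = e^h b_2 a_2 = s_2$, as required for the first bullet.

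For the second bullet, fix $u \in B^*$. I need $w \ksieq2 w_1 u w_2$, i.e. $p f^{2h} q \ksieq2 p_1 g_1 f^h u f^h g_2 q_2$. Since $\ksieq2$ is a pre-congruence (Lemma~\ref{lem:efconcat}) and $p \ksieq2 p_1$, $q \ksieq2 q_2$, it suffices to show $f^{2h} \ksieq2 g_1 f^h u f^h g_2$, and again by pre-congruence together with $f \ksieq2 g_1$ and $f \ksieq2 g_2$ it suffices to show $f^{2h} \ksieq2 f\, f^h u f^h f$ — i.e. essentially $f^{2h'} \ksieq2 f^{h'} u f^{h'}$ for appropriate exponents. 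This is precisely the content of the $\sic2$-property Lemma (Lemma~\ref{lem:siprop}): since $\content{u}=\content{f}=B$ we have $u \sieq{k}{1} f^{2^k}$ at rank $k$ (a word over alphabet $B$ and a large power of another word over $B$ are $\sieq{k}{1}$-equivalent — a trivial consequence of the $\Sigma_1$ game, which only tests which letters occur), so $u \ksieq{1} f^{2^k}$ and Lemma~\ref{lem:siprop} with $i=1$ gives $f^{\ell}f^{r} \ksieq{2} f^{\ell'} u f^{r'}$ once $\ell,r,\ell',r' \geq 2^k$. Aligning exponents: choose $h$ large enough that all the block lengths appearing exceed $2^k$; since $h$ was taken of the form $\omega(M) \cdot 2^{2k}$ this is automatic, and the $\omega(M)$ factor lets us absorb the $\pm 1$ shifts in exponents coming from writing $f^{2h} = f^h f^h$ versus the $f\,f^h \cdots f^h\, f$ form (aperiodicity, Lemma~\ref{lem:aperiodic}). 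Finally, since $k' > k$, $\ksieq2$ at rank $k'$ implies $\ksieq2$ at rank $k$, so the witnesses built at rank $k'$ work for the originally requested $k$; more directly one simply builds everything at rank $k$ from the start after replacing $k$ by $\max(k, \text{something})$.

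The main obstacle — though it is bookkeeping rather than a genuine difficulty — is keeping the exponents straight: one must verify that the single idempotent power $h$ can be chosen uniformly so that (a) $\alpha(w)$, $\alpha(w_1)$, $\alpha(w_2)$ collapse to $s$, $s_1$, $s_2$ using $e$ idempotent, and (b) all the block-length hypotheses of Lemma~\ref{lem:siprop} (each block $\geq 2^k$) are met simultaneously for the chosen rank-$k$ game, with the aperiodicity lemma handling the off-by-one discrepancies between $f^{2h}$ and the padded form $f \cdot f^h u f^h \cdot f$. Everything else is a routine application of the pre-congruence, aperiodicity, and $\sic2$-property lemmas already proved in Section~\ref{sec:tools}, plus the stabilization statement of Lemma~\ref{fct:chainref2} to turn $\fC_{2,2}[\alpha]$-membership into the existence of $k'$-witnesses.
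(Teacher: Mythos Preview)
Your overall strategy is the paper's: pick a single witness $f$ for the idempotent juncture $(e,\Es)$, build $w$ as a product $(\text{witness for }r_1)\cdot f^{\text{large}}\cdot(\text{witness for }r_2)$, then use the $\sic2$-property Lemma on the $f$-block and pre-congruence on the flanks. The remark that a \emph{single} witness $f$ for $(e,\Es)$ is what makes the $f^{2h}$ block amenable to Lemma~\ref{lem:siprop} is exactly right, and the detour through $\ell_{2,2}$ is harmless (the paper just works at rank $k$ directly, since $\fC_{2,2}[\alpha]\subseteq\fC_{2,2}^k[\alpha]$).

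There is, however, a real gap in your construction of $w_1$ and $w_2$, and it is not just bookkeeping. You set $w_1 = p_1 g_1 f^{h}$ and compute $\alpha(w_1)=a_1 b_1 e^{h}=a_1 b_1 e$. For this to equal $s_1=a_1 b_1$ you need $b_1 e = b_1$, and the idempotency you invoke is only $e^2=e$ in $M$; it does \emph{not} give $b e = b$ for $b\in\Es$. (Idempotency of the \emph{juncture} says $\Es\cdot\Es=\Es$, which is a statement about the set, not about absorption by $e$ on the right.) So as written, $\alpha(w_1)$ need not be $s_1$; symmetrically for $w_2$. Dropping the $f^h$ padding fixes the image but then the inner block is too short for Lemma~\ref{lem:siprop}.

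The fix --- and this is precisely what the paper does --- is to use idempotency of the juncture at the level of $\Es$: since $\Es=\Es^{h}$, write $s_1\in\Rs_1\Es=\Rs_1\Es^{h}$ as $s_1=a_1 c_1\cdots c_h$ with $a_1\in\Rs_1$ and each $c_j\in\Es$, pick witnesses $g^{(j)}$ with $f\ksieq{2}g^{(j)}$ and $\alpha(g^{(j)})=c_j$, and set $w_1=p_1 g^{(1)}\cdots g^{(h)}$. Now $\alpha(w_1)=a_1 c_1\cdots c_h=s_1$ on the nose, and $p f^{h}\ksieq{2} p_1 g^{(1)}\cdots g^{(h)}=w_1$ by pre-congruence, giving exactly the length needed for Lemma~\ref{lem:siprop}. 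With this correction your argument goes through and coincides with the paper's.
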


\begin{proof}
  This is proved using Lemma~\ref{lem:siprop}. Fix an alternation schema
  $(s,s_1,s_2)$ and $k \in \nat$. Let $(r_1,\Rs_1), (r_2,\Rs_2),
  (e,\Es) \in \fC_{2,2}[\alpha]$ be as in the definition of alternation schemas.

  Set $h = 2^{2k}$.
  Since $(e,\Es)$ is idempotent, we have $(e,\Es)^h = (e,\Es)$. By
  definition of \djuns, we obtain words $v_1,v'_1, v_2,v'_2,x,x'_1,x'_2
  \in A^*$ satisfying the following properties:
  \begin{enumerate}[label=$\alph*)$]
  \item $\alpha(v_1) = r_1$,
    $\alpha(v_2) = r_2$,
    $\alpha(x) = e$,
    $\alpha(v'_1x'_1) = s_1$, $\alpha(x'_2v'_2) =s_2$.
  \item $v_1x^{h} \ksieq{2} v'_1x'_1$ and $x^{h}v_2 \ksieq{2} x'_2v'_2$.
  \end{enumerate}
  Set $w = v_1x^{2h}v_2$, $w_1 = v'_1x'_1$ and $w_2 = x'_2v'_2$. It follows from
  Item~$a)$ that $\alpha(w) = r_1er_2= s$, $\alpha(w_1) = s_1$ and
  $\alpha(w_2) = s_2$. Finally, since $\alpha$ is alphabet compatible, we have
  $\content{x}=\content{e}$, and by definition of alternation schemas,
  $\content{e}=\content{s}$. Therefore, it is immediate using \efgame games
  that for any word $u \in \content{s}^*$, $u \ksieq{1} x^{h}$. It then follows
  from Lemma~\ref{lem:siprop} that $x^ {2h}\ksieq{2} x^{h}ux^{h}$, whence by
  Lemma~\ref{lem:efconcat}, that $w\ksieq{2} v_1x^{h}ux^{h}v_2$. Finally, using
  Item $b)$, we conclude that $w \ksieq{2} w_1uw_2$.
\end{proof}

We can now use Lemma~\ref{lem:schemprop} to prove that $\alpha$
satisfies Equation~\eqref{eq:bcs2}. Let $(s,s_1,s_2)$ and
$(t,t_1,t_2)$ be alternation schemas such that $\content{s} =
\content{t}$. Let $w,w_1,w_2 \in A^*$ of images $s,s_2,s_2$ and
$z,z_1,z_2 \in A^*$ of images $t,t_1,t_2$ satisfying the conditions of
Lemma~\ref{lem:schemprop}. We prove that for any $u,v \in A^*$:
\begin{equation}
  u[(z_1w_1)^{N}z(w_2z_2)^{N}]v \ \ \ \kbceq{2}\ \ \
  u[(z_1w_1)^{N}z_1wz_2(w_2z_2)^{N}]v  \label{eq:bcnec2}
\end{equation}
where again $N=2^k\omega$. By definition of the alphabet completion of the
syntactic monoid, of the alphabetic conditions and
since $L$ is defined by a  \bswd formula of rank $k$,
Equation~\eqref{eq:bcs2} will follow. Since  $\content{s} =
\content{t}$, the words $w,w_1,w_2$  and $z,z_1,z_2$  given by
Lemma~\ref{lem:schemprop} satisfy
\begin{align}
  \label{eq:9}
  z &\ksieq{2} z_1wz_2,\\
  \label{eq:10}
  w &\ksieq{2} w_1zw_2.
\end{align}
Using Lemma~\ref{lem:efconcat}, we may multiply \eqref{eq:9} by
$u(z_1w_1)^{N}$ on the left and by $(w_2z_2)^{N}v$ on the right:
\[
  u(z_1w_1)^{N}z(w_2z_2)^{N}v \ \ \ \ksieq{2}\ \ \
  u(z_1w_1)^{N}z_1wz_2(w_2z_2)^{N}v.
\]
For the converse direction, from Lemma~\ref{lem:aperiodic}, we have
$(z_1w_1)^{N} \ksieq{2} (z_1w_1)^{N-1}$
and $(w_2z_2)^{N} \ksieq{2}
(w_2z_2)^{N-1}$. Using   \eqref{eq:10} and Lemma~\ref{lem:efconcat} again, we
conclude that:
\[u(z_1w_1)^{N}z_1wz_2(w_2z_2)^{N}v \ \ \ \ksieq{2}\ \ \
  u(z_1w_1)^{N-1}z_1(w_1zw_2)z_2(w_2z_2)^{N-1}v\]
\emph{i.e.},
\[u(z_1w_1)^{N}z_1wz_2(w_2z_2)^{N}v\ \ \ \ksieq{2}\ \ \ u(z_1w_1)^{N}z(w_2z_2)^{N}v.\]

\section{\texorpdfstring{Proof of Theorem~\lowercase{\ref{thm:caracbc}}:
    \Chain Trees}{Proof of Theorem \ref{thm:caracbc}: Chain Trees}}
\label{app:ctrees}
In this section, we begin the proof of the difficult direction of
Theorem~\ref{thm:caracbc}. Given an alphabet compatible morphism
$\alpha: A^* \to M$, we prove that if Equation~\eqref{eq:bcs2}
is satisfied, then $\Cstwo[\alpha]$ has bounded
alternation. More precisely, we prove the contrapositive: if
$\Cstwo[\alpha]$ has unbounded alternation, then the equation is not
satisfied.

To prove this, we rely on a new notion: ``\emph{\chain trees}''. Chain
trees are a mean to analyze how \dchains with high alternation are
built. In particular, we will use them at the end of the section to
decide which equation is contradicted. Intuitively, a \chain tree is
associated to a single \dchain and represents a computation of our
least fixpoint algorithm of Section~\ref{sec:comput} that generates
this \dchain.

As we explained in the previous section, one can find connections
between our proof and that of the decidable characterization of
boolean combination of open sets of
trees~\cite{bpopen}. In~\cite{bpopen} as well, the authors consider a
notion of  ``\chains'', which corresponds to open sets of trees and
analyze how they are built. This is achieved with an object called
``Strategy Tree''. Though strategy trees and \chain trees share the
same purpose, \emph{i.e.}, analyzing how \chains are built, there is no
connection between the notions themselves since they deal with
completely different objects.

We organize the section in two subsections. First, we define the
general notion of \chain trees. Then, we use \chain trees to reduce
the proof of Theorem~\ref{thm:caracbc} to two independent propositions
(we will then prove these two propositions in Sections~\ref{app:depth}
and~\ref{app:width}).

\subsection{Definition}

Set $\alpha: A^* \rightarrow M$ as an alphabet compatible morphism
into a finite monoid $M$. We associate to $\alpha$ a set
$\ctc[\alpha]$ of \emph{\chain trees}. As we explained, a \chain tree
is associated to a single \dchain for $\alpha$ and represents a way to
compute this \dchain using our least fixpoint algorithm. However,
recall that this algorithm does not work directly with \chains but
with the more general notion of \juns. For this reason, we actually
define two notions:
\begin{enumerate}
\item The set $\cts[\alpha]$ of \emph{\jun trees} associated to
  $\alpha$. Each tree in $\cts[\alpha]$ represents an
  actual computation of the least fixpoint algorithm. Hence, we can
  associate the result of this computation to the tree: this
  \djun is called the \emph{\jun value} of the tree.
\item The set $\ctc[\alpha]$ of \emph{\chain trees}. Each tree in
  $\ctc[\alpha]$ instantiates a \jun tree of $\cts[\alpha]$ and is
  associated to a specific \dchain that belongs to its \jun
  value. This \dchain is called the \emph{\chain value} of the \chain
  tree.
\end{enumerate}

\medskip
\noindent
{\bf \Jun Trees.} For any $n \geqslant 1$, a \emph{\jun tree} $T$ \emph{of
  level} $n$ for $\alpha$ is an ordered unranked tree that may have
four types of nodes: product nodes, operation nodes, initial leaves
and ports. To each node that is not a port, we associate a \emph{\jun
  value}, $\vals{x} \in M \times 2^{M^{n-1}}$, by induction on the
structure of the tree. Intuitively, each type of node corresponds to a
stage of the least fixpoint algorithm while constructing the \jun value of the
tree. We now give a precise definition
of each type of node.

\begin{itemize}
\item {\it Initial Leaves.} An initial leaf $x$ is labeled with a constant
  \dchain $(s,\ldots,s) \in \Cstwon[\alpha]$. We set
  $\vals{x} = (s,\{(s,\ldots,s)\})\in\fCtwon[\alpha]$. Initial leaves
  correspond to the set $\fI_n$ of trivial \djuns, which serves to initialize the
  least fixpoint algorithm when it starts.

\item {\it Ports.} A port is an unlabeled leaf whose parent has to be an
  operation node. In particular, a port may never be the root of the tree.
  Ports have no \jun value and are simply placeholders that get replaced by
  true leaves when the \jun tree is instantiated into a \chain tree (see
  below).

\item {\it Product Nodes.} A product node $x$ is unlabeled. It has
  exactly two children $x_1$ and $x_2$, which may be of any node type except
  `port'. We set $\vals{x} = \vals{x_1} \cdot \vals{x_2}$. Product nodes
  correspond to Operation~\eqref{eq:mul} in the fixpoint algorithm.

\item {\it Operation Nodes.} An operation node $x$ is unlabeled and has
  exactly $3$ children $x_1,x_2$ and $x_3$ from left to right. The middle
  child $x_2$ has to be a port. The left and right children, $x_1$ and $x_3$
  may be of any node type except `port'. However, the trees rooted in $x_1$ and
  $x_3$ must be \emph{identical}. Moreover, we require $\vals{x_1} = \vals{x_3}$
  to be an \emph{idempotent} $(e,\Es)$ of $M \times 2^{M^{n-1}}$.
  Finally, we set the \jun value of the operation node $x$ as
  $\vals{x} = (e,\Es) \cdot (1_M,\Ts) \cdot (e,\Es)$ with
  $\Ts = \{(t_1,\dots,t_{n-1}) \in \Cs_{2,n-1}[\alpha] \mid \content{t_1} =
  \content{e}\}$.
  Operation nodes and ports correspond to Operation~\eqref{eq:oper} in the
  fixpoint algorithm.
\end{itemize}

\begin{figure}[ht]
  \begin{center}
    \begin{tikzpicture}
      \node[nod] (x1) at (-4.0,-1.2) {o};
      \node[port] (p1) at (-4.0,-2.4) {};

      \draw (x1) to (p1);

      \node[nod] (y1) at (-8.0,-2.4) {\scriptsize p};
      \node[nod] (y2) at (0.0,-2.4) {\scriptsize p};

      \node[nod] (x11) at (-9.5,-3.6) {o};
      \node[nod] (x12) at (-1.5,-3.6) {o};

      \node[port] (p2) at (-9.5,-4.8) {};
      \node[port] (p3) at (-1.5,-4.8) {};

      \draw (x11) to (p2);
      \draw (x12) to (p3);

      \node[nol,anchor=east] (x21) at (-6.5,-3.6) {\small $(r,\dots,r)$};
      \node[nol,anchor=east] (x22) at (1.5,-3.6) {\small $(r,\dots,r)$};

      \node[nod] (x31) at (-10.5,-4.8) {\scriptsize p};
      \node[nod] (x32) at (-8.5,-4.8) {\scriptsize p};
      \node[nod] (x33) at (-2.5,-4.8) {\scriptsize p};
      \node[nod] (x34) at (-0.5,-4.8) {\scriptsize p};

      \begin{scope}[xshift=0.5cm]
        \def\esp{\hspace*{.3mm}}
        \node[nol,anchor=east] (z1) at (-11.3,-5.9) {\small $(s,\dots,s)$};
        \node[nol,anchor=east] (z2) at (-10.7,-5.9) {\small $\esp(t,\dots,t)\esp$};
        \node[nol,anchor=east] (z3) at (-9.3,-5.9) {\small $(s,\dots,s)$};
        \node[nol,anchor=east] (z4) at (-8.7,-5.9) {\small $\esp(t,\dots,t)\esp$};
        \node[nol,anchor=east] (z5) at (-3.3,-5.9) {\small $(s,\dots,s)$};
        \node[nol,anchor=east] (z6) at (-2.7,-5.9) {\small $\esp(t,\dots,t)\esp$};
        \node[nol,anchor=east] (z7) at (-1.3,-5.9) {\small $(s,\dots,s)$};
        \node[nol,anchor=east] (z8) at (-0.7,-5.9) {\small $\esp(t,\dots,t)\esp$};
      \end{scope}

      \draw (x1) to (y1);
      \draw (x1) to (y2);

      \draw (y1) to (x21.east);
      \draw (y1) to (x11.north);
      \draw (y2) to (x22.east);
      \draw (y2) to (x12.north);

      \draw (x11) to (x31.north);
      \draw (x11) to (x32.north);
      \draw (x12) to (x33.north);
      \draw (x12) to (x34.north);

      \draw (x31) to (z1.east);
      \draw (x31) to (z2.east);
      \draw (x32) to (z3.east);
      \draw (x32) to (z4.east);
      \draw (x33) to (z5.east);
      \draw (x33) to (z6.east);
      \draw (x34) to (z7.east);
      \draw (x34) to (z8.east);

      \draw[decorate,decoration={brace},thick] ($(z2)+(+0.2,-.75)$) to coordinate[below] (c1) ($(z1)+(-0.2,-.75)$);
      \node[align=left,anchor=north west] at ($(c1) -(0.6,0.2)$) {\small\small \Jun value: $(e,\Es) = \{(st,\{(st,\dots,st)\})\}$
        (idempotent because of the parent operation node)};

      \draw[decorate,decoration={brace},thick] ($(x21)+(+0.2,-0.9)$) to ($(x21)+(-0.2,-0.9)$);
      \node[align=left,anchor=north] at ($(x21)-(-0.3,1.1)$) {\small \\\small \Jun value: \\$\ \{(r,\{(r,\dots,r)\})\}$};

      \draw[decorate,decoration={brace},thick] ($(z4)+(+0.2,-1.8)$) to coordinate[below] (d1) ($(z1)+(-0.2,-1.8)$);
      \node[align=left,anchor=north west] at ($(c1) -(0.6,1.1)$) {\small \Jun
        value: $(p,\Ps) = (e,\Es) \cdot (1_M,\{(t_1,\dots,t_{n-1}) \in
        \Cs_{2,n-1}[\alpha] \mid \content{t_1} = \content{e}\}) \cdot (e,\Es)$};

      \draw[decorate,decoration={brace},thick] ($(z4)+(+1.9,-2.7)$) to
      coordinate[below] (e1) ($(z1)+(-0.2,-2.7)$);
      \node[align=left,anchor=north west] at ($(c1) -(0.6,2.0)$) {\small \Jun
        value: $(f,\Fs) = (p,\Ps) \cdot (r,\{(r,\dots,r)\})$ (idempotent)};

      \draw[decorate,decoration={brace},thick] ($(z4)+(+10.4,-3.6)$) to
      coordinate[below] (f1) ($(z1)+(-0.2,-3.6)$);
      \node[align=left,anchor=north west] at ($(c1) -(0.6,2.9)$) {\small \small \Jun value: $(f,\Fs) \cdot (1_M,\{(t_1,\dots,t_{n-1}) \in
        \Cs_{2,n-1}[\alpha] \mid \content{t_1} = \content{f}\}) \cdot (f,\Fs)$};

      \node[anchor=west] at (-9.75,-11.7) {$=$ Operation Node (no label)};
      \node[anchor=west] at (-9.75,-12.2) {$=$ Product Node (no label)};
      \node[anchor=west] at (-3.75,-11.7) {$=$ Initial Leaf (label written inside)};
      \node[anchor=west] at (-3.75,-12.2) {$=$ Port (no label)};

      \node[nod] at (-10.0,-11.7) {o};
      \node[nod] at (-10.0,-12.2) {\scriptsize p};
      \node[nol] at (-4.0,-11.7) {};
      \node[port] at (-4.0,-12.2) {};
    \end{tikzpicture}
  \end{center}
  \caption{An example of \jun tree of level $n$}
  \label{fig:ctree}
\end{figure}

We denote by $\cts[\alpha]$ the set of \jun trees that can be
associated to $\alpha$. If $T \in \cts[\alpha]$, we denote by
$\vals{T}$ the \jun value of the root of $T$. An example of \jun tree
is given in Figure~\ref{fig:ctree}. The following proposition is
essentially an alternate statement of Proposition~\ref{prop:compu}.

\begin{proposition} \label{prop:ctt}
  Let $n \geqslant 1$. Then,
  \[
    \begin{array}{lcl}
      \fCtwon[\alpha] & = & \downclos \big\{\vals{T} \mid T \in \cts[\alpha] \text{ with
                            level $n$}\big\},\\[1ex]
      \fCtwo[\alpha] & = & \downclos \big\{\vals{T} \mid T \in \cts[\alpha]\big\}.

    \end{array}
  \]
\end{proposition}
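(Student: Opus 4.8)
The statement I want to prove is Proposition~\ref{prop:ctt}, which asserts that the downward closure of the set of \jun values of \jun trees of level $n$ equals $\fCtwon[\alpha]$, and similarly without the level restriction. Since a \jun tree is (by construction) nothing more than a syntactic record of a computation of the fixpoint algorithm $\Sat_n$ applied to the initial set $\fI_n$, the proposition is essentially a reformulation of Proposition~\ref{prop:compu}, namely $\fCtwon[\alpha] = \Sat_n(\fI_n)$. The plan is to prove the two inclusions separately, translating back and forth between ``being obtainable by the fixpoint closure $\Sat_n$'' and ``being the \jun value of some \jun tree (up to downset)''. I will argue the level-$n$ version; the unrestricted version then follows by taking the union over all $n$ and using that $\Sat_n$ and $\fg$-values respect the length parameter.

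\textbf{Inclusion $\downclos\{\vals{T}\mid T\in\cts[\alpha],\ \text{level }n\} \subseteq \fCtwon[\alpha]$.} First I would show by structural induction on a \jun tree $T$ of level $n$ that $\vals{T}\in\fCtwon[\alpha]$. For an initial leaf labeled with a constant \dchain $(s,\dots,s)\in\Cstwon[\alpha]$, its \jun value $(s,\{(s,\dots,s)\})$ lies in $\fI_n\subseteq\fCtwon[\alpha]$ (by Fact~\ref{fct:recoverchains}, or directly from the definition). For a product node, $\vals{x}=\vals{x_1}\cdot\vals{x_2}$, which is in $\fCtwon[\alpha]$ by Fact~\ref{fct:chaincomp2} (it is a submonoid of $M\times 2^{M^{n-1}}$). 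For an operation node, $\vals{x}=(e,\Es)\cdot(1_M,\Ts)\cdot(e,\Es)$ with $(e,\Es)$ idempotent and $\Ts$ as in Operation~\eqref{eq:oper}; since $(e,\Es)=\vals{x_1}\in\fCtwon[\alpha]$ by induction, this is exactly closure under Operation~\eqref{eq:oper} — here I invoke Proposition~\ref{prop:correc}, or equivalently Proposition~\ref{prop:compu}, which tells us $\fCtwon[\alpha]$ is closed under the three operations $(Op_1),(Op_2),(Op_3)$. Finally, $\fCtwon[\alpha]$ is closed under downset by Fact~\ref{fct:subsets}, so taking $\downclos$ stays inside $\fCtwon[\alpha]$.

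\textbf{Inclusion $\fCtwon[\alpha] \subseteq \downclos\{\vals{T}\mid T\in\cts[\alpha],\ \text{level }n\}$.} By Proposition~\ref{prop:compu} we have $\fCtwon[\alpha]=\Sat_n(\fI_n)$, so it suffices to show every element of $\Sat_n(\fI_n)$ is $\le$ some $\vals{T}$. Since $\Sat_n(\fI_n)$ is the smallest set containing $\fI_n$ and closed under $(Op_1),(Op_2),(Op_3)$, and since $\downclos\{\vals{T}\mid T\}$ already contains $\fI_n$ (use a single initial leaf for each trivial \djun) and is closed under downset ($Op_1$), I only need to verify that $\downclos\{\vals{T}\mid T\}$ is closed under the product operation $(Op_2)$ and the saturation operation $(Op_3)$. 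For $(Op_2)$: if $(s,\Ss)\le\vals{T_1}$ and $(t,\Ts)\le\vals{T_2}$, form a product node with children $T_1,T_2$; its \jun value is $\vals{T_1}\cdot\vals{T_2}\ge(s,\Ss)\cdot(t,\Ts)$ by monotonicity of multiplication on \juns, so the product is in the downset. A subtlety: one must check $(s,\Ss)\cdot(t,\Ts)\le\vals{T_1}\cdot\vals{T_2}$, i.e. that $\subseteq$ on the \chain-sets is preserved by componentwise concatenation — this is routine from the definition of the \jun monoid operation. For $(Op_3)$: given $(s,\Ss)\le\vals{T'}$ for some tree $T'$, I need a tree whose value dominates $(s,\Ss)^\omega\cdot(1_M,\Ts)\cdot(s,\Ss)^\omega$. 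Here the issue is that an operation node requires its left and right children to have an \emph{idempotent} \jun value and to be \emph{identical} trees; so I would first build the tree $T'' = T'\cdot T'\cdots T'$ ($\omega$-many copies combined by product nodes) whose value is $\vals{T'}^\omega$, an idempotent, and which dominates $(s,\Ss)^\omega$; then place an operation node with both side-children equal to this $T''$ and a port in the middle. Its \jun value is $\vals{T'}^\omega\cdot(1_M,\Ts')\cdot\vals{T'}^\omega$ where $\Ts'=\{(t_1,\dots,t_{n-1})\in\Cs_{2,n-1}[\alpha]\mid\content{t_1}=\content{\vals{T'}^\omega}\}$; since $\content{}$ is determined by the monoid element, $\content{\vals{T'}^\omega}=\content{s}$ whenever $(s,\Ss)\le\vals{T'}^\omega$, so $\Ts'=\Ts$, and we get domination of $(s,\Ss)^\omega(1_M,\Ts)(s,\Ss)^\omega$. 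Thus the downset of \jun-tree values is closed under all three operations, hence contains $\Sat_n(\fI_n)=\fCtwon[\alpha]$.

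\textbf{Main obstacle.} The routine-looking but genuinely load-bearing point is the compatibility of the downset relation $\le$ on \juns with the monoid operations and with the $\omega$-power/insertion operation — in particular checking that when I replace a \jun $(s,\Ss)$ by a dominating \jun $\vals{T}$ and apply $(Op_3)$, the parameter set $\Ts$ that Operation~\eqref{eq:oper} attaches does not change, which hinges on $\content{-}$ being a function of the $M$-component (valid because $\alpha$ is alphabet compatible). The other point requiring care is purely bookkeeping: matching the ``level'' of trees with the length parameter $n$ throughout, and handling the unrestricted equation $\fCtwo[\alpha]=\downclos\{\vals{T}\}$ by observing both sides decompose as disjoint unions over $n$. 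None of this is deep, but it is where a hasty proof could go wrong, so I would state the monotonicity facts as a short lemma before the induction.
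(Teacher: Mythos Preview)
Your proposal is correct and follows essentially the same approach as the paper, which dispatches the proposition in one line as ``Immediate from Proposition~\ref{prop:compu}.'' You have simply unpacked what that immediacy means: \jun trees are syntactic records of the fixpoint computation $\Sat_n$, so the downset of their values coincides with $\Sat_n(\fI_n)=\fCtwon[\alpha]$; the monotonicity and alphabet-compatibility checks you flag as the ``main obstacle'' are exactly the routine verifications hidden behind the paper's one-liner.
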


\begin{proof}
  Immediate from Proposition~\ref{prop:compu}.
\end{proof}

\medskip
\noindent
{\bf \Chain Trees.} \Chain trees are obtained by instantiating \jun
trees. Let $T$ be a \jun tree and let $n$ be its level. An
\emph{instantiation} of $T$ is a new tree $T'$ which is obtained
from $T$ by replacing all ports with new \emph{operation leaves}.

An operation leaf $x$ is labeled with a \chain of length
$n$. Moreover, this \chain has to satisfy an additional condition
with respect to its parent. Observe first that since ports carry no
information in \jun trees, \vals{t} remains well defined for any node
$t$ of $T'$ that is not a new operation leaf. Since $x$ replaces a
port, its parent $z$ has to be an operation node. We ask the label of
$x$ to be chosen in $\vals{z}$, \emph{i.e.}, in the \jun $(e,\Es) \cdot
(1_M,\Ts) \cdot (e,\Es)$ where $(e,\Es)$ is the (idempotent) \jun
value shared by the left and right children of $z$ and $\Ts =
\big\{(t_1,\dots,t_{n-1}) \in \Cs_{2,n-1}[\alpha] \mid \content{t_1} =
\content{e}\big\}$. Note that by Proposition~\ref{prop:ctt}, this means
that the label of $x$ belongs to $\Cs_{2,n}[\alpha]$.

\begin{figure}[ht]
  \begin{center}
    \begin{tikzpicture}
      \node[nod] (x1) at (-4.0,-1.2) {o};
      \node[nop,anchor=east] (p1) at (-4.0,-2.4) {\small
        $(f,r_1,\dots,r_{n-1})$};

      \node[align=center] (i1) at (-5.5,-7.4) {\small Label in\\\small $(f,\Fs)
        \cdot (1_M,\Rs) \cdot (f,\Fs)$\\$\Rs = \{(r_1,\dots,r_{n-1}) \in
        \Cs_{2,n-1}[\alpha] \mid \content{r_1} = \content{f}\}$};

      \draw[thick,->] (i1) to [out=90,in=-90] (p1.west);

      \draw (x1) to (p1);

      \node[nod] (y1) at (-8.0,-2.4) {\scriptsize p};
      \node[nod] (y2) at (0.0,-2.4) {\scriptsize p};

      \node[nod] (x11) at (-9.5,-3.6) {o};
      \node[nod] (x12) at (-1.5,-3.6) {o};

      \node[nop,anchor=east] (p2) at (-9.5,-4.8) {\small $(e,t_1,\dots,t_{n-1})$};
      \node[nop,anchor=east] (p3) at (-1.5,-4.8) {\small $(e,t'_1,\dots,t'_{n-1})$};

      \node[align=center] (i2) at (-5.5,-8.9) {\small Labels in\\\small $(e,\Es)
        \cdot (1_M,\Ts) \cdot (e,\Es)$\\$\Ts = \{(t_1,\dots,t_{n-1}) \in
        \Cs_{2,n-1}[\alpha] \mid \content{t_1} = \content{e}\}$};

      \draw[thick,->] (i2) to [out=180,in=-90] (p2.west);
      \draw[thick,->] (i2) to [out=0,in=-90] (p3.west);

      \draw (x11) to (p2);
      \draw (x12) to (p3);

      \node[nol,anchor=east] (x21) at (-6.5,-3.6) {\small $(r,\dots,r)$};
      \node[nol,anchor=east] (x22) at (1.5,-3.6) {\small $(r,\dots,r)$};

      \node[nod] (x31) at (-10.5,-4.8) {\scriptsize p};
      \node[nod] (x32) at (-8.5,-4.8) {\scriptsize p};
      \node[nod] (x33) at (-2.5,-4.8) {\scriptsize p};
      \node[nod] (x34) at (-0.5,-4.8) {\scriptsize p};

      \begin{scope}[xshift=0.5cm]
        \def\esp{\hspace*{.3mm}}
        \node[nol,anchor=east] (z1) at (-11.3,-5.9) {\small $(s,\dots,s)$};
        \node[nol,anchor=east] (z2) at (-10.7,-5.9) {\small $\esp(t,\dots,t)\esp$};
        \node[nol,anchor=east] (z3) at (-9.3,-5.9) {\small $(s,\dots,s)$};
        \node[nol,anchor=east] (z4) at (-8.7,-5.9) {\small $\esp(t,\dots,t)\esp$};
        \node[nol,anchor=east] (z5) at (-3.3,-5.9) {\small $(s,\dots,s)$};
        \node[nol,anchor=east] (z6) at (-2.7,-5.9) {\small $\esp(t,\dots,t)\esp$};
        \node[nol,anchor=east] (z7) at (-1.3,-5.9) {\small $(s,\dots,s)$};
        \node[nol,anchor=east] (z8) at (-0.7,-5.9) {\small $\esp(t,\dots,t)\esp$};

      \end{scope}

      \draw (x1) to (y1);
      \draw (x1) to (y2);

      \draw (y1) to (x21.east);
      \draw (y1) to (x11.north);
      \draw (y2) to (x22.east);
      \draw (y2) to (x12.north);

      \draw (x11) to (x31.north);
      \draw (x11) to (x32.north);
      \draw (x12) to (x33.north);
      \draw (x12) to (x34.north);

      \draw (x31) to (z1.east);
      \draw (x31) to (z2.east);
      \draw (x32) to (z3.east);
      \draw (x32) to (z4.east);
      \draw (x33) to (z5.east);
      \draw (x33) to (z6.east);
      \draw (x34) to (z7.east);
      \draw (x34) to (z8.east);

    \end{tikzpicture}
  \end{center}
  \caption{An instantiation of the \jun tree in Figure~\ref{fig:ctree}}
  \label{fig:ctree2}
\end{figure}
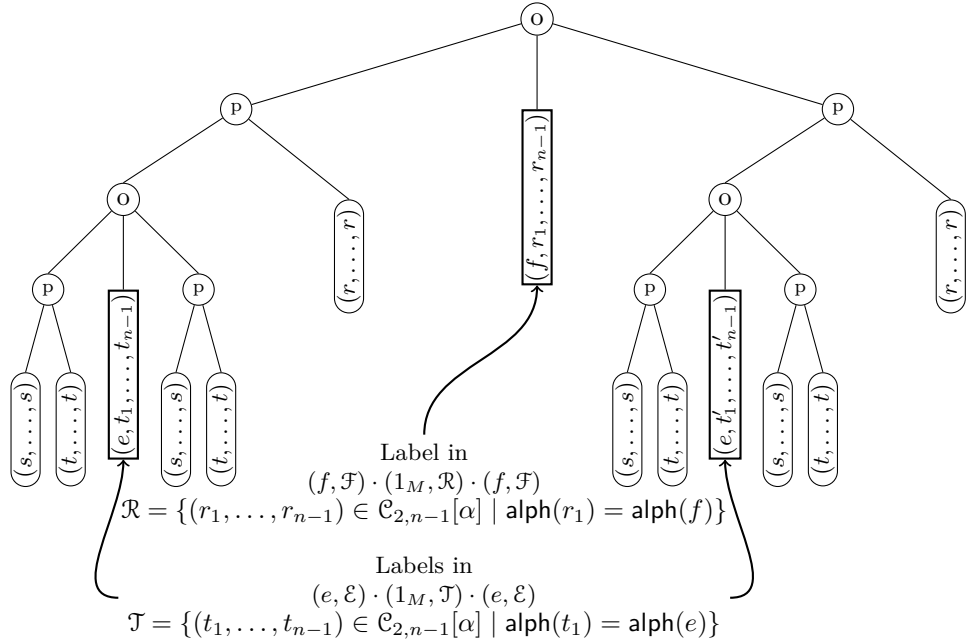

For every \jun tree $T$, we denote by \inst{T} the set of
instantiations of $T$ (see Figure~\ref{fig:ctree2} for an
example). The set $\ctc[\alpha]$ of \emph{\chain trees} associated to
$\alpha$ is the set $\bigcup_{T \in \cts[\alpha]} \inst{T}$. Finally, if $R \in
\ctc[\alpha]$ is of level $n$, to every node $x$ of $R$, we associate a
second value $\valc{x} \in M^n$, called the \emph{\chain value} of $x$.
If $x$ is an initial (resp. operation) leaf, $\valc{x}$ is simply the
label of $x$. If $x$ is a product node with children $x_1$ and $x_2$,
then $\valc{x} = \valc{x_1} \cdot \valc{x_2}$. Finally, if $x$ is an
operation node with children $x_1,x_2$ and $x_3$, then $\valc{x} =
\valc{x_1} \cdot \valc{x_2} \cdot \valc{x_3}$. We set $\valc{R}$ as
the \chain value of the root of $R$. The following facts are immediate
from the definitions.

\begin{fact} \label{fct:value}
  Let $T \in \ctc[\alpha]$ and let $x_1,\dots,x_m$ be its leaves listed
  from left to right. Then $\valc{T} = \valc{x_1} \cdots \valc{x_m}$.
\end{fact}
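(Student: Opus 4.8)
The plan is a straightforward structural induction on the \chain tree $T \in \ctc[\alpha]$, mirroring the inductive definition of the \chain value. For the base case, $T$ consists of a single node; since the parent of an operation leaf is required to be an operation node, a one-node \chain tree can only be an initial leaf. Then $T$ has a unique leaf $x_1$ and, by definition, $\valc{T} = \valc{x_1}$, which is the claimed product with one factor.

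For the inductive step I would split on the type of the root $x$ of $T$. If $x$ is a product node, it has exactly two children rooting subtrees $T_1$ and $T_2$, neither of which is a port, so both are genuine \chain trees. The leaves of $T$ listed from left to right are the leaves of $T_1$ followed by the leaves of $T_2$; applying the induction hypothesis to $T_1$ and to $T_2$, and using $\valc{T} = \valc{x_1}\cdot\valc{x_2} = \valc{T_1}\cdot\valc{T_2}$ together with associativity of the monoid $M^n$, yields the claim. If $x$ is an operation node, it has exactly three children $x_1,x_2,x_3$ from left to right, where $x_2$ is an operation leaf (the former port) and $x_1,x_3$ root \chain trees $T_1,T_3$. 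The leaf list of $T$ is then the leaf list of $T_1$, followed by the single leaf $x_2$, followed by the leaf list of $T_3$; since $\valc{T} = \valc{x_1}\cdot\valc{x_2}\cdot\valc{x_3} = \valc{T_1}\cdot\valc{x_2}\cdot\valc{T_3}$ and $\valc{x_2}$ is its own leaf value, the induction hypothesis applied to $T_1$ and $T_3$ closes this case.

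I do not expect any real obstacle: the statement is pure bookkeeping, expressing that each node's \chain value is, by construction, the left-to-right product of its children's \chain values. The only points to keep in mind are that in a \chain tree the middle child of an operation node is always a single operation leaf (never a nontrivial subtree), so it contributes exactly one leaf, and that the ``identical outer subtrees'' requirement inherited from \jun trees, as well as the idempotency and alphabet conditions on values, are irrelevant to this computation. Hence the induction goes through with no delicate step.
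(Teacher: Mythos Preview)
Your proposal is correct and matches the paper's treatment: the paper simply states that this fact is ``immediate from the definitions'' without giving a proof, and your structural induction is exactly the routine verification one would write out if pressed. There is nothing to add.
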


\begin{fact} \label{fct:value2}
  Let $T \in \ctc[\alpha]$ of level $n$ and let $x$ be a node of $T$. Then $\valc{x} \in
  \Cstwon[\alpha]$.
\end{fact}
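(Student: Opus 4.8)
The plan is to prove Fact~\ref{fct:value2} by structural induction on the node $x$, reducing everything to three already-established closure properties: that $\Cstwon[\alpha]$ is a submonoid of $M^n$ (Fact~\ref{fct:chaincomp}), that $\fCtwon[\alpha]$ is closed under subsets (Fact~\ref{fct:subsets}), and that a single-branch \djun lies in $\fCtwon[\alpha]$ exactly when the corresponding \dchain lies in $\Cstwon[\alpha]$ (Fact~\ref{fct:recoverchains}). Since an instantiation $T \in \ctc[\alpha]$ contains no ports, every node of $T$ is an initial leaf, an operation leaf, a product node, or an operation node, so these four cases exhaust the induction.

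First I would handle the base case, where $x$ is a leaf of $T$. If $x$ is an initial leaf, its label is by definition a constant \dchain $(s,\dots,s) \in \Cstwon[\alpha]$, so $\valc{x} \in \Cstwon[\alpha]$ immediately. If $x$ is an operation leaf, it replaces a port whose parent $z$ is an operation node, and $\valc{x}$ is chosen inside $\vals{z}$. The subtree of the underlying \jun tree rooted at $z$ is itself a \jun tree of level $n$, so by Proposition~\ref{prop:ctt} its value $\vals{z} = (f,\Fs)$ belongs to $\fCtwon[\alpha]$. Writing $\valc{x} = (f, c_2,\dots,c_n)$ with $(c_2,\dots,c_n) \in \Fs$, closure under subsets (Fact~\ref{fct:subsets}) gives $(f,\{(c_2,\dots,c_n)\}) \in \fCtwon[\alpha]$, and then Fact~\ref{fct:recoverchains} yields $\valc{x} = (f,c_2,\dots,c_n) \in \Cstwon[\alpha]$. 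This is exactly the observation already recorded right after the definition of instantiation, so only a reference to it is really needed.

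For the inductive step I would use the recursive definition of the \chain value: if $x$ is a product node with children $x_1, x_2$ then $\valc{x} = \valc{x_1} \cdot \valc{x_2}$, and if $x$ is an operation node with children $x_1, x_2, x_3$ then $\valc{x} = \valc{x_1} \cdot \valc{x_2} \cdot \valc{x_3}$ (here $x_2$ is one of the operation leaves already covered by the base case). By the induction hypothesis each factor lies in $\Cstwon[\alpha]$, and since $\Cstwon[\alpha]$ is a submonoid of $M^n$ by Fact~\ref{fct:chaincomp}, the product lies in $\Cstwon[\alpha]$ as well, closing the induction. Equivalently, one could first record the ``leaf-product'' identity $\valc{x} = \valc{y_1}\cdots\valc{y_m}$ over the leaves $y_1,\dots,y_m$ of the subtree rooted at $x$ (the very computation behind Fact~\ref{fct:value}) and then conclude in one step from Fact~\ref{fct:chaincomp} together with the leaf analysis above.

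There is essentially no hard step: all the genuine content has been pushed into Proposition~\ref{prop:ctt} (a restatement of Proposition~\ref{prop:compu}) and into the basic closure facts about \dchains and \djuns. The only point requiring a moment's care is that the value $\vals{z}$ of an \emph{internal} operation node $z$, and not merely of the root, belongs to $\fCtwon[\alpha]$; that is why I would phrase the base case via the observation that every subtree of a \jun tree is again a \jun tree of the same level, so that Proposition~\ref{prop:ctt} applies to it directly.
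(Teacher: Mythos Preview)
Your proposal is correct and matches the paper's intended argument. The paper gives no explicit proof here (it declares Facts~\ref{fct:value} and~\ref{fct:value2} ``immediate from the definitions''), but your structural induction is precisely the unpacking: the paper already records, right after defining instantiations, that every operation-leaf label lies in $\Cstwon[\alpha]$ by Proposition~\ref{prop:ctt}, initial leaves are $\Cstwon[\alpha]$ by definition, and the rest follows from Fact~\ref{fct:chaincomp} (closure under product), either node by node or in one step via the leaf-product identity of Fact~\ref{fct:value}.
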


We now prove that the definition of \chain trees matches our purpose, \emph{i.e.}, that the set of
\dchains is exactly the set of values of trees in $\ctc[\alpha]$. This
is a corollary of the following proposition.

\begin{proposition} \label{prop:ctree}
  Let $T \in \cts[\alpha]$. Then
  \begin{equation}
    \label{eq:11}
    \vals{T} = \bigl\{\valc{T'} \mid T' \in \inst{T}\bigr\}.
  \end{equation}
\end{proposition}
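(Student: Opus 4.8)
The plan is to prove Proposition~\ref{prop:ctree} by structural induction on the \jun tree $T\in\cts[\alpha]$, reading the left-hand side $\vals{T}$ of~\eqref{eq:11} as the set of \chains it abstracts, as is done throughout for \juns. Two elementary observations will be used repeatedly. First, for any two \juns $(s,\Ss)$ and $(t,\Ts)$ of the same length $n$, the set of \chains belonging to the product \jun $(s,\Ss)\cdot(t,\Ts)$ is exactly $\{\bar u\cdot\bar v\mid \bar u\in(s,\Ss),\ \bar v\in(t,\Ts)\}$, where $\cdot$ is componentwise multiplication in $M^n$; this is immediate from the definition of multiplication of \juns. Second, for any node $x$ of a \jun tree, $\vals{x}$ is unchanged when the tree is instantiated, since ports carry no \jun value (so, in particular, the idempotent attached to an operation node does not move when we pass from $T$ to an instantiation).

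For the base case, recall that a port can never be the root, so the only single-node \jun trees are initial leaves. If $T$ is an initial leaf labeled by $(s,\dots,s)$, then $\inst{T}=\{T\}$, $\valc{T}=(s,\dots,s)$, and $\vals{T}=(s,\{(s,\dots,s)\})$, so both sides of~\eqref{eq:11} equal $\{(s,\dots,s)\}$. For the inductive step, if the root of $T$ is a product node with left and right subtrees $T_1,T_2\in\cts[\alpha]$, then an instantiation of $T$ is obtained by independently choosing instantiations $T_1'\in\inst{T_1}$ and $T_2'\in\inst{T_2}$, and $\valc{T'}=\valc{T_1'}\cdot\valc{T_2'}$; by the induction hypothesis $\valc{T_i'}$ ranges exactly over the \chains of $\vals{T_i}$, so, combining with the first observation above and with $\vals{T}=\vals{T_1}\cdot\vals{T_2}$, we get $\{\valc{T'}\mid T'\in\inst{T}\}=\vals{T}$.

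The remaining case, an operation node at the root, is the one that carries the content. Here the children are two identical subtrees $T_1=T_3\in\cts[\alpha]$, whose common \jun value $(e,\Es)=\vals{T_1}$ is idempotent, together with a middle port, and $\vals{T}=(e,\Es)\cdot(1_M,\Ts)\cdot(e,\Es)$ for the set $\Ts$ prescribed in the definition of operation nodes. An instantiation $T'$ of $T$ amounts to instantiating the two copies $T_1$ and $T_3$ independently (replacing \emph{all} ports by new operation leaves) and choosing the new middle operation leaf's label $\bar c$ to be an arbitrary \chain of $\vals{T}$, which is exactly the instantiation condition on operation leaves; then $\valc{T'}=\valc{T_1'}\cdot\bar c\cdot\valc{T_3'}$. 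By the induction hypothesis $\valc{T_1'}$ and $\valc{T_3'}$ each range exactly over the \chains of $(e,\Es)$, so by the first observation $\{\valc{T'}\mid T'\in\inst{T}\}$ is precisely the set of \chains of the \jun $(e,\Es)\cdot\vals{T}\cdot(e,\Es)$. Finally, since $(e,\Es)$ is idempotent,
\[
  (e,\Es)\cdot\vals{T}\cdot(e,\Es)=(e,\Es)(e,\Es)(1_M,\Ts)(e,\Es)(e,\Es)=(e,\Es)(1_M,\Ts)(e,\Es)=\vals{T},
\]
which gives $\{\valc{T'}\mid T'\in\inst{T}\}=\vals{T}$ and closes the induction. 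Proposition~\ref{prop:ctt}, restated, is an immediate consequence.

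I do not expect a serious obstacle: the argument is a routine structural induction, and the single genuine ingredient is the idempotency of $(e,\Es)$, which is what makes the extra boundary factors $\valc{T_1'},\valc{T_3'}$ harmless in the operation-node case. The two points to watch — and which I expect to be the only potential sources of error — are that the two identical subtrees of an operation node must be allowed to be instantiated independently (otherwise one would only obtain a subset of $\vals{T}$ in the $\supseteq$ direction), and that $\vals{x}$ is insensitive to instantiation, so that the idempotent $(e,\Es)$ is the same for $T$ and for each $T'\in\inst{T}$.
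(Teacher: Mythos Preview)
Your proof is correct and follows essentially the same structural induction as the paper's own argument: same base case, same product-node case, and in the operation-node case the same use of the idempotency of $(e,\Es)$ to absorb the two extra boundary factors coming from $\valc{T_1'}$ and $\valc{T_3'}$. Your remark that the two identical subtrees must be instantiated independently is exactly what the paper relies on implicitly; the only slip is the closing sentence, where the immediate consequence you mean is Corollary~\ref{cor:ctree}, not Proposition~\ref{prop:ctt}.
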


\begin{proof}
  Before proving the statement, note that $\vals{T}$ is a \jun, while the
  right member of~\eqref{eq:11} is a set of \chains. To simplify the notation,
  we identify in this proof the \jun $\vals{T}$ in~\eqref{eq:11} with the set
  of \chains it contains, \emph{i.e.}, $\{(s,\bar s)\mid\{(s,\{\bar s\})\}\subseteq \vals{T}\}$.

  We proceed by induction on the structure of $T$ (which is shared with any
  \chain tree $T' \in \inst{T}$). If $T$ is a single initial leaf, then
  $\inst{T} = \{T\}$ since there is no port in $T$ to replace, and the result
  is by definition. Otherwise let $x$ be the root of $T$.

  If $x$ is a product node, then let $T_1$ and $T_2$ be the
  subtrees rooted at its children. By induction hypothesis, we have
  $\vals{T_1} = \bigl\{\valc{T'_1} \mid T'_1 \in \inst{T_1}\bigr\}$ and
  $\vals{T_2} = \bigl\{\valc{T'_2} \mid T'_2 \in \inst{T_2}\bigr\}$. By
  definition, $\vals{T} = \vals{T_1} \cdot \vals{T_2}$ and
  \[
    \bigl\{\valc{T'} \mid T' \in \inst{T}\bigr\} = \bigl\{\valc{T_1'} \cdot \valc{T_2'}
    \mid T'_1 \in \inst{T_1} \text{ and } T'_2 \in \inst{T_2}\bigr\},
  \]
  \noindent
  which terminates this case.

  If $x$ is an operation node, let $R$ be the single \jun tree that is
  rooted in both its left and right children and let $(e,\Es) =
  \vals{R}$. By definition, $\vals{T} = (e,\Es) \cdot (1_M,\Ts) \cdot
  (e,\Es)$ with $\Ts = \{(t_1,\dots,t_{n-1}) \in \Cs_{2,n-1}[\alpha]
  \mid \content{t_1} = \content{e}\}$. Moreover, since $(e,\Es)$ is
  idempotent by definition, we have:
  \[
    \vals{T} = (e,\Es) \cdot (e,\Es) \cdot (1_M,\Ts) \cdot (e,\Es) \cdot (e,\Es)
  \]
  This terminates the proof since the set of values that can be given to
  an operation leaf replacing the port child of $x$ is exactly $(e,\Es)
  \cdot (1_M,T) \cdot (e,\Es)$ and by induction hypothesis, $(e,\Es) =
  \vals{R} = \{\valc{R'} \mid R' \in \inst{R}\}$.
\end{proof}

The following corollary states that the set of \dchains is exactly the
set of \chain values of \chain trees and is immediate from
Proposition~\ref{prop:ctt} and Proposition~\ref{prop:ctree}.

\begin{corollary} \label{cor:ctree}
  Let $B \subseteq A$, $n \in \nat$. Then,
  \[
    \begin{array}{lcl}
      \Cstwon[\alpha] & = & \big\{\valc{T} \mid T \in \ctc[\alpha] \text{ with level $n$}\big\},\\[1ex]
      \Cstwo[\alpha] & = & \big\{\valc{T} \mid T \in \ctc[\alpha]\big\}.
    \end{array}
  \]
\end{corollary}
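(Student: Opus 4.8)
The plan is to obtain Corollary~\ref{cor:ctree} by purely mechanically composing three facts already available: Proposition~\ref{prop:ctt} (which expresses $\fCtwon[\alpha]$ as the downset of the \jun values $\vals{T}$ of \jun trees $T$ of level $n$), Proposition~\ref{prop:ctree} (which says that the set of \chain values $\valc{T'}$ of the instantiations $T'\in\inst{T}$ is exactly the set of \chains carried by $\vals{T}$), and Fact~\ref{fct:recoverchains} (which translates between a \dchain $(s_1,\dots,s_n)$ and the \djun $\bigl(s_1,\{(s_2,\dots,s_n)\}\bigr)$). I would first establish the level-$n$ equality $\Cstwon[\alpha]=\{\valc{T}\mid T\in\ctc[\alpha]\text{ of level }n\}$, and then deduce the unrestricted equality for $\Cstwo[\alpha]$ simply by taking the union over $n$, using that $\ctc[\alpha]=\bigcup_{T\in\cts[\alpha]}\inst{T}$ and that instantiating a \jun tree does not change its level.

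For the inclusion $\Cstwon[\alpha]\subseteq\{\valc{T}\mid T\in\ctc[\alpha]\text{ of level }n\}$, take $(s_1,\dots,s_n)\in\Cstwon[\alpha]$. By Fact~\ref{fct:recoverchains} the \djun $\bigl(s_1,\{(s_2,\dots,s_n)\}\bigr)$ lies in $\fCtwon[\alpha]$, so by Proposition~\ref{prop:ctt} there is a \jun tree $T\in\cts[\alpha]$ of level $n$ with $\bigl(s_1,\{(s_2,\dots,s_n)\}\bigr)\subseteq\vals{T}$; under the identification of a \jun with its set of \chains used in Proposition~\ref{prop:ctree}, this means $(s_1,\dots,s_n)\in\vals{T}$. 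Proposition~\ref{prop:ctree} then yields an instantiation $T'\in\inst{T}\subseteq\ctc[\alpha]$, again of level $n$, with $\valc{T'}=(s_1,\dots,s_n)$. For the reverse inclusion, let $T'\in\ctc[\alpha]$ have level $n$, say $T'\in\inst{T}$ with $T\in\cts[\alpha]$ of level $n$, and write $\valc{T'}=(s_1,\dots,s_n)$. By Proposition~\ref{prop:ctree} this \chain is carried by $\vals{T}$, i.e. $\bigl(s_1,\{(s_2,\dots,s_n)\}\bigr)\subseteq\vals{T}$. Since $\vals{T}\in\fCtwon[\alpha]$ and $\fCtwon[\alpha]$ is a downset (it is presented as $\downclos(\cdot)$ in Proposition~\ref{prop:ctt}, see also Fact~\ref{fct:subsets}), we get $\bigl(s_1,\{(s_2,\dots,s_n)\}\bigr)\in\fCtwon[\alpha]$, and Fact~\ref{fct:recoverchains} gives $(s_1,\dots,s_n)\in\Cstwon[\alpha]$. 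Taking the union of the level-$n$ equalities over all $n$ then produces the second displayed identity.

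There is no real obstacle in this argument, which is why the statement is flagged as a corollary: everything reduces to unwinding definitions. The only points that need a little care are the bookkeeping of the identification between a \jun $(s,\Ss)$ and the set of \chains $\{(s,\bar s)\mid\bar s\in\Ss\}$ it represents — exactly the convention introduced in the proof of Proposition~\ref{prop:ctree} — and the two trivial structural observations that $\fCtwon[\alpha]$ is closed under passing to sub-\juns, and that instantiation preserves the level of a tree. Once these are noted, the two inclusions are immediate from the cited propositions, and I would present the proof in two or three lines essentially as above.
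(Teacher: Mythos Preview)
Your proposal is correct and is exactly the unpacking of the paper's one-line proof, which simply cites Propositions~\ref{prop:ctt} and~\ref{prop:ctree} as making the result immediate. The only minor simplification available is that for the reverse inclusion you could invoke Fact~\ref{fct:value2} directly (applied to the root of a \chain tree) instead of routing through Proposition~\ref{prop:ctree} and the downset closure, but your argument is equally valid.
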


\medskip
\noindent
{\bf Alternation and Recursive Alternation of a \Chain Tree.} The
\emph{alternation} of a \chain tree is the alternation of its
\chain value. We say that a set of \chain trees \cs has
\emph{unbounded alternation} if the set $\{\valc{T} \mid T \in \cs\}$
has unbounded alternation. Note that by Proposition~\ref{prop:ctree},
$\Cstwo[\alpha]$ has unbounded alternation if and only if
$\ctc[\alpha]$ has unbounded alternation.

In the proof, we will be interested in another property of \chain
trees: \emph{recursive alternation}. Recursive alternation corresponds
to the maximal alternation of labels at operation leaves in the tree.
More precisely, if $T$ is a \chain tree, its \emph{recursive
  alternation} is the largest integer $j$ such that there exists an
\emph{operation leaf} in $T$ whose label has alternation $j$. An
important idea in the proof is to separate the case when we can
find a set of \chain trees with unbounded alternation but bounded
recursive alternation from the converse one. This is what we do in the
following subsection.

\subsection{Applying \Chain Trees to Theorem~\ref{thm:caracbc}}

We prove that 3~$\Rightarrow$~2 in Theorem~\ref{thm:caracbc}. Let
$\alpha: A^* \rightarrow M$ be an alphabet compatible morphism into a
finite monoid $M$. We have to prove that if $\alpha$ satisfies
Equation~\eqref{eq:bcs2}, then
$\Cstwo[\alpha]$ has bounded alternation.

The proof is by contrapositive. We assume that $\Cstwo[\alpha]$ has
unbounded alternation and prove that $\alpha$ does not satisfy the
equation. Using \chain trees, we separate the argument into two
independent cases. These two cases are 
stated in the following propositions.

\begin{proposition} \label{prop:width}
  Assume that there exists a set of \chain trees
  $\cs \subseteq \ctc[\alpha]$ with unbounded alternation but
  bounded recursive alternation. Then $\alpha$ does not satisfy
  both equations in~\eqref{eq:bcs1}.
\end{proposition}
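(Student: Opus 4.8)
The plan is to prove the contrapositive by a pumping argument built on \chain trees. Starting from a set $\cs \subseteq \ctc[\alpha]$ of \chain trees with unbounded alternation but recursive alternation bounded by some fixed $c$, I would first observe that since each operation leaf contributes a \chain of alternation at most $c$, all the "vertical" alternation of $\valc{T}$ must come from the branching structure of the tree: the product and operation nodes. So for trees in $\cs$ whose \chain value has very large alternation, the underlying \jun trees must be deep or wide in a controlled way. The key first step is therefore a Ramsey/pumping normalization: by Simon-style reasoning on the (finitely many) \jun values attached to nodes, one extracts from a sufficiently alternating tree a nested pair of nodes $x \ninc y$ (one an ancestor of the other) carrying the \emph{same idempotent \jun value} $(e,\Es)$, such that the \chain value "between" them already exhibits a genuine alternation between two distinct monoid elements $s_1 \neq s_2$. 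This is where the hypothesis of \emph{bounded recursive alternation} is essential: it guarantees that the alternation is produced by iterating a fixed tree-context, not by an ever-more-complex leaf label, so the context can be pumped.

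Next I would convert this pumpable context into an \emph{alternation schema}. The fact that $x$ and $y$ share an idempotent \jun value $(e,\Es)$, together with the definition of operation nodes (which splice in the set $\Ts = \{(t_1,\dots,t_{n-1}) \in \Cs_{2,n-1}[\alpha] \mid \content{t_1} = \content{e}\}$), is designed precisely so that the triple read off the context — roughly $(r_1 e r_2, r_1 e, e r_2)$ for the appropriate $r_1,r_2$ coming from the siblings along the path — satisfies the four bullet conditions in the definition of an alternation schema: $\content{e} = \content{s}$, $s = r_1 e r_2$, $s_1 \in \Rs_1\Es$, $s_2 \in \Es\Rs_2$. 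In fact, since the alternation is unbounded, I can arrange to extract \emph{two} such schemas $(s,s_1,s_2)$ and $(t,t_1,t_2)$ (for instance, consecutive "periods" of the alternating pattern), and by passing to a Ramsey subsequence I can ensure $\content{s} = \content{t}$. Once both schemas are in hand with matching content, Equation~\eqref{eq:bcs2} applies verbatim and forces the identity $(s_1t_1)^\omega s (t_2 s_2)^\omega = (s_1t_1)^\omega s_1 t s_2 (t_2 s_2)^\omega$, i.e.\ it \emph{collapses} exactly the alternation that the pumped tree was producing.

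The final step is to derive the contradiction. If $\alpha$ satisfied \eqref{eq:bcs2} (hence also \eqref{eq:bcs1} by Lemma~\ref{lem:equiveq}), then — reading the collapsed equation back through Corollary~\ref{cor:ctree} and the construction of \chain values — every \chain value of every tree in $\ctc[\alpha]$ with the pumped context would have its alternation bounded by a quantity depending only on $|M|$, $c$, and the pumping constants, contradicting the assumption that $\cs$ has unbounded alternation. Concretely, I would show: replacing the pumped context by its "short-circuited" version (as permitted by the equation) strictly decreases a suitable well-founded measure (number of pumped blocks) without changing the \chain value, so iterating the replacement yields, for each $T \in \cs$, an equivalent \chain tree of bounded size, hence bounded alternation — absurd.

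The main obstacle I anticipate is the bookkeeping in the pumping/extraction step: I must carefully track along a root-to-leaf path how the surrounding product and operation nodes contribute to the \chain value (the $r_1$ on the left, the $r_2$ on the right, and how the ports fold in the $\Ts$ factors), and verify that the extracted triples really are alternation schemas in the precise sense of the definition — in particular that the idempotency requirement on $(e,\Es)$ and the content condition $\content{e} = \content{s}$ can be met simultaneously after passing to Ramsey subsequences. Getting a clean "\chain tree surgery" lemma that lets me cut out a context between two equal-idempotent nodes and re-glue while controlling $\valc{\cdot}$ is the technical heart, and I would isolate it as a separate lemma before invoking the equation. Everything after that lemma is a finite combinatorial descent, which I expect to be routine given the closure properties of \Cstwo[\alpha] (subwords, stutter, product) established in Section~\ref{sec:chains}.
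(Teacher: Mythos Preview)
Your proposal targets the wrong equation. Proposition~\ref{prop:width} requires you to falsify~\eqref{eq:bcs1}, but your contrapositive assumes~\eqref{eq:bcs2} and would at best conclude that the hypothesis implies~$\neg$\eqref{eq:bcs2}. Lemma~\ref{lem:equiveq} gives \eqref{eq:bcs2}$\Rightarrow$\eqref{eq:bcs1}, so $\neg$\eqref{eq:bcs2} is strictly \emph{weaker} than $\neg$\eqref{eq:bcs1}; even if your argument went through, it would not prove the proposition as stated. More structurally, the machinery you describe --- pumping along a root-to-leaf path to find two nodes with the same idempotent \jun value, extracting alternation schemas, invoking~\eqref{eq:bcs2} --- is the engine of the \emph{other} case, Proposition~\ref{prop:depth}, where an operation leaf of high alternation is the source to analyze. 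Here recursive alternation is bounded, the alternation of $\valc{T}$ is spread across many leaves, and a single pumped tree-context need not account for it; nor does your pumping naturally produce the length-$3$ \dchain one needs to violate~\eqref{eq:bcs1}.

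The paper's proof is horizontal and discards the tree structure immediately. By Fact~\ref{fct:value}, $\valc{T}$ is the componentwise product of the leaf labels, so one records the leaves as rows of a \emph{\chain matrix}: rows are \dchains in $\Cstwon[\alpha]$, the column product is $\valc{T}$, and bounded recursive alternation says every row has alternation at most~$K$. The problem is then pure combinatorics on matrices over~$M$. One normalizes to \emph{tame} matrices (each odd/even column pair has a unique alternating row, all distinct), extracts a \emph{monotonous} matrix via Erd\H{o}s--Szekeres, and applies Ramsey's theorem for $3$-hypergraphs twice to force the above- and below-diagonal blocks to single idempotents $f$ and~$e$. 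The resulting $3\times 6$ \emph{contradiction matrix} has middle row $(e,e,u_2,v_2,f,f)\in\Cstwo[\alpha]$ with $fu_2e\neq fv_2e$; the subword $(e,u_2,f)\in\Cstwo[\alpha]$ then directly falsifies one of the two equations in~\eqref{eq:bcs1}. No alternation schemas and no tree surgery are used.
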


\begin{proposition} \label{prop:depth}
  Assume that there exists a set of \chain trees $\cs \subseteq
  \ctc[\alpha]$ with unbounded alternation and that all such sets have
  unbounded recursive alternation. Then $\alpha$ does not satisfy
  Equation~\eqref{eq:bcs2}.
\end{proposition}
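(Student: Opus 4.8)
It suffices to treat the case where $\Cstwo[\alpha]$ has unbounded alternation, since otherwise the hypothesis of Proposition~\ref{prop:depth} (that \emph{some} set of chain trees has unbounded alternation) already fails. So I would prove the contrapositive in the sharpened form: \emph{if $\alpha$ satisfies Equation~\eqref{eq:bcs2} and $\Cstwo[\alpha]$ has unbounded alternation, then this unbounded alternation can already be witnessed by chain trees of bounded recursive alternation}. The point of working with \eqref{eq:bcs2} here is that it turns into a rewriting rule on chain trees which decreases recursive alternation without destroying ordinary alternation; iterating it produces the desired family of chain trees with unbounded alternation but bounded recursive alternation, which is exactly the situation excluded by the hypothesis of Proposition~\ref{prop:depth}.

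\textbf{Step 1: a tower of operation nodes.} The first observation is that, in a chain tree, product nodes and initial leaves only produce \emph{constant} chains $(r,\dots,r)$ (initial leaves are constant by definition, and componentwise products of constant chains are constant), so all non-constant content -- hence all alternation -- is created at operation nodes, where Operation~\eqref{eq:oper} inserts, in the tail components, a chain of $\Cs_{2,n-1}[\alpha]$ whose first entry has content $\content{e}$. Hence an operation leaf whose label has large alternation forces, via Corollary~\ref{cor:ctree}, a chain tree one level shorter whose chain value again has large alternation, and therefore -- by the hypothesis of Proposition~\ref{prop:depth}, which concerns $\ctc[\alpha]$ and hence all levels at once -- again has large recursive alternation. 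Iterating this, from a chain tree of recursive alternation at least $N$ I would extract a nested tower of operation nodes $y_1\supset y_2\supset\dots\supset y_d$, each $y_{m+1}$ lying inside the operation leaf hanging from $y_m$, of depth $d$ unbounded in $N$, with all side-subtrees idempotent and, by the content condition in Operation~\eqref{eq:oper} together with alphabet compatibility of $\alpha$, all the relevant content alphabets equal to a single $B\subseteq A$.

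\textbf{Step 2: looping and the collapse.} Each node of the tower carries only finitely much information once projected to length two: the idempotent juncture $(e,\Es)\in\fC_{2,2}[\alpha]$ obtained from its side-subtrees by projecting tails onto their first component (using closure of junctures under subwords, Fact~\ref{fct:high2}), and the $\alpha$-values and alphabets of the context words to its left and right. Over a sufficiently long piece of the tower the pigeon-hole principle produces two nested loops. Unfolding the outer loop through Proposition~\ref{prop:ctt} and comparing with the definition of alternation schemas, its left and right context words and its idempotent insertion form an alternation schema $(s,s_1,s_2)$ with $\content{s}=B$; the inner loop, one recursion level deeper, yields a second alternation schema $(t,t_1,t_2)$ with $\content{t}=B$, so that $\content{s}=\content{t}$. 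Equation~\eqref{eq:bcs2} applied to this pair now identifies, inside $M$, the element obtained by folding one iteration of the inner loop into the outer one with the element obtained by leaving it unfolded. Propagating this identity back up the tower via Fact~\ref{fct:value} and the closure properties of junctures (Facts~\ref{fct:subsets}--\ref{fct:chaincomp2}), I would rewrite the offending operation leaf of Step~1 as a product of constant-chain leaves and operation leaves whose labels have strictly smaller alternation, without changing the chain value of the whole tree. Since recursive alternation is a non-negative integer that strictly drops at each rewriting, after finitely many steps one reaches a chain tree whose recursive alternation is bounded by a constant depending only on $|M|$ and whose chain value still has alternation at least $N$. Letting $N$ grow gives the required family, contradicting the hypothesis of Proposition~\ref{prop:depth}.

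\textbf{Main obstacle.} The crux is Step~2, where one must reconcile the fact that chain trees recurse on the \emph{length} of chains, whereas alternation schemas and Equation~\eqref{eq:bcs2} are stated at \emph{length two}. Here I would use, following Theorem~\ref{thm:sepbc} and Fact~\ref{fct:high}, that unbounded alternation of $\Cstwo[\alpha]$ is equivalent to $(s_1,s_2)^*\subseteq\Cstwo[\alpha]$ for some $s_1\neq s_2$, so that the non-constant content buried deep in the tower does project down to a genuine two-element alternating pattern over $B$; and, more delicately, I would have to check that the extracted loop supplies precisely the data required by \emph{both} schemas entering \eqref{eq:bcs2} -- not merely the weaker \eqref{eq:bcs1} -- and that the rewriting really removes alternation from operation leaves rather than shuffling it to deeper leaves, so that the induction terminates. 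The word-level translations -- Lemma~\ref{lem:schemprop}, which reads a schema as the inequalities $z\ksieq{2}z_1uz_2$, and Lemma~\ref{lem:siprop}, which produces the pumped witnesses -- are what carry the monoid identity across this length mismatch and back.
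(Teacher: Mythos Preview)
Your contrapositive strategy is coherent, but Step~2 contains a genuine gap: you never explain how Equation~\eqref{eq:bcs2} becomes a \emph{rewriting rule on chain trees}. The equation is an identity in $M$ involving $\omega$-powers, $(s_1t_1)^{\omega}s(t_2s_2)^{\omega} = (s_1t_1)^{\omega}s_1ts_2(t_2s_2)^{\omega}$; to reduce the alternation of an operation leaf while keeping the chain value of the whole tree unchanged, you would need these $\omega$-powers to appear as the left and right context of that leaf inside the tree, and you give no construction that produces such a context. After local optimisation (which you implicitly need and which does not use~\eqref{eq:bcs2}), every alternation in the offending operation leaf is already ``real'' in the sense that removing it \emph{does} change the tree value; so any further collapse must come from a genuinely new argument, and ``propagating the identity back up the tower via Fact~\ref{fct:value}'' is not one. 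You flag this as the main obstacle, but you do not overcome it.

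The paper proceeds quite differently: it does not rewrite trees at all. Instead it introduces an auxiliary finite directed graph $G[\alpha]$ on $M^2\times M$, whose edges are given by alternation schemas, and reduces the proposition to two independent statements: (i) if $G[\alpha]$ contains a productive cycle then~\eqref{eq:bcs2} fails (a short direct computation after shrinking the cycle to length~$2$), and (ii) under the hypothesis of the proposition, $G[\alpha]$ does contain such a cycle. For (ii) the paper defines \emph{alternating nodes} of $G[\alpha]$, fixes a minimal alphabet $B$ among them, and shows that every alternating node of alphabet $B$ has an outgoing edge to another alternating node of alphabet $B$ with a different value; since $G[\alpha]$ is finite, this forces a productive cycle. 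The edge is extracted from a \emph{single} locally optimal chain tree with high recursive alternation: pigeonhole on one high-alternation operation leaf yields $t_i\neq t_{i+1}$ with $(t_i,t_{i+1})^*\subseteq\Cstwo[\alpha]$, and local optimality plus the structure around that leaf produce the alternation schema giving the edge. No tower of nested trees is built, and no rewriting is performed; the recursion lives entirely in the finite graph. This chain-graph abstraction is precisely the idea your proposal is missing. A secondary issue: your tower in Step~1 speaks of $y_{m+1}$ ``inside the operation leaf hanging from $y_m$'', but operation leaves have no internal structure, and the alphabets $\content{e}$ can shrink along the recursion; the paper handles the latter by its minimal-$B$ choice, which you would also need.
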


Proposition~\ref{prop:width} and Proposition~\ref{prop:depth} are
both involved and proved in Section~\ref{app:width} and
Section~\ref{app:depth} respectively. We finish this section by
using them to conclude the proof of Theorem~\ref{thm:caracbc}.

By hypothesis, $\Cs_2[\alpha]$ has unbounded alternation. Hence, it
follows from Corollary~\ref{cor:ctree} that $\ctc[\alpha]$ also has
unbounded alternation. Therefore, there exists at least one set of \chain
trees $\cs$ with unbounded alternation. If $\cs$ can be chosen with
bounded recursive alternation, it follows from
Proposition~\ref{prop:width} that there is a contradiction to
one of the equations in~\eqref{eq:bcs1} and therefore
to~\eqref{eq:bcs2} by Lemma~\ref{lem:equiveq}. Otherwise,
there is a contradiction to Equation~\eqref{eq:bcs2} by
Proposition~\ref{prop:depth}, which terminates the proof.

\section{\texorpdfstring{Proof of Proposition~\lowercase{\ref{prop:depth}}}{Proof of Proposition \ref{prop:depth}}}
\label{app:depth}
In this section, we prove Proposition~\ref{prop:depth}. Recall that we
have fixed an alphabet compatible morphism $\alpha: A^* \rightarrow M$ into a finite monoid
$M$. Assume that there exists a set of \chain trees $\cs \subseteq
\ctc[\alpha]$ with unbounded alternation and that all such sets have
unbounded recursive alternation. We need to prove that $\alpha$ does
not satisfy Equation~\eqref{eq:bcs2}.

\smallskip

We rely on a new object that is specific to this case, the
\emph{\chain graph}. A \chain graph describes a construction process
for a subset of the set of \dchains for $\alpha$. While this subset
may not be the whole set of \dchains for $\alpha$, we will prove that under the hypothesis of
Proposition~\ref{prop:depth}, it is sufficient to derive a
contradiction to Equation~\eqref{eq:bcs2}.

\medskip
\noindent {\bf The \Chain Graph}. We define a directed graph $G[\alpha]=(V,E)$
whose edges are unlabeled ($E \subseteq V \times V$). We call $G[\alpha]$ the
\emph{\chain graph} of $\alpha$. The set $V$ of nodes of $G[\alpha]$ is the
set $V = M^2 \times M$. We now define the set $E$ of edges of $G[\alpha]$. Let
$((p_1,p_2),s)$ and $((q_1,q_2),t)$ be two nodes of $G[\alpha]$, then $E$
contains an edge from $((p_1,p_2),s)$ to $((q_1,q_2),t)$ if there exist
$s_1,s_2 \in M$ such that $(s,s_1,s_2) \in M^3$ is an alternation schema,
$p_1 \cdot s_1 = q_1$, and $s_2 \cdot p_2 = q_2$. Observe that this definition
does not depend on $t$.

\smallskip
Define the \emph{value} of a node $((p_1,p_2),s)$ as $p_1sp_2$, and its
\emph{alphabet} as $\content{s}$ (recall that $\alpha$ is alphabet
compatible).

\smallskip We say that $G[\alpha]$ is \emph{recursive} if it contains a cycle such that
\begin{enumerate}[label=$\alph*)$]
\item all nodes in the cycle have the same alphabet,
\item the cycle contains two nodes with different values.
\end{enumerate}
Such a cycle is called \emph{productive}. We now prove
Proposition~\ref{prop:depth} as a consequence of the two following
propositions.

\begin{proposition} \label{prop:graphcont1}
  Assume that $G[\alpha]$ is recursive. Then $\alpha$ does not
  satisfy~\eqref{eq:bcs2}.
\end{proposition}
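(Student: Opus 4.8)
\textbf{Proof plan for Proposition~\ref{prop:graphcont1}.}

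The idea is to exploit a productive cycle in $G[\alpha]$ to manufacture, for every $k$, two words that are $\kbceq{2}$-equivalent (or more precisely, related by $\ksieq{2}$ in both directions) but whose images under $\alpha$ witness a violation of~\eqref{eq:bcs2}. Suppose $G[\alpha]$ is recursive, with a productive cycle passing through nodes $((p_1,p_2),s)$ and $((p'_1,p'_2),s')$ of the same alphabet but with $p_1sp_2 \neq p'_1s'p'_2$. Each edge of the cycle is certified by an alternation schema, and by Lemma~\ref{lem:schemprop} each such schema gives, for any $k$, a triple of words $(w,w_1,w_2)$ over the common alphabet $B$ with $\alpha(w)=s$-type data and $w \ksieq{2} w_1 u w_2$ for all $u \in B^*$. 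First I would walk around the cycle once, concatenating these word-level certificates, to obtain a single ``loop word'' $z$ together with factorizations certifying that inserting the concatenated $w$-parts of the schemas anywhere along the loop stays $\ksieq{2}$-below $z$; iterating the loop $N = 2^k\omega$ times and using the $\sic i$-property Lemma (Lemma~\ref{lem:siprop}) and aperiodicity (Lemma~\ref{lem:aperiodic}) in the same style as the $1 \Rightarrow 3$ argument at the end of Section~\ref{sec:caracbc}, the two nodes of distinct value on the cycle translate into two words $y$ and $y'$ with $y \kbceq{2} y'$ but $\alpha(y) = r \cdot (p_1sp_2) \cdot r'$ and $\alpha(y') = r \cdot (p'_1s'p'_2) \cdot r'$ for suitable context elements $r,r'$ obtained by traversing the remaining portion of the cycle.

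Next I would package the conclusion as a failure of Equation~\eqref{eq:bcs2}. Equation~\eqref{eq:bcs2} is stated for pairs of alternation schemas $(s,s_1,s_2)$ and $(t,t_1,t_2)$ with $\content{s}=\content{t}$, asserting $(s_1t_1)^{\omega}s(t_2s_2)^{\omega} = (s_1t_1)^{\omega}s_1ts_2(t_2s_2)^{\omega}$. So I would choose the productive cycle minimally and split it into two ``halves'' meeting at the two distinct-value nodes; each half, traversed and idempotent-powered, yields an alternation schema (this is where the cycle structure $p_1 s_1 = q_1$, $s_2 p_2 = q_2$ from the definition of $E$ is exactly what makes the composition of schemas along a path a schema again, using closure of $\fC_{2,2}[\alpha]$ under product and the downset/operation closures from Section~\ref{sec:comput}). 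Feeding these two schemas into~\eqref{eq:bcs2} and comparing $\alpha$-images, the left-hand side evaluates (via the standard $\omega$-power bookkeeping) to one of the two node values and the right-hand side to the other — but the two node values differ by hypothesis, contradicting~\eqref{eq:bcs2}.

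The technical crux, and the step I expect to be the main obstacle, is the careful passage between the graph-combinatorial data (schemas labelling edges, composition of schemas along a path) and the algebraic identity~\eqref{eq:bcs2}: one must check that (a) the composition of the alternation schemas along any path in $G[\alpha]$ is again realized by genuine $(r_i,\Rs_i),(e,\Es)\in\fC_{2,2}[\alpha]$ with the idempotency and alphabet conditions preserved — this uses submonoid closure (Fact~\ref{fct:chaincomp2}) together with Operations~\eqref{eq:mul} and~\eqref{eq:oper} to power up to an idempotent sharing the alphabet $B$; and (b) that the $\omega$-powers appearing in~\eqref{eq:bcs2} absorb the loop iterations correctly so that only the ``distinguished'' node values survive on each side. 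A secondary subtlety is bookkeeping the context elements coming from the part of the cycle outside the two chosen halves; since~\eqref{eq:bcs2} is an identity in $M$ (not merely between the displayed products), one may multiply both sides by these contexts on the left and right, so this is harmless, but it must be spelled out. Once these alignments are in place, the contradiction with~\eqref{eq:bcs2} is immediate and the proposition follows.
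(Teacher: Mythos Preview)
Your plan buries the right idea under an unnecessary detour. The paper's proof is purely algebraic in $M$ --- no word-level constructions, no Lemma~\ref{lem:schemprop}, no $\kbceq{2}$-arguments are needed. Your first paragraph can be dropped entirely: the alternation schemas and the equation~\eqref{eq:bcs2} already live in $M$, so once the graph data is in hand the contradiction is a direct computation.

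The paper's approach is: first reduce any productive cycle to a productive \emph{2-cycle}. This reduction is precisely your ``composition of schemas along a path'' observation from paragraph~3, promoted to the main lemma rather than a subsidiary check: given consecutive edges $((u_1,u_2),r)\to((p_1,p_2),s)$ certified by a schema $(r,r_1,r_2)$, one verifies that $(r_1sr_2,\,r_1s_1,\,s_2r_2)$ is again an alternation schema and that the node $((u_1,u_2),r_1sr_2)$ has the same value $p_1sp_2$ and the same incoming/outgoing edges as $((p_1,p_2),s)$. Iterating collapses the cycle to length~2.

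Once you have a 2-cycle between $((p_1,p_2),s)$ and $((q_1,q_2),t)$ with edge-schemas $(s,s_1,s_2)$ and $(t,t_1,t_2)$, the finish is two lines. The cycle equations $p_1s_1=q_1$, $q_1t_1=p_1$ give $p_1=p_1(s_1t_1)^\omega$, and symmetrically $p_2=(t_2s_2)^\omega p_2$. Hence
\[
p_1sp_2 = p_1(s_1t_1)^\omega s(t_2s_2)^\omega p_2,\qquad
q_1tq_2 = p_1(s_1t_1)^\omega s_1\,t\,s_2(t_2s_2)^\omega p_2,
\]
which are the two sides of~\eqref{eq:bcs2} in the context $(p_1,\cdot,p_2)$. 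Since $\content{s}=\content{t}$ by productivity, \eqref{eq:bcs2} would force $p_1sp_2=q_1tq_2$, contradicting that the two node values differ. So your paragraphs~2--3 are heading the right way; just lead with the 2-cycle reduction and skip the EF machinery.
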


\begin{proposition} \label{prop:graphcont2}
  Assume that there exists a set of \chain trees $\cs \subseteq
  \ctc[\alpha]$ with unbounded alternation and that all such sets have
  unbounded recursive alternation. Then $G[\alpha]$ is recursive.
\end{proposition}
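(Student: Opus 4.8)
\textbf{Proof plan for Proposition~\ref{prop:graphcont2}.}

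The plan is to show that the \chain graph $G[\alpha]$ is recursive by building a productive cycle directly out of a \chain tree of very high alternation but bounded recursive alternation --- precisely the kind of tree whose existence follows from the hypothesis after one reduction step. First I would observe that the hypothesis ``every set of \chain trees with unbounded alternation has unbounded recursive alternation'' is exactly the negation of the hypothesis of Proposition~\ref{prop:width}, so the available combinatorial content is the \emph{contrapositive}: for every bound $b$, there is a \chain tree whose \chain value has alternation more than $b$ while all operation-leaf labels have alternation at most~$r$, for some fixed~$r$. Wait --- this needs care: the quantifiers are the other way. The hypothesis of Proposition~\ref{prop:graphcont2} says that \emph{every} set with unbounded alternation has unbounded recursive alternation, and we are also told \emph{some} set $\cs$ with unbounded alternation exists. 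So I should instead argue: take such a $\cs$; by assumption it has unbounded recursive alternation. The idea then is to trace, inside a single tree $T\in\cs$ whose recursive alternation is large, a \emph{long descending path of operation nodes} along which the alternation contributed at each level is forced. Each operation node $x$ on this path carries a port whose instantiating operation leaf has a label in $(e,\Es)\cdot(1_M,\Ts)\cdot(e,\Es)$; the key point is that such a label is itself the \chain value of a smaller \chain tree, and its first/last coordinates, together with the surrounding context, give exactly an alternation schema $(s,s_1,s_2)$.

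The main steps, in order, are as follows. (1) Fix $T\in\cs$ with recursive alternation at least $|V|+2 = |M^2\times M| + 2$, and locate an operation leaf $x_0$ whose label $\bar t$ has maximal alternation; by Corollary~\ref{cor:ctree} and Proposition~\ref{prop:ctree}, $\bar t\in\Cstwo[\alpha]$, and $\bar t$ is the \chain value of some \chain tree $T_0$. (2) Peel off: the label $\bar t$ lies in $(e,\Es)\cdot(1_M,\Ts)\cdot(e,\Es)$ where $(e,\Es)$ is the idempotent \jun value of the twin children of $x_0$'s parent operation node; decompose $\bar t = \bar e' \cdot \bar u \cdot \bar e''$ accordingly with $\bar u\in\Ts\subseteq\Cs_{2,n-1}[\alpha]$ and $\content{e}=\content{\bar u}$. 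Iterating this ``idempotent sandwich'' decomposition down the tree rooted in the twin child, together with the product-node and operation-node structure, produces a sequence of alternation schemas $(s^{(j)},s_1^{(j)},s_2^{(j)})$ such that successive values $p_1^{(j)} s^{(j)} p_2^{(j)}$ (with $p_1^{(j+1)} = p_1^{(j)} s_1^{(j)}$, $p_2^{(j+1)} = s_2^{(j)} p_2^{(j)}$) realize more and more alternation positions --- so these are exactly edges of $G[\alpha]$, and consecutive nodes have pairwise distinct values whenever a genuine alternation was produced. (3) Because $\content{e}=\content{s^{(j)}}$ is constant along this descent (each sandwich forces the alphabet of the inserted factor to equal $\content{e}$, and by alphabet compatibility all factors at a given node share that alphabet), all the nodes of $G[\alpha]$ visited have the same alphabet. (4) With more than $|V|$ such nodes of a common alphabet, two of them coincide by pigeonhole, closing a cycle; and among the nodes between the two equal ones we can ensure at least one value change, since otherwise the whole segment would have contributed zero alternation, contradicting the choice of a descent that increases alternation past $|V|$. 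Hence the cycle is productive and $G[\alpha]$ is recursive.

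The hard part will be making step~(2) precise: one must show that a deep chain of operation nodes with high \emph{recursive} alternation actually forces a correspondingly long chain of \emph{alternation schemas}, i.e.\ that the structural recursion of \jun trees / \chain trees can be flattened into the flat ``schema edge'' relation defining $G[\alpha]$ without losing the alternation. This requires carefully bookkeeping how the two ``accumulator'' coordinates $(p_1,p_2)$ of a $G[\alpha]$-node absorb the left and right idempotent blocks $\bar e', \bar e''$ and the surrounding product factors, and verifying that the definition of an alternation schema (with its $(r_1,\Rs_1),(r_2,\Rs_2),(e,\Es)\in\fC_{2,2}[\alpha]$ and the conditions $s=r_1er_2$, $s_1\in\Rs_1\Es$, $s_2\in\Es\Rs_2$) is met at each step --- in particular reading off the length-$2$ \djuns $(r_1,\Rs_1)$ etc.\ from the length-$n$ \jun values appearing in $T$ by projecting chains to their first coordinate, using closure of $\fC_{2,2}[\alpha]$ under subwords and products. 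Once this ``schema extraction along a branch'' lemma is in place, the pigeonhole/alphabet-constancy argument of steps~(3)--(4) is routine, and Proposition~\ref{prop:depth} follows by combining Propositions~\ref{prop:graphcont1} and~\ref{prop:graphcont2}.
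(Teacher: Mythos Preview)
Your plan has a genuine structural gap, and it stems from a misreading of what ``recursive alternation'' provides. Recursive alternation of a \chain tree $T$ is the maximal alternation of the label at a \emph{single} operation leaf of $T$; it says nothing about the number of operation nodes along any root-to-leaf path in $T$. So your step~(2) --- ``trace a long descending path of operation nodes'' and ``iterate the idempotent sandwich decomposition down the tree rooted in the twin child'' --- does not extract $|V|$ edges of $G[\alpha]$ from one tree. A tree may have recursive alternation $|V|+2$ while the high-alternation operation leaf sits under a \emph{single} operation node; the twin-child subtree merely computes the idempotent $(e,\Es)$ and carries no information about the high-alternation label. Nor does recursing on the level help: the label $\bar t$ lies in $(e,\Es)\cdot(1_M,\Ts)\cdot(e,\Es)$ with $\Ts\subseteq\Cs_{2,n-1}[\alpha]$, but this is a componentwise product in $M^{n-1}$, and there is no way to localise the alternation of $\bar t$ into the $\Ts$-factor alone.

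The paper proceeds quite differently, and three ingredients you omit are each essential. First, it defines an \emph{alternating node} of $G[\alpha]$ (one that can realise arbitrarily high alternation in its context $(p_1,\cdot,p_2)$) and proves a \emph{one-step} lemma: every alternating node of a suitably chosen minimal alphabet has an alternating successor of the same alphabet and different value; the productive cycle then comes from iterating this lemma, each application using a fresh tree. Second, to guarantee the value change, the paper first replaces the witnessing trees by \emph{locally optimal} ones, so that flattening one alternation in an operation-leaf label must change $(p_1\cdot\valc{T}\cdot p_2)$; without this you cannot force $p_1sp_2\neq q_1tq_2$. Third, to show the successor is again alternating, the paper uses a pigeonhole on the high-alternation label together with a Ramsey-type threshold (Lemma~\ref{lem:chooseK}): some pair $(t_i,t_{i+1})$ repeats enough that $(t_i,t_{i+1})^*\subseteq\Cstwo[\alpha]$, which is exactly what makes $((q_1,q_2),t)$ alternating. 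Your alphabet-constancy argument also needs repair: $t$ is only a factor of an element of alphabet $B$, so one only gets $\content{t}\subseteq B$; equality is forced by taking $B$ minimal among alphabets of alternating nodes. In short, the proof is not a single long walk read off one tree, but a fixed-point style argument: one edge per step, with ``alternating'' as the invariant, and local optimality plus the $(t_i,t_{i+1})^*$ argument doing the real work.
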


Observe that Proposition~\ref{prop:depth} is an immediate consequence of
Propositions~\ref{prop:graphcont1} and~\ref{prop:graphcont2}. Before proving
them, note that the notion of \chain graph is inspired from the notion of
strategy graph in~\cite{bpopen}. This is because both notions are designed to
derive contradictions to similar equations. However, our proof remains fairly
different from the one of~\cite{bpopen}. The reason for this is that the main
difficulty here is proving Proposition~\ref{prop:graphcont2}, \emph{i.e.}, going from
\chain trees (which are unique to our setting) to a recursive \chain graph. On
the contrary, the much simpler proof of Proposition~\ref{prop:graphcont1} is
similar to the corresponding one in~\cite{bpopen}.

\subsection{Proof of Proposition~\ref{prop:graphcont1}}

Assume that $G[\alpha]$ is recursive. By definition, we get a productive cycle in
the graph $G[\alpha]$. We first prove that we may assume this cycle to
consist exactly of two nodes.

\begin{lemma} \label{lem:cycle}
  If $G[\alpha]$ is recursive, it has a productive cycle with exactly two nodes.
\end{lemma}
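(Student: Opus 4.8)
The plan is to start from an arbitrary productive cycle and shorten it to two nodes while keeping two nodes of distinct value. Write the productive cycle as $v_0 \to v_1 \to \cdots \to v_{k-1} \to v_0$, where all nodes share a common alphabet $B$ and two of them have distinct values (recall the value of $((p_1,p_2),s)$ is $p_1sp_2$). If every \emph{consecutive} pair had equal values, then all values would be equal, contradicting productivity; so some consecutive pair differs, and after a cyclic rotation — which preserves being a cycle and the common alphabet — we may assume that $v_0$ and $v_1$ have distinct values. Write $v_j = ((p_1^{(j)},p_2^{(j)}),s^{(j)})$ and let $(s^{(j)},s_1^{(j)},s_2^{(j)})$ be an alternation schema witnessing $v_j \to v_{j+1}$, so that $p_1^{(j+1)} = p_1^{(j)}s_1^{(j)}$, $p_2^{(j+1)} = s_2^{(j)}p_2^{(j)}$, and $\content{s^{(j)}} = B$ for every $j$.

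The core step is to collapse the path $v_1 \to v_2 \to \cdots \to v_{k-1} \to v_0$ into a single edge $v_1 \to v_0$. For this I would establish a \emph{composition lemma for alternation schemas}: if $(\sigma,\sigma_1,\sigma_2)$ and $(\tau,\tau_1,\tau_2)$ are alternation schemas with $\content{\sigma} = \content{\tau}$, then $(\sigma,\sigma_1\tau_1,\tau_2\sigma_2)$ is again an alternation schema. Granting it, an easy induction (composing $(s^{(1)},s_1^{(1)},s_2^{(1)})$ successively with the schemas of $v_2\to v_3,\dots,v_{k-1}\to v_0$, all of which have alphabet $B$) shows that $(s^{(1)},P,Q)$ is an alternation schema, where $P = s_1^{(1)}s_1^{(2)}\cdots s_1^{(k-1)}$ and $Q = s_2^{(k-1)}\cdots s_2^{(2)}s_2^{(1)}$. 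Telescoping the identities above yields $p_1^{(1)}P = p_1^{(0)}$ and $Qp_2^{(1)} = p_2^{(0)}$, so $(s^{(1)},P,Q)$ witnesses an edge $v_1 \to v_0$ (the third coordinate of the target plays no role in the edge relation). Combined with the original edge $v_0 \to v_1$, this gives a cycle on $\{v_0,v_1\}$; both nodes have alphabet $B$ and their values differ, so it is productive. The degenerate case $k=1$ cannot occur (a loop cannot carry two distinct values), and $k=2$ is already the desired conclusion, so this finishes the argument.

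The main obstacle is the composition lemma; I would prove it by descending to the word level and re-abstracting. By Lemma~\ref{lem:schemprop} applied at a rank $k$, there are words $w,w_1,w_2$ with $\alpha$-images $\sigma,\sigma_1,\sigma_2$ and $w \ksieq{2} w_1uw_2$ for all $u$ over $\content{\sigma}$, and words $z,z_1,z_2$ with images $\tau,\tau_1,\tau_2$ and $z \ksieq{2} z_1uz_2$ for all $u$ over $\content{\tau}=\content{\sigma}$; moreover $\content{w}=\content{z}=\content{\sigma}$ by alphabet compatibility. Taking $u=z$ gives $w \ksieq{2} w_1zw_2$, and then applying the pre-congruence Lemma~\ref{lem:efconcat} to $z \ksieq{2} z_1uz_2$ gives $w \ksieq{2} (w_1z_1)\,u\,(z_2w_2)$ for every $u$ over $\content{\sigma}$. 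Hence for every $k$ there are words realising $\sigma$, $\sigma_1\tau_1$, $\tau_2\sigma_2$ that exhibit exactly the witnessing pattern of an alternation schema; re-abstracting this pattern into the required $\fC_{2,2}[\alpha]$-membership data — by the same bookkeeping as in the soundness proof of the main algorithm (Proposition~\ref{prop:correc}), and in particular the consequence of the $\sic2$-property Lemma (Lemma~\ref{lem:siprop}) — shows that $(\sigma,\sigma_1\tau_1,\tau_2\sigma_2)$ is an alternation schema. I expect this re-abstraction step to require the most care.
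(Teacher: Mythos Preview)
Your overall plan (rotate so that $v_0,v_1$ have distinct values, then collapse the rest of the cycle into a single edge $v_1\to v_0$) is sound, but the ``composition lemma for alternation schemas'' that the whole argument rests on is not established. You obtain at the word level, for every $k$, witnesses $w,\,w_1z_1,\,z_2w_2$ with images $\sigma,\,\sigma_1\tau_1,\,\tau_2\sigma_2$ and $w\ksieq{2}(w_1z_1)u(z_2w_2)$ for all $u\in\content{\sigma}^*$; this is correct. The gap is the ``re-abstraction'': being an alternation schema is an \emph{algebraic} condition---one must exhibit $(\tilde r_1,\tilde\Rs_1),(\tilde r_2,\tilde\Rs_2),(\tilde e,\tilde\Es)\in\fCtwotwo[\alpha]$ with $(\tilde e,\tilde\Es)$ idempotent, $\sigma=\tilde r_1\tilde e\tilde r_2$, $\sigma_1\tau_1\in\tilde\Rs_1\tilde\Es$ and $\tau_2\sigma_2\in\tilde\Es\tilde\Rs_2$---and Lemma~\ref{lem:schemprop} only goes the other way. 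Neither Proposition~\ref{prop:correc} nor Lemma~\ref{lem:siprop} provides a converse, and a direct attempt to build the data runs into trouble: keeping the idempotent $(e,\Es)$ from the $\sigma$-schema, one needs $\sigma_1\tau_1\in\tilde\Rs_1\Es$, i.e., an $\Es$-factor on the right of $\tau_1$, which is not available; switching to the idempotent $(e',\Es')$ from the $\tau$-schema forces $\sigma=\tilde r_1e'\tilde r_2$ with $\tilde r_1=r_1er'_1$, $\tilde r_2=r'_2er_2$, hence $\sigma=r_1e\tau er_2$, which is generally different from $r_1er_2=\sigma$.

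The paper sidesteps this difficulty entirely: instead of composing edges it \emph{replaces nodes}. Given two consecutive edges $((u_1,u_2),r)\to((p_1,p_2),s)\to((q_1,q_2),t)$, it substitutes the middle node by $((u_1,u_2),r_1sr_2)$, where $(r,r_1,r_2)$ witnesses the first edge. This new node has the same value $p_1sp_2$, the same incoming edges (which never depend on the third coordinate of the target), and an outgoing edge to $((q_1,q_2),t)$ witnessed by $(r_1sr_2,r_1s_1,s_2r_2)$, where $(s,s_1,s_2)$ witnesses the second edge. The crucial algebraic check---that $(r_1sr_2,r_1s_1,s_2r_2)$ is again an alternation schema---is immediate: reuse the idempotent $(e,\Es)$ from the $(s,s_1,s_2)$-data and set $(\hat r_j,\hat\Rs_j)=(r_j,\{r_j\})\cdot(p_j,\Ps_j)$ (respectively $(p_j,\Ps_j)\cdot(r_j,\{r_j\})$), using that $\fCtwotwo[\alpha]$ is a monoid. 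Iterating shortens the productive cycle by one until length~$2$ is reached. If you want to rescue your route, the clean fix is to prove this ``outer multiplication'' closure and then, at each step, replace a node rather than attempt to compose schemas.
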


\begin{proof}
  Since $G[\alpha]$ is recursive, by definition it contains a productive
  cycle, \emph{i.e.}, a cycle whose nodes all share the same alphabet, and
  containing two nodes with different values. In particular, the number $n$ of
  nodes in the cycle is at least 2. If $n=2$, the lemma is immediate. Assume
  that $n \geqslant 3$, we prove that $G[\alpha]$ must contain a productive
  cycle of length $n-1$. The lemma will then follow by induction.

  To construct such a productive cycle of length $n-1$, it suffices to show
  that one can replace any two consecutive nodes
  \[
    ((u_1,u_2),r) \rightarrow ((p_1,p_2),s)
  \]
  in the cycle by a single one having the same value as $((p_1,p_2),s)$. Indeed,
  since the cycle is of length at least 3, there exists such an edge, where
  $((p_1,p_2),s)$ is one of the two nodes having distinct values and the other
  one is not $((u_1,u_2),r)$, meaning that the resulting shortened cycle will still
  exhibit two nodes with distinct values.

  Pick such an edge in the cycle, by definition there exists an alternation
  schema $(r,r_1,r_2)$ such that $u_1r_1=p_1$ and $r_2u_2 = p_2$. Consider the node
  $((u_1,u_2),r_1sr_2)$.
  \begin{itemize}
  \item By definition of an alternation schema and of a productive cycle,
    $\content{r_1sr_2}=\content{rs}=\content{s}$, hence the node
    $((u_1,u_2),r_1sr_2)$ has the same alphabet as all nodes in the cycle.
  \item Its value is $u_1(r_1sr_2)u_2=p_1sp_2$, hence $((u_1,u_2),r_1sr_2)$ has the same
    value as $((p_1,p_2),s)$.
  \item By definition of the graph, any node having an outgoing edge to
    $((u_1,u_2),r)$ also has an outgoing edge to $((u_1,u_2),r_1sr_2)$.
  \item It remains to show that if there is an edge $((p_1,p_2),s) \rightarrow
    ((q_1,q_2),t)$, then there is also an edge $((u_1,u_2),r_1sr_2)\rightarrow
    ((q_1,q_2),t)$.
  \end{itemize}
  Let $(s,s_1,s_2)$ be an alternation schema such that $p_1s_1=
  q_1$ and $s_2p_2 = q_2$ (such an alternation schema exists by definition of the edges).
  One can verify that $(r_1sr_2,r_1s_1,s_2r_2)$ is an alternation schema
  as well. Moreover, $u_1r_1s_1 =p_1s_1 = q_1$ and $s_2r_2u_2 = s_2p_2 =
  q_2$, which proves that there is an edge from $((u_1,u_2),r_1sr_2)$ to
  $((q_1,q_2),t)$.
\end{proof}

We now conclude the proof of Proposition~\ref{prop:graphcont1}: we have to
show that $\alpha$ fails Equation~\eqref{eq:bcs2}. Let $((p_1,p_2),s)$ and $((q_1,q_2),t)$ be two nodes forming a productive cycle of
length 2, as defined in Lemma~\ref{lem:cycle}. We get alternation schemas
$(s,s_1,s_2)$ and $(t,t_1,t_2)$ such that
\begin{enumerate}
\item\label{item:15} $p_1sp_2 \neq q_1tq_2$.
\item\label{item:16} $\content{s} = \content{t}$.
\item\label{item:17} $p_1s_1 = q_1$ and $q_1t_1 = p_1$, hence $p_1 = p_1(s_1t_1)^\omega$.
\item\label{item:18} $s_2p_2 = q_2$ and $t_2q_2 = p_2$, hence $p_2 = (t_2s_2)^\omega p_2$.
\end{enumerate}
By combining Items~\ref{item:17} and~\ref{item:18}, we obtain that
\begin{eqnarray*}
  p_1sp_2 & = & p_1(s_1t_1)^\omega s (t_2s_2)^\omega p_2 \\
  q_1tq_2 & = & p_1(s_1t_1)^\omega s_1 t s_2 (t_2s_2)^\omega p_2
\end{eqnarray*}
Hence, since $\content{s} = \content{t}$, Equation~\eqref{eq:bcs2}
would require that $p_1sp_2 = q_1tq_2$ which contradicts Item~\ref{item:15}
above. We conclude that Equation~\eqref{eq:bcs2} is not satisfied by
$\alpha$.

\subsection{Proof of Proposition~\ref{prop:graphcont2}}

In the remainder of the section, we assume that $\alpha$ satisfies the
hypothesis of Proposition~\ref{prop:graphcont2}. We prove that $G[\alpha]$ is
recursive by constructing a productive cycle.

We say that a node $((p_1,p_2),s)$ of $G[\alpha]$ is \emph{alternating} if for
all $n$, there exists $(s_1,\dots,s_n) \in \Cstwon[\alpha]$ such that
$s_1 = s$ and the \chain $(p_1s_1p_2,\dots,p_1s_np_2)$ has alternation $n-1$.

\begin{lemma} \label{lem:alt1}
  $G[\alpha]$ contains at least one alternating node.
\end{lemma}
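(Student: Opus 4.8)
The plan is to produce an alternating node already with the \emph{trivial} context $p_1=p_2=1_M$, i.e.\ to exhibit some $a\in M$ such that $((1_M,1_M),a)$ is alternating. With this context $p_1s_ip_2=s_i$, so the requirement reduces to: for every $n$ there is a \dchain $(s_1,\dots,s_n)\in\Cstwon[\alpha]$ with $s_1=a$ whose consecutive entries are pairwise distinct (so its alternation is $n-1$). First I would note that the hypothesis of Proposition~\ref{prop:graphcont2} in particular provides a set of \chain trees with unbounded alternation, whence $\Cstwo[\alpha]$ has unbounded alternation by Corollary~\ref{cor:ctree} (only this weaker consequence is used here). Thus for every $\ell\in\nat$ there is a \dchain of alternation at least $\ell$; collapsing each maximal constant run of that chain to a single entry produces one of its subwords, which by Fact~\ref{fct:high} (closure under subwords) is again in $\Cstwo[\alpha]$ and consists of $\ell+1$ entries all of whose consecutive entries differ. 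So for every $\ell$ there is a \dchain $(v_0,\dots,v_\ell)\in\Cstwo[\alpha]$ with $v_k\neq v_{k+1}$ for all $k$.

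Next I would extract, from each such $(v_0,\dots,v_\ell)$, a fully alternating \dchain with a fixed first entry. By the pigeonhole principle some value $a_\ell\in M$ occurs among $v_0,\dots,v_\ell$ at least $(\ell+1)/|M|$ times, at positions $k_1<\dots<k_r$ with $r\ge(\ell+1)/|M|$. Since $v_k\neq v_{k+1}$, no position $k_j+1$ can itself be an occurrence of $a_\ell$, so $k_{j+1}\ge k_j+2$ and the subword $(v_{k_1},v_{k_1+1},v_{k_2},v_{k_2+1},\dots,v_{k_{r-1}},v_{k_{r-1}+1},v_{k_r})$ has strictly increasing indices. By Fact~\ref{fct:high} it is a \dchain; it starts with $a_\ell$, has length $2r-1$, and its entries are $a_\ell,c_1,a_\ell,c_2,\dots,c_{r-1},a_\ell$ with each $c_j\neq a_\ell$, so all its consecutive entries differ. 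As $M$ is finite, pigeonhole over $\ell$ yields a single $a\in M$ with $a_\ell=a$ for infinitely many $\ell$; for those $\ell$ we obtain fully alternating \dchains starting with $a$ of length $2r-1\ge 2(\ell+1)/|M|-1$, which is unbounded.

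Finally, given $n$, I would pick one such chain of length $\ge n$ and replace it by its length-$n$ prefix: by Fact~\ref{fct:high} it is still a \dchain, it still starts with $a$, and a prefix of a fully alternating sequence is fully alternating, so it has alternation $n-1$ and lies in $\Cstwon[\alpha]$ (for $n=1$ take $(a)\in\Cs_{2,1}[\alpha]$, of alternation $0$). Hence $((1_M,1_M),a)$ is an alternating node of $G[\alpha]$, proving the lemma. I do not expect a genuine obstacle here: the only points needing care are the run-collapsing step, the verification that the extracted subword has strictly increasing indices (which is precisely where $v_k\neq v_{k+1}$ is used), and the remark that the recurring value $a$ can be fixed independently of $\ell$.
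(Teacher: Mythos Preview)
Your proof is correct and lands on the same node as the paper's, namely one of the form $((1_M,1_M),a)$. The paper's argument, however, is a two-line shortcut: since $M$ is finite, pigeonhole directly on the \emph{first element} of \dchains with unbounded alternation yields some $s$ that is the first element of \dchains of arbitrarily high alternation; the node $((1_M,1_M),s)$ is then declared alternating (the collapse-runs-and-take-prefix step, which you spell out carefully, is left implicit there). Your detour through a frequently recurring \emph{internal} value $a_\ell$ and the extraction of the $a_\ell$-centered subword $(v_{k_1},v_{k_1+1},v_{k_2},\dots)$ is sound but avoidable: once you have collapsed runs to obtain a fully alternating chain $(v_0,\dots,v_\ell)$, its first element $v_0$ already serves as the pigeonhole candidate over $\ell$, and a length-$n$ prefix then does the job. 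So the approaches coincide in spirit; yours just works harder than needed to pin down the starting element.
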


\begin{proof}
  By hypothesis, $\Cstwo[\alpha]$ has unbounded alternation. It follows that
  there exists a least one $s \in M$ such that there are \dchains with
  arbitrary high alternation and $s$ as first element. By definition, the node
  $((1_M,1_M),s)$ is then alternating.
\end{proof}

For the remainder of the proof we define $B$ as a minimal alphabet such that
there exists an alternating node $((p_1,p_2),s)$ in $G[\alpha]$ with
$\content{s} = B$. By this we mean that the only $C \subseteq B$ such that
there exists an alternating node $((q_1,q_2),t)$ in $G[\alpha]$ with
$\content{t} = C$ is~$B$~itself.

\begin{lemma} \label{lem:alt2} Let $((p_1,p_2),s)$ be an alternating node of
  $G[\alpha]$ with $\content{s} = B$. Then there exists an alternating node
  $((q_1,q_2),t)$ such that
  \begin{enumerate}
  \item $\content{t} = B$.
  \item there exists an edge from $((p_1,p_2),s)$ to $((q_1,q_2),t)$.
  \item $p_1sp_2 \neq q_1tq_2$.
  \end{enumerate}
\end{lemma}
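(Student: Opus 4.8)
The plan is to prove Lemma~\ref{lem:alt2} by carefully combining the fact that $((p_1,p_2),s)$ is alternating with the definition of a \chain tree. First I would unfold the alternating hypothesis: for every $n$, there exists a \dchain $(s_1,\dots,s_n) \in \Cstwon[\alpha]$ with $s_1 = s$ such that $(p_1s_1p_2,\dots,p_1s_np_2)$ has alternation $n-1$. By Corollary~\ref{cor:ctree}, each such \dchain is $\valc{T}$ for some \chain tree $T \in \ctc[\alpha]$ of level $n$. So I obtain a family $(T_n)_{n}$ of \chain trees with unbounded alternation, and I should look at the structure of these trees. Because $s_1 = s$ has alphabet $B$ and $\content{\,\cdot\,}$ is constant along a \dchain (since $\ksieq{2}$ preserves alphabets), \emph{all} entries of $\valc{T_n}$ have alphabet $B$, so every leaf label in each $T_n$ is a \dchain over alphabet $B$.

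The key step is to locate, inside these trees, the structure that an \chain graph edge encodes, namely an alternation schema $(s,s_1,s_2)$ with $p_1s_1 = q_1$ and $s_2p_2 = q_2$. Recall an alternation schema $(s,s_1,s_2)$ arises precisely from data $(r_1,\Rs_1),(r_2,\Rs_2),(e,\Es) \in \fC_{2,2}[\alpha]$ with $(e,\Es)$ idempotent, $\content{e}=\content{s}$, $s = r_1er_2$, $s_1 \in \Rs_1\Es$, $s_2 \in \Es\Rs_2$; and $(e,\Es)$ being idempotent in $\fC_{2,2}[\alpha]$ is exactly what an operation node of a \chain tree produces (its left/right children have idempotent \jun value, which is the $(e,\Es)$ here, and its port child gets a label from $(e,\Es)\cdot(1_M,\Ts)\cdot(e,\Es)$). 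So the plan is: in a tree $T_n$ whose \chain value has alternation $n-1$, if $n$ is large enough (larger than the number of nodes of bounded-depth trees cannot be, but large alternation forces large trees), there must be an operation node $x$ high up whose subtree contributes at least two distinct consecutive values in $\valc{T_n}$ — i.e.\ the alternation ``passes through'' an operation node. Then $\valc{T_n} = \bar u \cdot \valc{x} \cdot \bar v$ where $\valc{x} = \valc{x_1}\valc{x_2}\valc{x_3}$, the outer factors $\bar u, \bar v$ are constant \dchains of some elements $p_1', p_2'$ (because along a single position of the \chain, a product of constant subchains is constant), and $\valc{x_1} = \valc{x_3}$ is an idempotent-valued subtree. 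This gives me exactly an alternation schema $(s, s_1, s_2)$ attached to the first element $s$, with $q_1 = p_1's_1$, $q_2 = s_2 p_2'$; iterating/adjusting the outer context so that it matches $(p_1,p_2)$ produces the required edge. The new first element $t$ is the first element of the port label, which lies in $\Cs_{2,1}[\alpha]$, hence has alphabet $\content{e} = \content{s} = B$, giving Item~(1); Item~(2) is the edge just constructed; and Item~(3) is the statement that the values differ, which is guaranteed because we chose $x$ so that the alternation genuinely passes through it.

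The final piece is showing the new node $((q_1,q_2),t)$ is again \emph{alternating}, not merely a node. For this I would argue that since the trees $T_n$ have unbounded alternation and (by hypothesis of Proposition~\ref{prop:graphcont2}, once we invoke minimality of $B$) the recursive alternation — i.e.\ the alternation inside operation leaves — is itself unbounded, the port labels reachable at $x$ can be chosen to have arbitrarily high alternation. More precisely, by the minimality of $B$, an alternating node over a \emph{strictly smaller} alphabet cannot exist, and the recursive-alternation hypothesis pushes the unbounded alternation down into a single operation leaf whose alphabet must then still be $B$; its label, being a \dchain in $\Cstwo[\alpha]$ of unbounded alternation starting at $t$, witnesses that $((q_1,q_2),t)$ is alternating. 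I expect the main obstacle to be this last bookkeeping: making precise the pigeonhole/height argument that the alternation of $\valc{T_n}$ must concentrate at a single operation node with the correct alphabet, and ensuring that the outer context can be normalized to $(p_1,p_2)$ so that we genuinely get an \emph{edge out of the given node} rather than out of some related node. This is where the interplay between \chain trees, the structure of \juns, and the definition of alternation schemas has to be handled with care, and it is plausible the paper isolates an auxiliary lemma (analogous to Lemma~\ref{lem:cycle}'s role for Proposition~\ref{prop:graphcont1}) to do exactly this normalization.
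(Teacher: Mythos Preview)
Your outline captures the right skeleton—use chain trees, locate an operation node, extract an alternation schema—but it has two genuine gaps that the paper fills with specific devices you do not mention.

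First, the claim that the outer factors $\bar u,\bar v$ are ``constant \dchains'' is false: the leaves outside the chosen operation node can alternate freely, so $\bar r,\bar r'$ are arbitrary \dchains. The paper does not need them constant; instead it fixes a \emph{single} component index $i+1$ and sets $q_1=p_1r_{i+1}v_{i+1}$, $q_2=v'_{i+1}r'_{i+1}p_2$ by projecting all surrounding chains to that component. The alternation schema $(s,s_1,s_2)$ then comes from projecting the \jun values to components $1$ and $i+1$ (using Fact~\ref{fct:high2}); no constancy is involved. For Item~(3) you need that changing $t_i$ to $t_{i+1}$ in the operation-leaf label actually changes the value $p_1\valc{T}p_2$ at component $i+1$. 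This does \emph{not} follow from ``the alternation passes through $x$''; the paper enforces it by first replacing the trees by \emph{locally optimal} ones (Lemma~\ref{lem:optimal}): every alternation in every operation-leaf label is necessary in the context $(p_1,\ldots,p_1)\cdot(\,\cdot\,)\cdot(p_2,\ldots,p_2)$. Without this preprocessing step, Item~(3) is unproven.

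Second, your argument that $((q_1,q_2),t)$ is alternating is not a proof. You would need, for a \emph{fixed} triple $(q_1,q_2,t)$, \dchains of arbitrary alternation starting at $t$ whose $(q_1,\cdot,q_2)$-context alternates—yet you obtain $q_1,q_2,t$ from a single tree, and across different trees the surrounding data change. The paper avoids this entirely by a finiteness trick: it picks \emph{one} tree whose recursive alternation is at least $m=|M|^2\cdot k$, where $k$ is a threshold (Lemma~\ref{lem:chooseK}) such that $(t_1,t_2)^k\in\Cstwo[\alpha]$ forces $(t_1,t_2)^*\subseteq\Cstwo[\alpha]$. Pigeonhole on the $|M|^2$ possible pairs then yields an index $i$ with $t_i\neq t_{i+1}$ and $(t_i,t_{i+1})^k$ a subword of the leaf label, hence $(t_i,t_{i+1})^*\subseteq\Cstwo[\alpha]$; combined with Fact~\ref{fct:ccont} this gives alternation of $((q_1,q_2),t)$ directly, with $t\in\{t_i,t_{i+1}\}$ chosen so that $q_1tq_2\neq p_1sp_2$. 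The minimality of $B$ is used only to upgrade $\content{t}\subseteq B$ to equality.
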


By definition $G[\alpha]$ has finitely many nodes. Therefore,
since by Lemma~\ref{lem:alt1}, there exists at least one alternating
node, it is immediate from Lemma~\ref{lem:alt2} that $G[\alpha]$ must
contain a cycle witnessing that $G[\alpha]$ is recursive. This
terminates the proof of Proposition~\ref{prop:graphcont2}. It remains
to prove Lemma~\ref{lem:alt2}. We present the proof in the next
subsection.

\subsection{Proof of Lemma~\ref{lem:alt2}}

Let $((p_1,p_2),s)$ be an alternating node of $G[\alpha]$ with
$\content{s} = B$. We need to construct a node $((q_1,q_2),t)$
satisfying the conditions of the lemma (namely, a successor of $((p_1,p_2),s)$
with a different value and the same minimal alphabet $B$). Since $((p_1,p_2),s)$ is
alternating, there exists a set of \dchains $\Ss$ such that for every
\chain $(s_1,\dots,s_n)$ of $\Ss$, we have $s= s_1$ and
$(p_1s_1p_2,\dots,p_1s_np_2)$ has alternation $n-1$. By
Corollary~\ref{cor:ctree}, this yields a set of \chain trees \cs such
that $\Ss = \{\valc{T} \mid T \in \cs\}$. By construction, \cs has
unbounded alternation and hence unbounded recursive alternation by
hypothesis in Proposition~\ref{prop:graphcont2}.

We now proceed in two steps. First we use \cs to construct a new set of \chain
trees \crr and that satisfies an additional property that we call \emph{local
  optimality}. We then choose a tree $T \in \crr$ with large enough recursive
alternation and use it to construct the desired node~$((q_1,q_2),t)$.

\medskip
\noindent
{\bf Construction of \crr: Local Optimality.} Let us first define
local optimality. Note that the definition depends on the pair
$(p_1,p_2)$. Let $T$ be a \chain tree, $x$ be any operation node in
$T$ and $(t_1,\ldots,t_{n}) = \valc{x}$. We say that $x$ is
\emph{locally optimal} if for all $i < n$, either $t_i = t_{i+1}$ or
the \chain tree $T_i$ obtained from $T$ by replacing the label of $x$
by $(t_1,\dots,t_{i-1},t_{i},t_{i},t_{i+2},\dots,t_n)$ satisfies
\[
  (p_1,\dots,p_1) \cdot \valc{T} \cdot (p_2,\dots,p_2) \neq
  (p_1,\dots,p_1) \cdot \valc{T_i} \cdot (p_2,\dots,p_2).
\]
Intuitively this means that for all $i$, alternating from $t_i$ to $t_{i+1}$
in the label of $x$ is necessary to maintain the value of the tree (in the
context determined by $(p_1,\ldots,p_1)$ and $(p_2,\ldots,p_2)$).  We say that a \chain tree $T$ is
\emph{locally optimal} if {\bf all its operation leaves} are locally
optimal.

\begin{lemma} \label{lem:optimal}
  There exists a set of locally optimal \chain trees \crr such that for
  any $(u_1,\dots,u_n) \in \{\valc{T} \mid T \in \crr\}$, we have
  $s = u_1$ and $(p_1u_1p_2,\dots,p_1u_np_2)$ has alternation~$n-1$.
\end{lemma}

\begin{proof}
  From any \chain tree $T$, we construct a new \chain
  tree $T'$ such that
  \begin{enumerate}
  \item\label{item:13} $(p_1,\dots,p_1) \cdot \valc{T} \cdot (p_2,\dots,p_2) =
    (p_1,\dots,p_1) \cdot \valc{T'} \cdot (p_2,\dots,p_2)$.
  \item\label{item:14} $\valc{T}$ and $\valc{T'}$ have the same first element.
  \item $T'$ is locally optimal.
  \end{enumerate}
  It then suffices to let \crr be the set of all trees
  $T'$ constructed in this way from trees $T$ of $\cs$.

  Let $T$ be any \chain tree of level $n$. For all $i < n$, define the
  \emph{$i$-alternation} of $T$ as the number of operation leaves $x$ in $T$ such
  that $\valc{x} = (t_1,\cdots,t_{n})$ with $t_i \neq t_{i+1}$. Finally, define
  the \emph{index of $T$} as the sequence of size $n-1$ of its $i$-alternations, ordered
  increasingly with respect to values of $i$. Note that the lexicographic ordering on
  this set of sequences of fixed length is well-founded. 

  Assume that $T$ is not locally optimal. We explain how to construct a new
  \chain tree $T'$ satisfying \ref{item:13}, \ref{item:14} and
  \begin{enumerate}[label=(3')]
  \item $T'$ has strictly smaller index than $T$.
  \end{enumerate}
  It then suffices to iteratively apply this construction starting from $T$
  until we get the desired locally optimal tree (which must eventually happen
  since the ordering on indices of \chain trees of level $n$ is
  well-founded). The construction of $T'$ is as follows. Since $T$ is not
  locally optimal, there exists an operation leaf $x$ of $T$ that is not
  locally optimal. Let $(t_1,\dots,t_{n}) = \valc{x}$. By hypothesis, there
  exists $i < n$ such that $t_i \neq t_{i+1}$ and the \chain tree $T'$ obtained
  by replacing the label of $x$ by
  $(t_1,\dots,t_{i-1},t_i,t_i,t_{i+2}, \dots,t_n)$ satisfies \ref{item:13}. Since
  this replacement does not modify the first component of any node,
  Property~\ref{item:14} is satisfied as well. Finally, by definition, for any
  $j < i$, $T,T'$ have the same $j$-alternation and $T'$ has strictly smaller
  $i$-alternation. It follows that $T'$ has strictly smaller index than $T$,
  which terminates the proof.
\end{proof}

For the remainder of the section, we assume that $\crr$ is fixed as the set of
locally optimal \chain trees of Lemma~\ref{lem:optimal}. Observe that by definition, $\crr$
has unbounded alternation. Hence, by hypothesis in
Proposition~\ref{prop:graphcont2} it has unbounded recursive alternation as
well.

\medskip\noindent
{\bf Construction of the node $((q_1,q_2),t)$.} We choose a tree $T \in
\crr$. The choice is based on the following lemma.

\begin{lemma} \label{lem:chooseK}
  There exists an integer $k$ such that for all $t_1,t_2 \in M$
  \[
    (t_1,t_2)^k \in \Cslev 2[\alpha] \Rightarrow (t_1,t_2)^* \subseteq \Cslev
    2[\alpha].
  \]
\end{lemma}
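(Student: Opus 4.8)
\textbf{Proof plan for Lemma~\ref{lem:chooseK}.}

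The statement asserts a uniform stabilization bound for the ``two-letter slices'' of $\Cslev2[\alpha]$: once the \dchain $(t_1,t_2)^k$ is reached, the whole infinite alternating pattern $(t_1,t_2)^*$ lies in $\Cslev2[\alpha]$, and the threshold $k$ does not depend on the particular pair $(t_1,t_2)$. The plan is to extract such a $k$ from the closure properties of \dchains together with the pigeonhole principle, essentially mimicking the argument already carried out in the proof of Theorem~\ref{thm:sepbc}. First I would fix $n$ as the number of equivalence classes of $\kbceq2$ for a suitable quantifier rank, or more robustly, set $k = \kappa_{2,2} \cdot n'$ for an appropriate $n'$; the cleanest choice is to take $k$ large enough that any \dchain of the form $(t_1,t_2)^k$ forces, via Lemma~\ref{fct:chainref} and the stabilization rank $\kappa_{2,2}$, the existence of witnessing words $u_1,v_1,\dots,u_k,v_k$ with $u_1 \sieq{\kappa_{2,2}}{2} v_1 \sieq{\kappa_{2,2}}{2} u_2 \sieq{\kappa_{2,2}}{2} \cdots$ and $\alpha(u_j)=t_1$, $\alpha(v_j)=t_2$.

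The key step is then the pigeonhole argument: if $k > (\text{number of }\kbceq{\kappa_{2,2}}{2}\text{-classes})$, then among $u_1,\dots,u_k$ two of them, say $u_j$ and $u_{j'}$ with $j<j'$, satisfy $u_j \kbceq{\kappa_{2,2}}{2} u_{j'}$, hence in particular $u_{j'} \sieq{\kappa_{2,2}}{2} u_j$. Chaining this with the segment of $\sieq{\kappa_{2,2}}{2}$-relations between positions $j$ and $j'$ yields a cycle $u_j \sieq{\kappa_{2,2}}{2} v_j \sieq{\kappa_{2,2}}{2} u_{j'} \sieq{\kappa_{2,2}}{2} u_j$, and since $\sieq{\kappa_{2,2}}{2}$ is transitive (Fact~\ref{fct:preorder}), we get both $u_j \sieq{\kappa_{2,2}}{2} v_j$ and $v_j \sieq{\kappa_{2,2}}{2} u_j$. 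Now one can splice this loop into the chain: for any $m$, concatenating $m$ copies of the segment (and using transitivity) produces words witnessing that $(t_1,t_2,t_1,t_2,\dots)$ of length $2m$ lies in $\Cs_{2,2m}^{\kappa_{2,2}}[\alpha]$, and since $\kappa_{2,2}$ is the stabilization rank (for length $2$ — here I would note that one needs the stabilization rank to be taken uniformly, which is why invoking Proposition~\ref{prop:compu}'s explicit bound $\ell_{2,2}$ rather than the abstract $\kappa_{2,2}$ is the safe route), this gives $(t_1,t_2)^* \subseteq \Cslev2[\alpha]$ as required.

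The main subtlety — and what I'd expect to be the genuine obstacle — is that the stabilization bound $\kappa_{i,n}$ a priori \emph{grows with} $n$ (Lemma~\ref{fct:chainref} only gives existence for each fixed length), so one cannot naively say ``witnesses at rank $\kappa_{2,2}$ for a length-$2$ chain automatically give witnesses for all lengths.'' The fix is precisely that the uniform bound $\ell_{2,n} = 9n|M|^2 \cdot 2^{|M|^{n-1}}$ from Proposition~\ref{prop:compu} is \emph{explicit}, so once the pigeonhole loop is found at rank $\ell_{2,2}$ one rebuilds chains of arbitrary length by iteration and checks membership in $\Cstwo[\alpha]$ directly from the loop, not via the rank-$\ell_{2,2m}$ sets. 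Alternatively, and perhaps more cleanly, one avoids ranks entirely: work with $\Cstwo[\alpha]$ itself (an intersection over all $k$), observe it is closed under subwords, stutter, and product (Facts~\ref{fct:high}, \ref{fct:dupli}, \ref{fct:chaincomp}), take $k$ to be one more than $|M^2|$ or the number of $\sim_L$-type classes at the relevant rank, and argue that a \dchain $(t_1,t_2)^k$ contains (as a subword, after the pigeonhole collapse) a \dchain of the form $(t_1,t_2,t_1)$ which, being in $\Cstwo[\alpha]$, can be iterated by closure under product to yield $(t_1,t_2)^*$. I would write the final proof in this second style, making the choice of $k$ explicit as the number of $\kbceq{\ell_{2,2}}{2}$-classes plus one, and carefully phrasing the loop-splicing step so that it only ever appeals to transitivity of $\sieq{\ell_{2,2}}{2}$ and to the definition of $\Csik[\alpha]$.
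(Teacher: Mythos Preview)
Your approach has a genuine gap, and it massively overcomplicates what is in the paper a two-line finiteness argument. The paper simply observes that, by closure under subwords (Fact~\ref{fct:high}), for each fixed pair $(t_1,t_2)\in M^2$ the set $\{m : (t_1,t_2)^m\in\Cstwo[\alpha]\}$ is downward closed in $\nat$, hence either all of $\nat$ or a finite initial segment. Since $M^2$ is finite, one takes $k$ to be the maximum of the first excluded integers over the finitely many pairs whose set is finite (or $k=1$ if there are none). No witnesses, no pigeonhole, no ranks.

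Your main route does not work because the pigeonhole step needs $k$ to exceed the number of $\bceq{r}{2}$-classes, and that number grows with the rank $r$. A loop found at rank $\ell_{2,2}$ only yields $(t_1,t_2)^m\in\Cs_2^{\ell_{2,2}}[\alpha]$ for all $m$, not $(t_1,t_2)^m\in\Cstwo[\alpha]$; invoking the explicit bound $\ell_{2,n}$ from Proposition~\ref{prop:compu} does not help, since that bound is for chains of length $n$ and increases with $n$. Your ``check membership directly from the loop'' is exactly the step that cannot be carried out: the loop lives at a single fixed rank.

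Your alternative route also fails. Closure under product (Fact~\ref{fct:chaincomp}) is \emph{componentwise} multiplication in $M^n$, not concatenation of chains: from $(t_1,t_2,t_1)\in\Cstwo[\alpha]$ you get $(t_1^2,t_2^2,t_1^2)$, not $(t_1,t_2,t_1,t_2,t_1)$. There is no closure property that lets you splice chains end-to-end; the paper explicitly warns (in the remark after the definition of \ichains) that the length-$2$ relation is not transitive for precisely this reason.
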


\begin{proof}
  If for all $t_1,t_2\in M$, we have $(t_1,t_2)^* \subseteq \Cslev2[\alpha]$, we
  choose $k=1$. Otherwise, since $\Cslev 2[\alpha]$ is closed under subwords
  (Fact~\ref{fct:high}), if $(t_1,t_2)^k\notin \Cslev2[\alpha]$, then
  for all $j\geqslant k$, we have $(t_1,t_2)^j\notin \Cslev2[\alpha]$ as well.
  Therefore, one can define $k$ as the largest integer such
  that there exist $t_1,t_2 \in M$ with $(t_1,t_2)^{k-1} \in \Cslev 2[\alpha]$
  but $(t_1,t_2)^{k} \not\in \Cslev 2[\alpha]$ (with the convention that $(t_1,t_2)^{0} \in \Cslev 2[\alpha]$).
\end{proof}

Set $m = |M|^2 \cdot k$ with $k$ defined as in Lemma~\ref{lem:chooseK}. Since
\crr has unbounded recursive alternation, there exists $T \in \crr$ with
recursive alternation $m$. Let $n$ be the level of~$T$.

We now use $T$ to construct the desired node $((q_1,q_2),t)$ in $G[\alpha]$
fulfilling all properties of Lemma~\ref{lem:alt2}. We begin by summarizing all
hypotheses we have on $T$ (these hypotheses are also represented in
Figure~\ref{fig:depth}). Set $\bar{u} = (u_1,\dots,u_n) = \valc{T}$ and recall
that by choice of $T$ in \crr, we have $u_1 = s$. Let $x_1,\dots,x_h$ be the leaves of
$T$ (from left to right). Recall that by Fact~\ref{fct:value},
$\valc{T} = \valc{x_1} \cdots \valc{x_h}$.

\begin{figure}[ht]
  \begin{center}
    \begin{tikzpicture}

      \coordinate (a1) at (+0.0,+0.0);
      \coordinate (a2) at (-4.5,-4.5);
      \coordinate (a3) at (+4.5,-4.5);
      \coordinate (a4) at (-2.7,-4.5);
      \coordinate (a5) at (+2.7,-4.5);
      \coordinate (a6) at (+0.0,-1.5);

      \draw (a1) to (a2) to (a4) to (a6) to (a5) to (a3) to (a1);

      \node[nok] (z) at ($(a6) -(0.0,0.7)$) {\small $z$};
      \node[nok] (y1) at ($(a6) -(0.8,1.4)$) {\small $y$};
      \node[nok] (y2) at ($(a6) -(-0.8,1.4)$) {\small $y'$};

      \coordinate (b1) at ($(y1) - (0.0,0.5)$);
      \coordinate (b2) at (-2.4,-4.5);
      \coordinate (b3) at (-0.5,-4.5);

      \draw (b1) to (b2) to (b3) to (b1);

      \coordinate (c1) at ($(y2) - (0.0,0.5)$);
      \coordinate (c2) at (2.4,-4.5);
      \coordinate (c3) at (0.5,-4.5);

      \draw (c1) to (c2) to (c3) to (c1);

      \draw (a6) to (z);
      \draw (z) to (y1);
      \draw (z) to (y2);
      \draw (y1) to (b1);
      \draw (y2) to (c1);

      \node[nok] (x) at (0.0,-5.0) {\small $x$};
      \draw (z) to (x);

      \node[nok] at (-4.2,-5.0) {\small $x_1$};
      \node[nok] at (-3.0,-5.0) {\small $x_j$};
      \node[nok] at (+3.0,-5.0) {\small $x_{j'}$};
      \node[nok] at (+4.2,-5.0) {\small $x_h$};

      \node at (-3.6,-5.0) {$\cdots$};
      \node at (+3.6,-5.0) {$\cdots$};

      \draw[thick,decorate,decoration={brace}] (-2.7,-5.3) to node[below]
      {$\bar{r}$ } (-4.5,-5.3);

      \draw[thick,decorate,decoration={brace}] (-0.5,-5.3) to node[below]
      {$\bar{v} = \valc{y}$ } (-2.4,-5.3);

      \draw[thick,decorate,decoration={brace}] (2.4,-5.3) to node[below]
      {$\bar{v}' = \valc{y'}$ } (0.5,-5.3);

      \draw[thick,decorate,decoration={brace}] (4.5,-5.3) to node[below]
      {$\bar{r}'$ } (2.7,-5.3);

      \draw[thick,decorate,decoration={brace}] (4.5,-5.9) to node[below]
      {$\bar{u} = \valc{T} = \bar{r} \cdot \bar{v} \cdot \valc{x}
        \cdot \bar{v}' \cdot \bar{r}'$ } (-4.5,-5.9);

    \end{tikzpicture}
  \end{center}
  \caption{The \chain tree $T$}
  \label{fig:depth}
\end{figure}

By definition of recursive alternation, $T$ must contain an operation
leaf $x \in \{x_1,\dots,x_h\}$ whose label \valc{x} has alternation
$m$. By definition of \chain trees, $x$ is the middle child of an
operation node $z$. We set $y,y'$ as the left and right children of
this node. Finally, we set $j,j' \leqslant h$ such that $x_{j+1}$ is the
leftmost leaf that is a descendant of $y$ and $x'_{j'-1}$ the
rightmost leaf that is a descendant of $y'$ (see
Figure~\ref{fig:depth}). We now define the following \chains:
\[
  \begin{array}{lclcl}
    \bar{t} & = & (t_1,\dots,t_n) & = & \valc{x} \\
    \bar{v} & = & (v_1,\dots,v_n) & = & \valc{y} \\
    \bar{v}' & = & (v'_1,\dots,v'_n) & = & \valc{y'} \\
    \bar{r} & = & (r_1,\dots,r_n) & = & \valc{x_1} \cdots \valc{x_j}  \\
    \bar{r}' & = & (r'_1,\dots,r'_n) & = & \valc{x_{j'}} \cdots \valc{x_h}
  \end{array}
\]
By definition, we have $\valc{T} = \bar{r} \cdot \bar{v} \cdot \bar{t}
\cdot \bar{v}' \cdot \bar{r}'$. Since $x$ is an operation node, there
exists an idempotent $(e,\Es) \in \fC_{2,n}[\alpha]$ such that:
\begin{itemize}
\item $\vals{y} = \vals{y'} = (e,\Es)$.
\item $\bar{v},\bar{v}' \in (e,\Es)$.
\item $\bar{t} \in (e,\Es) \cdot (1_M,\Ts) \cdot (e,\Es)$ with $\Ts =
  \big\{(t_1,\dots,t_{n-1}) \in \Cs_{2,n-1}[\alpha] \mid \content{t_1} = \content{e}\big\}$.
\end{itemize}
By choice of $x$, $\bar{t} = (t_1,\dots,t_n) =\valc{x}$ has alternation $m = |M|^2 \cdot k$. It
follows from the pigeonhole principle that there exist $i$ such that
$t_i\neq t_{i+1}$ and a set $I \subseteq \{1,\dots,n-1\}$ of size at least~$k$
such that for all $j \in I$, $t_j = t_i$ and $t_{j+1} = t_{i+1}$. Note that this implies
that the \chain $(t_i,t_{i+1})^{k}$ is a subword of $(t_1,\dots,t_{n})$, and
therefore a \dchain. By choice of $k$ (see Lemma~\ref{lem:chooseK}) it follows
that $(t_i,t_{i+1})^{*} \subseteq \Cstwo[\alpha]$.

Recall that $T$ is locally optimal since it belongs to \crr. By
definition of local optimality, changing $t_{i+1}$ to $t_i$ in the label
$\valc{x}$ of the operation node $x$ changes the value $\valc{T}$, hence its
$(i+1)$-th component. We therefore obtain the following fact.

\begin{fact} \label{fct:ccont}
  $p_1r_{i+1}v_{i+1}t_iv'_{i+1}r'_{i+1}p_2 \neq p_1r_{i+1}v_{i+1}t_{i+1}v'_{i+1}r'_{i+1}p_2$.
\end{fact}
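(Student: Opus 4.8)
The plan is to derive Fact~\ref{fct:ccont} directly from the definition of local optimality applied to the operation leaf $x$. Recall that $T\in\crr$ is a locally optimal \chain tree and that $\valc{x}=\bar t=(t_1,\dots,t_n)$ has alternation $m$ at position $i$ with $t_i\neq t_{i+1}$. Local optimality of $x$ says that for this index $i$, since $t_i\neq t_{i+1}$, the \chain tree $T_i$ obtained from $T$ by replacing the label of $x$ with $(t_1,\dots,t_{i-1},t_i,t_i,t_{i+2},\dots,t_n)$ must satisfy
\[
  (p_1,\dots,p_1)\cdot\valc{T}\cdot(p_2,\dots,p_2)\neq
  (p_1,\dots,p_1)\cdot\valc{T_i}\cdot(p_2,\dots,p_2).
\]
So the first thing I would do is spell out that $T_i$ is indeed a valid \chain tree: this is exactly the content of the closure properties of \chains (closure under stutter, Fact~\ref{fct:dupli}, combined with the fact that the label of an operation leaf may be chosen to be any \dchain in the relevant \jun value, together with the observation that replacing $t_{i+1}$ by a copy of $t_i$ keeps us inside $\Cs_{2,n}[\alpha]$ and inside the \jun value $(e,\Es)\cdot(1_M,\Ts)\cdot(e,\Es)$ by closure under stutter); since $t_i\neq t_{i+1}$, $T_i$ genuinely differs from $T$ in the $(i+1)$-th component of the label of $x$ and local optimality applies.

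Next I would compute the two sides of the above inequality componentwise. Using Fact~\ref{fct:value}, $\valc{T}=\valc{x_1}\cdots\valc{x_h}=\bar r\cdot\bar v\cdot\valc{x}\cdot\bar v'\cdot\bar r'$, where the products are componentwise in $M^n$. In particular the $(i+1)$-th component of $\valc{T}$ is $r_{i+1}v_{i+1}t_{i+1}v'_{i+1}r'_{i+1}$, and the $(i+1)$-th component of $(p_1,\dots,p_1)\cdot\valc{T}\cdot(p_2,\dots,p_2)$ is $p_1r_{i+1}v_{i+1}t_{i+1}v'_{i+1}r'_{i+1}p_2$. For $T_i$, the label of $x$ has $(i+1)$-th component equal to $t_i$ instead of $t_{i+1}$, and every other leaf and every other component is unchanged; hence the $(i+1)$-th component of $(p_1,\dots,p_1)\cdot\valc{T_i}\cdot(p_2,\dots,p_2)$ is $p_1r_{i+1}v_{i+1}t_iv'_{i+1}r'_{i+1}p_2$. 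Since the two \chains $(p_1,\dots,p_1)\cdot\valc{T}\cdot(p_2,\dots,p_2)$ and $(p_1,\dots,p_1)\cdot\valc{T_i}\cdot(p_2,\dots,p_2)$ differ and they agree in every component except possibly the $(i+1)$-th (all other components of $\valc x$ are untouched, and the components at positions $\le i$ and $>i+1$ are literally identical), they must differ precisely at position $i+1$, which gives exactly Fact~\ref{fct:ccont}.

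The only real subtlety — and the step I expect to need the most care — is the bookkeeping that $T_i$ is a legitimate \chain tree of the same level with the same structure, so that local optimality is actually applicable to the index $i$: one must check that the modified label of $x$ still lies in $\valc z=(e,\Es)\cdot(1_M,\Ts)\cdot(e,\Es)$, which follows from closure of \ijuns and \ichains under stutter (Facts~\ref{fct:dupli} and~\ref{fct:dupli2}), and that no alphabet condition is violated (here the fact that $\content{t_i}=\content{t_{i+1}}$, which holds since consecutive elements of a \dchain have the same alphabet, is what makes the stuttered \chain still satisfy the alphabetic constraints). Once this is in place, everything else is a direct unwinding of the componentwise product of \chain values, and Fact~\ref{fct:ccont} drops out immediately.
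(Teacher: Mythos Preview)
Your proof is correct and follows exactly the paper's approach: local optimality of $T$ at index $i$ (where $t_i\neq t_{i+1}$) gives the inequality between $(p_1,\dots,p_1)\cdot\valc{T}\cdot(p_2,\dots,p_2)$ and $(p_1,\dots,p_1)\cdot\valc{T_i}\cdot(p_2,\dots,p_2)$, and since the modification touches only the $(i+1)$-th component, that component must differ. The extra paragraph on verifying that $T_i$ is a valid \chain tree is not needed here --- the definition of local optimality already hands you the inequality directly (and $\valc{T_i}$ is computable as a product of leaf labels regardless); that well-formedness issue is relevant to the construction in Lemma~\ref{lem:optimal}, not to this Fact.
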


We now define the node $((q_1,q_2),t)$. We let
\[
  q_1 = p_1r_{i+1}v_{i+1} \text{ and } q_2 = v'_{i+1}r'_{i+1}p_2.
\]
It is immediate
from Fact~\ref{fct:ccont} that either $q_1t_iq_2 \neq p_1sp_2$ or
$q_1t_{i+1}q_2 \neq p_1sp_2$. In the first case, we set $t = t_i$, in
the second, we set $t = t_{i+1}$. Note that since
$(t_i,t_{i+1})^{*} \subseteq \Cstwo[\alpha]$ and $q_1t_iq_2 \neq
q_1t_{i+1}q_2$, the node $((q_1,q_2),t)$ is alternating by definition.

\smallskip
It remains to prove that
\begin{itemize}
\item $\content{t} = \content{s}$, and that
\item there is an edge $((p_1,p_2),s)\to ((q_1,q_2),t)$ in $G[\alpha]$.
\end{itemize}
For the proof, we assume that $t= t_i$ (the case $t= t_{i+1}$ is similar).

\smallskip
Observe that in the \dchain $\valc{T} = (u_1,\dots,u_n)$, $u_1 = s$
and $u_i = p_1r_{i}v_{i}t_iv'_{i}r'_{i}p_2$. Since $(u_1,\dots,u_n)$
is a \dchain, one can verify that all its elements have the
same alphabet, hence $\content{u_i} = \content{s} = B$ and $\content{t}
\subseteq B$. Now, recall that $B$ was chosen as a minimal alphabet
such that there is an alternating node $((q_1,q_2),t)$ with
$\content{t} = B$. Hence, since $((q_1,q_2),t)$ is alternating and
$\content{t}  \subseteq B$, we have $\content{t} = B = \content{s}$.

Finally, we need to prove that there is an edge from $((p_1,p_2),s)$
to $((q_1,q_2),t)$, \emph{i.e.}, to find $s_1,s_2 \in M$ such that
$(s,s_1,s_2)$ is an alternation schema and $p_1s_1 = q_1$ and $s_2p_2
= q_2$. We define $s_1 = r_{i+1}v_{i+1}$ and $s_2 =
v'_{i+1}r'_{i+1}$. That $p_1s_1 = q_1$ and $s_2p_2 = q_2$ is immediate
by definition of $q_1$ and $q_2$. It remains to prove that $(s,s_1,s_2)$ is an
alternation schema.

Recall that $\bar{v},\bar{v}' \in (e,\Es)$ with
$(e,\Es) \in \fC_{2,n}[\alpha]$. Define $\Fs \subseteq M$ as the set
containing all elements that are at component $i$ of some \chain in $\Es$. In
particular $v_{i+1},v'_{i+1} \in \Fs$. It is immediate from
Fact~\ref{fct:high2} (closure of \juns under subwords) that
$(e,\Fs) \in \fC_{2,2}[\alpha]$. Moreover, the idempotency of $(e,\Es)$ entails that $(e,\Fs)$
is also idempotent. By Fact~\ref{fct:high} (closure of \chains under subwords), we have
$(r_1,r_{i+1}) \in \Cstwo[\alpha]$ and $(r'_1,r'_{i+1}) \in \Cstwo[\alpha]$.
Hence we have $(r_1,\{r_{i+1}\}) \in \fC_{2,2}[\alpha]$ and
$(r'_1,\{r'_{i+1}\}) \in \fC_{2,2}[\alpha]$. We conclude that
$s= u_1 = r_1er'_1$, $s_1 \in \{r_{i+1}\} \cdot \Fs$ and
$s_2 \in \Fs \cdot \{r'_{i+1}\}$. Moreover, by definition
$\content{e} = \content{t} = B = \content{s}$: we conclude that $(s,s_1,s_2)$
is an alternation schema, which terminates the proof.\qed

\section{\texorpdfstring{Proof of Proposition \lowercase{\ref{prop:width}}}{Proof of Proposition \ref{prop:width}}}
\label{app:width}
\newcommand\alt[2]{\ensuremath{\textsf{alt}(#1,#2)}\xspace}
In this section, we prove Proposition~\ref{prop:width}. Recall that we
have fixed a morphism $\alpha: A^* \rightarrow M$ into a finite monoid
$M$. Assume that there exists a set of \chain trees
$\cs \subseteq \ctc[\alpha]$ with unbounded alternation but bounded
recursive alternation. We need to prove that $\alpha$ does not satisfy
one of the equations in~\eqref{eq:bcs1}. As for the previous section, we will use a
new object that is specific to this case: \emph{\chain matrices}.

\medskip
\noindent {\bf \Chain Matrices.} Let $n \in \nat$. A \emph{\chain matrix of
  length $n$} is a rectangular matrix with~$n$ columns and whose rows belong
to $\Cstwolen{n}[\alpha]$. If \mat is a \chain matrix, we will denote by
$\mat_{i,j}$ the entry at row $i$ (starting from the top) and column $j$
(starting from the left) in \mat. If \mat is a \chain matrix of length $n$ and
with $m$ rows, we call the \chain
$\bigl((\mat_{1,1} \cdots \mat_{m,1}), \dots, (\mat_{1,n} \cdots
\mat_{m,n})\bigr)$,
the \emph{value} of \mat. Since $\Cstwolen{n}[\alpha]$ is a monoid by
Fact~\ref{fct:chaincomp}, the value of a \chain matrix is a \dchain. We give
an example with $3$ rows in Figure~\ref{fig:valmat}.

\begin{figure}[h]
  \begin{center}
    \begin{tikzpicture}
      \node[anchor=mid] (s1) at (0.0,0) {$s_1$};
      \node[anchor=mid] (s2) at (1.1,0) {$s_2$};
      \node[anchor=mid] (s3) at (2.2,0) {$s_3$};
      \node[anchor=mid] (s4) at (3.3,0) {$s_4$};
      \node[anchor=mid] (s5) at (4.4,0) {$\cdots$};
      \node[anchor=mid] (sn) at (5.5,0) {$s_{n}$};

      \node[anchor=mid] (t1) at (0.0,-0.5) {$t_1$};
      \node[anchor=mid] (t2) at (1.1,-0.5) {$t_2$};
      \node[anchor=mid] (t3) at (2.2,-0.5) {$t_3$};
      \node[anchor=mid] (t4) at (3.3,-0.5) {$t_4$};
      \node[anchor=mid] (t5) at (4.4,-0.5) {$\cdots$};
      \node[anchor=mid] (tn) at (5.5,-0.5) {$t_{n}$};

      \node[anchor=mid] (r1) at (0.0,-1.0) {$r_1$};
      \node[anchor=mid] (r2) at (1.1,-1.0) {$r_2$};
      \node[anchor=mid] (r3) at (2.2,-1.0) {$r_3$};
      \node[anchor=mid] (r4) at (3.3,-1.0) {$r_4$};
      \node[anchor=mid] (r5) at (4.4,-1.0) {$\cdots$};
      \node[anchor=mid] (rn) at (5.5,-1.0) {$r_{n}$};

      \draw (-0.5,0.25) to (-0.5,-1.25);
      \draw (0.55,0.25) to (0.55,-1.25);
      \draw (1.65,0.25) to (1.65,-1.25);
      \draw (2.75,0.25) to (2.75,-1.25);
      \draw (3.85,0.25) to (3.85,-1.25);
      \draw (4.95,0.25) to (4.95,-1.25);
      \draw (6.00,0.25) to (6.00,-1.25);

      \draw (-0.5,0.25) to (6.0,0.25);
      \draw (-0.5,-0.25) to (6.0,-0.25);
      \draw (-0.5,-0.75) to (6.0,-0.75);
      \draw (-0.5,-1.25) to (6.0,-1.25);

      \node[anchor=mid] (p1) at (-0.55,-2.2) {$($};
      \node[anchor=mid] (v1) at (0.0,-2.2) {$s_1t_1r_1,$};
      \node[anchor=mid] (v2) at (1.1,-2.2) {$s_2t_2r_2,$};
      \node[anchor=mid] (v3) at (2.2,-2.2) {$s_3t_3r_3,$};
      \node[anchor=mid] (v4) at (3.3,-2.2) {$s_4t_4r_4,$};
      \node[anchor=mid] (v5) at (4.4,-2.2) {$\dots$};
      \node[anchor=mid] (vn) at (5.5,-2.2) {$,s_{n}t_{n}r_{n}$};
      \node[anchor=mid] (p2) at (6.1,-2.2) {$)$};
      \node[anchor=mid] (text) at (-2.0,-2.2) {Value};
      \draw[ar] (text) to (p1);

      \draw[->] ($(r1)-(0.0,0.4)$) to (v1);
      \draw[->] ($(r2)-(0.0,0.4)$) to (v2);
      \draw[->] ($(r3)-(0.0,0.4)$) to (v3);
      \draw[->] ($(r4)-(0.0,0.4)$) to (v4);
      \draw[->] ($(rn)-(0.0,0.4)$) to (vn);
    \end{tikzpicture}
  \end{center}
  \caption{Value of a \chain matrix with $3$ rows}
  \label{fig:valmat}
\end{figure}
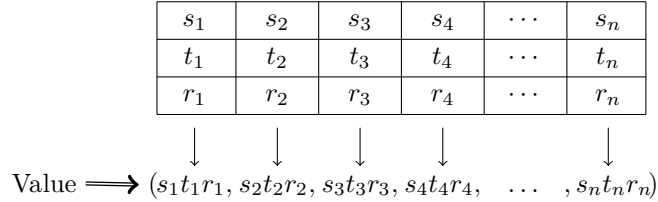

Given a \chain matrix \mat, the \emph{alternation} of \mat is the
alternation of its value. Finally, the \emph{local alternation} of a
\chain matrix, \mat, is the largest integer $m$ such that \mat has a
row with alternation $m$. We now prove the two following propositions.

\begin{proposition} \label{prop:matinit}
  Assume that there exists a set of \chain trees $\cs
  \subseteq \ctc[\alpha]$ with unbounded alternation and recursive
  alternation bounded by $K \in \nat$. Then there exist \chain
  matrices with arbitrarily large alternation and local alternation
  bounded by $K$.
\end{proposition}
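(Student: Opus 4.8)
The plan is to obtain a \chain matrix directly from each \chain tree in $\cs$, by taking its leaves as the rows of the matrix. Fix $K$ bounding the recursive alternation of all trees in $\cs$, and fix an arbitrary target $N \in \nat$; it suffices to produce a \chain matrix whose alternation is at least $N$ and whose local alternation is at most $K$. Since $\cs$ has unbounded alternation, there is a tree $T \in \cs$ whose \chain value $\valc{T}$ has alternation at least $N$. Let $n$ be the level of $T$ and let $x_1,\dots,x_m$ be the leaves of $T$ listed from left to right. By Fact~\ref{fct:value2}, each $\valc{x_i}$ lies in $\Cstwon[\alpha]$, so the rectangular matrix $\mat$ with $n$ columns whose $i$-th row is $\valc{x_i}$ is a \chain matrix of length $n$.

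Next I would check that the value of $\mat$ is exactly $\valc{T}$. By definition, the $j$-th column of the value of $\mat$ is the product $\valc{x_1}\cdots\valc{x_m}$ read on the $j$-th coordinate; equivalently, the value of $\mat$ is the componentwise product $\valc{x_1}\cdots\valc{x_m}$ computed in $M^n$ (which is also the product making $\Cstwon[\alpha]$ a submonoid of $M^n$, by Fact~\ref{fct:chaincomp}). By Fact~\ref{fct:value}, this product equals $\valc{T}$. Hence the alternation of $\mat$ equals that of $\valc{T}$, which is at least $N$.

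It remains to bound the local alternation of $\mat$, i.e., the largest alternation of one of its rows $\valc{x_i}$. Each leaf $x_i$ of the \chain tree $T$ is either an initial leaf or an operation leaf, since ports are replaced by operation leaves upon instantiation. If $x_i$ is an initial leaf, then $\valc{x_i}$ is a constant \chain $(s,\dots,s)$, of alternation $0 \leqslant K$. If $x_i$ is an operation leaf, then $\valc{x_i}$ is the label of $x_i$, whose alternation is at most the recursive alternation of $T$, hence at most $K$ by the choice of $\cs$. So every row of $\mat$ has alternation at most $K$, giving local alternation at most $K$. As $N$ was arbitrary, this yields \chain matrices with arbitrarily large alternation and local alternation bounded by $K$. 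The proof is just an unfolding of the definitions of \chain trees, \chain matrices, and (recursive) alternation together with Facts~\ref{fct:value} and~\ref{fct:value2}; there is no genuine obstacle, the only point to watch being that every leaf of a \chain tree carries a \chain of alternation at most $K$, which is immediate for initial leaves and is precisely what the bound on recursive alternation provides for operation leaves.
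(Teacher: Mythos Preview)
Your proof is correct and follows essentially the same approach as the paper: build the matrix from the leaves of a \chain tree $T\in\cs$ of large alternation, using Fact~\ref{fct:value} to identify the matrix value with $\valc{T}$ and the bound on recursive alternation to control the rows. Your case analysis on initial versus operation leaves makes explicit what the paper leaves implicit in ``by definition''.
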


\begin{proposition} \label{prop:contradend}
  Assume that there exist \chain matrices with arbitrarily large alternation
  and local alternation bounded by $K\in \nat$. Then $\alpha$ does not
  satisfy~\eqref{eq:bcs1}.
\end{proposition}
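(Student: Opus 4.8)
The plan is to prove Proposition~\ref{prop:contradend} by contraposition: assuming $\alpha$ satisfies both equations in~\eqref{eq:bcs1}, I show that any \chain matrix of local alternation bounded by $K$ has alternation bounded by a constant depending only on $K$ and~$|M|$. The key intuition is that~\eqref{eq:bcs1} is exactly the statement that, for a \dchain $(s_1,s_2,s_3)$, inserting $s_2$ into an idempotent power of $s_1$ (on either side) does not change the product. So a column of a \chain matrix, once it has "stabilized" into an idempotent block, can absorb later entries without the corresponding product-value ever changing. The alternation of the matrix's value is the number of times the tuple $\big((\mat_{1,1}\cdots\mat_{m,1}),\dots,(\mat_{1,n}\cdots\mat_{m,n})\big)$ actually changes between consecutive columns, and I want to bound that.

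First I would fix a \chain matrix $\mat$ with $m$ rows, $n$ columns, local alternation $\le K$, and set $\omega=\omega(M)$. For each row $i$, the \dchain $(\mat_{i,1},\dots,\mat_{i,n})$ has alternation $\le K$, so it is constant on at most $K+1$ maximal blocks of columns. I would then look at the "coarsest common refinement" of these block partitions across all $m$ rows; the issue is that $m$ is unbounded, so this naive refinement is useless. The real device is to process columns left to right and track, for each row, the element $\mat_{i,1}\cdots\mat_{i,j}$ of $M$; the value of the matrix at column $j$ is the tuple of these partial products. Within a block of columns where \emph{every} row is constant, the partial-product tuple of column $j$ is obtained from that of column $j-1$ by multiplying componentwise by a fixed tuple $\bar{c}=(c_1,\dots,c_n)$ — but $\bar c$ itself is a \dchain (it is a subword of each row in that block, using Fact~\ref{fct:high} / closure under subwords). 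So inside each of the $\le(K+1)$-many... wait, the blocks of \emph{distinct} rows need not align. The correct statement: partition the columns into maximal intervals on which the \emph{set of rows that have already switched at least once} is constant; since each row switches $\le K$ times and there are only finitely many "switch-profiles up to multiplicity", one argues via a pigeonhole/Ramsey-type compression. Honestly the cleanest route is: I would reduce to the case $m\le |M|^{n}$ or even $m$ bounded by first compressing repeated rows and repeated \emph{row-products}, and then invoke equations~\eqref{eq:bcs1} to collapse repeated idempotent blocks, exactly mirroring the compression steps in \cite{bpopen}.

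Concretely, the main combinatorial step I expect to carry out is: if two rows $i<i'$ of $\mat$ satisfy $\mat_{i,j}=\mat_{i',j}$ for all $j$ (identical rows), and there are $\ge\omega$ of them consecutively, then replacing this block by its $\omega$-th power changes neither the value nor the local alternation (the rows are unchanged, only their multiplicity). This lets me assume each \emph{maximal run of identical rows} has length $\le\omega$. Then I identify, for a fixed column-block where all active rows are constant, that the product down that block of a column is an idempotent-times-stuff shape, and apply~\eqref{eq:bcs1} — in the form proved available via Lemma~\ref{lem:equiveq} if needed, though here \eqref{eq:bcs1} is assumed directly — to show the transition from one stable column to the next does not alter the value unless one of the $\le K$-per-row switches is being crossed. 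Summing over rows: the total number of column-transitions at which the value can change is at most $m_0\cdot K$ where $m_0$ is the (now bounded) number of rows, giving a bound on the alternation of $\mat$ independent of $n$. Since \chain matrices were supposed to have \emph{arbitrarily large} alternation, this is the contradiction.

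The hard part will be the bookkeeping in the column-block argument: making precise the claim that "when no row switches between columns $j-1$ and $j$, the value-tuple either stays the same or changes only because of a genuine switch in some row," because the value at column $j$ is a product of $m$ entries and a single entry changing does propagate. The resolution is that a change in one entry $\mat_{i,j}$ relative to $\mat_{i,j-1}$ — i.e., a switch in row $i$ — is precisely a contribution of $\le K$ per row to the budget, and \emph{only} such genuine switches, combined with the $\le\omega$ bound on identical-row runs and the idempotency collapse via~\eqref{eq:bcs1}, can move the value. Getting the quantifiers right so that the final bound depends only on $K$ and $|M|$ (and not on $n$ or the original $m$) is where care is needed; I would structure it as three successive normalizations of $\mat$ (collapse identical-row runs, collapse repeated row-product prefixes using~\eqref{eq:bcs1}, collapse columns with equal value) each of which preserves alternation and local alternation, after which $\mat$ has size bounded purely in terms of $|M|$ and $K$, forcing its alternation to be bounded.
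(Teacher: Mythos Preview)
Your contrapositive strategy has a genuine gap: you never succeed in bounding the number of rows $m$ independently of $n$, and without that bound your alternation estimate $m_0\cdot K$ is vacuous. Collapsing runs of \emph{identical} consecutive rows only controls multiplicity, not the number of \emph{distinct} rows, and a \chain matrix can perfectly well have $m$ pairwise distinct rows each with alternation $1$ (e.g.\ row $i$ switches from one value to another exactly at column $i$). Your fallback ``reduce to $m\le |M|^n$'' depends on $n$ and so gives no bound on alternation. The step ``collapse repeated row-product prefixes using~\eqref{eq:bcs1}'' is the crux and is left as a slogan: the prefix products $\mat_{1,j}\cdots\mat_{i,j}$ live in $M$ \emph{per column}, so a pigeonhole collision for one column says nothing about other columns, and~\eqref{eq:bcs1} only lets you absorb a middle element between \emph{idempotent powers of the outer elements of a \dchain of length~$3$}---there is no mechanism in your outline that manufactures such idempotents simultaneously across all columns.

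The paper proceeds in the forward direction and does precisely the work your plan is missing: it extracts from an arbitrarily alternating matrix a small fixed-size witness. First it normalizes to a \emph{tame} matrix (each odd/even column pair has a single alternating row, all distinct), then uses Erd\H{o}s--Szekeres to make the alternating-row indices monotone, and finally applies Ramsey's theorem for $3$-uniform hypergraphs (twice) to force the entries above and below the diagonal to be constant idempotents $f$ and $e$. The result is a $3\times 6$ ``contradiction matrix'' in which the middle row reads $(e,e,u_2,v_2,f,f)$ with $fu_2e\neq fv_2e$; since this row is a \dchain, one of $f^\omega u_2 e^\omega\neq f^\omega e^\omega$ or the symmetric inequality holds, directly violating~\eqref{eq:bcs1}. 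The Ramsey step is exactly what produces the idempotents needed to invoke the equation---this is the idea your compression scheme lacks.
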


Proposition~\ref{prop:width} is an immediate consequence of
Proposition~\ref{prop:matinit} and~\ref{prop:contradend}. Note that
\chain matrices are reused from~\cite{bpopen} (where they are called
``strategy matrices''). Moreover, in this case, going from \chain trees
to \chains matrices (\emph{i.e.}, proving Proposition~\ref{prop:matinit}) is
simple and the main difficulty is proving
Proposition~\ref{prop:contradend}. This means that while our presentation
is different from that of~\cite{bpopen}, the fundamental arguments
themselves are essentially the same. We give a full proof for the sake
of completeness. We begin by proving Proposition~\ref{prop:matinit}.

\begin{proof}[of Proposition~\ref{prop:matinit}]
  We prove that for all $n \in \nat$, there exists a \chain matrix~\mat
  of alternation $n$ and local alternation bounded by $K$. By definition
  of $\cs$, there exists a tree $T \in \cs$ whose value has alternation
  $n$ and has recursive alternation bounded by $K$. Set $x_1,\dots,x_m$
  as leaves of $T$ listed from left to right.
  By Fact~\ref{fct:value}, $\valc{T} = \valc{x_1} \cdots
  \valc{x_m}$. Observe that by definition, for all $i$, $\valc{x_i}$ has
  alternation bounded by $K$. Therefore it suffices to set \mat as the
  $m\times n$ matrix where row $i$ is filled with $\valc{x_i}$.
\end{proof}

It now remains to prove Proposition~\ref{prop:contradend}. We proceed as
follows: assuming there exists a \chain matrix \mat with local alternation
bounded by $K$ and very large alternation, we refine~\mat in several steps to
ultimately obtain a \chain matrix of a special kind that we call a
\emph{contradiction matrix}. There are two types of contradiction matrices,
\emph{increasing} and \emph{decreasing}. Both are \chain matrices of length
$6$ with the following entries:

\begin{center}
  \begin{tikzpicture}

    \node (m1) at (0,0) {
      $\begin{array}{|c|c|c|c|c|c|}
         \hline
         u_1 & v_1 & f & f & f & f\\
         \hline
         e & e & u_2 & v_2 & f & f\\
         \hline
         e & e & e & e & u_3 & v_3\\
         \hline
       \end{array}$};
     \node (l1) at(0,-1.2) {Increasing Contradiction Matrix};

     \node (m2) at (6,0) {$
       \begin{array}{|c|c|c|c|c|c|}
         \hline
         f & f & f & f & u_3 & v_3\\
         \hline
         f & f & u_2 & v_2 & e & e\\
         \hline
         u_1 & v_1 & e & e & e & e\\
         \hline
       \end{array}$};
     \node (l2) at(6,-1.2) {Decreasing Contradiction Matrix};
   \end{tikzpicture}
 \end{center}

 \noindent
 where $e,f$ are idempotents and $fu_2e \neq fv_2e$. As the name suggests, the
 existence of a contradiction matrix contradicts~\eqref{eq:bcs1}.
 This is what we state in the following lemma.

 \begin{lemma} \label{lem:contradmat}
   If there exists a contradiction matrix, then $\alpha$ does not
   satisfy~\eqref{eq:bcs1}.
 \end{lemma}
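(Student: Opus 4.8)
The plan is to take a contradiction matrix and simply read off its value as a \dchain of the form $(s_1,s_2)^*$-like pattern, then use the algebraic structure of the matrix entries to rewrite the value in two ways that are forced to be equal by~\eqref{eq:bcs1} but are actually distinct. Concretely, suppose \mat is an increasing contradiction matrix with the displayed entries, where $e,f$ are idempotents and $fu_2e \neq fv_2e$. The value of \mat is the \dchain
\[
  \bigl(u_1e^2,\ v_1e^2,\ fu_2e,\ fv_2e,\ f^2u_3,\ f^2v_3\bigr)
  = \bigl(u_1e,\ v_1e,\ fu_2e,\ fv_2e,\ fu_3,\ fv_3\bigr),
\]
using $e^2=e$, $f^2=f$. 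First I would observe that this is a \dchain of length $6$, hence in particular any sub-\chain of length $3$ extracted from consecutive columns is a \dchain by Fact~\ref{fct:high} (closure under subwords).

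The key step is to isolate the three columns $3,4$ and either $2$ or $5$ to produce a length-$3$ \dchain on which~\eqref{eq:bcs1} forces a collapse. Consider for instance the sub-\chain obtained from columns $2,3,4$: this is $(v_1e,\ fu_2e,\ fv_2e)\in\Cstwo[\alpha]$. Here I want a \dchain of the shape $(s_1,s_2,s_3)$ with $s_1=s_3$, so that~\eqref{eq:bcs1} yields $s_1^\omega s_3^\omega = s_1^\omega s_2 s_3^\omega$, i.e.\ a direct equation between $s_2$ and the absence of $s_2$. The ``staircase'' shape of the matrix is designed precisely so that after multiplying on the left and right by appropriate powers of the idempotent entries $e$ and $f$, columns $3$ and $4$ become surrounded by matching idempotent context: the first row below column $3$ and onward is all $f$, the last row above column $4$ is all $e$. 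I would therefore extract columns $3,4$ and pad with a repeated column to get a \dchain like $(fu_2e,\ fv_2e,\ fu_2e)$ or use the idempotent columns directly; closure under stutter (Fact~\ref{fct:dupli}) lets me duplicate, and closure under subwords lets me drop. Applying the first equation of~\eqref{eq:bcs1} to $(s_1,s_2,s_3)=(fu_2e, fv_2e, fu_2e)$ gives $s_1^\omega s_3^\omega = s_1^\omega s_2 s_3^\omega$; since $f$ and $e$ are idempotents sandwiching everything, one simplifies $s_1^\omega = (fu_2e)^\omega$ and the equation reduces to $fu_2e = fv_2e$ (after absorbing the idempotent powers using $f\cdot f = f$, $e\cdot e=e$ and the fact that $fu_2e$, $fv_2e$ are themselves ``between'' $f$ and $e$), contradicting the hypothesis $fu_2e\neq fv_2e$. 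For the decreasing contradiction matrix I would run the symmetric argument using the second equation of~\eqref{eq:bcs1}, $s_3^\omega s_1^\omega = s_3^\omega s_2 s_1^\omega$.

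The main obstacle I anticipate is bookkeeping the idempotent powers so that the equation from~\eqref{eq:bcs1} genuinely collapses to $fu_2e = fv_2e$ rather than to something weaker. The point to get right is that $(fu_2e)^\omega$ and $(fv_2e)^\omega$ need not be related to $f$ or $e$ in an obvious way, so I cannot naively claim $s_1^\omega = f$ or $= e$. Instead I would choose the three columns more carefully: take the length-$3$ \dchain $(\,f,\ fu_2e,\ e\,)$ (columns available from rows where the staircase shows $f,f,\ldots$ above and $e,e,\ldots$ below — more precisely from the product-of-rows value one can extract, via subword closure, a \dchain whose first element is a power of $f$ landing back on $f$, whose last is a power of $e$ landing back on $e$, and whose middle is $fu_2e$). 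Since $f$ and $e$ are idempotent, $f^\omega = f$ and $e^\omega = e$, so the first equation of~\eqref{eq:bcs1} becomes $f\cdot e = f\cdot (fu_2e)\cdot e = fu_2e$, and similarly with $v_2$ we would get $fe = fv_2e$; hence $fu_2e = fe = fv_2e$, the desired contradiction. The delicate part is justifying that such a \dchain $(f, fu_2e, e)$ is actually extractable from the value of the contradiction matrix — this uses that each matrix row lies in $\Cstwolen{6}[\alpha]$, that $\Cstwolen{6}[\alpha]$ is a monoid (Fact~\ref{fct:chaincomp}) so the value is a \dchain, and then subword closure (Fact~\ref{fct:high}) together with the explicit idempotent pattern of the matrix, which guarantees the needed entries $f$ and $e$ appear in the value at the right columns. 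Once Lemma~\ref{lem:contradmat} is in place, combined with Propositions~\ref{prop:matinit} and~\ref{prop:contradend}'s reduction to contradiction matrices, Proposition~\ref{prop:width} follows.
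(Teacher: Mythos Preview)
Your argument has a genuine gap at the ``delicate part'' you yourself flag. You try to extract the \dchain $(f,\,fu_2e,\,e)$ from the \emph{value} of the contradiction matrix, but the value is $(u_1e,\,v_1e,\,fu_2e,\,fv_2e,\,fu_3,\,fv_3)$, and neither $f$ nor $e$ occurs as an entry of this \chain. Subword closure (Fact~\ref{fct:high}) only lets you drop entries, not manufacture new ones, so no sub-\chain of the value has first component $f$ and last component $e$. The earlier attempts via $(fu_2e,\,fv_2e,\,fu_2e)$ run into exactly the obstacle you identify: $(fu_2e)^\omega$ need not simplify.

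The fix is to stop looking at the value and use the \emph{rows}: by definition of \chain matrices, each row is already a \dchain. In the increasing case, row~2 is $(e,e,u_2,v_2,f,f)\in\Cstwo[\alpha]$, so by subword closure $(e,u_2,f)\in\Cstwo[\alpha]$ and $(e,v_2,f)\in\Cstwo[\alpha]$. Now apply the \emph{second} equation of~\eqref{eq:bcs1} with $(s_1,s_2,s_3)=(e,u_2,f)$: since $e,f$ are idempotent, $f^\omega e^\omega = f^\omega u_2 e^\omega$ reads $fe = fu_2e$. The same with $(e,v_2,f)$ gives $fe = fv_2e$, hence $fu_2e = fv_2e$, contradicting the hypothesis. (Equivalently, the paper observes that $fu_2e\neq fv_2e$ forces one of them, say $fu_2e$, to differ from $fe$, and a single application of~\eqref{eq:bcs1} already contradicts this.) The decreasing case is symmetric, using row~2 of that matrix and the \emph{first} equation of~\eqref{eq:bcs1}. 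The contradiction matrix is engineered precisely so that a single row already carries the needed length-$3$ \dchain with idempotent endpoints; the value of the matrix and the outer rows are irrelevant for this lemma.
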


 \begin{proof}
   Assume that we have an increasing contradiction matrix (the other case
   is treated in a symmetrical way). Since $fu_2e \neq fv_2e$, either
   $fu_2e \neq fe$ or $fv_2e \neq fe$. By symmetry assume it is the
   former. Since $e,f$ are idempotents, this means that $f^\omega u_2
   e^\omega \neq f^\omega e^\omega$. However by definition of \chain
   matrices $(e,u_2,v_2,f) \in \Cstwo[\alpha]$ and therefore  $(e,u_2,f)
   \in \Cstwo[\alpha]$ which contradicts the second equation in~\eqref{eq:bcs1}. Note
   that we only used one half of~\eqref{eq:bcs1}, the other half
   is used in the decreasing case.
 \end{proof}

 By Lemma~\ref{lem:contradmat}, it suffices to prove the existence of a
 contradiction matrix to conclude the proof of
 Proposition~\ref{prop:contradend}. This is what we do in the remainder of
 this section. By hypothesis, we know that there exist \chain matrices with
 arbitrarily large alternation and local alternation bounded by $K \in \nat$.
 For the remainder of the section, we assume that this hypothesis holds. We
 use several steps to prove that we can choose our \chain matrices with
 increasingly strong properties until we get a contradiction matrix. We use
 two intermediaries types of matrices, that we call \emph{Tame \Chain
   Matrices} and \emph{Monotonous \Chain Matrices}. We divide the proof in
 three subsections, one for each step.

 \subsection{Tame \Chain Matrices}

 Let \mat be a \chain matrix of \emph{even length} $2\ell$ and let
 $j \leqslant \ell$. The \emph{set of alternating rows for $j$}, denoted by
 \alt{\mat}{j}, is the set $\{i \mid \mat_{i,2j-1} \neq \mat_{i,2j}\}$. Let
 $(s_1,\dots,s_{2\ell})$ be the value of \mat. We say that \mat is \emph{tame}
 if
 \begin{enumerate}[label=$\alph*)$]
 \item\label{item:19} for all $j \leqslant \ell$, $s_{2j-1} \neq s_{2j}$,
 \item\label{item:20} for all $j \leqslant \ell$, \alt{\mat}{j} is a singleton and
 \item\label{item:21} if $j \neq j'$ then $\alt{\mat}{j} \neq \alt{\mat}{j'}$.
 \end{enumerate}
 We represent a tame \chain matrix of length $6$ in Figure~\ref{fig:tamemat}.
 Observe that the definition only considers the relationship between odd
 columns and the next even column. Moreover, observe that a tame \chain matrix
 of length $2\ell$ has by definition alternation at least $\ell$.

 \begin{figure}[h]
   \begin{center}
     \begin{tikzpicture}
       \draw[pattern=north west lines,draw] (0.0,1.5) to (0.5,1.5) to
       (0.5,2.0) to (0.0,2.0) to (0.0,1.5);
       \draw[pattern=crosshatch dots,draw] (0.0,1.0) to (0.5,1.0) to
       (0.5,1.5) to (0.0,1.5) to (0.0,1.0);
       \draw[pattern=north west lines,draw] (0.0,0.5) to (0.5,0.5) to
       (0.5,1.0) to (0.0,1.0) to (0.0,0.5);
       \draw[pattern=crosshatch dots,draw] (0.0,0.0) to (0.5,0.0) to
       (0.5,0.5) to (0.0,0.5) to (0.0,0.0);

       \draw[pattern=north west lines,draw] (0.5,1.5) to (1.0,1.5) to
       (1.0,2.0) to (0.5,2.0) to (0.5,1.5);
       \draw[pattern=crosshatch dots,draw] (0.5,1.0) to (1.0,1.0) to
       (1.0,1.5) to (0.5,1.5) to (0.5,1.0);
       \draw[pattern=north west lines,draw] (0.5,0.5) to (1.0,0.5) to
       (1.0,1.0) to (0.5,1.0) to (0.5,0.5);
       \draw[pattern=vertical lines,draw] (0.5,0.0) to (1.0,0.0) to (1.0,0.5)
       to (0.5,0.5) to (0.5,0.0);

       \draw[pattern=horizontal lines,draw] (1.0,1.5) to (1.5,1.5) to
       (1.5,2.0) to (1.0,2.0) to (1.0,1.5);
       \draw[pattern=bricks,draw] (1.0,1.0) to (1.5,1.0) to
       (1.5,1.5) to (1.0,1.5) to (1.0,1.0);
       \draw[pattern=north east lines,draw] (1.0,0.5) to (1.5,0.5) to
       (1.5,1.0) to (1.0,1.0) to (1.0,0.5);
       \draw[pattern=north east lines,draw] (1.0,0.0) to (1.5,0.0) to
       (1.5,0.5) to (1.0,0.5) to (1.0,0.0);

       \draw[pattern=crosshatch dots,draw] (1.5,1.5) to (2.0,1.5) to
       (2.0,2.0) to (1.5,2.0) to (1.5,1.5);
       \draw[pattern=bricks,draw] (1.5,1.0) to (2.0,1.0) to
       (2.0,1.5) to (1.5,1.5) to (1.5,1.0);
       \draw[pattern=north east lines,draw] (1.5,0.5) to (2.0,0.5) to
       (2.0,1.0) to (1.5,1.0) to (1.5,0.5);
       \draw[pattern=north east lines,draw] (1.5,0.0) to (2.0,0.0) to
       (2.0,0.5) to (1.5,0.5) to (1.5,0.0);

       \draw[pattern=bricks,draw] (2.0,1.5) to (2.5,1.5) to
       (2.5,2.0) to (2.0,2.0) to (2.0,1.5);
       \draw[pattern=bricks,draw] (2.0,1.0) to (2.5,1.0) to
       (2.5,1.5) to (2.0,1.5) to (2.0,1.0);
       \draw[pattern=crosshatch dots,draw] (2.0,0.5) to (2.5,0.5) to
       (2.5,1.0) to (2.0,1.0) to (2.0,0.5);
       \draw[pattern=crosshatch dots,draw] (2.0,0.0) to (2.5,0.0) to
       (2.5,0.5) to (2.0,0.5) to (2.0,0.0);

       \draw[pattern=bricks,draw] (2.5,1.5) to (3.0,1.5) to
       (3.0,2.0) to (2.5,2.0) to (2.5,1.5);
       \draw[pattern=north east lines,draw] (2.5,1.0) to (3.0,1.0) to
       (3.0,1.5) to (2.5,1.5) to (2.5,1.0);
       \draw[pattern=crosshatch dots,draw] (2.5,0.5) to (3.0,0.5) to
       (3.0,1.0) to (2.5,1.0) to (2.5,0.5);
       \draw[pattern=crosshatch dots,draw] (2.5,0.0) to (3.0,0.0) to
       (3.0,0.5) to (2.5,0.5) to (2.5,0.0);

       \node[anchor=mid,inner sep=1pt] (s1) at (0.25,-0.7) {$s_1$};
       \node[anchor=mid,inner sep=1pt] (s2) at (0.75,-0.7) {$s_2$};
       \node[anchor=mid,inner sep=1pt] (s3) at (1.25,-0.7) {$s_3$};
       \node[anchor=mid,inner sep=1pt] (s4) at (1.75,-0.7) {$s_4$};
       \node[anchor=mid,inner sep=1pt] (s5) at (2.25,-0.7) {$s_5$};
       \node[anchor=mid,inner sep=1pt] (s6) at (2.75,-0.7) {$s_6$};

       \node[anchor=mid east] (text) at (-1.0,-0.7) {Value};
       \draw[ar] (text) to (s1);

       \draw (s1.south) to [in=-90,out=-90]
       node[sloped,draw,fill=white,circle,inner sep=0.5pt] {\tiny $\neq$}
       (s2.south);
       \draw (s3.south) to [in=-90,out=-90]
       node[sloped,draw,fill=white,circle,inner sep=0.5pt] {\tiny $\neq$}
       (s4.south);
       \draw (s5.south) to [in=-90,out=-90]
       node[sloped,draw,fill=white,circle,inner sep=0.5pt] {\tiny $\neq$}
       (s6.south);

       \draw[very thick,red] (3.0,1.5) to (2.0,1.5) to (2.0,1.0) to (3.0,1.0) to (3.0,1.5);
       \draw[very thick,red] (1.0,2.0) to (2.0,2.0) to (2.0,1.5) to (1.0,1.5) to (1.0,2.0);
       \draw[very thick,red] (1.0,0.5) to (0.0,0.5) to (0.0,0.0) to (1.0,0.0) to (1.0,0.5);
     \end{tikzpicture}
   \end{center}
   \caption{A tame \chain matrix of length $6$}
   \label{fig:tamemat}
 \end{figure}
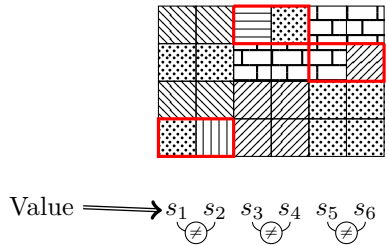

 \begin{lemma} \label{lem:tame}
   There exist tame \chain matrices of arbitrarily large length.
 \end{lemma}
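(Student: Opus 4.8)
The plan is to feed the hypothesis of this subsection---a supply of \chain matrices of arbitrarily large alternation whose local alternation is bounded by a fixed $K$---into a three-stage refinement that turns such a matrix, of alternation $N$, into a tame \chain matrix of length about $N/K$; letting $N\to\infty$ then proves the lemma. Two elementary operations on \chain matrices will be used repeatedly, and both stay inside the class: (i) \emph{deleting columns}, after which each row is a subword of its former self (still a \dchain by Fact~\ref{fct:high}) and the value is the corresponding restriction (still a \dchain, $\Cstwolen{n}[\alpha]$ being a monoid by Fact~\ref{fct:chaincomp}); and (ii) \emph{collapsing an entry to a neighbour}, i.e.\ replacing some $\mat_{i,c}$ by the value originally sitting in $\mat_{i,c-1}$ or $\mat_{i,c+1}$---a routine check with closure under stutter (Fact~\ref{fct:dupli}) and under subwords shows the row stays a \dchain, and by the triangle inequality for the discrete metric on $M$ such a collapse never increases the alternation of the row it touches.

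\emph{Stage 1 (producing condition~\ref{item:19}).} Start from a \chain matrix $\mat$ of alternation $N$ with $m$ rows, whose value $(s_1,\dots)$ has at least $N$ indices $c$ with $s_c\neq s_{c+1}$. Greedily select $\ell_1\geqslant N/2$ of these indices, pairwise at distance at least $2$, and keep only columns $c$ and $c{+}1$ for each selected $c$; as the selected pairs are column-disjoint this is legitimate. The result $\mat_1$ has even length $2\ell_1$, every value-pair $(s_{2j-1},s_{2j})$ differs, and (subwords again) every row still has alternation $\leqslant K$.

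\emph{Stage 2 (producing condition~\ref{item:20}).} This is the delicate step: for a pair $j$ the set $\alt{\mat_1}{j}$ is nonempty but may be large, and naively freezing the extra rows could kill the value change. To isolate a row that \emph{genuinely} causes the change, telescope over rows: for $0\leqslant i\leqslant m$ let $h_i$ be the product of the first $i$ entries of column $2j$ followed by the last $m-i$ entries of column $2j-1$, so that $h_0=s_{2j-1}\neq s_{2j}=h_m$, whence $h_{i(j)-1}\neq h_{i(j)}$ for some $i(j)$; in particular row $i(j)$ already differs in columns $2j-1,2j$. Now form $\mat_2$ from $\mat_1$ by, within each pair $j$, collapsing the column-$(2j{-}1)$ entry of every row above $i(j)$ to its column-$2j$ value and the column-$2j$ entry of every row below $i(j)$ to its column-$(2j{-}1)$ value, leaving row $i(j)$ alone (each column is rewritten by at most one pair and the referenced neighbour is always an original entry, so this is unambiguous). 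By operation (ii), $\mat_2$ is a \chain matrix; its value at pair $j$ is exactly $(h_{i(j)-1},h_{i(j)})$, so condition~\ref{item:19} persists; and columns $2j-1,2j$ of $\mat_2$ now agree outside row $i(j)$, giving $\alt{\mat_2}{j}=\{i(j)\}$.

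\emph{Stage 3 (producing condition~\ref{item:21}) and conclusion.} Each row of $\mat_2$ still has alternation $\leqslant K$, hence differs across at most $K$ of the disjoint column pairs; since $i(j)=i$ forces row $i$ to change at pair $j$, the map $j\mapsto i(j)$ is at most $K$-to-one. Pigeonhole then yields $\ell_3\geqslant\ell_1/K\geqslant N/(2K)$ pairs with pairwise distinct critical rows; deleting all other columns gives a \chain matrix $\mat_3$ of length $2\ell_3$ satisfying~\ref{item:19}, with $\alt{\mat_3}{j}=\{i(j)\}$ pairwise distinct singletons, hence tame. As $N$ was arbitrary, tame \chain matrices of arbitrarily large length exist. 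The only real obstacle is Stage 2: the telescoping choice of $i(j)$ is precisely what makes the subsequent collapses compatible with condition~\ref{item:19}; everything else is bookkeeping with the closure properties of \chains together with one pigeonhole count.
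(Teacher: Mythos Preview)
Your proof is correct and follows the same three-stage skeleton as the paper: extract disjoint alternating column pairs to enforce condition~\ref{item:19}, rewrite entries so that each pair has a unique alternating row (condition~\ref{item:20}), then use the bound $K$ on local alternation together with pigeonhole to enforce condition~\ref{item:21}. Stages~1 and~3 are essentially identical to the paper's.

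The one place you diverge is Stage~2. The paper shrinks $\alt{\mnat}{j}$ iteratively: pick some $i\in\alt{\mnat}{j}$, check whether freezing all other rows to their column-$(2j{-}1)$ value still yields a value change, and if so do it (making $\alt{\mnat}{j}=\{i\}$ at once), otherwise collapse row $i$ and repeat. Your telescoping argument instead identifies the critical row $i(j)$ in one shot via the ``hybrid'' products $h_0,\dots,h_m$, and then performs all collapses simultaneously. Both routes rest on the same closure properties (Facts~\ref{fct:high} and~\ref{fct:dupli}) and the same triangle-inequality observation that collapsing an entry to a neighbour cannot raise row alternation; your version is a bit slicker and avoids the case split, while the paper's version makes slightly more explicit why each individual rewrite preserves the \chain-matrix property. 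Neither buys anything the other does not.
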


 \begin{proof}
   Set $n \in \nat$, we explain how to construct a tame \chain matrix of
   length $2n$. By hypothesis, there exists a \chain matrix \mat with local
   alternation at most~$K$ and alternation greater than $2nK$. Set $m$ as the
   number of rows of \mat. We explain how to modify \mat to obtain a matrix
   satisfying \ref{item:19}, \ref{item:20} and \ref{item:21}. Recall that
   \dchains are closed under subwords, therefore removing columns from \mat
   yields a \chain matrix. Since \mat has alternation greater than $2nK$, it
   is simple to see that by removing columns one can obtain a \chain matrix of
   length $2nK$ that satisfies $a)$. We denote by \mnat this matrix. We now
   proceed in two steps: first, we modify the entries in \mnat to get a matrix
   \pat of length $2nK$ satisfying both $a)$ and $b)$. Then we use our bound
   on local alternation to remove columns and enforce $c)$ in the resulting
   matrix.

   \medskip
   \noindent
   {\bf Construction of \pat.} Let $j \leqslant nK$ such that \alt{\mnat}{j}
   is of size at least $2$. We modify the matrix to reduce
   the size of \alt{\mnat}{j} while preserving $a)$. One can then repeat
   the operation to get the desired matrix. Let $i \in \alt{\mnat}{j}$.
   Set $s_{1} = \mnat_{1,2j-1} \cdots \mnat_{i-1,2j-1}$ and $s_2 =
   \mnat_{i+1,2j-1} \cdots \mnat_{m,2j-1}$. We distinguish two~cases.

   First, if $s_1\mnat_{i,2j-1}s_2 \neq s_1\mnat_{i,2j}s_2$, then for all $i' \neq i$,
   we replace entry $\mnat_{i',2j}$ with entry $\mnat_{i',2j-1}$. One can verify
   that this yields a \chain matrix of length $2nK$, local alternation bounded by
   $K$. Moreover, it still satisfies $a)$, since $s_1\mnat_{i,2j-1}s_2 \neq
   s_1\mnat_{i,2j}s_2$. Finally, \alt{\mnat}{j} is now a singleton, namely $\{i\}$.

   In the second case, we have $s_1\mnat_{i,2j-1}s_2 = s_1\mnat_{i,2j}s_2$. In that
   case, we replace $\mnat_{i,2j-1}$ with $\mnat_{i,2j}$. One can
   verify that this yields a \chain matrix of length $2nK$, local
   alternation bounded by $K$. Moreover, it still satisfies $a)$ since we did not
   change the value of the matrix. Finally,
   the size of \alt{\mnat}{j} has decreased by~$1$.

   \medskip
   \noindent
   {\bf Construction of the tame matrix.} We now have a \chain matrix
   \pat of length $2nK$, with local alternation bounded by $K$ and
   satisfying both $a)$ and $b)$. Since $a)$ and $b)$ are satisfied, for
   all $j \leqslant nK$ there exists exactly one row $i$ such that
   $\mnat_{i,2j-1} \neq \mnat_{i,2j}$. Moreover, since each row has
   alternation at most $K$, a single row $i$ has this property for at
   most $K$ indices $j$. Therefore, it suffices to remove at most
   $n(K-1)$ pairs of odd-even columns to get a matrix that satisfies
   $c)$. Since the original matrix had length $2nK$, this leaves a matrix
   of length at least $2n$, as desired.
 \end{proof}

 \subsection{Monotonous \Chain Matrices}

 Let \mat be a tame \chain matrix of length $2n$ and let $x_1,\dots,x_n$
 be integers such that for all $j$, $\alt{\mat}{j} = \{x_j\}$. We say
 that \mat is a \emph{monotonous \chain matrix} if it has exactly $n$
 rows and $1 = x_{1} < x_2 < \cdots < x_{n} = n$ (in which case the
 matrix is said \emph{increasing}) or $n = x_{1} > x_2 > \cdots > x_{n} = 1$
 (in which case we say the matrix is \emph{decreasing}). We give a
 representation of the increasing case in Figure~\ref{fig:inctame}.

 \begin{figure}[h]
   \begin{center}
     \begin{tikzpicture}
       \draw[pattern=north west lines,draw] (0.0,1.5) to (0.5,1.5) to
       (0.5,2.0) to (0.0,2.0) to (0.0,1.5);
       \draw[pattern=bricks,draw] (0.0,1.0) to (0.5,1.0) to
       (0.5,1.5) to (0.0,1.5) to (0.0,1.0);
       \draw[pattern=north west lines,draw] (0.0,0.5) to (0.5,0.5) to
       (0.5,1.0) to (0.0,1.0) to (0.0,0.5);
       \draw[pattern=crosshatch dots,draw] (0.0,0.0) to (0.5,0.0) to
       (0.5,0.5) to (0.0,0.5) to (0.0,0.0);

       \draw[pattern=bricks,draw] (0.5,1.5) to (1.0,1.5) to
       (1.0,2.0) to (0.5,2.0) to (0.5,1.5);
       \draw[pattern=bricks,draw] (0.5,1.0) to (1.0,1.0) to
       (1.0,1.5) to (0.5,1.5) to (0.5,1.0);
       \draw[pattern=north west lines,draw] (0.5,0.5) to (1.0,0.5) to
       (1.0,1.0) to (0.5,1.0) to (0.5,0.5);
       \draw[pattern=crosshatch dots,draw] (0.5,0.0) to (1.0,0.0) to (1.0,0.5)
       to (0.5,0.5) to (0.5,0.0);

       \draw[pattern=north east lines,draw] (1.0,1.5) to (1.5,1.5) to
       (1.5,2.0) to (1.0,2.0) to (1.0,1.5);
       \draw[pattern=crosshatch dots,draw] (1.0,1.0) to (1.5,1.0) to
       (1.5,1.5) to (1.0,1.5) to (1.0,1.0);
       \draw[pattern=north east lines,draw] (1.0,0.5) to (1.5,0.5) to
       (1.5,1.0) to (1.0,1.0) to (1.0,0.5);
       \draw[pattern=north east lines,draw] (1.0,0.0) to (1.5,0.0) to
       (1.5,0.5) to (1.0,0.5) to (1.0,0.0);

       \draw[pattern=north east lines,draw] (1.5,1.5) to (2.0,1.5) to
       (2.0,2.0) to (1.5,2.0) to (1.5,1.5);
       \draw[pattern=north west lines,draw] (1.5,1.0) to (2.0,1.0) to
       (2.0,1.5) to (1.5,1.5) to (1.5,1.0);
       \draw[pattern=north east lines,draw] (1.5,0.5) to (2.0,0.5) to
       (2.0,1.0) to (1.5,1.0) to (1.5,0.5);
       \draw[pattern=north east lines,draw] (1.5,0.0) to (2.0,0.0) to
       (2.0,0.5) to (1.5,0.5) to (1.5,0.0);

       \draw[pattern=bricks,draw] (2.0,1.5) to (2.5,1.5) to
       (2.5,2.0) to (2.0,2.0) to (2.0,1.5);
       \draw[pattern=horizontal lines,draw] (2.0,1.0) to (2.5,1.0) to
       (2.5,1.5) to (2.0,1.5) to (2.0,1.0);
       \draw[pattern=crosshatch dots,draw] (2.0,0.5) to (2.5,0.5) to
       (2.5,1.0) to (2.0,1.0) to (2.0,0.5);
       \draw[pattern=crosshatch dots,draw] (2.0,0.0) to (2.5,0.0) to
       (2.5,0.5) to (2.0,0.5) to (2.0,0.0);

       \draw[pattern=bricks,draw] (2.5,1.5) to (3.0,1.5) to
       (3.0,2.0) to (2.5,2.0) to (2.5,1.5);
       \draw[pattern=horizontal lines,draw] (2.5,1.0) to (3.0,1.0) to
       (3.0,1.5) to (2.5,1.5) to (2.5,1.0);
       \draw[pattern=bricks,draw] (2.5,0.5) to (3.0,0.5) to
       (3.0,1.0) to (2.5,1.0) to (2.5,0.5);
       \draw[pattern=crosshatch dots,draw] (2.5,0.0) to (3.0,0.0) to
       (3.0,0.5) to (2.5,0.5) to (2.5,0.0);

       \draw[pattern=crosshatch dots,draw] (3.0,1.5) to (3.5,1.5) to
       (3.5,2.0) to (3.0,2.0) to (3.0,1.5);
       \draw[pattern=north east lines,draw] (3.0,1.0) to (3.5,1.0) to
       (3.5,1.5) to (3.0,1.5) to (3.0,1.0);
       \draw[pattern=grid,draw] (3.0,0.5) to (3.5,0.5) to
       (3.5,1.0) to (3.0,1.0) to (3.0,0.5);
       \draw[pattern=north east lines,draw] (3.0,0.0) to (3.5,0.0) to
       (3.5,0.5) to (3.0,0.5) to (3.0,0.0);

       \draw[pattern=crosshatch dots,draw] (3.5,1.5) to (4.0,1.5) to
       (4.0,2.0) to (3.5,2.0) to (3.5,1.5);
       \draw[pattern=north east lines,draw] (3.5,1.0) to (4.0,1.0) to
       (4.0,1.5) to (3.5,1.5) to (3.5,1.0);
       \draw[pattern=grid,draw] (3.5,0.5) to (4.0,0.5) to
       (4.0,1.0) to (3.5,1.0) to (3.5,0.5);
       \draw[pattern=grid,draw] (3.5,0.0) to (4.0,0.0) to
       (4.0,0.5) to (3.5,0.5) to (3.5,0.0);

       \draw[very thick,red] (0.0,2.0) to (1.0,2.0) to
       (1.0,1.5) to (0.0,1.5) to (0.0,2.0);

       \draw[very thick,red] (1.0,1.5) to (2.0,1.5) to
       (2.0,1.0) to (1.0,1.0) to (1.0,1.5);

       \draw[very thick,red] (2.0,1.0) to (3.0,1.0) to
       (3.0,0.5) to (2.0,0.5) to (2.0,1.0);

       \draw[very thick,red] (3.0,0.5) to (4.0,0.5) to
       (4.0,0.0) to (3.0,0.0) to (3.0,0.5);

       \node[anchor=mid,inner sep=1pt] (s1) at (0.25,-0.7) {$s_1$};
       \node[anchor=mid,inner sep=1pt] (s2) at (0.75,-0.7) {$s_2$};
       \node[anchor=mid,inner sep=1pt] (s3) at (1.25,-0.7) {$s_3$};
       \node[anchor=mid,inner sep=1pt] (s4) at (1.75,-0.7) {$s_4$};
       \node[anchor=mid,inner sep=1pt] (s5) at (2.25,-0.7) {$s_5$};
       \node[anchor=mid,inner sep=1pt] (s6) at (2.75,-0.7) {$s_6$};
       \node[anchor=mid,inner sep=1pt] (s7) at (3.25,-0.7) {$s_7$};
       \node[anchor=mid,inner sep=1pt] (s8) at (3.75,-0.7) {$s_8$};

       \node[anchor=mid east] (text) at (-1.0,-0.7) {Value};
       \draw[ar] (text) to (s1);

       \draw (s1.south) to [in=-90,out=-90]
       node[sloped,draw,fill=white,circle,inner sep=0.5pt] {\tiny $\neq$}
       (s2.south);
       \draw (s3.south) to [in=-90,out=-90]
       node[sloped,draw,fill=white,circle,inner sep=0.5pt] {\tiny $\neq$}
       (s4.south);
       \draw (s5.south) to [in=-90,out=-90]
       node[sloped,draw,fill=white,circle,inner sep=0.5pt] {\tiny $\neq$}
       (s6.south);
       \draw (s7.south) to [in=-90,out=-90]
       node[sloped,draw,fill=white,circle,inner sep=0.5pt] {\tiny $\neq$}
       (s8.south);
     \end{tikzpicture}
   \end{center}
   \caption{A monotonous \chain matrix (increasing)}
   \label{fig:inctame}
 \end{figure}
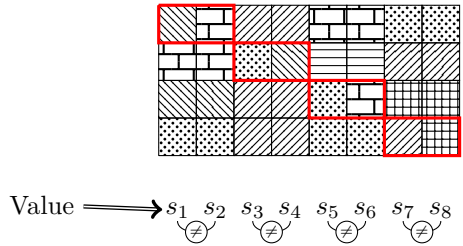

 \begin{lemma} \label{lem:mono}
   There exist monotonous \chain matrices of arbitrarily large length.
 \end{lemma}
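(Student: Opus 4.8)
The plan is to derive a monotonous matrix from a tame one in two moves: first thin out the columns so that the pattern of alternating rows becomes \emph{monotone}, then collapse the rows so that the $j$-th column pair alternates in exactly the $j$-th row.

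First I would fix a target length $2R$ and, using Lemma~\ref{lem:tame}, pick a tame \chain matrix \mat of length $2N$ with $N > (R-1)^2$, say with $m$ rows. Tameness yields pairwise distinct integers $x_1,\dots,x_N \in \{1,\dots,m\}$ with $\alt{\mat}{j} = \{x_j\}$. By the Erd\H{o}s--Szekeres theorem, the sequence $x_1,\dots,x_N$ contains a monotone subsequence $x_{j_1},\dots,x_{j_R}$ with $j_1 < \dots < j_R$; I would treat the increasing case, the decreasing one being symmetric. Now delete from \mat every column pair except those indexed $j_1,\dots,j_R$. This is legitimate because \dchains are closed under subwords (Fact~\ref{fct:high}), and because deleting columns restricts the value componentwise: property $a)$ survives, and the surviving alternating-row sets $\alt{}{l}$ are exactly the singletons $\{x_{j_l}\}$, which are still pairwise distinct, so properties $b)$ and $c)$ survive too. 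This gives a tame \chain matrix \mnat of length $2R$ with $\alt{\mnat}{l} = \{y_l\}$ where $y_l := x_{j_l}$ and $y_1 < \dots < y_R$.

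Next I would collapse the rows of \mnat. Partition the row indices $\{1,\dots,m\}$ into the $R$ consecutive blocks $(y_{l-1},y_l]$ for $l = 1,\dots,R$ (with the convention $y_0 := 0$), except that the last block is enlarged to $(y_{R-1},m]$ so as to swallow the leftover indices $(y_R,m]$. For each $l$, let the $l$-th row of a new matrix \pat be the componentwise product of the rows of \mnat lying in the $l$-th block; this is again a \dchain by Fact~\ref{fct:chaincomp}, so \pat is a \chain matrix, and it has exactly $R$ rows. Since the value of a \chain matrix is computed column by column, grouping rows does not change it, so the value of \pat equals that of \mnat; in particular \pat still satisfies property $a)$, i.e.\ consecutive odd--even entries of its value differ. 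Finally I would check monotonicity: fix a column pair $l$; for $l' \neq l$ the block $(y_{l'-1},y_{l'}]$ does not contain $y_l$ (strict monotonicity of the $y$'s), so by tameness of \mnat every factor row of the $l'$-th row of \pat agrees on columns $2l-1$ and $2l$, hence so does that row, giving $\alt{\pat}{l} \subseteq \{l\}$; since by property $a)$ the value of \pat differs on these two columns while all rows except possibly the $l$-th agree there, the $l$-th row must differ, so $\alt{\pat}{l} = \{l\}$. Thus \pat is tame with $\alt{\pat}{l} = \{l\}$ for all $l$, has exactly $R$ rows, and satisfies $1 = x_1^{\pat} < \dots < x_R^{\pat} = R$: it is a monotonous increasing \chain matrix of length $2R$. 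In the decreasing subsequence case one groups the blocks in reverse order, so that the alternating-row pattern of \pat comes out $R > R-1 > \dots > 1$, yielding a decreasing monotonous matrix.

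The only non-routine point is the last step: in a non-commutative monoid one cannot infer $ab \neq a'b$ from $a \neq a'$, so it is not enough to observe that the collapsed row contains one differing factor; the conclusion that it genuinely alternates has to be routed through the preserved value (property $a)$) together with the fact that all other collapsed rows agree on the relevant pair of columns. Everything else is bookkeeping about deletions of columns and rows and the monoid structure of $\Cstwolen{n}[\alpha]$.
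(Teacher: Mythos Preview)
Your proof is correct and follows essentially the same route as the paper's: start from a long tame matrix, apply Erd\H{o}s--Szekeres to the sequence of alternating-row indices to extract a monotone subsequence, keep only the corresponding column pairs, and then merge rows down to the required number. The paper picks a tame matrix of length $2n^2$ and is terser about the row-merging step (it simply says to merge the rows not in $\{x'_1,\dots,x'_n\}$ into adjacent ones); your block partition $(y_{l-1},y_l]$ is a cleaner way to organize the same idea, and your explicit remark that the inequality $\pat_{l,2l-1}\neq\pat_{l,2l}$ must be deduced from the preserved value rather than from the single differing factor is a point the paper leaves implicit.
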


 \begin{proof}
   Set $n \in \nat$, we explain how to construct a monotonous \chain matrix of
   length~$2n$. By Lemma~\ref{lem:tame}, there exists a tame \chain matrix
   \mat of length $2n^2$. Set $x_1,\dots,x_{n^2}$ the indices such that
   for all $j$, $\alt{\mat}{j} = \{x_j\}$. Note that by tameness, $x_j
   \neq x_{j'}$ for $j \neq j'$. Since the sequence $x_1,\dots,x_{n^2}$
   is of length $n^2$, we can extract, using Erdös-Szekeres theorem, a
   monotonous sequence of length $n$, $x_{j_1} < \cdots < x_{j_n}$ or
   $x_{j_1} > \cdots > x_{j_n}$ with $j_1 < \cdots < j_n$.  By symmetry
   we assume it is the former and construct an increasing \chain matrix
   of length $n$.

   Let \pat be the matrix of length $2n$ obtained from \mat, by keeping
   only the pairs of columns $2j-1,2j$ for $j \in \{j_1,\dots,j_n\}$. Set
   $x'_1,\dots,x'_{n}$ the indices such that for all $j$, $\alt{\pat}{j}
   = \{x'_j\}$. By definition, $x'_{1} < \cdots < x'_{n}$. We now want
   $\pat$ to have exactly $n$ rows. Note that the rows  whose indices do not belong to
   $\{x'_{1} , \cdots , x'_{n}\}$ are constant \chains. We simply merge these rows
   with others. For example, if row $i$ is labeled with the
   constant \chain $(s,\dots,s)$, let $(s_1,\dots,s_{2n})$ be the label of
   row $i+1$. We remove row $i$ and replace row $i+1$ by the \dchain
   $(ss_1,\dots,ss_{2n})$. Repeating the operation yields the desired increasing
   monotonous \chain~matrix.
 \end{proof}

 \subsection{Construction of the Contradiction Matrix}

 We can now use Lemma~\ref{lem:mono} to construct a contradiction
 matrix and end the proof of Proposition~\ref{prop:width}. We state
 this in the following proposition.

 \begin{proposition} \label{prop:contmat}
   There exists a contradiction matrix.
 \end{proposition}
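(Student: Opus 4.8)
The plan is to start from a monotonous \chain matrix long enough and, using nothing but the closure of \dchains under subwords and products, whittle it down to a matrix of the shape required of a contradiction matrix.

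\emph{Setup.} By Lemma~\ref{lem:mono} I would fix a monotonous \chain matrix $\mat$ of length $2N$ with $N$ as large as the counting arguments below demand; by the evident left--right symmetry, which matches the two types of contradiction matrices, I may assume $\mat$ is increasing. Then $\mat$ has exactly $N$ rows, and the indices $x_j$ with $\alt{\mat}{j}=\{x_j\}$ form a strictly increasing sequence from $1$ to $N$, forcing $x_j=j$; so row $j$ is the unique row in which columns $2j-1$ and $2j$ differ, say at $a_j=\mat_{j,2j-1}\ne\mat_{j,2j}=b_j$. Two operations keep a \chain matrix a \chain matrix: deleting columns (Fact~\ref{fct:high}) and replacing a block of consecutive rows by its componentwise product (Fact~\ref{fct:chaincomp}); moreover the latter leaves the value of the matrix unchanged, while the former restricts it to the kept columns. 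These are the only operations I would use.

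\emph{Reduction to the $3\times 6$ skeleton.} If I keep only three column-pairs $j_1<j_2<j_3$, then every row other than $j_1,j_2,j_3$ becomes a constant \chain (its jump lay in a deleted column-pair), so only three rows still jump. I then partition the $N$ rows into three consecutive blocks $B_1<B_2<B_3$ with $j_k\in B_k$ and merge each block into a single super-row; since $B_k$ contains no row that jumps at a column-pair $j_{k'}$ with $k'\ne k$, the off-diagonal blocks of the resulting $3\times 6$ \chain matrix consist of two equal columns, so the matrix already exhibits the exact jump pattern of an increasing contradiction matrix, and its value is $(s_{2j_1-1},s_{2j_1},s_{2j_2-1},s_{2j_2},s_{2j_3-1},s_{2j_3})$, where $(s_1,\dots,s_{2N})$ is the value of $\mat$. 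What remains is to choose $j_1,j_2,j_3$ and the block boundaries so that the three upper-right off-diagonal entries are all one idempotent $f$ and the three lower-left ones all one idempotent $e$; these six entries are explicit partial column-products of $\mat$ over row-intervals.

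\emph{The crux, and the conclusion.} Forcing this collapse is the heart of the argument, and is where I would run a Ramsey/pigeonhole computation over the huge parameter $N$ --- essentially the combinatorics reused from~\cite{bpopen}. Coloring pairs of row-thresholds by the tuple of partial column-products over a suitably thinned family of columns yields a long set of thresholds between any two of which all partial column-products form one fixed idempotent tuple; interleaving such thresholds with a matching triple $j_1<j_2<j_3$ of column-pairs from that family then turns the six off-diagonal blocks into the desired idempotents. The delicate point --- and the main obstacle --- is to make all six triangular products stabilize \emph{simultaneously} to a coherent pair $(e,f)$ while still keeping a genuine jump in the middle column-pair, which dictates the order in which the deletions and merges are performed. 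Once this is done, the middle block $B_2$ provides $u_2$ (its product at column $2j_2-1$) and $v_2$ (its product at column $2j_2$), and $fu_2e=s_{2j_2-1}\ne s_{2j_2}=fv_2e$ by the tameness property~\ref{item:19} of $\mat$. Hence the matrix obtained is a contradiction matrix (the decreasing case being the mirror image), which together with Lemma~\ref{lem:contradmat} completes Proposition~\ref{prop:contradend}, and with it Proposition~\ref{prop:width}.
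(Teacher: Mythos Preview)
Your skeleton matches the paper's: start from a long monotonous matrix, keep three column-pairs, merge rows into three blocks, and read off $fu_2e\ne fv_2e$ from tameness of the middle column-pair. That much is correct, including your observation that merging \emph{all} rows (rather than deleting any) keeps the value and hence the inequality.

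The gap is exactly where you flag the obstacle, and your sketch of the Ramsey step points the wrong way. ``Coloring pairs of row-thresholds by the tuple of partial column-products over a thinned family of columns'' does not work: the product of a row-interval at a given column depends on \emph{three} indices (the two interval endpoints and the column), and in a monotonous matrix these products genuinely vary with the column. A pair-coloring forces you to fix the column family in advance, making the color space grow with the number of retained columns --- which defeats Ramsey --- and even if it went through, a monochromatic pair-set gives you one idempotent $f_c$ \emph{per column} $c$, with nothing forcing $f_{j_2}=f_{j_3}$.

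The paper's fix is Ramsey on \emph{$3$-uniform hypergraphs}, applied twice. One colors the triple $\{i_1<i_2<i_3\}\subseteq\{0,\dots,N\}$ by the single element $\mat_{i_1+1,2i_3-1}\cdots\mat_{i_2,2i_3-1}\in M$: the first two indices delimit the row block, the third names the column. On a monochromatic set $K$ the common color $f$ is automatically idempotent, since the color of $\{i_1,i_3,i_4\}$ is the product of the colors of $\{i_1,i_2,i_4\}$ and $\{i_2,i_3,i_4\}$. Merging rows along consecutive $K$-intervals and keeping only the column-pairs indexed by $K$ then makes \emph{every} above-diagonal entry equal to this single $f$; a second identical pass handles the below-diagonal entries and produces $e$. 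Taking $N=\varphi(\varphi(4)+1)$, where $\varphi$ is the $3$-hypergraph Ramsey number over $|M|$ colors, leaves the desired $3\times 6$ contradiction matrix after both passes. This is the missing idea; the rest of your outline then goes through.
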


 The remainder of this section is devoted to the proof of
 Proposition~\ref{prop:contmat}. The result follows from a Ramsey
 argument. We use Lemma~\ref{lem:mono} to choose a monotonous matrix of
 sufficiently large length. Then, we use Ramsey's Theorem (for
 hypergraphs with edges of size $3$) to extract the desired
 contradiction matrix.

 We first define the length of the monotonous \chain matrix that we need to
 pick. By Ramsey's Theorem, for every $m \in \nat$ there exists a number
 $\varphi(m)$ such that for any complete 3-hypergraph with hyperedges colored
 over the monoid~$M$, there exists a complete sub-hypergraph of size $m$ in
 which all edges share the same color. We choose $n = \varphi(\varphi(4)+1)$. By Lemma~\ref{lem:mono}, there exists a monotonous
 \chain matrix \mat of length $2n$. Since it is monotonous, \mat has $n$ rows.

 By symmetry, we assume that \mat is increasing and use it to construct an
 increasing contradiction matrix.  We use our choice of $n$ to extract a
 contradiction matrix from \mat. We proceed in two steps using Ramsey's Theorem
 each time. In the first step we treat all entries above the diagonal in \mat
 and in the second step all entries below the diagonal. We state the first step
 in the next lemma.

 \begin{lemma} \label{lem:matlemma}
   There exists an increasing monotonous matrix \mnat of length $2 \cdot
   \varphi(4)$ such that all cells above the diagonal contain the same
   idempotent $f \in M$.
 \end{lemma}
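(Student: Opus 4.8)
The plan is to obtain \mnat from the monotonous matrix supplied by Lemma~\ref{lem:mono} by a Ramsey argument on its rows. Recall that the matrix \mat produced by Lemma~\ref{lem:mono} is increasing monotonous of length $2n$ with $n=\varphi(\varphi(4)+1)$ rows, so its alternating structure is exactly the diagonal: for each column-pair $j$, the unique row on which columns $2j-1$ and $2j$ differ is row $j$, while every other row $i\neq j$ is constant on that pair; for $i<j$ write $a_{i,j}\in M$ for this common value (an \emph{entry above the diagonal}). First I would colour each $3$-element subset $\{a<b<d\}$ of $\{1,\dots,n\}$ by the \emph{block product} $a_{a,d}\cdot a_{a+1,d}\cdots a_{b-1,d}$, that is, the product of the above-diagonal entries lying in column-pair $d$ and in the row-block $[a,b)$; I would in fact colour by a bounded power of $M$, recording in addition the \emph{prefix product} $a_{1,d}\cdots a_{a-1,d}$ so as to have control over the extremal blocks. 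Since the colour set is a finite monoid, Ramsey's Theorem and the choice of $\varphi$ produce a set $I\subseteq\{1,\dots,n\}$ of size $\varphi(4)+1$ all of whose triples receive the same colour $f$.

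Next I would check that $f$ is idempotent: for $a<b<c<d$ in $I$ the row-block $[a,c)$ splits as $[a,b)\cup[b,c)$, so
\[
  \prod_{a\le \ell<c}a_{\ell,d}=\Bigl(\prod_{a\le\ell<b}a_{\ell,d}\Bigr)\Bigl(\prod_{b\le\ell<c}a_{\ell,d}\Bigr),
\]
i.e.\ $f=f\cdot f$; this uses only $|I|\ge 4$, which holds since $\varphi(4)\ge 4$. Then I would build \mnat by keeping only the $\varphi(4)$ column-pairs of \mat indexed by elements of $I$ (this preserves being a chain matrix, since $\Cstwolen{\bullet}[\alpha]$ is closed under subwords by Fact~\ref{fct:high}) and by merging the now-inactive rows into $\varphi(4)$ contiguous blocks, delimited by the indices of $I$, one block per retained column-pair and containing that pair's original alternating row. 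Each row of \mnat is then a product of rows of \mat, hence still lies in $\Cstwolen{2\varphi(4)}[\alpha]$, which is a monoid by Fact~\ref{fct:chaincomp}; merging rows does not change the value of the matrix, so the value of \mnat is that of the column-restriction of \mat, property (a) of tameness is inherited, and this forces the block containing an original alternating row to be the unique alternating row of \mnat for the corresponding column-pair, in the same increasing order. Thus \mnat is again increasing monotonous of length $2\varphi(4)$. Finally, every entry of \mnat above the diagonal is a block product of above-diagonal entries of \mat over a block delimited by two elements of $I$ (for the extremal blocks, such a block product times a prefix product); by monochromaticity each such product equals $f$, or a product of copies of $f$, hence $f$ in every case because $f$ is idempotent.

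The main obstacle, and the technical heart of the argument, is precisely this last piece of bookkeeping around the extremal row-blocks: the topmost block necessarily starts at row $1$, which need not belong to $I$, so a priori the entry of \mnat in its first row is the $f$-block product multiplied by an uncontrolled prefix, and one must argue that this extra factor collapses. I expect to handle it by (i) recording the prefix products in the Ramsey colour as above, so that they are pinned down to a single value on $I$ and the relations $g\cdot f=g$ between block and prefix products become available; (ii) using the ``$+1$'' in $n=\varphi(\varphi(4)+1)$ to retain one element of $I$ purely as a witness for these prefix colours (which is why \mnat has length $2\varphi(4)$ and not $2(\varphi(4)+1)$); and (iii) when, in the subsequent steps, \mnat is cut down to the three rows and columns of the contradiction matrix, choosing the three retained column-pairs pairwise non-adjacent in \mnat, so that even the cells just above the diagonal of the final $3\times 6$ matrix become products of strictly-above entries of \mnat, hence $f$. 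The symmetric difficulty at the bottom block is treated in the same way, but is in fact deferred to the next (``below the diagonal'') Ramsey step.
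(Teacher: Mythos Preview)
Your core Ramsey argument is exactly the paper's: colour triples by block products in a distant column, extract a monochromatic set, deduce idempotency from four elements, then merge rows and restrict columns. The difference is entirely in how the extremal row-blocks are handled, and here the paper takes a much simpler route than your prefix bookkeeping. The paper places the $3$-hypergraph on vertices $\{0,\ldots,n\}$ (note the extra vertex $0$) and colours $\{i_1<i_2<i_3\}$ by the product of rows $i_1+1,\ldots,i_2$ in column pair $i_3$, so blocks are half-open on the left, $(i_1,i_2]$. After Ramsey yields $K=\{k_1<\cdots<k_{\varphi(4)+1}\}$, the paper simply \emph{discards} rows $1,\ldots,k_1$ and rows $k_{\varphi(4)+1}+1,\ldots,n$; the new row $i$ is the merge of old rows $k_i+1,\ldots,k_{i+1}$, and the new column pair $j$ is the old column pair $k_{j+1}$. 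Then for $i<j$ the above-diagonal entry $(i,j)$ is literally the colour of the triple $\{k_i,k_{i+1},k_{j+1}\}$, hence $f$ on the nose, with no prefix factor at all. The ``$+1$'' you correctly flag is spent on $k_1$, which serves only as the left delimiter of the first block and does not index any row or column of~\mnat.

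So the obstacle you identify as ``the technical heart'' disappears once you allow yourself to throw rows away. Your plan---enrich the colour with prefix products, derive $g\cdot f=g$, and defer the residual mismatch to a later non-adjacent selection---might be made to work, but it is considerably heavier, and as written it does not close: the relation $g\cdot f=g$ does not yield $g=f$, so your first-row above-diagonal entries are genuinely $g$, not $f$, and the lemma as stated would fail for your \mnat. The paper sidesteps this entirely by cutting rather than padding.
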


 \begin{proof}
   This is proved by applying Ramsey's Theorem to \mat. Consider the complete
   3-hypergraph whose nodes are $\{0,\ldots,n\}$. We label the hyperedge
   $\{i_1,i_2,i_3\}$ where $i_1 < i_2 < i_3$ by the value obtained by
   multiplying in the monoid $M$, the cells that appear in rows
   $i_1+1,\ldots,i_2$ in column $2i_3-1$. Observe that since
   $i_1 < i_2 < i_3$, by monotonicity, these entries are the same as in column
   $2i_3$. More formally, the label of the hyperedge $\{i_1,i_2,i_3\}$ with
   $i_1<i_2<i_3$ is therefore
   \[
     \mat_{i_1+1,2i_3-1} \cdots \mat_{i_2,2i_3-1} = \mat_{i_1+1,2i_3} \cdots \mat_{i_2,2i_3}.
   \]
   By choice of $n$, we can apply Ramsey's Theorem to this coloring. We get a
   subset of $\varphi(4)+1$ vertices, say
   $K = \{k_1,\ldots,k_{\varphi(4)+1}\} \subseteq \{0,\ldots,n\}$, such that
   all hyperedges connecting nodes in $K$ have the same color, say $f \in M$.
   For $i_1<i_2<i_3<i_4$ in~$K$, note that the color of the hyperedge
   $\{i_1,i_3,i_4\}$ is by definition the product of the colors of the
   hyperedges $\{i_1,i_2,i_4\}$ and $\{i_2,i_3,i_4\}$. Therefore, the common
   color $f$ needs to be an idempotent: $f = ff$. We now extract the desired
   matrix \mnat from \mat according to the subset $K$. The main idea is that
   the new row $i$ in \mnat will be the merging of rows $k_{i}+1$ to $k_{i+1}$
   in \mat and the new pair of columns $2j-1,2j$ will correspond to the pair
   $2k_{j+1}-1,2k_{j+1}$ in \mat.

   \smallskip We first merge rows. For all $i \geqslant 1$, we ``merge'' all
   rows from $k_{i}+1$ to $k_{i+1}$ into a single row. More precisely, this
   means that we replace the rows $k_{i}+1$ to $k_{i+1}$ by a single row
   containing the \dchain
   \[
     (\mat_{k_{i}+1,1} \cdots \mat_{k_{i+1},1},\ldots,\mat_{k_{i}+1,2n} \cdots \mat_{k_{i+1},2n})
   \]

   Moreover, we remove the top and bottom rows, \emph{i.e.}, rows $1$ to $k_1$ and
   rows $k_{\varphi(4)+1}+1$ to $\varphi(4)+1$. Then we remove all columns from $1$ to
   $2k_2 -2$, all columns from $2k_{\varphi(4)+1} + 1$ to $2n$, and for all $i
   \geqslant 2$, all columns from $2k_i+1$ to $2k_{i+1}-2$. One can verify
   that these two operations applied together preserve
   monotonicity. Observe that the resulting matrix \mnat has exactly
   $2\cdot \varphi(4)$ columns. Moreover, the cell $i,2j$ in the new
   matrix contains entry $\mat_{k_{i}+1,2k_{j+1}} \cdots
   \mat_{k_{i+1},2k_{j+1}}$. In particular if $j > i$, by definition of
   the set $K$, this entry is $f$, which means \mnat satisfies the
   conditions of the lemma.
 \end{proof}

 It remains to apply Ramsey's Theorem a second time to the matrix \mnat
 obtained from Lemma~\ref{lem:matlemma} to treat the cells below the
 diagonal and get the contradiction matrix. We state this in the
 following last lemma.

 \begin{lemma}
   There exists an increasing monotonous matrix \pat of length $6$ such
   that all cells above the diagonal contain the same idempotent $f \in
   M$ and all cells below the diagonal contain the same idempotent $e \in
   M$ (i.e. \pat is an increasing contradiction matrix).
 \end{lemma}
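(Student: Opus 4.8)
The plan is to mimic the proof of Lemma~\ref{lem:matlemma}, this time applying Ramsey's Theorem to the \emph{rows} of \mnat in order to control the cells lying \emph{below} the diagonal, while keeping the $f$-structure above the diagonal untouched. Recall that \mnat is an increasing monotonous matrix of length $2\cdot\varphi(4)$, hence has $\varphi(4)$ rows, that its unique alternating row at column-pair $j$ is row $j$, and that every cell strictly above the diagonal equals the idempotent $f$. First I would build the complete $3$-hypergraph on the set $\{0,1,\dots,\varphi(4)\}$ of row boundaries of \mnat, colouring a hyperedge $\{p,q,t\}$ with $p<q<t$ by the product in $M$ of the cells of \mnat lying in rows $q+1,\dots,t$ of column $2p$. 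By monotonicity of \mnat, none of the rows $q+1,\dots,t$ (all of index $>p$) is alternating at column-pair $p$, so this product is unchanged if column $2p-1$ is used instead. Since $\varphi(4)$ was chosen so that any $M$-colouring of a $3$-hypergraph on that many vertices has a monochromatic clique of size $4$, Ramsey's Theorem yields four boundary vertices and a common colour $e\in M$; writing the colour of the $4$-clique hyperedge as the product of the colours of two of its $3$-sub-hyperedges, exactly as done for $f$ in the proof of Lemma~\ref{lem:matlemma}, forces $e=ee$.

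Next I would extract \pat from \mnat by the recipe of Lemma~\ref{lem:matlemma}, ``mirrored'': the three rows of \pat are obtained by merging the row-blocks of \mnat delimited by the selected boundary vertices, three column-pairs of \mnat are kept and the rest deleted (this is legitimate since \dchains are closed under subwords by Fact~\ref{fct:high} and since $\Cstwolen{6}[\alpha]$ is a monoid by Fact~\ref{fct:chaincomp}), and the surplus top and bottom rows of \mnat are dropped. The blocks and the kept columns are aligned so that: (i) every cell of \pat above the diagonal is a product of copies of $f$, hence equals $f$ because $f=ff$; (ii) every cell of \pat below the diagonal is the colour of one of the monochromatic hyperedges, hence equals $e$; (iii) the diagonal blocks of \pat inherit the values $(u_i,v_i)$ carried by the diagonal cells of \mnat, possibly premultiplied by $f$ on the left and postmultiplied by $e$ on the right. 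Finally, the dropped rows are constant \dchains at the kept columns, so a product over them contributes the same factors to the odd and to the even component of the value of \pat; hence, if the values of \pat in the odd and the even column of some kept column-pair coincided, so would those of \mnat, contradicting property $a)$ of the tame matrix \mnat. Applied to the middle column-pair, this gives $fu_2e\ne fv_2e$, so \pat is indeed an increasing contradiction matrix.

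The main obstacle is the bookkeeping hidden in step (iii): the row-blocks and the kept column-pairs of \pat must be chosen so that each block-top falls exactly on one of the selected alternating rows of \mnat — this is needed both to keep \pat monotonous and to ensure that every below-diagonal cell of \pat is the colour of a genuine three-element hyperedge (rather than a degenerate ``touching'' hyperedge) — while simultaneously the region of $f$'s above the diagonal is not disturbed. This is the precise mirror image of the already delicate extraction performed in the proof of Lemma~\ref{lem:matlemma}, and I would carry it out in the same style. Once \pat is produced, Lemma~\ref{lem:contradmat} completes the proof of Proposition~\ref{prop:contmat}, and therefore of Propositions~\ref{prop:contradend} and~\ref{prop:width}.
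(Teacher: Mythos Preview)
Your proposal is correct and follows exactly the approach the paper intends: the paper's own proof is a single sentence stating that the argument is ``identical to the one of Lemma~\ref{lem:matlemma}'', applied to \mnat for the cells below the diagonal, and you have spelled out precisely this mirrored Ramsey argument together with the verification that the extraction preserves the $f$-structure above the diagonal and the value inequality. The only cosmetic point is the boundary case $p=0$ in your colouring (there is no column-pair $0$), which is handled by the same index shift already present in Lemma~\ref{lem:matlemma}.
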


 \begin{proof}
   The argument is identical to the one of Lemma~\ref{lem:matlemma}. This
   time we apply it to the matrix \mnat of length $2 \cdot \varphi(4)$ for the
   cells below the diagonal. The monochromatic set given by Ramsey's theorem
   is this time of size $4$, which, with the above construction, will leave a
   matrix with 3 rows and 6 columns.
 \end{proof}

\section{Adding Successor: The Enriched Hierarchy}
\label{sec:transfer}
All decidability results we have proved so far are for fragments of
the order hierarchy. In this section we transfer these results to
the enriched hierarchy. More precisely, we present algorithms for the
following problems:
\begin{itemize}
\item the separation problem for \siwsd and \piwsd.
\item the membership problem for \siwst.
\item the membership problem for \bswsd.
\end{itemize}
For each problem, we actually present a reduction to same problem for the
corresponding fragment in the order hierarchy, and decidability then follows
from the results of the previous sections. The transfer results are not new
and were initially presented in~\cite{StrauVD,pinweilVD} for the membership
problem and in~\cite{MR1709911,Steinberg:delay-pointlikes:2001} and
\cite{pzsucc} for the separation problem (unlike the former, this latter
work also cope with classes not closed under complement and can therefore
be applied to \siwi). In this section, we only state the reductions and refer the
reader to these papers for proofs.

Note that the reductions we use are all taken from~\cite{pzsucc}. In
particular, for membership, while the underlying ideas remain similar
to that of~\cite{StrauVD,pinweilVD}, the reduction itself is fairly
different from the original one.

We divide the section in two parts. In the first part, we define the
main tool used in the reductions: \emph{the morphism of well-formed
  words}. In the second part, we present the reductions themselves.

\subsection{Morphism of Well-Formed Words}

Fix a morphism $\alpha: A^* \rightarrow M$ into a finite monoid
$M$. We define $E \subseteq M$ as the set of idempotents of
$\alpha(A^+)$, \emph{i.e.}, $E$ is the set of idempotents of $M$ that are
images of a nonempty word. We define a new alphabet \wfA, called
\emph{alphabet of well-formed words of $\alpha$}, as follows:
\[
  \wfA = \begin{array}{cl} &M\\ \cup &M \times E\\ \cup& E \times M\\ \cup &E \times M \times E\end{array}
\]
We will not be interested in all words in $\wfA^*$, but only in those
that are well-formed. A word $\wbb \in \wfA^*$ is said to be
\emph{well-formed} if one of the two following properties
hold:
\begin{itemize}
\item $\wbb = \varepsilon$ or is a single-letter word $s \in M$.
\item $\wbb = (s_1,f_1)(e_2,s_2,f_2)(e_3,s_3,f_3) \cdots (e_n,s_n) \in
  (S \times E) \cdot (E \times S \times E)^* \cdot (E \times S)$ and for all $1 \leqslant i \leqslant n-1$, we have
  $f_i=e_{i+1}$.
\end{itemize}

The following fact is immediate.

\begin{fact} \label{fct:reg}
  The set of well-formed words of $\wfA^*$ is a regular language.
\end{fact}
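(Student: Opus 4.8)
The plan is to show that the set of well-formed words over $\wfA^*$ is recognized by a finite automaton (equivalently, a finite monoid), which by Kleene's theorem establishes regularity.

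First I would describe the automaton directly. A well-formed word is either empty, a single letter $s\in M$, or a word of the shape $(s_1,f_1)(e_2,s_2,f_2)\cdots(e_n,s_n)$ where the first letter lies in $M\times E$, the last letter lies in $E\times M$, all intermediate letters lie in $E\times M\times E$, and the ``gluing'' condition $f_i = e_{i+1}$ holds for consecutive letters. The key observation is that checking this requires only a bounded amount of memory: the automaton needs to remember (a)~whether it has read any letter yet, (b)~whether the first letter read was of type $M$ (a dead-end for further letters) or of type $M\times E$, (c)~the second coordinate $f$ of the most recently read letter from $M\times E$ or $E\times M\times E$ (so it can be matched against the first coordinate $e$ of the next letter), and (d)~whether the last letter read was of type $E\times M$ (which forbids reading any further letter). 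Since $E\subseteq M$ is finite, this is a finite set of states. I would set the accepting states to be: the initial state (for $\varepsilon$), the state reached after a single $M$-letter, and any state in which the last letter was of type $E\times M$.

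The routine part is then to spell out the transition function carefully and to verify by a straightforward induction on word length that the automaton accepts exactly the well-formed words: one direction checks that each of the two defining clauses leads to an accepting run, and the other checks that any accepting run forces the word into one of the two prescribed shapes with the gluing conditions satisfied. I would not grind through every transition case in the write-up, since this is entirely mechanical; a sentence indicating the state set and the invariant maintained by the automaton suffices.

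I do not expect any genuine obstacle here — the statement is essentially a bookkeeping exercise. The only mild subtlety is making sure the automaton correctly rejects malformed concatenations such as a letter of type $E\times M\times E$ appearing before any $M\times E$ letter, or a second $M$-type or $M\times E$-type letter after the word has already started; this is handled by the ``dead-end'' components of the state. Thus the proof is just: \emph{Construct the finite automaton described above; it accepts precisely the well-formed words; hence, by Kleene's theorem, this set is regular.}
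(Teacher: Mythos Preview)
Your proposal is correct. The paper does not actually give a proof of this fact at all: it simply declares it ``immediate'' and moves on. Your automaton construction is a perfectly standard way to justify the claim, and the only point worth noting is that one could equally well observe that the set of well-formed words is described by the regular expression $\varepsilon \cup M \cup \bigcup_{e_1,\ldots,e_{n-1}\in E}(M\times\{e_1\})(\{e_1\}\times M\times\{e_2\})\cdots(\{e_{n-1}\}\times M)$, or more compactly that it is a finite union of local languages; either description makes regularity evident without spelling out states.
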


Observe that one can define a monoid morphism $\beta : \wfA^*
\rightarrow M$ by setting $\beta(s) = s$ for all $s \in M$,
$\beta((e,s)) = es$ for all $(e,s) \in E \times M$, $\beta((s,e)) =
se$ for all $(s,e) \in M \times E$ and $\beta((e,s,f)) = esf$ for all
$(e,s,f) \in E \times M \times E$. We call $\beta$ the \emph{morpshim
  of well-formed words associated to $\alpha$}.

\medskip
\noindent {\bf Associated language of well-formed words.} To any
language $L \subseteq A^*$ that is recognized by $\alpha$, one can
associate a language of well-formed words $\Lbb \subseteq \wfA^*$
(depending on $\alpha$):
\[
  \Lbb =\bigl \{\wbb \in \wfA^* \mid \wbb \text{ is well-formed and } \beta(\wbb)
  \in \alpha(L)\bigr\}.
\]
By definition, the language $\Lbb \subseteq \wfA^*$ is the intersection
of the language of well-formed words with $\beta^{-1}(\alpha(L))$. Therefore,
it is immediate by Fact~\ref{fct:reg} that it is regular, more precisely:

\begin{fact} \label{fct:reg2}
  Let $L \subseteq A^*$ that is recognized by $\alpha$. Then the
  associated language of well-formed words $\Lbb \subseteq \wfA^*$ is
  a regular language, and one can compute it from~$\alpha$.
\end{fact}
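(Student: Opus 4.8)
The plan is to observe that $\Lbb$ is, by its very definition, the intersection of two regular languages, each of which is effectively computable from $\alpha$ (together with the accepting set describing $L$), and then to invoke effective closure of regular languages under intersection.

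First I would rewrite the definition of $\Lbb$ as
\[
  \Lbb = W \cap \beta^{-1}(\alpha(L)),
\]
where $W \subseteq \wfA^*$ denotes the set of well-formed words. By Fact~\ref{fct:reg}, $W$ is regular; moreover the automaton witnessing this is explicit, since it merely checks that the input is $\varepsilon$, a single letter of $M$, or has the shape $(s_1,f_1)(e_2,s_2,f_2)\cdots(e_n,s_n)$ together with the gluing condition $f_i = e_{i+1}$. Hence $W$ can be computed from $\wfA$, and $\wfA$ is itself computed from $\alpha$ (one only needs the multiplication table of $M$ and the set $E$ of idempotents of $\alpha(A^+)$).

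Next I would handle the factor $\beta^{-1}(\alpha(L))$. Since $L$ is recognized by $\alpha$, it comes with an accepting set $F \subseteq M$ such that $L = \alpha^{-1}(F)$, and $\alpha(L) = F \cap \alpha(A^*)$ is a computable subset of the finite monoid $M$; in fact, for the purpose of defining $\Lbb$ one may even use $F$ in place of $\alpha(L)$, since these two subsets differ only on elements of $M$ having no $\alpha$-preimage, which have no $\beta$-preimage among well-formed words either. The language $\beta^{-1}(\alpha(L))$ is then recognized by the morphism $\beta: \wfA^* \to M$ into the finite monoid $M$ with accepting set $\alpha(L)$, hence regular by Kleene's theorem, and a recognizing automaton is produced directly from the values $\beta(a)$ for $a \in \wfA$ and the multiplication of $M$.

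Finally, taking the product of the two automata obtained above yields an automaton for $\Lbb = W \cap \beta^{-1}(\alpha(L))$, so that $\Lbb$ is regular and effectively computable from $\alpha$. There is no genuine obstacle in this argument: the only point requiring a modicum of care is to check that each of the three steps — constructing $W$, constructing $\beta^{-1}(\alpha(L))$, and forming the product automaton — is constructive, which is entirely routine.
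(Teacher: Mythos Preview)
Your proof is correct and matches the paper's own argument exactly: the paper simply notes that $\Lbb$ is the intersection of the language of well-formed words (regular by Fact~\ref{fct:reg}) with $\beta^{-1}(\alpha(L))$ (recognized by $\beta$), and concludes by closure under intersection. You have merely spelled out the effectiveness in slightly more detail than the paper does.
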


\subsection{Reductions}

We can now state the reductions, we begin with the separation result.

\begin{theorem}[\citeN{pzsucc}] \label{thm:reducs}
  Let $L_0,L_1$ be regular languages and let $\alpha: A^* \rightarrow M$
  be a morphism into a finite monoid $M$ that recognizes both $L_0$ and
  $L_1$. Finally, set $\Lbb_0$ and $\Lbb_1$ as the languages of
  well-formed words associated to $L_0$ and $L_1$.

  For all $i \geqslant 1$, $L_0$ is $\siwsi$-separable
  (resp. \bswsi-separable) from $L_1$ iff and only if $\Lbb_0$ is
  $\siwi$-separable  (resp. \bswi-separable) from~$\Lbb_1$.
\end{theorem}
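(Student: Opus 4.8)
The plan is to show both directions of the equivalence by transferring separators back and forth between the alphabet $A$ and the alphabet of well-formed words $\wfA$, using the morphism $\beta$ of well-formed words as the bridge. The key observation underpinning everything is that a word $w\in A^*$ and its image under the natural ``encoding'' operation (which reads a factorization of $w$ according to idempotent blocks of $\alpha$ and records the $(e,s,f)$ data) satisfy the \emph{same} formulas in a fragment with successor up to a rank that is controlled by $\alpha$; conversely a well-formed word $\wbb$ can be ``decoded'' to a word over $A$. First I would make this correspondence precise: given $\alpha:A^*\to M$, define for each $w\in A^*$ a canonical well-formed word $\mathrm{enc}(w)\in\wfA^*$ (choosing, say, a factorization forest or a canonical idempotent factorization of $w$), and in the other direction, for each well-formed $\wbb$ a word $\mathrm{dec}(\wbb)\in A^*$ obtained by replacing each letter $(e,s,f)$, $(e,s)$, $(s,f)$, or $s$ by some fixed preimage under $\alpha$ of its $\beta$-value. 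The crucial lemma is that $\mathrm{enc}$ and $\mathrm{dec}$ are ``\siwsi-faithful'' in the sense that $w\ksieq{i}^{\text{succ}}w'$ over $A^*$ implies $\mathrm{enc}(w)$ and $\mathrm{enc}(w')$ are close over $\wfA^*$, and symmetrically for $\mathrm{dec}$; the point is that the successor predicate over $\wfA$ simulates ``same idempotent block'' reasoning over $A$, which is exactly what distinguishes the enriched hierarchy from the order hierarchy.

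Granting such transfer lemmas, the argument is then structural. For the ``only if'' direction: suppose $\Lbb_0$ is \siwi-separable from $\Lbb_1$ by some language $K\subseteq\wfA^*$ definable by a \siwi sentence $\psi$. I would pull $\psi$ back along the encoding to obtain a \siwsi sentence $\psi'$ over $A$ (the extra predicates $+1$, $min$, $max$ are used precisely to recover from a word over $A$ the block structure that $\wfA$ makes explicit), and argue that $\{w\mid w\models\psi'\}$ separates $L_0$ from $L_1$: it contains $L_0$ because every $w\in L_0$ has $\beta(\mathrm{enc}(w))=\alpha(w)\in\alpha(L_0)$, so $\mathrm{enc}(w)\in\Lbb_0\subseteq K$, so $w\models\psi'$; and it is disjoint from $L_1$ because if $w\in L_1$ had $w\models\psi'$ then $\mathrm{enc}(w)\in K\cap\Lbb_1=\emptyset$. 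For the ``if'' direction, one goes the other way: a \siwsi separator over $A$ is turned into a \siwi sentence over $\wfA$ by restricting quantification to the positions carrying the genuine $M$-data and using the well-formedness constraints $f_i=e_{i+1}$ (which are first-order, order-only, over $\wfA$) to recover whatever the successor/endpoint predicates provided. The same scheme handles the \bswi case verbatim, since boolean combinations are preserved by the syntactic translations in both directions; one just carries a boolean combination of sentences through the same pull-back/push-forward.

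The main obstacle I expect is establishing the faithful correspondence between ranks, i.e.\ proving that the encoding and decoding maps preserve \siwsi- and \siwi-equivalence up to a rank bounded in terms of $|M|$; this is an Ehrenfeucht–Fra\"iss\'e argument where Duplicator's strategy over $A$ must be built from a strategy over $\wfA$ and vice versa, and the delicate point is handling the successor relation and the interaction between the block boundaries (the idempotents $e_i=f_{i-1}$) and the quantifier-alternation counter. Since the theorem is attributed to \citeN{pzsucc} and the excerpt explicitly says ``we only state the reductions and refer the reader to these papers for proofs,'' I would in fact not reprove these EF lemmas here but cite them, and present only the bookkeeping showing how the stated equivalence follows; the genuinely new content of this section is just assembling those transfer lemmas into the separation statement, exactly as is done for membership via \citeN{StrauVD} and \citeN{pinweilVD}. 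Thus the honest ``proof'' here is a short derivation from the cited transfer machinery rather than a self-contained argument.
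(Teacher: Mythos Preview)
Your final paragraph has it exactly right: the paper does not prove this theorem at all. It is stated with attribution to \citeN{pzsucc}, and the surrounding text explicitly says ``we only state the reductions and refer the reader to these papers for proofs.'' There is no argument in the paper to compare your sketch against.

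Your outline of how such a proof would go (encode/decode maps between $A^*$ and well-formed words over $\wfA$, with an EF-game transfer showing the maps preserve $\ksieq{i}$-equivalence up to a rank controlled by $|M|$) is a reasonable description of the machinery in the cited work, and your instinct that the honest move here is to cite rather than reprove matches exactly what the paper does. If anything, your proposal over-delivers: the paper contains strictly less than what you wrote.
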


Theorem~\ref{thm:reducs} reduces $\siwsi$-separability
(resp. \bswsi-separability) to $\siwi$-separability
(resp. \bswi-separability). Since, we already know that
\siwd-separability is decidable (see Corollary~\ref{cor:decidsep}), we
get the following corollary:

\begin{corollary} \label{cor:decidseps}
  Given as input two regular languages $L_1,L_2$ it is decidable to test
  whether $L_1$ can be $\siwsd$-separated (resp. $\piwsd$-separated) from
  $L_2$.
\end{corollary}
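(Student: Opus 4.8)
The statement is an immediate consequence of the reduction provided by Theorem~\ref{thm:reducs} together with the decidability of \siwd-separation established in Corollary~\ref{cor:decidsep}; there is no genuine obstacle, all the real work having been done in the earlier sections and in the cited transfer result~\cite{pzsucc}. The plan is as follows. Given two regular input languages $L_1,L_2$, first compute a single morphism $\alpha: A^* \rightarrow M$ into a finite monoid recognizing both of them, for instance the product of their syntactic morphisms as described in Section~\ref{sec:tools}. Then, using Fact~\ref{fct:reg2}, effectively construct the associated languages of well-formed words $\Lbb_1, \Lbb_2 \subseteq \wfA^*$, which are again regular. By Theorem~\ref{thm:reducs} applied with $i=2$, the language $L_1$ is \siwsd-separable from $L_2$ if and only if $\Lbb_1$ is \siwd-separable from $\Lbb_2$. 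The latter property is decidable by Corollary~\ref{cor:decidsep}, since $\Lbb_1$ and $\Lbb_2$ are regular and can be computed from the input. Chaining these steps yields a decision procedure for \siwsd-separability.

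It remains to handle the \piwsd case, which Theorem~\ref{thm:reducs} does not state directly. Here one uses duality: since a \piwsd language is exactly the complement of a \siwsd language, and regular languages are effectively closed under complement, a language $K$ in \piwsd separates $L_1$ from $L_2$ if and only if $A^*\setminus K$, which lies in \siwsd, separates $L_2$ from $L_1$. Consequently, $L_1$ is \piwsd-separable from $L_2$ if and only if $L_2$ is \siwsd-separable from $L_1$, which is decidable by the first part of the proof (simply swap the roles of the two input languages). This establishes decidability of \piwsd-separability and completes the proof.

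\textbf{Main difficulty.} As noted above, the statement is a routine corollary: the substantial content lies entirely in Theorem~\ref{thm:reducs}, in the effective computability of well-formed words (Fact~\ref{fct:reg2}), and in Corollary~\ref{cor:decidsep}, all of which may be invoked. The only point requiring a short argument beyond a direct citation is the treatment of the \piwsd case, which is dispatched by the standard complementation/duality observation recalled in Section~\ref{sec:words}.
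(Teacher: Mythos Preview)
Your proposal is correct and follows essentially the same approach as the paper: invoke Theorem~\ref{thm:reducs} with $i=2$ to reduce \siwsd-separability to \siwd-separability, then apply Corollary~\ref{cor:decidsep}. Your explicit duality argument for the \piwsd case is a welcome addition, since the paper's statement of Theorem~\ref{thm:reducs} only covers \siwsi and \bswsi and the paper leaves the \piwsd reduction implicit.
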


This terminates our separation results. We now state the membership
reduction. 

\begin{theorem}[\citeN{pzsucc}] \label{thm:reducm}
  Let $L$ be a regular language and let $\alpha: A^* \rightarrow M$
  be a morphism into a finite monoid $M$ that recognizes $L$. Finally,
  set $\Lbb$ as the language of well-formed words associated to $L$.

  For all $i \geqslant 3$, $L$ is $\siwsi$-definable if and only if $\Lbb$
  is $\siwi$-definable.

  For all $i \geqslant 2$, $L$ is $\bswsi$-definable if and only if $\Lbb$
  is $\bswi$-definable.
\end{theorem}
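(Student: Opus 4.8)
The statement is Theorem~\ref{thm:reducm}, attributed to~\cite{pzsucc}, and the excerpt explicitly says ``we only state the reductions and refer the reader to these papers for proofs.'' So the proof here is really a citation rather than a fresh argument; what I would write is a short roadmap of why the reduction works, matching the style of the surrounding section, and then defer the technical verification to~\cite{pzsucc}. The key object is the morphism of well-formed words $\beta: \wfA^* \to M$ associated to $\alpha$, together with the language $\Lbb \subseteq \wfA^*$ of well-formed words. The intuition to record is that a well-formed word encodes not just a word $w \in A^*$ but a word \emph{equipped with an $\alpha$-factorization pattern at the top level}: the letters of $\wfA$ carry pairs $(e,s)$ and triples $(e,s,f)$ recording idempotent ``context'' information $E = \{$idempotents of $\alpha(A^+)\}$ on the left and right. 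This is exactly the data that lets one simulate the successor predicate and the $\mathit{min}/\mathit{max}$ predicates of the enriched signature using only the order predicate on $\wfA^*$: the boundary markers in $\wfA$ make ``first position'', ``last position'' and ``adjacent positions'' first-order definable with one fewer quantifier alternation than they would cost over $A^*$.

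First I would fix $L$ and $\alpha: A^* \to M$ recognizing $L$, and form $\wfA$, $\beta$ and $\Lbb$ as in the previous subsection, recalling from Fact~\ref{fct:reg2} that $\Lbb$ is regular and computable from $\alpha$. Then I would state the core correspondence established in~\cite{pzsucc}: for each level $i$ (with the stated lower bounds $i \geq 3$ for \siwsi and $i \geq 2$ for \bswsi), a language $L$ recognized by $\alpha$ is definable in the enriched fragment \siwsi (resp.\ \bswsi) if and only if its associated well-formed language $\Lbb$ is definable in the order fragment \siwi (resp.\ \bswi). The two directions are: (1) from an enriched formula over $A^*$ one builds an order formula over $\wfA^*$ by rewriting occurrences of $\mathit{min}$, $\mathit{max}$ and $+1$ using the boundary letters of $\wfA$, which does not increase quantifier alternation once the encoding is in place; (2) conversely, from an order formula over $\wfA^*$ one recovers an enriched formula over $A^*$ by guessing, with a bounded block of quantifiers, the positions of the factorization cuts, re-expressing each $\wfA$-letter predicate in terms of $P_a$'s together with $\mathit{min}/\mathit{max}/{+}1$ — and here the lower bounds on $i$ (and the fact that the extra quantifier block can be absorbed into an existential block starting a \siwi formula, respectively into one side of a boolean combination for \bswi) are exactly what keeps the alternation count from slipping.

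Since the paper chooses not to reproduce the argument, I would keep the proof to essentially one paragraph: cite \citeN{pzsucc} for the full verification, note that the restriction to $i \geq 3$ (resp.\ $i \geq 2$) is the price of the extra quantifier block used to define the boundary predicates and that this is known to be tight by the strictness results of \citeN{BroKnaStrict} and Thomas~\citeNN{ThomEqu,ThomStrict}, and then immediately record the consequence: combined with Corollary~\ref{cor:decidthree} (decidability of membership for \siwt, \piwt, \dewt) and Corollary~\ref{cor:decid2} (decidability of membership for \bswd), Theorem~\ref{thm:reducm} and Fact~\ref{fct:reg2} yield decidability of membership for \siwst, \piwst, \dewst and \bswsd. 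The main obstacle, were one to write a self-contained proof rather than a citation, would be direction (2): carefully showing that recovering the enriched formula from the order formula over $\wfA^*$ costs only a single extra quantifier block and that this block can always be merged into the existential prefix (for \siwi) without a new alternation — this is the delicate bookkeeping that \cite{pzsucc} handles and that the present paper deliberately outsources.
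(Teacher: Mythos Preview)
Your proposal is correct in spirit and matches the paper exactly on the main point: the paper gives no proof whatsoever for Theorem~\ref{thm:reducm}, simply stating it and citing~\cite{pzsucc}. You correctly identify this and, if anything, provide more intuition than the paper does.

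One small divergence worth flagging: your explanation for the lower bounds ($i \geq 3$ for \siwsi, $i \geq 2$ for \bswsi) attributes them to the cost of an extra quantifier block when ``guessing the positions of the factorization cuts'' in direction~(2). The paper's own explanation, given in the paragraph immediately following the theorem, is different: it says the reduction fails at lower levels because those logics are too weak to express that a word in $\wfA^*$ is well-formed, and that well-formedness is expressible only from \piwd upward. These two explanations are not obviously the same mechanism, and the paper's version is the one that aligns with the argument in~\cite{pzsucc}. If you keep your sketch, you should adjust that sentence to match the paper's stated reason rather than the quantifier-block-absorption story.
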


Observe that, in contrast to the separation reduction, the membership
reduction does not work for lower levels in the hierarchy. For
example, it does not work for \bswu and \siwd. This is essentially
because these logics are not powerful enough to express that a word
in $\wfA^*$ is well-formed (this is only possible for logics including
and above \piwd).

\smallskip By combining Theorem~\ref{thm:reducm} with
Corollaries~\ref{cor:decidthree} and~\ref{cor:decid2}, we get the desired
corollary.

\begin{corollary} \label{cor:decids}
  Given as input a regular language $L$, the following problems are
  decidable:
  \begin{itemize}
  \item whether $L$ is definable in \bswsd.
  \item whether $L$ is definable in \dewst.
  \item whether $L$ is definable in \siwst.
  \item whether $L$ is definable in \piwst.
  \end{itemize}
\end{corollary}

\section{Conclusion}
\label{sec:conc}
We solved the separation problem for \siwd using the new notion of \dchains,
and we used our solution to prove decidable characterizations for \bswd,
\dewt, \siwt and \piwt. The main open problem in this field remains to lift up
these results to higher levels in the hierarchy. In particular, we proved that
for any positive integer~$i$, generalizing our separation solution to \siwi (\emph{i.e.}, being able
to compute the \ichains of length~$2$) would yield a decidable
characterization for \siw{i+1}, \piw{i+1} and \dew{i+1}.

Our algorithm for computing \dchains cannot be directly generalized for higher
levels. An obvious reason for this is the fact that it considers \dchains
parametrized by sub-alphabets. This parameter is designed to take care of the
alternation between levels $1$ and $2$, but is not adequate for higher levels.
However, this problem has been circumvented for the next level: a new
algorithm to compute \siwt-\chains has been designed and proved
in~\cite{pseps3}. This requires introducing hybrid objects capturing even more
information than \siwt-\chains and \siwt-\juns, and which are amenable to a
recursive computation. Yet, this difficulty is unlikely to be the only problem. In
particular, we do have an algorithm that avoids using the alphabet, but it
remains difficult to generalize. We leave the presentation of this alternate
algorithm for further work.

Another orthogonal research direction is to solve separation for \bswi levels.
The idea of exploiting the knowledge on some class to solve separation for the
boolean algebra it generates is actually meaningful for other classes than
levels of the alternation hierarchy. Indeed, one can generalize the
relationship between \siwi-\chains with unbounded alternation and separation
for \bswi (as stated in Theorem~\ref{thm:sepbc}) by replacing the class \siwi
with any lattice $\mathcal{L}$ of regular languages. Otherwise stated, one can
generalize the definitions to make generic the link between
$\mathcal{L}$-\chains with unbounded alternation and separation by languages
of $\mathcal{BL}$, the boolean algebra generated by the lattice $\mathcal{L}$.
Even for \bswd, the problem of determining, for two given elements $s_1,s_2$
of the monoid under consideration, whether the set of \chains $(s_1,s_2)^*$
only consists of \siwi-\chains is still wide open. Solving it may provide
intuition for upper levels, but probably requires new concepts.

\bibliographystyle{ACM-Reference-Format-Journals}

\end{document}